  \newcommand{\oo}{\infty}
  \newcommand{\del}{\partial}
\renewcommand{\d}{\mathrm{d}}
\renewcommand{\dh}{\mathrm{d}_{\mathsf{h}}}
  \newcommand{\dv}{\mathrm{d}_{\mathsf{v}}}
  \newcommand{\eps}{\varepsilon}
         \def\oast{\circledast} % in case \oast is not defined
  \newcommand{\ot}{\leftarrow}
  \newcommand{\sso}{\subset}
  \newcommand{\sse}{\subseteq}
  \newcommand{\dsse}{\rotatebox[origin=c]{-90}{$\subseteq$}}
  \newcommand{\usse}{\rotatebox[origin=c]{90}{$\subseteq$}}
  \newcommand{\base}{\mathrm{\Pi}}
  \newcommand{\id}{\mathrm{id}}
  \newcommand{\im}{\operatorname{im}}
  \newcommand{\coker}{\operatorname{coker}}
  \newcommand{\supp}{\operatorname{supp}}
  \newcommand{\tr}{\operatorname{tr}}
  \newcommand{\Secs}{\mathrm{\Gamma}}
  \newcommand{\Forms}{\mathrm{\Omega}}
  \newcommand{\A}{\mathcal{A}}
\renewcommand{\AA}{\mathscr{A}}
  \newcommand{\B}{\mathcal{B}}
  \newcommand{\C}{\mathcal{C}}
  \newcommand{\CatC}{\mathfrak{C}}
  \newcommand{\CatI}{\mathfrak{I}}
  \newcommand{\DD}{\mathscr{D}}
  \newcommand{\E}{\mathcal{E}}
  \newcommand{\EE}{\mathrm{E}}
  \newcommand{\EL}{\mathrm{EL}}
  \newcommand{\F}{\mathcal{F}}
  \newcommand{\FF}{\mathscr{F}}
  \newcommand{\G}{\mathrm{G}}
\renewcommand{\H}{\mathrm{H}}
  \newcommand{\J}{\mathrm{J}}
  \newcommand{\K}{\mathrm{K}}
\renewcommand{\L}{\mathcal{L}}
  \newcommand{\M}{\mathcal{M}}
  \newcommand{\MM}{\mathscr{M}}
\renewcommand{\S}{\mathcal{S}}
\renewcommand{\SS}{\mathscr{S}}
\renewcommand{\P}{\mathrm{P}}
  \newcommand{\PP}{\mathcal{P}}
  \newcommand{\PPP}{\mathscr{P}}
  \newcommand{\R}{\mathbb{R}}
  \newcommand{\Lie}{\mathcal{L}}
  \newcommand{\Loc}{\operatorname{Loc}}
  \newcommand{\PolyLoc}{\operatorname{PolyLoc}}
  \newcommand{\POISS}{\operatorname{POISS}}
  \newcommand{\Bkgr}{\mathfrak{Bkgr}}
  \newcommand{\GlobHyp}{\mathfrak{GlobHyp}}
  \newcommand{\Poiss}{\mathfrak{Poiss}}
  \newcommand{\Cat}{\mathfrak{Cat}}
  \newcommand{\CAlg}{\mathfrak{CAlg}}
  \newcommand{\stAlg}{{*}\mbox{-}\mathfrak{Alg}}
  \newcommand{\Man}{\mathfrak{Man}}
  \newcommand{\Bndl}{\mathfrak{Bndl}}
  \newcommand{\VBndl}{\mathfrak{VBndl}}
  \newcommand{\CBndl}{\mathfrak{CBndl}}
  \newcommand{\ChrBndl}{\mathfrak{ChrBndl}}
  \newcommand{\SpBndl}{\mathfrak{SpBndl}}
  \newcommand{\SpBkgr}{\mathfrak{SpBkgr}}
  \newcommand{\Symp}{\mathfrak{Symp}}
\spnewtheorem{hypothesis}{Hypothesis}{\bf}{\it}
\begin{document}

\title{{Characteristics, Conal Geometry and Causality in Locally Covariant Field Theory}}

\author{Igor Khavkine\inst{1}}
\institute{Institute for Theoretical Physics, Utrecht, Leuvenlaan 4,
NL-3584 CE Utrecht, The Netherlands \email{i.khavkine@uu.nl}}
%\authorrunning{I.\ Khavkine}

\date{\today}
%\communicated{???}

\maketitle
\begin{abstract}
The goal of this work, motivated by the desire to understand causality
in classical and quantum gravity, is an in depth investigation of
causality in classical field theories with quasilinear equations of
motion, of which General Relativity is a prominent example. Several
modern geometric tools (jet bundle formulation of partial differential
equations (PDEs), the theory of symmetric hyperbolic PDE systems,
covariant constructions of symplectic and Poisson structures) and
applies them to the construction of the phase space and the algebra of
observables of quasilinear classical field theories. This construction
is shown to be diffeomorphism covariant (using auxiliary background
fields if necessary) using categorical tools in a strong parallel with
the locally covariant field theory (LCFT) formulation of quantum field
theory (QFT) on curved spacetimes. In this context, generalized versions
of LCFT axioms become theorems of classical field theory, which includes
a generalized Causality property. Considering deformation quantization
as the connection to QFT, a plausible conjecture is made about the
Causal structure of quantum gravity. In the process, conal manifolds are
identified as the generalization of the causal structure of Lorentzian
geometry to quasilinear PDEs. Several important concepts and results are
generalized from Lorentzian to conal geometry. Also, the proof of
compatibility of the Peierls formula for Poisson brackets and the
covariant phase space symplectic structure for hyperbolic systems is
generalized to now encompass systems with constraints and gauge
invariance.
\end{abstract}

\section{Introduction}\label{sec:intro}
Great strides have been made in the past decade in terms of the
definition and perturbative construction quantum field theories (QFTs)
in the algebraic framework~\cite{bf-lcqft}. Notable successes include
the extension of perturbative renormalization of non-linear field
theories to arbitrary, curved (though non-dynamical), globally
hyperbolic spacetimes and the treatment of gauge
theories~\cite{hollands-ym,fr-bv,rejzner-thesis}. These ingredients are,
for instance, sufficient to study perturbative general relativity (GR)
in this framework.  Unfortunately, deeply rooted in the formalism of QFT
on curved spacetimes is the reliance on a non-dynamical background
metric to supply the causal structure that defines the singularity and
support structures of various $n$-point functions, time ordered
products, and retarded products.

The objects in classical field theory that are most closely related to
these QFT objects are the retarded and advanced Green functions of the
dynamical field equations of motion linearized at some particular
solution (the dynamical linearization point). But we know that, in
classical gravity, the singularity and support structures of these
classical Green functions change as a function of the metric that
provides the dynamical linearization point.  In perturbative GR, the
dynamical metric can be expressed as $g=\bar{g}+\kappa h$, where
$\bar{g}$ is a fixed background metric and $h$ is the dynamical
perturbation, $\kappa^2$ is proportional to Newton's gravitational
constant, which is to be treated as the formal perturbation parameter.
As already mentioned, the singularity structure of the Green function of
the linearized gravitational wave equation depends on the full,
dynamical linearization point $g=\bar{g}+\kappa h$. But, to order
$O(\kappa^0)$, this singularity structure is fully determined by the
non-dynamical background metric $\bar{g}$.

Clearly, the necessary dependence of this singularity structure on the
dynamical perturbations $h$ will appear at higher orders in $\kappa$.
However, at least in the case of QFT $n$-point functions, it is not
\emph{a priori} clear what form this modifications would take or how
they are to be recognized in the result of some higher order
perturbative calculation. Given the tight connection between the
singularity structure of Green functions and what is commonly referred
to as \emph{causal structure}, it is fair to say that the causal
structure of gravity is \emph{dynamical}. Therefore, one might say that
in the current state of affairs one does not know how to recognize a
dynamical causal structure in perturbative QFT nor even how to interpret
the meaning of causal structure in a non-perturbative QFT, whose
classical version has dynamical causal structure.

Unfortunately, currently, QFT of GR is only accessible to us
perturbatively. Eventually, though, we would like to move to a
non-perturbative formulation. However, at the non-perturbative level,
the axiomatic framework of locally covariant field theory (LCFT), both
classical and quantum, encompasses only theories whose equations of
motion are systems of PDEs with principal symbols that only depend on a
non-dynamical background metric. This explicitly excludes so-called
\emph{quasilinear} equations like GR, hydrodynamics, elasticity and
non-linear $\sigma$-models. Quasilinear systems are defined by the
property that their principal symbols (roughly, the coefficients of the
highest order derivatives, which play a major role in the properties of
the previously mentioned Green functions) depend on the dynamical
fields. Fortunately, at the classical level, GR and other quasilinear
systems can be studied non-perturbatively using classical PDE theory.
Using these tools, the causal structure can be described and understood
in fairly direct terms, closely related to the geometry of \emph{domain
of dependence} theorems.

This work has multiple goals. In broad outline, it is concerned with
obtaining a deep, non-perturbative understanding of an algebraic
formulation of causality in classical field theories with
dynamical causal structures, in a way that parallels the well understood
geometric formulation of well-posedness in hyperbolic PDE systems. Once
such an understanding has been obtained another aim is to formulate a
conjecture as to how it would translate to quantum field theory. Of
course, at this point, one can only expect to make conjectures in this
direction, owing to the difficulties in dealing with quantum field
theories non-perturbatively. Once formulated, these notions could be
adapted to a formal perturbative context, where they could be used in
conjunction with practical calculations, for instance in perturbatively
quantized GR.

More specifically, another goal is to present in a unified way all the
necessary geometric aspects of the construction of a classical field
theory. These aspects include an intrinsic geometric formulation of PDE
systems using jet bundles, an application of the theory of symmetric
hyperbolic PDE systems to the construction of the phase space, and the
use of the covariant phase space method and Peierls formula to endow it
with symplectic and Poisson structures. Each of these aspects is well
known in some segments of the mathematical analysis and mathematical
physics communities, but certainly is not familiar to the broad
high energy theory and theoretical relativity communities. Some of these
aspects are presented here also in potentially novel ways, which
highlight how they are used together. In a sense, this paper can be seen
as a follow that supplements with symplectic geometric and Poisson
algebraic aspects the older work of Geroch~\cite{geroch-pde} who
essentially applied the theory of symmetric hyperbolic systems to a
unified construction of the spaces of solutions of relativistic
classical field theories.

Finally, this paper also aims to apply the category-theoretic methods that
have proven so useful in formulating the algebraic structure of QFT on
curved spacetimes (now known as LCQFT) to both the algebraic and dual
geometric formulation of classical field theory. Essentially, these
tools are used to draw a strong parallel or duality between the notion
of well-posedness from classical PDE theory and the algebraic properties
of classical field theory, so that both of which could be formulated as
mutually dual, succinct theorems. These theorems could later be used as
a starting point for an improved axiomatization of classical and quantum
field theories.

\subsection{Outline of the work}
Sect.~\ref{sec:freelcft} recalls the axiomatic framework of LCFT and
emphasizes that it currently excludes quasilinear systems like GR.

Sect.~\ref{sec:hypersys} introduces the geometric formulation of PDE
systems in terms of jet bundles, Sect.~\ref{sec:jets-pdes}, with
reference to Apdx.~\ref{sec:jets} for background on jets. It then
focuses on the notion of a quasilinear, hyperbolic system of PDEs, with
special emphasis on first order, symmetric hyperbolic systems,
Sect.~\ref{sec:symhyper}. The precise formulation of the condition of
symmetric hyperbolicity given there differs slightly from the standard
one in order to fit better in the subsequent geometric context. It
concludes with a discussion of prolongation and equivalence of PDE
systems, needed for the reduction of variational equations of motion to
hyperbolic form, Sect.~\ref{sec:integrability}.

Sect.~\ref{sec:chargeom} examines the role of characteristics of a PDE
system in its causal structure. Sect.~\ref{sec:geom-cones} identifies
chronal and spacelike cone bundles as essential abstractions of the
resulting causal order. Sect.~\ref{sec:slow-sec} groups solutions
together into slow patches by the property of being slower than a
certain fixed chronal or spacelike cone bundle.
Sect.~\ref{sec:pde-theory} uses the thus far developed notions of causal
structure to state some standard theorems of PDE theory, which will be
used as the basic tools in the construction of classical field theories.
Sect.~\ref{sec:lin-inhom} presents some more specialized results on
linear systems, including global existence and properties of Green
functions. The presented material is standard, but may be formulated in
a slightly novel way to fit better with the geometric context of the
Peierls formula for the Poisson bracket appearing later.

Sect.~\ref{sec:classquant} constitutes roughly half of the technical
bulk of this work. It collects the mathematical tools presented thus far
and applies them in a unified way to construct the phase space of a
classical field theory and its algebra of observables.
Sect.~\ref{sec:var-sys} reviews the covariant phase space method for
constructing the symplectic form of a variational PDE system. Since the
focus of this work is more geometrical than functional analytical, we
avoid most details of the treatment of infinite dimensional manifolds,
of which the space of solutions of a PDE system, to be identified with
the phase space of a classical field theory, is an example.  The purpose
of Sect.~\ref{sec:formal-dg} is to introduce enough technical machinery
to formally with the tangent and cotangent spaces of the space of
solutions, while keeping with the spirit of the preceding remark.
Sect.~\ref{sec:symp-pois} identifies the space of solutions with the
phase space, by endowing it with (formal) symplectic and Poisson
structures. The Poisson structure is given by a generalized Peierls
formula. This Peierls formula is shown to be equivalent to the Poisson
bracket obtained from the covariant phase space symplectic form, a
result that, compared to previous forms of this equivalence, now also
encompasses hyperbolic PDE systems with constraints. As an application of
this construction, a refined version of classical microcausality is
proven both for algebras of observables on a given slow patch as well as
on the global phase space, Sects.~\ref{sec:obsv-local}
and~\ref{sec:glob-phsp}. Also of note is a discussion in
Sect.~\ref{sec:obsv-local} of the relation between the on-shell
formalism used in this work and the off-shell formalism used in other
related literature.

Sect.~\ref{sec:natural} makes a connection between the constructions
thus far and the covariant formalism of LCFT via the notion of natural
bundles and PDE systems, relying on some background material on category
theory from Apdx.~\ref{sec:limits}. For convenience,
Sect.~\ref{sec:funct} systematically summarizes the notation for various
concepts and constructions appearing in this work, while also remarking
on their categorical and functorial properties.

Sect.~\ref{sec:classcaus} constitutes the other half of the technical
bulk of this work. It generalizes the axioms of LCFT to quasilinear
field theories taking into account their dynamical causal structures,
and proves them as theorems of classical field theory. Unfortunately, at
this point the proofs rely on a number of sufficient technical
hypotheses. The necessity of these hypotheses remain to be examined on a
case by case basis or in the context of deeper investigation of the
relevant functional analytical details. In particular, the
generalization of the Causality axiom provides a sought deeper
understanding of causality in quasilinear classical field theories and
its algebraic formulation. An important check,
Sect.~\ref{sec:semilin-lcft}, is that these theorems reduce to the
standard axioms of LCFT in the case of a well-posed semilinear classical
field theory, as is to be expected.

Sect.~\ref{sec:quantcaus} recalls the notion of deformation quantization and
remarks that it is the leading candidate for the modern formulation of
what it means to quantize a classical mechanical system. This notion of
quantization is then used to make a conjecture about the translation of
the algebraic formulation of Causality from classical to quantum field
theory. In particular, this conjecture answers an old question about the
structure of the commutator of two quantum metric field operators in the
quantum field theory of GR.

Finally, Sec.~\ref{sec:discuss} concludes with a discussion of the
results presented in this work and lists several ideas, conjectures and
hypotheses that had arisen in the process that would be fruitful to
investigate further.

\section{Free, Interacting and Perturbative Locally Covariant Field Theories}
\label{sec:freelcft}
The currently leading axiomatic framework in the algebraic framework for quantum
field theory is called \emph{locally covariant field theory} (LCFT) and
rests on the axioms originally proposed by Brunetti, Fredenhagen and
Verch~\cite{bfv}. These axioms rest on simple physical principles, namely
\emph{locality}, \emph{causality}, and the existence of a
\emph{dynamical law}, and have a convenient mathematical formulation.
Moreover, they are flexible in that they can and have been adapted to
the classical (\emph{locally covariant classical field theory}
abbreviated again LCFT), quantum (\emph{locally covariant quantum field
theory} or LCQFT), and both classical and quantum perturbative contexts
(respectively abbreviated pLCFT or pLCQFT)~\cite{bf-lcqft}.

\subsection{Brunetti-Fredenhagen-Verch axioms for LCFT}
To state the axioms in a succinct way, we need to appeal to some notions
of category theory. Some basic information on categories and functors
can  be found in~\cite{cattheory,borceux}. A classical mechanical system, at a
bare minimum is described by a Poisson algebra over $\R$, say
$(F,\{\})$, called the \emph{algebra of observables}. Actually, we are
likely to be interested only the algebras $F$ that correspond to smooth
functions on some phase space manifold, with the bracket $\{\}$ defined
by a Poisson tensor.  However, as we will not need it, we do not
consider a detailed characterization of these algebras.  Such Poisson
algebras form the category $\Poiss$, with Poisson homomorphisms as
morphisms. On the other hand, it will be important below to restrict
possible homomorphisms to injective ones. Thus we define the following
categories.
\begin{definition}
Let $\Poiss$ be the category of Poisson algebras over $\R$ as objects
and Poisson homomorphisms as morphisms. Let $\Poiss_i$ be the
subcategory where the morphisms are restricted to injective
homomorphisms.
\end{definition}

A field theory assigns a classical mechanical system to a region
of spacetime, that is, a Poisson algebra of observables whose spacetime
supports%
	\footnote{The relevant notion of \emph{spacetime support} will be
	discussed in detail in Sect.~\ref{sec:obsv-local}.} %
overlap with that region. A field theory is locally covariant, if
these assignments are coherent with respect to embedding smaller
spacetimes into larger ones, in a sense to be specified below. First,
however, we must define what constitutes a spacetime. The notion of a
LCFT was developed in the context of quantum and classical field theory
on curved spacetimes. Thus it is natural to consider Lorentzian
manifolds as spacetimes. Overwhelming
experience from the physics literature suggests that it is sensible to
restrict our attention to globally hyperbolic%
	\footnote{A \emph{globally hyperbolic} Lorentzian manifold can be
	identified in several equivalent ways. Perhaps the simplest definition
	is the existence of a Cauchy surface, which is a surface intersected
	exactly once by every inextensible timelike curve~\cite{wald-gr,geroch-gh}.} %
spacetimes. The spacetime morphisms are restricted to ensure
compatibility between the metric and causal structures on the source and
target manifolds.
\begin{definition}\label{def:globhyp}
Given two oriented and time oriented Lorentzian $n$-manifolds $(M,g)$
and $(M',g')$, a smooth map $\chi\colon M\to M'$ is called a
\emph{causal isometric embedding} if (i) $\chi$ is an open embedding
preserving orientation, (ii) it is an isometry, $\chi^* g' = g$,
preserving time orientation, and (iii) if any two points $x,y\in\chi(M)$
can be joined by a causal curve in $M'$ then they can be joined by a
causal curve in $\chi(M)$ (causal compatibility).

Let $\GlobHyp$ denote the category of oriented and time oriented
globally hyperbolic Lorentzian spacetimes as objects and isometric
embeddings as morphisms. Let $\GlobHyp_c$ denote the subcategory where
the morphisms are restricted to causal isometric embeddings.
\end{definition}

To express the causality property, it is convenient to introduce a
tensor product on Poisson algebras and the notion of independent
subsystems. First, though, a few words about tensor products in
categories. The necessary formal setting is that of a \emph{symmetric
monoidal category}~\cite{moncat}.
Such a category is equipped with a
self-bifunctor $(A,B)\colon A\otimes B$ called \emph{tensor product}
(also \emph{monoidal structure}), as well as an \emph{identity} object.
This product is required to satisfy some identities, expressed as
commutative diagrams, which guarantee that associativity holds, that the
identity object as is an identity for the product, and that there
existence of a canonical isomorphism $A\otimes B\cong B\otimes A$.
Categorical products and coproducts (Sect.~\ref{sec:limits}) are well
known examples of tensor products. In fact, we do not give the detailed
list of the above axioms because the tensor products that will be used
in this paper are all constructed by putting some extra structure on an
underlying (co)product in a way that makes the needed axioms manifest.
Like the underlying (co)products, our tensor products will be equipped
with canonical inclusion, $A,B\to A\otimes B$, or projection, $A\otimes
B \to A,B$, morphisms. An important distinction of a tensor product from
a (co)product is the lack of universality. That is, the existence of a
pair of morphisms $A\to C$ and $B\to C$ (or $C\to A$ and $C\to B$) does
not guarantee the existence of a canonical morphism $A\otimes B \to C$
($C\to A\otimes B$), unlike for a coproduct (or product). However, since
our tensor products are based on underlying (co)products, which such a
morphism exists, it is canonical. A functor between two tensor
categories that the tensor product structure is called a \emph{tensor
functor} (also a \emph{symmetric monoidal functor}).
\begin{definition}\label{def:indep-subsys}
Let $(F,\{\}_F)$ and $(G,\{\}_G)$ be Poisson algebras, their
\emph{independent subsystems (tensor) product} is defined by
\begin{equation}
	(F,\{\}_F) \otimes (G,\{\}_G) \cong (F\otimes G, \{\}) ,
\end{equation}
where $F\otimes G$ is the tensor product (coproduct) of commutative
algebras to which the Poisson bracket is extended by the rule
$\{F\otimes 1, 1\otimes G\} = 0$. The product Poisson algebra is
equipped with the canonical inclusion morphisms
\begin{equation}
\vcenter{\xymatrix{
	(F,\{\}_F) \ar[r] & (F,\{\}_F)\otimes(G,\{\}_G) & (G,\{\}_G) \ar[l]
}} ,
\end{equation}
which are given respectively by $f_1\mapsto f_1\otimes 1$ and
$f_2\mapsto 1\otimes f_2$.
\end{definition}
\begin{definition}
Two subalgebras $(F_i,\{\}_i)\sse (F,\{\})$ $i=1,2$, of a larger Poisson
algebra are said to be \emph{independent subsystems} if the inclusion
morphisms factor through the independent subsystems according to the
following commutative diagram:
\begin{equation}
\xymatrix{
	(F_1,\{\}_1) \ar[r] \ar[rd]
		& (F_1,\{\}_1)\otimes(F_2,\{\}_2) \ar@{-->}[d]
		& (F_2,\{\}_2) \ar[l] \ar[ld] \\
		& (F,\{\})
} ,
\end{equation}
where the horizontal morphisms are the canonical inclusions, the
diagonal arrows are the given subalgebra inclusion morphisms, and the
vertical dotted line morphism is $f_1\otimes f_2 \mapsto f_1 f_2$ and is
a canonical injective Poisson homomorphism.
\end{definition}

The axioms classical field theory can now be succinctly stated as
follows.
\begin{definition}[\cite{bf-lcqft}]\label{def:axlcft}
A \emph{locally covariant classical field theory} $\F$
satisfies the following axioms:
\begin{enumerate}
\item[\emph{Isotony}]
	It is a covariant functor, $\F\colon \GlobHyp_c \to \Poiss_i$.
\item[\emph{Time Slice}]
	The image $\F(\chi)$ of $\chi\colon M\to M'$ is a Poisson isomorphism
	whenever $\chi(M)$ contains a Cauchy surface of $M'$.
\item[\emph{Causality}]
	The images of morphisms $\F(\chi_i)\colon \F(M_i)\to \F(M)$ are
	independent subsystems of $\F(M)$, whenever the images of morphisms
	$\chi_i\colon M_i\to M$, $i=1,2$, are spacelike separated in $M$.
\end{enumerate}
\end{definition}
The Causality axiom may also be rephrased in terms of tensor products.
We can introduce the disjoint union $(M_1,g_1)\sqcup (M_2,g_2) =
(M_1\sqcup M_2, g_1\oplus g_2)$ as a tensor product in $\GlobHyp_c$. It
is based on the disjoint union (coproduct) or manifolds, such that
disconnected components are considered spacelike separated from each
other. As for the independent subsystems product on Poisson algebras,
the canonical dotted line morphism in the following diagram does not
always exist:
\begin{equation}
\xymatrix{
	(M_1,g_1) \ar[dr] \ar[r] & (M_1,g_1)\sqcup (M_2,g_2) \ar@{-->}[d]
		& \ar[l] \ar[dl] (M_2,g_2)\\
	& (M,g)
} .
\end{equation}
It exists only if the images of $M_i$ are spacelike separated in $M$.
Thus, the Causality axiom simply states that the LCFT functor $\F\colon
\GlobHyp_c \to \Poiss_i$ is a tensor functor.

It should also be clear by now that the above formalism explicitly ties
the causal structure of a field theory to a fixed (non-dynamical)
Lorentzian metric. A Lorentzian metric provides a notion of a causal
relation, which appears crucially in each of the named axioms. What
about field theories that have a dynamical Lorentzian metric or have no
Lorentzian metric naturally associated to them at all? It is worth
taking a step back and examining why a Lorentzian metric is important in
the first place.

\subsection{Linear and semilinear PDEs}
We start by considering examples of field theories that can be shown to
actually satisfy the LCFT axioms.  There are two prominent families of
examples: free waves field and globally well-posed, interacting wave
fields without derivative couplings. Prototypical representatives of
each family are the free scalar field, obeying the Klein-Gordon equation
\begin{equation}
	\square \phi - m^2\phi = 0 ,
\end{equation}
and the $\phi^4$-interacting scalar field,
obeying the semilinear equation
\begin{equation}
	\square \phi - m^2\phi - \lambda \phi^3 = 0 .
\end{equation}

By \emph{wave field}, we mean a field theory, defined on a Lorentzian
manifold, whose equations of motion have the corresponding d'Alambertian
wave operator $\square$ as the principal symbol. The \emph{principal
symbol} is roughly the differential operator consisting of the highest
derivative order terms of the PDE (cf.~Eq.~\ref{eq:prsym}). For the wave
operator $\square$, the principal symbol is essentially $g^{\mu\nu}
\del_\mu \del_\nu$, where $g^{\mu\nu}$ is the inverse Lorentzian metric,
which is contracted with two coordinate derivatives. As is usual, the
term \emph{free} means that the corresponding equations of motion are
\emph{linear}. The absence of \emph{derivative couplings} is taken to
mean the absence in the equations of motion of non-linear terms with
derivatives. Such equations are termed \emph{semilinear}%
	\footnote{Actually, semilinear equations allow non-linear terms with
	derivatives, as long as they are not of the highest present order.}. %
Finally, \emph{well-posedness} is taken to mean, in particular, that
solutions corresponding to arbitrary smooth initial data (obeying
appropriate boundary conditions, of course) do not form singularities in
finite time.  Well-posedness of linear equations can be proven under
quite general assumptions, which is not the case for semilinear ones.

These field theories depend on the background Lorentzian metric
$g_{\mu\nu}$ through the principal symbol of their equations of motion.
The exclusion of interactions with derivative couplings, ensures that
this principal symbol is always the same as that of the wave operator
$\square$. Since the metric is Lorentzian, these partial differential
equations are special cases of hyperbolic PDE systems. Roughly,
\emph{hyperbolicity} is an algebraic and geometric property of the
principal symbol, which is exploited by standard PDE theory to prove
analytical results like (a) local-in-time well-posedness of the Cauchy
initial value problem (this entails the existence of solutions for
arbitrary smooth initial data, uniqueness of solutions, continuous
dependence of the solution on the initial data), (b) finiteness of the
speed of propagation of disturbances. In particular, as a generic
result, the maximal speed of propagation in any direction happens to be
given by the null directions of the background Lorentzian metric. Global
well-posedness can also be proven under generic circumstances for linear
PDEs and under some additional hypotheses for semilinear ones.

These results are important because, roughly, global well-posedness and
finite propagation speed jointly imply the validity of the Isotony, Time
Slice and Causality axioms for the corresponding classical field theory.
The Causality axiom, and hence the notion of causal structure, is
particularly closely related to finite propagation speed. We shall see
in Sect.~\ref{sec:semilin-lcft} in more detail how these results from PDE
theory can be used to construct LCFTs and verify these axioms.

Following the logic of the last paragraph, it makes sense to look for a
generalization of the notion of causal structure, and corresponding
generalizations of LCFT axioms, by studying the way PDE theory
establishes the properties of well-posedness and finite propagation
speed in more general hyperbolic systems.  Of particular interest are
PDE systems where these properties are not directly tied to a
background, non-dynamical Lorentzian metric. As will be shown in the
following sections, such generalizations can be found in the PDE theory
of quasilinear hyperbolic systems, of which GR is a special case.

\section{Quasilinear Hyperbolic Systems}
\label{sec:hypersys}
In this section we formulate several aspects the theory of systems of
partial differential equations (PDEs) in a geometrical way, namely in
terms of jets. This approach is not completely standard in the analysis
or mathematical physics literatures, but it does have some advantages.

Eventually, we would like to make contact with locally covariant field
theory, which assigns algebras to spacetime regions in a functorial way
(a dif\-fe\-o\-morph\-ism-covariant way). We would like to construct these
algebras as algebras of functions on spaces of solutions of some PDEs.
Thus we would like to assign PDE systems to manifolds in a functorial
way as well. It turns out that this is more conveniently done using
natural bundles (to be introduced in Sect.~\ref{sec:natural}) and jet
bundles. Moreover, it is also more convenient to discuss integrability
conditions and equivalence of PDE systems in the language of jets.
Finally, the variational bicomplex on jet space is the natural setting
for conservation laws and the covariant symplectic structure (to be
discussed in Sect.~\ref{sec:classquant}).

Within this framework, we define PDE systems of quasilinear hyperbolic
type. We rely mostly on the notion of \emph{symmetric hyperbolicity} of
first order systems (following~\cite{geroch-pde}), but also briefly comment on
the related notion of \emph{regular hyperbolicity}.

From this point, we will be discussion finite dimensional spacetime
manifolds, bundle manifolds and infinite dimensional manifolds of
sections and solutions, as well as morphisms between them. For later
convenience and to fix some notations it is helpful to define the
following categories.
\begin{definition}
Let $\Man$ denote the category of smooth manifolds (of possibly infinite
dimension) and $\Man_e$ the subcategory where morphisms are restricted
to open embeddings. The morphism are always $C^\oo$ maps. Denote also by
$\Man^n$ and $\Man^n_e$ the respective subcategories of manifolds of
fixed dimension $n=0,1,2,\ldots,\oo$. The finite dimensional manifolds
are taken to be oriented.

Let $\Bndl$ denote the category of smooth, finite dimensional bundles
over manifolds and $\Bndl_e$ be the subcategory where morphisms are
restricted to open embeddings of the total space. These categories are
fibered over manifolds, $\base\colon \Bndl \to \Man$, with $\base\colon
(E\to M) \mapsto M$. Let $\Bndl(M)$ denote the subcategory of $\Bndl$ of
bundles over the base manifold $M$ with \emph{base fixing} morphisms
(projection to the base is always $\id:M\to M$). Thus, in an obvious
way, we have defined a functor from manifolds to categories,
$\Bndl\colon \Man\to \Cat$.

Let $\VBndl$ denote the category of vector bundles, a subcategory of
$\Bndl$, and $\VBndl_e = \VBndl\cap \Bndl_e$. We similarly introduce the
notations $\VBndl(M)$ and $\VBndl_e(M)$.
\end{definition}
We shall adopt $\base$ as the generic notation for forgetful functors.
More generally, from time to time, we shall reuse functor names as long
as they can be distinguished by the domain category and are similar in
purpose.

We shall not be specific about the precise definition of the class of
infinite dimensional manifolds in question. We presume that such details
can be added when applying the current formalism to specific situations.
Such functional analytical questions have been recently tackled in
earnest in~\cite{fr-bv,bfr} (cf.~also~\cite{bsf}).

A prominent example of an infinite dimensional space that we will treat
as a manifold is the space $\Secs(E)$ of all smooth sections of the
vector bundle $E\to M$, as well as some infinite dimensional subspaces
thereof, like the space of all solutions of a PDE system.

\subsection{Jet bundles and systems of PDEs}\label{sec:jets-pdes}
This section outlines the description of PDEs as submanifolds of the jet
bundle. Jet bundles are briefly introduced in Sect.~\ref{sec:jets},
where also notation is fixed (not all of it being completely standard)
and standard literature references are given.  Such a description of
PDEs is more intrinsic than than the usual one in terms of equations,
but is essentially equivalent. This approach is well known in the
geometric and formal theory of differential
systems~\cite{seiler-inv,bcggg}. It will be later combined with the
notion of natural bundles to define locally covariant field theories.
The discussion of natural PDE systems in connection with natural bundles
will be postponed to Sect.~\ref{sec:natural}. Below we consider only
fixed base manifolds.

From now on, fix $M$ to be finite dimensional manifold and let $n=\dim
M$. Also fix a vector bundle $F\to M$. We refer to $M$ as the
\emph{spacetime} manifold and to $F$ as the \emph{field bundle}.

We restrict our attention to \emph{regular} PDEs in the following sense.
\begin{definition}
A \emph{PDE system $\E$ of order $k$} is a smooth, closed sub-bundle (in
the $\Bndl$ sense) of $J^kF\to M$, $\E\sso J^kF$.
\end{definition}
Note that $\E$ need not be a vector sub-bundle of $J^kF$. The above
definition may seem unfamiliar to some, but can be cast in more
recognizable form using the following
\begin{proposition}
Given a PDE system $\E$ of order $k$, there exists a vector bundle $E\to
M$, a smooth sub-bundle $E'\sse E$ containing the zero section of $E\to
M$, and a smooth base fixing bundle morphism $f\colon J^kF \to E$ (in
the $\Bndl$ sense) such that the image of $f$ is contained in $E'$, the
image of $f$ is transverse in $E'$ to the zero section of $E$ and $\E$
is precisely the preimage of the zero section, that is, $\E$ satisfies
$f=0$.
\end{proposition}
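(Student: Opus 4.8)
The plan is to construct the auxiliary data $(E, E', f)$ locally and then patch. First I would exploit the fact that $\E \sso J^k F$ is, by hypothesis, a smooth closed sub-bundle in the $\Bndl$ sense, meaning it is a closed submanifold of the total space of $J^k F$ that fibers over $M$ with the restricted projection, and that locally (over a trivializing open set $U \sso M$) the pair $(J^k F|_U, \E|_U)$ looks like $U \times (V, W)$ for some closed embedded submanifold $W \sso V$ of the model fibre $V = (J^k F)_x$. Since $W$ is a closed embedded submanifold of the finite-dimensional manifold $V$, standard differential topology (e.g.\ a tubular neighbourhood argument, or simply the local normal-form / submersion theorem applied to defining functions of $W$) produces, on a possibly smaller neighbourhood, a submersion $g\colon V' \to \R^r$ on an open set $V' \supseteq W$, with $W = g^{-1}(0)$ and $g$ transverse to $0$, where $r$ is the codimension. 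Bundling this over $U$ gives a local model for $E = U \times \R^r$, $E' = V'$-bundle, and $f = g$ composed with projection.

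The second step is the globalization. The naive union of local $f$'s does not patch because the target fibres $\R^r$ have no canonical identification across charts. The standard fix is to absorb the transition ambiguity into the vector bundle $E$: I would take $E$ to be the normal bundle of $\E$ inside $J^k F$, i.e.\ $E|_{\E} = T(J^kF)|_{\E}/T\E$, pulled back to all of $M$ — but since $\E \to M$ may not be a homotopy equivalence, it is cleaner to instead choose a tubular neighbourhood $N$ of $\E$ in $J^k F$ (as a sub-bundle of $J^k F \to M$, which exists by a fibrewise tubular neighbourhood theorem, using e.g.\ a connection and fibre metric) identifying $N$ with a neighbourhood of the zero section in a vector bundle $E \to M$; then $f\colon J^k F \to E$ is defined to be the tubular-neighbourhood projection $J^k F \supseteq N \xrightarrow{\sim} (\text{nbhd of } 0 \text{ in } E)$ on $N$, and extended by any smooth base-fixing map off $N$ (continuity of the extension is the one delicate point — see below). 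Set $E' = f(N)$, an open sub-bundle of $E$ containing the zero section; then $\E = f^{-1}(0)$ by construction, $\im f \sso E'$, and transversality of $\im f$ to the zero section along $\E$ follows because $f$ restricted to $N$ is a diffeomorphism onto its image.

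The main obstacle I anticipate is the behaviour of $f$ away from the tubular neighbourhood $N$: to have a genuine base-fixing bundle morphism $f\colon J^k F \to E$ defined on \emph{all} of $J^k F$, one must extend the tubular projection smoothly, and there is no guarantee the extension avoids hitting $0 \in E$ outside $\E$, which would spoil $\E = f^{-1}(0)$. The remedy is to post-compose with a fibrewise diffeomorphism (a "radial squashing") of $E$ that maps $E$ diffeomorphically onto the open sub-bundle $E'$, so that the image of all of $J^k F$ is forced to land inside $E'$ while the preimage of $0$ is unchanged; alternatively, multiply $f$ by a smooth bump function supported near $\E$ after arranging $f$ to be nonzero off $N$ first. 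Either way, the content is a routine but slightly fiddly partition-of-unity construction. A secondary, lesser obstacle is verifying that $E'$ can be taken to be a genuine sub-bundle (locally trivial over $M$) rather than just an open subset of $E$; this follows from choosing the tubular neighbourhood radius to vary smoothly, i.e.\ picking $E'$ to be the unit-ball sub-bundle for a smooth fibre metric, after the radial squashing normalizes the tube to unit radius.

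Finally I would record the easy direction of equivalence — that conversely any such $(E, E', f)$ has $f = 0$ (i.e.\ $f^{-1}(0)$) equal to a smooth closed sub-bundle of $J^k F$, by the transversality hypothesis and the preimage theorem — to justify the phrase "essentially equivalent", though strictly the proposition only asks for the forward implication. The whole argument is local-to-global differential topology with no PDE input beyond the definition of $\E$.
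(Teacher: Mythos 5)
Your local analysis and the general strategy (normal bundle / tubular neighbourhood, then globalize) are exactly the kind of ``basic differential topology'' the paper has in mind --- indeed the paper offers no actual proof, only the one-line remark that ``the proof follows from basic differential topology, \emph{up to a global topological obstruction}'' with a citation to Goldschmidt. The problem is that your globalization step silently assumes away precisely that obstruction. The fibrewise tubular neighbourhood theorem identifies a neighbourhood $N$ of $\E$ in $J^kF$ with a neighbourhood of the zero section of the normal bundle $\nu(\E) = T(J^kF)|_{\E}/T\E$, which is a vector bundle \emph{over $\E$}, not over $M$. The proposition demands a vector bundle $E\to M$ and a base-fixing $f\colon J^kF\to E$; transversality then forces $\nu(\E)$ to be isomorphic to the pullback of $E\to M$ along $\E\to M$. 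Whether $\nu(\E)$ descends in this way (is ``pulled back from the base'') is a nontrivial global topological condition --- it can fail, and this is the obstruction the paper is flagging. When you write that ``it is cleaner to instead choose a tubular neighbourhood \dots identifying $N$ with a neighbourhood of the zero section in a vector bundle $E\to M$,'' you are asserting the conclusion at exactly the point where the real content lies; no partition-of-unity argument will manufacture the descent if the obstruction is nonzero.

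Two smaller points. First, your proposed remedy for the extension of $f$ off the tube (post-composing with a radial squashing of $E$ onto $E'$) does not address the actual difficulty, which is defining $f$ on $J^kF\setminus N$ so that it is smooth, compatible with $f|_N$ near $\partial N$, and \emph{nowhere zero} there; a fibrewise diffeomorphism of the target cannot create such an extension, and the bump-function alternative presupposes you have already arranged $f\ne 0$ off $N$, which is the thing to be proved. This is fixable with more care (e.g.\ composing the tube projection with a proper radial reparametrization so that $|f|\to\infty$ at $\partial N$ is avoided and the complement maps to a fixed nonzero ``boundary'' section), but as written it is circular. Second, your closing remark that the converse direction justifies ``essentially equivalent'' is fine and matches the paper's subsequent discussion of regularity and transversality. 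The honest conclusion is that your proof establishes the proposition only under the additional hypothesis that $\nu(\E)$ is pulled back from $M$ (automatic, e.g., when $\E\to M$ admits a section over which one trivializes, or when the relevant characteristic classes vanish); to match the paper you should either state that hypothesis explicitly or, as the paper does, flag the global obstruction rather than claim an unconditional proof.
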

The proof follows from basic differential topology, up to a global
topological obstruction~\cite[\textsection 7]{goldschmidt}. Clearly, the
equation form is not unique. For instance, applying any invertible
transformation to the equations $f=0$ gives another equation form $f'=0$,
which describes exactly the same PDE system. 

We refer to $E\to M$ as the \emph{equation bundle} and to $f$ or the
pair $(f,E)$ as the \emph{equation form} of the PDE system $\E$. A
section $\phi\colon M \to F$, also referred to as a \emph{field
configuration}, is said to satisfy the PDE system $\E$ if the $k$-jet
prolongation of $\phi$ is contained in $\E$, $j^k\phi(x)\in \E_x\sso
J^k_x(F,M)$. Equivalently, $j^k\phi$ is a section of $\E\to M$. We
denote the space of all solution sections by $\S(F)\sso \Secs(F)$ or
$\S_\E(F)$ when the PDE system needs to be mentioned explicitly. Using
the above proposition, we can equivalently say that $\phi$ is a solution of
the PDE system $\E$ if
\begin{equation}
	f[\phi] = f(j^k\phi) = 0.
\end{equation}
Expressing the $k$-jet in local coordinates, $j^k\phi(x) = (x,\phi^a(x),
\del_i \phi^a(x), \ldots)$, it is clear that $f(x,\phi^a(x),
\del_i\phi^a(x), \ldots)=0$ is a system of partial differential
equations in the usual sense of the term. Starting with a PDE system in
the usual sense, its geometric form as a sub-bundle of the jet bundle
can be obtained by a converse of the above lemma. At this point, the
regularity assumptions on both $\E$ and $f$ become important. Namely,
the transversality properties of $f$ ensure that the zero set of $f=0$
is a submanifold of $J^kF$ and vice versa.

The linear and affine structures on $J^kF$ give us the possibility of
defining the notion of \emph{linear} and \emph{quasilinear} PDE systems.
\begin{definition}
A PDE system $\E\sso J^kF$ is called \emph{linear} if $\E\to M$ is a
vector sub-bundle of the vector bundle $J^kF\to M$. The PDE system is
called \emph{quasilinear} if $\E\to J^{k-1}F$ is an affine sub-bundle of
the affine bundle $J^kF\to J^{k-1}F$.
\end{definition}
The connection to the usual meanings of these terms can be seen through
adapted equation forms.
\begin{lemma}
The PDE system $\E\sso J^kF$ is linear iff it has an equation form
$(f,E)$, where $f\colon J^kF\to E$ is a morphism of vector bundles over
$M$.

The PDE system $\E\sso J^kF$ is quasilinear iff it has an equation form
$(f,E)$, where $f\colon J^kF\to E$ is a morphism of affine bundles,
which fits into the commutative diagram
\begin{equation}
\vcenter{\xymatrix{
	J^kF \ar[d] \ar[r]^f & E \ar[d] \\
	J^{k-1}F    \ar[r]   & M ~ ,
}}
\end{equation}
where the vertical maps define the affine bundles, with the vector bundle
$E\to M$ naturally considered an affine one.
\end{lemma}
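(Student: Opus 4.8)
The plan is to prove the two biconditionals separately; the converse implications are immediate, so all the content lies in producing an equation form of the prescribed type. Throughout I write $\pi\colon J^kF\to J^{k-1}F$ and $\tau\colon J^{k-1}F\to M$ for the two jet projections, $\hat V\to J^{k-1}F$ for the vector bundle modelling the affine bundle $\pi$, and $(\,\cdot\,)_p$ for restriction to the fibre over $p\in J^{k-1}F$.

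For the converse implications, suppose $\E$ has an equation form $(f,E)$ with $f\colon J^kF\to E$ a vector bundle morphism over $M$. Then for each $x\in M$ the fibre $\E_x=f_x^{-1}(0)=\ker f_x$ is a linear subspace of $(J^kF)_x$, and since $\E$ is a smooth sub-bundle of $J^kF\to M$ the rank of $f_x$ is locally constant along $\E$; hence $\E\to M$ is a vector sub-bundle, i.e.\ $\E$ is linear. If instead $f$ is an affine bundle morphism covering $\tau$ and fitting the displayed square, then each $f_p$ is affine from the affine space $(J^kF)_p$ to the vector space $E_{\tau(p)}$, so $\E_p=f_p^{-1}(0)$ is an affine subspace of $(J^kF)_p$ of locally constant dimension; hence $\E\to J^{k-1}F$ is an affine sub-bundle of $\pi$, i.e.\ $\E$ is quasilinear.

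For the forward direction in the linear case I would take $E:=(J^kF)/\E$, the quotient vector bundle over $M$, with $f\colon J^kF\to E$ the canonical projection: this is a surjective vector bundle morphism over $M$, hence transverse to the zero section of $E$, and $f^{-1}(0)=\ker f=\E$, so $(f,E)$ (with $E'=E$) is the required equation form. For the quasilinear case, let $\hat W\sse\hat V$ be the vector sub-bundle modelling the affine sub-bundle $\E\to J^{k-1}F$ inside $\pi$. I would quotient the affine bundle $\pi$ by $\hat W$ to get an affine bundle $J^kF/\hat W\to J^{k-1}F$ modelled on $Q:=\hat V/\hat W$, with affine quotient morphism $q\colon J^kF\to J^kF/\hat W$ over $J^{k-1}F$; since each $\E_p$ is a single $\hat W_p$-coset, $q(\E)$ is the image of a smooth section $\sigma$ of $J^kF/\hat W\to J^{k-1}F$, namely $\sigma=q\circ\eta$ for any smooth section $\eta$ of $\E\to J^{k-1}F$ (which exists because affine bundles over the paracompact manifold $J^{k-1}F$ admit global sections). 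Using $\sigma$ as base point identifies $J^kF/\hat W$ with its model $Q$, producing an affine bundle morphism $\tilde q\colon J^kF\to Q$ over $J^{k-1}F$ with $\tilde q^{-1}(0)=q^{-1}(\sigma)=\E$. Finally, since $J^{k-1}F$ is a finite-dimensional manifold, $Q$ embeds as a sub-bundle of a trivial bundle $J^{k-1}F\times\R^N=\tau^*(M\times\R^N)$ for $N$ large; taking $E:=M\times\R^N$ and letting $f$ be the composite $J^kF\xrightarrow{\tilde q}Q\hookrightarrow\tau^*E\to E$ gives an affine bundle morphism $f\colon J^kF\to E$ covering $\tau$ and fitting the displayed square, with $f^{-1}(0)=\E$; with $E'=E$ (and the embedding chosen generically, $N$ enlarged if necessary, so that $f$ is transverse to the zero section, though $\E$ is in any case already a smooth closed sub-bundle) this is the required equation form.

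The step I expect to be the main obstacle is transplanting the target bundle from $J^{k-1}F$ to $M$ in the quasilinear case. For a genuinely quasilinear (as opposed to semilinear) system the symbol sub-bundle $\hat W$, hence $Q$ and the image of $f$, varies over the fibres of $\tau$ and is not pulled back from $M$, so the natural target lives over $J^{k-1}F$; one cannot then keep $\im f$ inside a proper sub-bundle of $E\to M$ and is forced to take $E$ large and $f$ only affine relative to $J^{k-1}F$. This is exactly why the commutative square in the statement has $E$ mapping to $M$ rather than to $J^{k-1}F$. The remaining ingredients --- existence of the sections used in the quotient step, and, should one want the transversality clause in the definition of an equation form literally, a parametric-transversality argument for the embedding into the trivial bundle --- are routine.
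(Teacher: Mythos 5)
The paper offers no argument here beyond the single sentence ``The proof is immediate,'' so there is nothing to match your proof against line by line; the intended argument is evidently the local-coordinate one, since the adapted coordinate expressions \eqref{lin-eqform} and \eqref{qlin-eqform} are displayed immediately after the lemma: in adapted coordinates a vector (resp.\ affine, over $J^{k-1}F$) bundle morphism is exactly a system of the form \eqref{lin-eqform} (resp.\ \eqref{qlin-eqform}), and conversely a vector (resp.\ affine) sub-bundle is locally cut out by such a system, the local forms patching by a partition of unity. Your proof is correct and takes a genuinely more global route: in the linear case you take $E=(J^kF)/\E$ with $f$ the quotient projection, and in the quasilinear case you quotient the affine bundle $J^kF\to J^{k-1}F$ by the model sub-bundle $\hat W$ of $\E$, trivialize the resulting affine bundle by a global section of $\E\to J^{k-1}F$, and then push the target down to $M$ by embedding $Q=\hat V/\hat W$ into a trivial bundle pulled back from $M$. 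This buys a coordinate-free construction and makes explicit \emph{why} the target in the quasilinear case can only be asked to live over $M$ while $f$ is affine relative to $J^{k-1}F$ (the symbol sub-bundle genuinely varies over the fibres of $\tau$), which the coordinate argument obscures; the price is the extra machinery (existence of global sections of affine bundles, embedding into a trivial bundle) and the one loose end you flag yourself, namely the transversality clause in the definition of an equation form when one is forced to take $E'=E$. That loose end is harmless here --- the clause exists to guarantee that $f^{-1}(0)$ is a submanifold, which in this direction is given, and the paper itself treats the regularity of equation forms only up to such adjustments --- but your suggestion of perturbing the embedding ``generically'' should be stated as a perturbation among fibrewise injective linear maps so that $f^{-1}(0)=\E$ is preserved.
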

The proof is immediate. Alternatively, the quasilinear case can be cast
into the form of a base fixing affine bundle morphism $f\colon J^kF\to
(E)^{k-1}$, where both bundles are over $J^{k-1}F$. Such equation forms
are called \emph{adapted}.

In the more common language of adapted local coordinates, the conditions
of linearity and quasilinearity are expressed as follows. Consider
adapted local coordinates $(x^i,v_A)$ on the equation bundle $E$,
$(x^i,u^a)$ on the field bundle $F$, and the corresponding $(x^i,u^a_I)$
on the $k$-jet bundle $J^kF$.
Let
$\phi\colon M\to F$ be a field configuration, then its $k$-jet in local
coordinates is $j^k\phi(x) = (x^i,\del_I\phi^a(x))$. The
above lemma asserts the existence of an equation form that looks like
\begin{equation}\label{lin-eqform}
	f_{Aa}^I(x) \del_I\phi^a(x) = 0 .
\end{equation}
Note that this equation is linear in $\phi(x)$ and its derivatives and
that the coefficients $f_{Aa}^I(x)$, with multi-indices $I$, depend only
on the base space coordinates $x$. On the other hand, for a quasilinear
equation, the lemma asserts the existence of an equation form that looks
like (with $|I|=k$)
\begin{equation}\label{qlin-eqform}
	f_{Aa}^I(x,j^{k-1}\phi(x)) \del_I \phi^a(x) + f_A(x,j^{k-1}\phi(x)) = 0 .
\end{equation}
Note that, in the linear case, the fact that the coefficients of
$f\colon J^k(F,M)\to E$ only depend on the base space coordinates $x$ is
captured by the requirement that it is a morphism of vector bundles over
$M$. In the quasilinear case, the coefficients of $f$ can obviously
depend on both $x$, $\phi(x)$ as well as all derivatives $\del_I\phi(x)$
up to order $|I|=k-1$, which is captured by allowing $f\colon J^kF\to
(E)^{k-1}$ to be a (base fixing) bundle morphism over $J^{k-1}F$. It is
worth remarking that any linear PDE system is also naturally
quasilinear.

Recall that the affine bundle $J^kF\to J^{k-1}F$ is modeled on
the vector bundle $(S^kT^*M\otimes_M F)^{k-1}\to J^{k-1}F$.
Therefore, an adapted equation form $(f,E)$ of a quasilinear PDE system
$\E\sso J^kF$ naturally singles out a section
\begin{equation}\label{eq:prsym}
	\bar{f}\colon J^{k-1}F
		\to (E\otimes_M F^* \otimes_M S^kTM)^{k-1} .
\end{equation}
In local coordinates, $\bar{f}$ corresponds to the coefficient
$f_{Aa}^{I}$ of the highest derivative term $\del_{I}\phi(x)$ with
$|I|=k$ in Eq.~\eqref{qlin-eqform}. This section $\bar{f}$ is called the
\emph{principal symbol} of the given equation form of $\E$.

Below, we will be mostly concerned with first or second order PDE
systems.

\subsection{Symmetric and regular hyperbolicity}\label{sec:symhyper}
The notion of hyperbolicity for a PDE system is strongly linked to the
ability to formulate it as an initial value problem. Locally such a
formulation is constrained by the existence of so called
\emph{characteristic surfaces} or \emph{characteristic covectors}, which
are defined in detail below. Various notions of hyperbolicity are then
stated either in terms of geometric conditions on the locus of
characteristic covectors or in terms of equivalent algebraic conditions
on the principal symbol of a special equation form of the system.

An initial value formulation consists of converting the PDE system into
an infinite dimensional ODE as follows. Suppose we are given an
$n$-dimensional manifold $M$ that can be smoothly factored as $M\cong
\R\times S$, where $S$ is a manifold of dimension $n-1$, and the
projection on the first factor is denoted by $t\colon M\to \R$, the
\emph{time function}. Then the space of restrictions of field
configurations to a level set of $t$ (always a codimension-$1$ surface
diffeomorphic to $S$) forms an infinite dimensional space, which can be
considered as a fiber of a smooth, infinite dimensional bundle over
$\R$. If the PDE system can be turned into an ODE system on this bundle,
one can use Picard iteration and a Gronwall-type lemma to prove
local-in-$t$ existence and uniqueness of solutions. Various notions of
hyperbolicity essentially correspond to sufficient conditions under
which the above construction can be carried out. Some of these
conditions are local (referred to simply as \emph{hyperbolicity}) and
some global (referred to as \emph{global hyperbolicity}). Local
conditions put restrictions on the principal symbol of a PDE system and
on the germs of the leaves of the $S$-foliations of $M$ to which the
above construction would be applicable. The global conditions require
$M$ to factor in a way similar to above, with the $S$-leaves satisfying
the corresponding local conditions. Strictly speaking, when the
$S$-leaves are non-compact or have a boundary, boundary conditions may
have to be supplied, on top of those already mentioned, to ensure a
well-posed initial value problem. Such boundary conditions will not be
discussed in this work.

To obtain some necessary conditions on the geometry of initial value
surfaces, we need to first look at the linear and affine geometry of
$k$-jet space in the presence of a preferred codimension-$1$ subspace
$\tau_x \sso T_xM$ at a fixed $x\in M$. This subspace can be thought of
as tangent to a putative initial value surface $S$ passing through $x$.
For convenience of notation, we promote $\tau_x$ to be the fiber of a
codim-$1$ vector sub-bundle $\tau \sso TM$ (also called a \emph{tangent
plane distribution}). However, all subsequent calculations will be
purely local and go through equally well if $\tau$ is only defined in a
neighborhood of $x$, or just at $x$ itself. In particular, the
distribution $\tau$ need not be integrable.

Let $\nu\sso T^*M$ be the $1$-dimensional vector sub-bundle of covectors
annihilating $\tau$ (the \emph{conormals} to the putative surface $S$).
In turn, the $1$-dimensional vector bundle $\nu$ singles out the
$1$-dimensional vector sub-bundle $\nu^{\otimes k} =
\nu\otimes_M\nu\otimes_M \cdots \sso S^kT^*M$, and finally the vector
sub-bundle $N=\nu^{\otimes k}\otimes_M F \sso S^kT^* M\otimes_M F$.
Recall that the affine bundle $J^kF\to J^{k-1}F$ is modeled on the
vector bundle $(S^kT^*M \otimes_M F)^{k-1}$. Note that the fibers of
$J^kF\to J^{k-1}F$ are foliated by affine planes parallel to the fibers
of the vector bundle $(N)^{k-1}$. We define $J^{k-1,\perp}$ to be the
leaf space of this foliation. In other words, the short exact sequence
of vector bundles
\begin{equation}
	0 \to N \to S^kT^*M\otimes_M F \to (S^kT^*M\otimes_M F)/N \to 0
\end{equation}
induces the following affine bundle projections
\begin{equation}
	J^kF \to J^{k-1,\perp}F \to J^{k-1}F,
\end{equation}
where the second projection defines an affine bundle with fibers modeled
on the vector bundle $(S^kT^*M \otimes_M F/N)^{k-1}$ and the first
projection defines an affine bundle with fiber modeled on the vector
bundle $(N)^{k-1,\perp}$, with the latter notation meaning the pullback
of the bundle $N\to M$ along the canonical projection $J^{k-1,\perp}\to
M$. A more in depth discussion of the bundle $J^{k,\perp}F$ can be found
in~\cite[Ch.6]{spring-convex}.

The above constructions are easily illustrated in local coordinates.
Consider local coordinates $x^j = (t,s^i)$ on a neighborhood of $x\in M$ and
local adapted coordinates $(t,s^i,u^a)$ on the corresponding
neighborhood of $F_x\sso F$. Suppose that the $t$ coordinate is chosen
such that the level set $t=t(x)$ is tangent to the plane $\tau_x\sso
T_xM$, in other words $dt(x)\in\nu_x\sso T^*_xM$. The $k$- and
$(k-1,\perp)$-jets of a section $\phi\colon M\to F$ can be represented
by
\begin{align}
	j^k\phi(x)
	&= (x^j, \del_J \phi^a(x)), ~~ |J|\le k, \\
	&= (t,s^i, \del_{I}\del_t^l \phi^a(x)), ~~ |I|+l \le k , \\
	&= (x^j, \del_{I_0}\phi^a(x), \del_t \del_{I_1} \phi^a(x),
			\ldots, \del_t^{k-1} \del_{I_{k-1}} \phi^a(x),
			\del_t^k \phi^a(x)) , \\
\notag
	&\qquad \qquad |I_l| \le k-l ,
\end{align}
where the multi-indices in $J$ range over all indices of $x^j$, while
the multi-indices of $I$ and $I_l$ range only over the indices of $s^i$
and not $t$, and by
\begin{align}
	j^{k-1,\perp}\phi^a(x)
	&= (t,s^i, \del_{I_0}\phi^a(x), \del_t\del_{I_1} \phi^a(x),
			\ldots, \del_t^{k-1} \del_{I_{k-1}} \phi^a(x)) , \\
\notag
	&\qquad \qquad ~~ |I_l| \le k-l .
\end{align}
In other words, the affine bundle projection $j^k\phi(x)\mapsto
j^{k-1,\perp}\phi(x)$ simply discards the last component, which is the
highest order derivative $\del^k_t \phi^a(t,s)$ along the $t$-direction,
which is transverse to the level set of $t$ passing through $x$.

In order to convert the PDE system into an ODE with respect to the $t$
coordinate, still working locally at $x\in M$, % with initial data on
$S$, given a section $\phi\colon M\to F$ that satisfies the PDE system,
$j^k\phi(x)\in\E_x$, we need to uniquely determine the highest order
$t$-derivative of the dynamical part of $\phi$ at $x$ as a function of
all other derivatives of equal or lower order%
	\footnote{In the classical PDE literature this is known as putting the
	equation in \emph{Cauchy-Kovalevskaya form}~\cite{evans-pde,ch2}.}. %
So, geometrically, each fiber of the affine bundle $(J^kF)^{k-1}\to
(J^{k-1,\perp}F)^{k-1}$ needs to intersect $\E$ exactly once.  On the
other hand, to express this condition algebraically, we need to pick an
adapted equation form $(f,E)$ for this quasilinear PDE system, with
principal symbol $\bar{f}$. Pick sections $u\colon M\to J^kF$,
$u^\perp\colon M\to J^{k-1,\perp}F$ and $\bar{u}\colon M\to J^{k-1}F$
such that $u\mapsto u^\perp$ and $u\mapsto \bar{u}$ under the affine
bundle maps $J^kF\to J^{k-1,\perp}F$ and $J^kF\to J^{k-1}F$
respectively.  Then by the affine structure of the $k$-jet space, any
other $k$-jet section $v$ that projects to $u^\perp$ is given by
$v=u+p^{\otimes k}\otimes\psi$, with sections $\psi\colon M\to D$ and
$p\colon M\to \nu$.  If we demand that $v$ is actually a section of
$\E\sso J^kF$, then a simple calculation shows that
\begin{align}
	f(v) &= \bar{f}_{\bar{u}}(p^{\otimes k}\otimes \psi) + f(u) = 0 \\
	\implies ~ (\bar{f}_{\bar{u}}\cdot p^{\otimes k})\psi &= -f(u).
\end{align}
The last equality is a linear equation for $\psi$. For it to have a
unique solution, as we demanded above, the linear map
$\bar{f}_{\bar{u}}\cdot p^{\otimes k}\colon F \to E$ needs to be invertible.
This discussion motivates the following definition.
\begin{definition}
Consider a vector bundle $F\to M$ and a quasilinear PDE system $\E\sso
J^kF$, with adapted equation form $(f,E)$ and principal symbol
$\bar{f}$. Given $\bar{u}\in J^{k-1}F$, a covector $p\in (T^*M)^{k-1}_{\bar{u}}$
is called \emph{$\bar{u}$-non-characteristic} (or just
\emph{non-characteristic}) if the contraction of the principal symbol
with $p^{\otimes k}$, $(\bar{f}_{\bar{u}}\cdot p^{\otimes k})\colon
(F)^{k-1}_{\bar{u}} \to (E)^{k-1}_{\bar{u}}$, is invertible. Otherwise, $p$ is
said to be \emph{$\bar{u}$-characteristic} (or just \emph{characteristic}).
\end{definition}
Recall that in local coordinates $(x^i,u^a)$ on $F$, a $(k-1)$-jet
$\bar{u}\in J^{k-1}F$ is represented as $\bar{u} = (x^i,u^a_I)$, $|I|\le k-1$.
For linear and semilinear systems, the principal symbol depends only on
the base space coordinates $x^i$. This implies that for such systems a
covector $p\in T^*_xM$ can be decided to be characteristic without
looking at the field value $u^a$ or higher jet components $u^a_I$. This
is the usual situation in relativistic field theory with a fixed
background metric, where characteristic covectors coincide with
\emph{null} covectors of the background metric.  On the other hand, for
quasilinear PDE systems, the principal symbol $\bar{f}_{\bar{u}}$ may depend
on $u^a$ as well as higher jet components $u^a_I$ up to order $k-1$. In this
sense, the notion of a characteristic covector becomes \emph{field
dependent}.

As we shall see later on, the geometry of the locus of characteristic
covectors of a PDE system is closely related to the causal structure of
the corresponding classical field theory, in particular to the domain of
dependence and finite propagation speed results.  Compare now the
equations of motion of relativistic field theory (say the Standard
Model) with a fixed background metric and coupled with GR, which
provides a dynamical metric. As remarked above, characteristic vectors
and hence the causal structure of the theory on a fixed background is
field independent, since the equations of motion constitute at most a
semilinear PDE system. On the other hand, coupled to GR, relativistic
field theory becomes quasilinear, since all principal symbols depend on
the metric, which is dynamical. Thus, GR with any of the Standard Model
matter coupled to it constitutes a system with \emph{field dependent
causal structure}. This statement will be made more precise in the next
section. At the very least, we expect the causal structure of GR to be
significantly different (and more complicated) than that of other
relativistic field theories.

Characteristic covectors are obstructions to converting a PDE system
into ODE or Cauchy-Kovalevskaya form. An initial value formulation with
data on a codim-$1$ surface $\iota\colon S\sso M$ can be achieved if the
highest $S$-transverse derivatives of the unknown section could be
solved for in terms of data $\varphi\colon S\to \iota^*J^{k-1}F$. For
this it is necessary that a non-vanishing conormal section $p\colon S\to
\iota^*T^*M$ is everywhere non-characteristic with respect to $\varphi$.
Such a pair $(S,\varphi)$ is referred to as \emph{non-characteristic
initial data}.

However, being non-characteristic is not a sufficient property to set up
a well-posed initial value problem for a given initial data set
$(S,\varphi)$. This is where the notion of \emph{hyperbolicity} comes
in. There are several different notions of hyperbolicity. We will only
discuss two, which are sufficient for our purposes. First, we define
\emph{symmetric hyperbolicity} and then make some comments about
\emph{regular hyperbolicity}.

For a first order quasilinear system, the geometry of the principal
symbol simplifies. It can be expressed as a morphism
\begin{equation}
	\bar{f}\colon (F \otimes_M T^*M)^0 \to (E)^0
\end{equation}
of vector bundles over $J^0F\cong F$.  Given a section $p\colon F\to
(T^*M)^0$ and $\psi\colon F\to (F)^0$, we get a section
$\xi=\bar{f}(p\otimes \psi) = (f\cdot p)\psi$ of $(E)^0\to F$.  In local
coordinates $(x^i,u^a)$ on $F$ and $(x^i,v_A)$ on $E$, we have
\begin{equation}
	\xi_A(x,u) = \bar{f}^{i}_{Aa}(x,u) \, p^i(x,u) \psi^a(x,u) .
\end{equation}

For the next definition, we introduce some new notation. Recall that
$\Lambda^n M\to M$ is the bundle of densities on $M$. Given any vector
bundle $V\to M$, we denote the bundle of \emph{$V$-valued densities} by
$\tilde{V} = \Lambda^nM\otimes_M V$. The bundle of \emph{dual densities}
is denoted by $\tilde{V}^* = \Lambda^nM\otimes_M V^*$. \emph{Densitized}
symmetric powers will be denoted by $\tilde{S}^k V = \Lambda^n
M\otimes_m S^k V$. The defining property of an orientable manifold is
the existence of nowhere vanishing sections of $\Lambda^n M\to M$. An
oriented manifold chooses a privileged class of \emph{positive
densities}, which are nowhere vanishing and such that any two positive
densities are related through multiplication by an everywhere positive
scalar function.  \begin{definition}\label{def:symhyp} Consider a first
order, quasilinear PDE system on $F\to M$, with $M$ oriented and with
adapted equation form $(f,\tilde{F}^*)$ and principal symbol $\bar{f}$.
This PDE system is said to be \emph{symmetric hyperbolic} if for each
$\bar{u}\in F$ there
exists a covector $p\in (T^*M)^0_{\bar{u}}$ such that
\begin{equation}
	\bar{f}_{\bar{u}}\cdot p \in (\tilde{S}^2F^*)^0_{\bar{u}}
		\sso (\Lambda^nM\otimes_M F^*\otimes_M F^*)^0_{\bar{u}} ,
\end{equation}
that is, $\bar{f}_{\bar{u}} \cdot p$ is a \emph{symmetric} bilinear form
on the fiber $(F)^0_{\bar{u}}$, and moreover that $\bar{f}_u\cdot p$ is
\emph{positive definite} with respect to the orientation on $M$.  Such a
covector $p$ called \emph{$\bar{u}$-spacelike} (or just
\emph{spacelike}) and \emph{future oriented}, while $-p$ is
\emph{$\bar{u}$-spacelike} but \emph{past oriented}.
\end{definition}
Decoding the above definition in local coordinates $(x^i,u^a)$ on $F$,
we have that $p_i \bar{f}^i_{ab}(x,u) \psi^a \psi^b > 0$ (as an element
of $(\Lambda^n_x M)^0_{(x,u)}$, with respect to the orientation on $M$)
with $p\in (T^*M)^0_{(x,u)}$ and for all $\psi\in (F)^0_{(x,u)}$.

Note that positive definiteness is stronger than invertibility. So every
spacelike covector is also non-characteristic, but not every
non-characteristic covector is necessarily spacelike. An \emph{initial
data set} $(S,\varphi)$, with $\iota\colon S\sso M$, is called
\emph{spacelike} if a non-vanishing conormal section $p\colon S\to
\iota^*T^*M$ is everywhere $\varphi$-spacelike. It is for spacelike
initial data that the theory of symmetric hyperbolic PDEs establishes
local well-posedness.

\begin{remark}
Our definition of a symmetric hyperbolic system is slightly different
from the standard one, where the principal symbol is valued in $S^2 F^*$
rather than the densitized $\tilde{S}^2 F^*$. For a manifold with a
fixed orientation, as we are considering, the difference consists of
tensoring with a positive density. Densitizing the equation form turns
out to be more convenient in the current setting, in particular in
Sect.~\ref{sec:classquant}. However, densitization is also natural in
the original context where symmetric hyperbolicity is used, the
construction of an energy norm by integrating over a future oriented,
spacelike surface $\iota\colon\Sigma\sso M$, $\|\psi\|^2 = \int_\Sigma
\iota^*\tr \bar{f}(\psi,\psi)$, where the trace naturally converts the
principal symbol into a current density (the single contravariant index
of $\bar{f}$ is contracted with one of its $n$ antisymmetric covariant
ones), which can be naturally integrated when pulled back to the
codim-$1$ surface $\Sigma$.
\end{remark}
\begin{remark}
Note also that our usage of the word \emph{spacelike} differs from the
standard one one in Lorentzian geometry, where our spacelike covectors
would be referred to as ``timelike''.  However, this terminology is
consistent with the literature on hyperbolic PDEs. Moreover, outside
pseudo-Riemannian geometry, there is no natural identification between
vectors and covectors. On the other hand, covectors are still naturally
identified with codimension-$1$ tangent planes. As expected a spacelike
plane consists of spacelike vectors we merely extend this terminology to
a corresponding covector. The terminology introduced for vectors below
is standard in either literature.
\end{remark}

It is convenient to introduce the following
geometric notions as well.
\begin{definition}
Consider $\bar{u}\in F$. A tangent vector $v\in (TM)^0_{\bar{u}}$ is
called \emph{causal} and \emph{future directed} if $p\cdot v\ge 0$ for
every spacelike and future oriented $p\in (T^*M)^0_{\bar{u}}$. If the
inequality is reversed, $p\cdot v\le 0$, then $v$ is called
\emph{causal} and \emph{past directed}.
\end{definition}

As noted above, the notion of symmetric hyperbolicity applies only to
first order, quasilinear systems. This is not too strong of a
restriction in practice, as often higher order PDE systems can be
reduced to first order ones by introducing extra fields (increasing the
dimension of the field bundle fibers). In particular, this exercise was
carried out for GR, all Standard Model fields, as well as relativistic
hydrodynamics in~\cite{geroch-pde}. On the other hand, many equations obtained as
the Euler-Lagrange equations of an action are not directly in symmetric
hyperbolic form. Such equations can actually be conveniently treated
directly, without reduction to first order form. The relevant notion
which comes with a well-posedness theory is \emph{regular
hyperbolicity}, which was developed much more recently by Christodoulou
in~\cite{chr-pde}. It would be ideal if the
class of regularly hyperbolic systems, upon reduction to first order
form, were included in the class of symmetric hyperbolic ones.
Unfortunately, this does not appear to be the case due to subtle
differences between the two definitions~\cite{bs-hyper}. On the other
hand, the similarities between symmetric and regular hyperbolicity
include similar definitions of characteristic and spacelike covectors
and similar conal properties of the geometry of their loci, which are
explored in the next section. These properties are all we need for the
purposes of this paper. Therefore, in the sequel, we refer only to
results for symmetric hyperbolic systems, even though similar results
can be obtained for regular hyperbolic ones.

At this point it is worth making a few comments on the so-called
\emph{energy methods} used to establish well-posedness for symmetric and
regular hyperbolic systems of order $k=1,2$. A central role is played by a
family of local, horizontally conserved, positive definite, coercive
\emph{energy current densities} $\eps^\tau$, parametrized by $\tau>0$.
Locality means that it is a section of the horizontal $(n-1)$-form
bundle over $J^kF$, $\eps^\tau\colon J^{k-1}F\to
(\Lambda^{n-1}M)^{k-1}$. Equivalently, $\eps^\tau \in \Forms^{n-1,0}(F)$
and is projectable to $J^{k-1}F$. Horizontally conserved means that $\d
(j^k\phi)^*\eps^\tau = (j^k\phi)^*\dh \eps^\tau = 0$, when $\phi\colon
M\to F$ is a solution of the given PDE system. This conservation
criterion can actually be relaxed up to lower order terms, that is, we
allow $(j^k\phi)^*\dh \eps^\tau = \gamma(j^{k-1}\phi) +
\tau\beta(j^k\phi)$ to be a function of $\phi$ and its derivatives.
Here, both terms on the right hand side are of lower order: $\gamma$
because it depends only derivatives up to order $k-1$ and $\beta$
because it is proportional to $\tau$, which can be made arbitrarily
small. Positive definite means that the pullback
$\iota^*[(j^k\phi)^*\eps^\tau]$ onto a future oriented, spacelike
codim-$1$ surface $\iota\colon\Sigma \sso M$ gives a positive density,
so that the \emph{energy integral} $E^\tau_\Sigma[\phi] = \int_\Sigma
\iota^*[(j^k\phi)^*\eps^\tau]$ is also positive. Coercivity means that
the algebraic structure of $\eps^\tau$ is such that it allows
$E^\tau_\Sigma[\phi]$ to dominate $\|j^{k-1}\phi\|_\Sigma$, some
$L^2$-norm on the restrictions of $(k-1)$-jets to $\Sigma$, in order
to prove an inequality of Gronwall type,
\begin{equation}
	\|j^{k-1}\phi\|_{\Sigma_t} \le C e^{Ct} \|j^{k-1}\phi\|_{\Sigma_0},
	~ C > 0 , 
\end{equation}
where the surfaces $\Sigma_t$ are spacelike surfaces with common
boundary, whose interiors foliate a domain in $M$. Such domains are
called \emph{lens-shaped} and, as we shall see later on, they
essentially determine domains of dependence.  This inequality is crucial
in the proofs of well-posedness.

The main difference between the symmetric and regular hyperbolic
equations is in how the energy current density $\eps^\tau$ is obtained.
In the symmetric hyperbolic case, $\eps^\tau$ is constructed directly
from the principal symbol, namely $\eps^\tau[\phi] = e^{-t/\tau} \tr
\bar{f}_\phi(\phi,\phi)$, where $t$ is a time function whose spacelike
level sets foliate a lens-shaped domain. In the regular hyperbolic case,
Christodoulou extends N\"other's theorem to convert a timelike vector
field into a corresponding conserved (up to lower order terms)
``symmetry current density'' $\eps$, which satisfies all the desired
properties of an energy current density. See Sect.~5.0
of~\cite{chr-pde}, as well as~\cite{bs-hyper}, for a more
detailed comparison of the two cases.

From the above discussion, it follows that it is not so much the
structure of a special equation form that defines the PDE system that
matters directly. Rather what is important to establish well-posedness
using energy methods is the presence of (almost) conserved, local energy
current densities. The algebraic structure of the PDE system is then
important in so far as it gives rise to such energy current densities.
In the geometric theory of PDE systems (or equivalently of so-called
\emph{differential systems}) the study of such (almost) conserved
currents (or \emph{conservation laws}) has been named
\emph{characteristic cohomology}~\cite{bg-cohom,at,bbh}
(this meaning of the overloaded term
\emph{characteristic} is distinct from its use in \emph{characteristic
covector}). For some PDE systems, these conservation laws can be
classified exhaustively, in an intrinsic manner (that is, independent of
the equation form used to define the PDE system)~\cite{vk}.
This observation leads one to speculate that a more general notion of
hyperbolicity can be defined in terms of intrinsic invariants of a PDE
system given by its characteristic cohomology, such that both symmetric
and regular hyperbolicity become special cases thereof.

\subsection{Prolongation, integrability, hyperbolization}
\label{sec:integrability}
One reason to discuss PDE systems as submanifolds of a jet bundle is
independence of a particular equation form. Any two equation forms are
equivalent if they define the same PDE system manifold. We should
specify our notion of equivalence.
\begin{definition}\label{def:pde-equiv}
Consider two field bundles $F_i\to M$, $i=1,2$, and two PDE systems
$\E_i \sse J^{k_i}F_i$. Denote the corresponding spaces of smooth
solution sections by $\S_i(F_i)$. The PDE systems $\E_1$ and $\E_2$ are
said to be equivalent if there exist bundle morphisms $e_{ij}\colon
J^{l_i}F_i\to F_j$, $i\ne j$, such that
\begin{equation}
	\phi_i \in \S_i(F_i) ~~\text{and}~~
	\phi_j = e_{ij}\circ j^{l_i}\phi_i ~~\text{implies}~~
	\phi_j \in \S_j(F_j) ,
\end{equation}
as well as that $e_{12}\circ j^{l_1}$ and $e_{21}\circ j^{l_2}$ are
mutual inverses when restricted to the solution spaces $\S_1(F_1)$ and
$\S_2(F_2)$.
\end{definition}
We can easily extend the notion of equivalence to equation forms of PDE
systems. In that case two different equations forms that define the same
PDE system manifold are trivially equivalent. Note that neither the
field bundles nor the orders of the PDE systems need to be same for
equivalence to hold.

Let us restrict to the case that will be of importance in a later
section, namely of $F_1=F_2=F$ and $e_{12}$ and $e_{21}$ respectively
equal to the canonical projections $J^{l_1}F\to F$ and $J^{l_2}F\to F$,
which are in a sense trivial. In this case, it is certainly sufficient
that $\E_1 = \E_2$ for equivalence to hold, but it is not necessary. In
fact $\E_1$ and $\E_2$ could be of different orders. To obtain necessary
conditions for equivalence, we need to consider \emph{prolongation} of
PDE systems and the possible resulting \emph{integrability conditions}.

A discussion of these notions in the setting of the jet bundle
description of PDE systems can be rather technical. On the other hand,
the theory of equivalence of PDE systems formulated in these terms has
become quite mature and has yielded some important results. The
technical details of this theory can be found
elsewhere~\cite{bcggg,seiler-inv}.
Below we give a brief non-technical introduction to this theory and
state some simplified results relevant for hyperbolic systems.

The step by step derivation and inclusion of integrability conditions
into a PDE is called prolongation. It is easiest to define prolongation
in equation form and in local coordinates. Consider an equation form
$(f,E)$ of a PDE system $\E\sse J^kF$, as well as local coordinates
$(x^i,u^a)$ on $F$ and $(x^i,v_A)$ on $E$. If the section $\phi\colon
M\to F$ satisfies the PDE system, we have the following system of
equations holding in local coordinates
\begin{equation}
	f_A(x^j,\del_J\phi^a) = 0 .
\end{equation}
These equations hold for each point $x\in M$, therefore when both sides
are differentiated with respect to the coordinates on $M$, the resulting
equations are still satisfied,
\begin{equation}
	\del_i f_A(x^j,\del_J\phi^a)
	= (\hat{\del}_i f_A)(x^j,\del_J\phi^a) = 0 ,
\end{equation}
where $\hat{\del}_i f_A$ are functions on $J^{k+1}F$ obtained by pulling
back the functions $f_A$ from $J^kF$ to $J^{k+1}F$ and applying the
horizontal vector field $\hat{\del}_i$. These new functions $f_{iA} =
\hat{\del}_i f_A$, together with the old $f_A$ ones, constitute the
local coordinate expression for the equation form $(p^1 f,J^1E)$, where
$p^1$ is the $1$-prolongation defined in Sect.~\ref{sec:jets}.
We call the corresponding PDE system $\E^1 = \E_{p^1 f} \sso J^{k+1}F$
the \emph{first prolongation} of $\E$ or also its \emph{prolongation to
order $k+1$}. Prolongations to any higher order, $(p^l f,J^lE)$ and
$\E^{l} \sso J^{k+l}F$, are defined iteratively.

Let $p_l\colon J^{k+l}F\to J^kF$ be the canonical jet projection, which
restricts to $p_l\colon \E^l\to \E$.  Notice that we necessarily have
$p_l(\E^{l}) \sse \E$, since the prolonged system contains the original
one as a subsystem. We have just shown that sections satisfying $\E$
automatically satisfy $\E^{l}$, and vice versa. In other words,
$\S_{\E}(F) = \S_{\E^{l}}(F)$ and the two PDE systems are equivalent.
However, the inclusion $p_l\E^{l} \sse \E$ may be strict, which would
mean that there exist non-trivial \emph{integrability conditions}. There
exists an equation form $(f\oplus g,E\oplus G)$ for $p_l(\E^{l})\sso J^k
F$, where $(g,G)$ is an equation form for the integrability conditions.
These observations provide another sufficient condition for the
equivalence of two PDE systems, namely that there exists an order $l\ge
k_1,k_2$ such that $\E_1^{l-k_1} = \E_2^{l-k_2}$ as subsets of $J^lF$.

Prolongation can be iterated indefinitely. Taking this process to its
limit, we obtain the infinite order prolongation $\E^\oo \sso J^\oo F$
from the equation form $(p^\oo f,J^\oo E)$, which takes all possible
integrability conditions into account. One can then show that the
equality $\E_1^{\oo} = \E_2^{\oo}$, as subsets of $J^\oo F$, is both a
necessary and a sufficient condition for the equivalence of two PDE
systems. It is a deep theorem of the geometric theory of PDE
systems~\cite{seiler-inv,bcggg,goldschmidt}
that, for any given PDE system, there exists a finite order $l$ such
that prolongations above that order introduce no new integrability
conditions. Therefore, this restricted version of the equivalence
problem can be decided in finitely many steps. Finally, it can also be
shown that any PDE system is equivalent to one of first order, though
usually defined on a different field bundle.

It is a well known fact that many PDE systems of mathematical physics
are not given directly in symmetric hyperbolic equation form, though for
somewhat different reasons. The Klein-Gordon equation, though regularly
hyperbolic, in its variational form is not first order. The Dirac
equation, though first order in its variational form, does not have a
symmetric principal symbol. Maxwell equations, in terms of the vector
potential, cannot be hyperbolic because of local gauge invariance (which
spoils uniqueness in the Cauchy problem). The Proca equation, again in
its variational form, Maxwell equations, in terms of the field strength
or in terms of a gauge fixed vector potential, are only equivalent to a
hyperbolic system with additional constraints.

Consider a quasilinear PDE system $\E\sso J^kF$ with an equation form
$(f\oplus c,\tilde{F}^*\oplus E)$, where $(f,\tilde{F}^*)$ is an
adapted, first order, quasilinear, symmetric hyperbolic equation form.
We refer to $(f,\tilde{F}^*)$ as the \emph{hyperbolic subsystem} and to
$(c,E)$ as the \emph{constraints subsystem}. The constraints are said to
be \emph{(symmetric) hyperbolically integrable} if there exists a first
order, linear PDE system $\E'\sso J^1E$ with adapted (symmetric)
hyperbolic equation form $(h,\tilde{E}^*)$, the \emph{consistency
subsystem}, such that the following identity is satisfied: $h\circ c =
q\circ h$, with some differential operator $q\colon J^k\tilde{F}^*\to
E^*$ satisfying $q(0) = 0$. In other words, for any section $\phi\colon
M\to F$ we have
\begin{equation}
	h[c[\phi]] = q[f[\phi]] ,
\end{equation}
and $h[c[\phi]] = 0$ if $f[\phi] = 0$. The consistency subsystem
$(h,\tilde{E}^*)$ is linear in the sense that $h[c[\phi]]$ depends
linearly on $c[\phi]$, but it may depend non-linearly on $\phi$, such
that the \emph{compound} system with equation form $(f\oplus h,
\tilde{F}^*\oplus \tilde{E}^*)$ is first order, quasilinear, symmetric
hyperbolic. A PDE system with such an adapted equation form $(f\oplus c,
\tilde{F}^*\oplus E)$ is called \emph{(symmetric) hyperbolic with
constraints}.  When the constraints are hyperbolically integrable, by
studying the properties of the compound hyperbolic PDE system on
$F\oplus_M E$, we shall see in Sect.~\ref{sec:pde-theory} that all the
well-posedness results for symmetric hyperbolic systems also apply in
the presence of constraints.

We call a PDE system \emph{(symmetric) hyperbolizable} if it is
equivalent to one that is (symmetric) hyperbolic with constraints. We
call the equivalence map that brings a PDE system into such a form a
\emph{(symmetric) hyperbolization}.

\subsection{Examples}
% XXX: For now, just refer to \cite{geroch-pde} and \cite{hs-gauge}.
Many examples of reductions of relativistic field theories are given in
Appendix A of~\cite{geroch-pde}. Another source of examples of field theories
described by hyperbolic systems with constraints is~\cite{hs-gauge}. In
the latter reference a slightly different notion of hyperbolicity is
used, but the given examples still fit into our framework provide they
are first reduced to symmetric hyperbolic form.

\section{Chracteristic Geometry, Causality, Domain of Dependence}
\label{sec:chargeom}
The term \emph{Lorentzian geometry} refers to the study of structures
induced on spacetime manifolds by the presence of a Lorentzian metric.
One example of this kind of structure are the cones of \emph{null
vectors}. In particular, it is these cones that determine causal
relationships between points in Lorentzian spacetimes. Below we will be
similarly interested in the geometry of \emph{characteristic covectors}
of a hyperbolic PDE system, which also form cones and also induce a
causal order on the points of a manifold. We refer to the investigation
of the geometry of cones of characteristic covectors and related
structures as \emph{characteristic geometry}. Even more generally, we
are interested in \emph{conal geometry}, which is concerned with the
differential topology of \emph{conal manifolds}~\cite{lawson,neeb}. Each
point of a conal manifold is smoothly assigned an open cone of tangent
or cotangent vectors. The study of conal manifolds, referred to below as
\emph{cone bundles}, though still an immature field, has the potential
to capitalize on and then
subsume much of the earlier work on causal order on spacetime
manifolds~\cite{penrose,geroch-gh,he,gps}. The
generalization from Lorentzian cones to more general ones, for the
purposes of describing causality in quantum field theory has been
considered before~\cite{bannier,rainer1,rainer2}, but not in a concrete way.

Fix a vector bundle $F\to M$ and a first order, quasilinear, symmetric
hyperbolic PDE system $\E\sso J^1F$ on it with adapted equation form
$(f,\tilde{F}^*)$ and a symmetric adapted principal symbol $\bar{f}$. We mostly
follow~\cite{beig-hyper} with the regards characteristic
geometry terminology.

\subsection{Geometry of cone bundles}\label{sec:geom-cones}
\begin{definition}
For $\bar{u}\in F$, denote by $\Gamma_{\bar{u}}^\oast \sso
(T^*M)^0_{\bar{u}}$ the set of \emph{$\bar{u}$-spacelike, future
oriented covectors}. Similarly, denote by $\hat{\C}_{\bar{u}}^*\sso
(T^*M)^0_{\bar{u}}$ the set of \emph{$\bar{u}$-characteristic
covectors}. Let a covector be called \emph{cocausal, future oriented} if
it belongs to $\Gamma_{\bar{u}}^* = \bar{\Gamma}_{\bar{u}}^\oast$, where
the bar denotes closure. Finally, let a covector be called \emph{inner
$\bar{u}$-characteristic} and \emph{future oriented} if it belongs to
$\C_{\bar{u}}^* = \hat{\C}^*_{\bar{u}} \cap \Gamma^*_{\bar{u}}$.
Also, denote
\begin{equation}
	\Gamma^\oast = \bigcup_{\bar{u}\in F} \Gamma_{\bar{u}}^\oast ,
	\quad
	\Gamma^* = \bigcup_{\bar{u}\in F} \Gamma_{\bar{u}}^* ,
	\quad
	\hat{\C}^* = \bigcup_{\bar{u}\in F} \hat{\C}_{\bar{u}}^* ,
	\quad\text{and}\quad
	\C^* = \bigcup_{\bar{u}\in F} \C_{\bar{u}}^* .
\end{equation}
\end{definition}
It is easy to show that $\Gamma^\oast\sso (T^*M)^0$ is an open subset
and hence a submanifold, as well as that $\del\Gamma^* =
\del\Gamma^\oast= \C^*$.  Moreover, each $\Gamma^\oast_{\bar{u}}$ is a
convex \emph{cone} (invariant under multiplication by positive scalars).
Note that, since the interior of an open convex cone is diffeomorphic to
a ball, $\Gamma^\oast\to M$ is actually a smooth bundle, in the sense of
$\Bndl$. On the other hand, $\Gamma^\oast$ has much more structure than
a generic smooth bundle. To take this structure into account we
introduce another category.
\begin{definition}
Let the \emph{category of cone bundles} $\CBndl$ be the subcategory of
$\Bndl$ described as follows. An object $C\to M$ of $\CBndl$, termed a
\emph{cone bundle}, is a smooth bundle for which there exists a vector
bundle $E\to M$ and an inclusion bundle morphism $\iota\colon C \sso
E$, such that each fiber $C_x$, $x\in M$, is an open convex cone in the
corresponding fiber $E_x$ (where we have implicitly identified $C$ with
its image $\iota(C)\sso E$). Given two cone bundles $C\to M$ and $C'\to
M'$, with corresponding enveloping vector bundles $E\to M$ and $E'\to
M'$, for every $\CBndl$ morphism $\chi\colon C\to C'$ there exists a
vector bundle morphism $\psi\colon E\to E'$ such that $\chi = \psi|_C$,
namely the following diagram commutes
\begin{equation}
\vcenter{\xymatrix{
	C \ar[d]^\chi \ar[r]^\sso & E \ar[d]^\psi \\
	C'            \ar[r]^\sso & E' ~ .
}}
\end{equation}
\end{definition}
Thus, $\Gamma^\oast\to F$ is obviously a cone bundle enveloped by
$(T^*M)^0\to F$. We refer to $\Gamma^\oast\to F$ as the \emph{cone
bundle of future oriented, spacelike covectors}. Unfortunately, the
closure $\Gamma^* = \bar{\Gamma}^\oast$ is in general not expected to be
a manifold, since its boundary, consisting of the inner characteristic
covectors $\C^*$, is fiberwise a piecewise algebraic variety and hence
can have corners and other singularities. However, even though
$\C^*,\Gamma^*\sso (T^*M)^0$ are not objects of $\CBndl$ (by lack of
convexity or by presence of a boundary) we refer to them as cone bundles
anyway, namely the \emph{cone bundle of future oriented, inner
characteristic covectors} and the \emph{cone bundle of future oriented,
cocausal covectors}, respectively.

Next we turn to causal and related vectors. These are most easily
described using the following geometric notion of duality that is often
used in convex geometry~\cite{rockafellar}.
\begin{definition}
Given a finite dimensional vector space $V$ and a convex cone $C\sso V$,
denote its closure by $\bar{C}$ and its open interior by $\mathring{C}$.
We define the \emph{convex dual} $C^*\sso V^*$ as the set
\begin{equation}
	C^* = \{ u\in V^* \mid u\cdot v \ge 0 \quad\text{for all}~v\in C \} .
\end{equation}
We define the \emph{strict convex dual} $C^\oast\sso V^*$ as the set
\begin{equation}
	C^\oast = \{ u\in V^* \mid u\cdot v > 0 \quad\text{for all}~v\in
		\bar{C}\setminus\{0\} \} .
\end{equation}
\end{definition}
To attribute \emph{strict} may be dropped from the description of
$C^\oast$ when it is clear from context. It is easy to check the
following
\begin{proposition}
Consider a convex cone $C$.
\begin{enumerate}
\item[(i)]
	The convex dual $C^*$ is always closed and also convex. In addition,
	$C^{**}=\bar{C}$. The strict convex dual $C^\oast$ is always open and
	convex.
\item[(ii)]
	$C^*\setminus\{0\}$ is non-empty iff $C$ is contained in a closed half
	space. $C^\oast$ is non-empty iff $C$ contains no affine line (it is
	\emph{salient}).
\item[(iii)]
	If $C$ is open and salient, then $C^{\oast\oast} = \mathring{C}$.
\item[(iv)]
	The inclusion of cones $C_1\sse C_2$ implies the reverse inclusion of
	their duals, $C_1^*\supseteq C_2^*$ and $C_1^\oast\supseteq C_2^\oast$.
\item[(v)]
	The convex dual of the intersection of closures of cones $C_1$ and
	$C_2$ is the convex union (convex hull of the union) of their duals,
	$(\bar{C}_1\cap \bar{C}_2)^* = C_1^* + C_2^*$, where the right hand
	side is written as a Minkowski sum, which for cones coincides with the
	convex hull of the union. The converse identity holds as well,
	$(\bar{C}_1+\bar{C}_2)^* = C_1^*\cap C_2^*$. Similarly, if $C_1$ and
	$C_2$ are open and salient, then $(C_1\cap C_2)^\oast = C_1^\oast +
	C_2^\oast$ and $(C_1+C_2)^\oast = C_1^\oast \cap C_2^\oast$.
\end{enumerate}
\end{proposition}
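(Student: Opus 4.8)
The strategy is to reduce everything to two facts: the bipolar theorem $C^{**}=\bar C$ for a (nonempty) convex cone, and the identity $C^\oast=\operatorname{int}(C^*)$ relating the strict convex dual to the topological interior of the convex dual; with these, the remaining claims are bookkeeping. For (i) I would first note that $C^*=\bigcap_{v\in C}\{u\mid u\cdot v\ge 0\}$ is an intersection of closed half-spaces, hence closed and convex, and that $C^*$ is unchanged if $C$ is replaced by $\bar C$ (by continuity). The inclusion $\bar C\sse C^{**}$ is clear since $C^{**}$ is closed and contains $C$; for the reverse, separate any $v_0\notin\bar C$ from $\bar C$ by a hyperplane, and observe that because $\bar C$ is a cone through the origin the separating functional may be taken nonnegative on $\bar C$, i.e.\ to lie in $C^*$, whence $v_0\notin C^{**}$. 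The Lemma $C^\oast=\operatorname{int}(C^*)$ I would prove thus: if $u\in C^\oast$ then $v\mapsto u\cdot v$ is positive on the compact set $\bar C\cap\{\|v\|=1\}$ (for any fixed norm), hence $u\cdot v\ge\delta\|v\|$ on $\bar C$ for some $\delta>0$, so every $u'$ within $\delta$ of $u$ still lies in $C^*$; conversely, if $u\in\operatorname{int}(C^*)$ but $u\cdot v_0=0$ for some $v_0\in\bar C\setminus\{0\}$, then moving $u$ in the direction of a functional $w$ with $w\cdot v_0<0$ (which exists as $v_0\neq 0$) leaves $C^*$, a contradiction. Openness and convexity of $C^\oast$ in (i) are then immediate.

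For (ii), the condition $C^*\setminus\{0\}\neq\emptyset$ says exactly that some nonzero $u$ has $C\sse\{u\cdot v\ge 0\}$; since $C$ is a cone, even $C\sse\{u\cdot v\ge c\}$ forces $u\cdot v\ge 0$ on $C$, so this is precisely the statement that $C$ lies in a closed half-space. For the second clause, the Lemma gives: $C^\oast\neq\emptyset$ iff $\operatorname{int}(C^*)\neq\emptyset$ iff $C^*$ is full-dimensional iff $\operatorname{span}(C^*)^\perp=\{0\}$; and $\operatorname{span}(C^*)^\perp=C^{**}\cap(-C^{**})=\bar C\cap(-\bar C)$ (pairing to zero against all of $C^*$ means lying in both $C^{**}$ and $-C^{**}$), so this is the condition that $\bar C$ be salient. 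It remains to see that $\bar C$ is salient precisely when $C$ contains no affine line: a line $a+\R w\sse C$ gives, on scaling by $t\to 0^+$, a line $\R w\sse\bar C$; conversely, if $0\neq w\in\bar C\cap(-\bar C)$, then the recession cone of the closed convex cone $\bar C$ being $\bar C$ itself makes $\bar C$ invariant under translation by $\R w$, so any $p$ in the relative interior of $\bar C$ --- which equals that of $C$ and so lies in $C$ --- spans a line $p+\R w\sse C$.

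Part (iii) is a short chase: for $C$ open and salient, $C^*$ is full-dimensional by (ii), so $\operatorname{int}(C^*)$ and $C^*$ share the closure $C^*$, hence the same convex dual, giving $(C^\oast)^*=(C^*)^*=\bar C$; the Lemma then gives $C^{\oast\oast}=\operatorname{int}((C^\oast)^*)=\operatorname{int}(\bar C)=\operatorname{int}(C)=C$. Part (iv) is immediate, since enlarging the cone only imposes more constraints on the dual. For (v), the additive identities $(\bar C_1+\bar C_2)^*=C_1^*\cap C_2^*$ and $(C_1+C_2)^\oast=C_1^\oast\cap C_2^\oast$ follow by testing a covector separately against each summand (using $0\in\bar C_i$, resp.\ shrinking within $C_i$ in the strict case), together with $\operatorname{int}(A\cap B)=\operatorname{int}(A)\cap\operatorname{int}(B)$ and the Lemma. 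The intersection identities then follow by applying $(\cdot)^*$ to the additive ones and invoking the bipolar theorem --- for the $\oast$-version one also needs $\overline{C_1\cap C_2}=\bar C_1\cap\bar C_2$ for intersecting open convex cones and $\operatorname{int}(A+B)=\operatorname{int}(A)+\operatorname{int}(B)$ for convex sets with interior points.

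The one genuinely delicate point --- and the step I expect to be the main obstacle --- is that the Minkowski sum of two closed convex cones need not be closed, so the bipolar argument yields only $(\bar C_1\cap\bar C_2)^*=\overline{C_1^*+C_2^*}$ in general. The closure is superfluous exactly when $C_1^*+C_2^*$ is already closed, which holds e.g.\ for polyhedral cones or whenever $C_1^\oast\cap C_2^\oast\neq\emptyset$ --- the situation relevant to the characteristic and cocausal cone bundles studied below --- and (v) should be read with this caveat, as it is in the later applications.
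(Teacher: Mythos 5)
The paper offers no proof of this proposition at all: it is prefaced by ``It is easy to check the following'' and a pointer to Rockafellar, so there is no argument of the paper's to compare yours against step by step. Your write-up is the standard convex-analysis proof, correctly organized around two facts --- the bipolar theorem $C^{**}=\bar C$ and the identification $C^\oast=\operatorname{int}(C^*)$ --- and the individual steps check out: the separation argument for the bipolar theorem, the compactness argument on $\bar C\cap\{\|v\|=1\}$ for the openness of $C^\oast$, the computation $\operatorname{span}(C^*)^\perp=\bar C\cap(-\bar C)$ for part (ii), and the recession-cone argument identifying ``$C$ contains no affine line'' with pointedness of $\bar C$ (the one place where you should say explicitly that $\bar C+rw=\bar C$ is a bijection of $\bar C$ onto itself, hence preserves $\operatorname{ri}(\bar C)=\operatorname{ri}(C)\sse C$, which is what puts the line inside $C$ rather than merely inside $\bar C$).

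Your last paragraph is the most valuable part, because it identifies a genuine imprecision in the proposition as the paper states it: the Minkowski sum of two closed convex cones need not be closed, so the bipolar theorem gives only $(\bar C_1\cap\bar C_2)^*=\overline{C_1^*+C_2^*}$, and the unbarred identity can fail (Lorentz cone plus a tangent line in $\R^3$ is the standard counterexample). One correction to your caveat, though: the clean sufficient condition for dropping that closure is that the \emph{relative interiors of $\bar C_1$ and $\bar C_2$ meet} (Rockafellar, Cor.~16.4.2) --- a condition on the primal cones --- whereas your condition $C_1^\oast\cap C_2^\oast\neq\emptyset$ is, by the same corollary applied to the duals, the hypothesis that guarantees closedness of $\bar C_1+\bar C_2$ in the \emph{other} Minkowski identity; I would not rely on it for the first one without further argument. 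In the paper's actual applications both conditions hold, since the chronal cones there are open, proper, and explicitly assumed to intersect (otherwise the sections are declared chronologically incomparable), so the $\oast$-versions --- the ones used for $C^\oast_{\phi,\psi}=\phi^*\Gamma^\oast\cap\psi^*\Gamma^\oast$ and for the Cauchy pushouts --- go through exactly as you indicate.
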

For convenience, we extend this duality to cone bundles. That is, if
$C\to M$ is a cone sub-bundle of a vector bundle $E\to M$, then the
\emph{convex dual cone bundle} $C^*\to M$ is the cone sub-bundle of
$E^*\to M$ such that each fiber $C^*_x$ is the convex dual of the
corresponding fiber $C_x$, for $x\in M$. Similarly, we can define the
\emph{strict convex dual cone bundle} $C^\oast\to M$. The operations of
intersection $(\cap)$ and convex union $(+)$ are extended to cone
bundles in the same way. Clearly both $C\to M$ and $C^\oast\to M$ are
cone bundles, that is objects of $\CBndl$. On the other hand, $C^*\to M$
may not be, though again we shall refer to it as a cone bundle anyway.

\begin{definition}
The \emph{cone bundle of future directed, timelike vectors} $\Gamma\sso
(TM)^0$ is defined to be strict convex dual of the cone bundle of future
directed, spacelike covectors, $\Gamma = (\Gamma^\oast)^\oast$. Let
$\bar{\Gamma} \sso (TM)^0$ be called the \emph{cone bundle of future
directed, causal vectors}. Incidentally, $\bar{\Gamma} =
(\bar{\Gamma}^\oast)^*$. Let $\C = \del\Gamma = \del\bar{\Gamma} \sso
(TM)^0$ be called the \emph{cone bundle of future directed, outer ray
vectors}\cite{perlick}. The \emph{past directed} versions are obtained by
reflecting the cones through the origin.
\end{definition}
While the definition of $\Gamma^\oast$ is primitive, it is also
straightforward to show that in fact it is the strict convex dual of
$\Gamma$, that is, $\Gamma^\oast = (\Gamma)^\oast$.

These definitions extend straightforwardly to hyperbolic systems with
constraints. Suppose the adapted equation form $(f\oplus h,
\tilde{F}^*\oplus_M \tilde{E}^*)$ defines the corresponding compound
system, with $(f,\tilde{F}^*)$ the hyperbolic and $(h,\tilde{E}^*)$ the
consistency subsystems. The compound system is quasilinear, but the
consistency subsystem is linear in the constraints (sections of $E\to
M$). Therefore the cones of inner characteristic and spacelike
covectors, as well as outer ray and timelike vectors depend only on the
field bundle $F$ and not on the constraint bundle $E$. Therefore, for a
hyperbolic system on $F\to M$ with hyperbolically integrable
constraints, we define the cone bundles $\C^*$, $\Gamma^\oast$ and
$\Gamma$ over $F$ as those associated to the corresponding compound
system.

Note that these cones are defined intrinsically (independently of any
background metric) by the geometry of a first order, quasilinear,
symmetric hyperbolic PDE system, or equivalently algebraic and geometric
properties of its principal symbol. Of course, the dependence on a
background metric (if one is present) may appear implicitly through the
principal symbol, as it does for linear and semi-linear wave equations.
This remark and the significance of the geometry of $\Gamma$ and
$\Gamma^\oast$ in the domain of dependence theorem to be described later is
sufficient to motivate the fact that such cones are the central objects
of the causal structure of classical field theory. The following
definition leads to the study of conal manifolds and conal geometry
alluded to in the introduction to this section. Many of the results in
conal geometry do not depend on the provenance of these cones from the
characteristic or spacelike covectors of a hyperbolic PDE system.
However, the later results concerning causal relations in PDE and
classical field theories do use that connection.
\begin{definition}
Given a manifold $M$, a \emph{chronal cone bundle} on $M$ is a cone
bundle $C\to M$ (in the sense of $\CBndl$) enveloped by the tangent
bundle $TM\to M$ such that each fiber $C_x$, $x\in M$, is a \emph{proper
cone} (non-empty, open, convex, salient). The elements of $C$ are
\emph{future directed, timelike vectors}.  The strict convex dual
$C^\oast\to M$, enveloped by the cotangent bundle $T^*M\to M$, is the
corresponding \emph{spacelike cone bundle}. The elements of $C^\oast$
are \emph{future oriented, spacelike covectors}. The corresponding cone
bundles $\bar{C}\to M$ and $\bar{C}^\oast\to M$ are referred to,
respectively, as the \emph{causal} and \emph{cocausal} cone bundles.
\end{definition}
Note that the open convex dual of a proper cone is again a proper cone.
Prototypical examples of chronal and spacelike cone bundles are the
pullbacks of the cone bundles of timelike vectors and spacelike
covectors along a section $\phi\colon M \to F$, $C = \phi^*\Gamma$ and
$C^\oast = \phi^*\Gamma^\oast$.
\begin{definition}
Let the \emph{category of chronal cone bundles} $\ChrBndl$ be the
subcategory of $\CBndl_e$ where each object $C\to M$ must be enveloped by
the tangent bundle $TM\to M$ and the morphisms must be restrictions of
the natural maps between tangent bundles induced by base space maps.
Similarly, let the \emph{category of spacelike cone bundles} $\SpBndl$
be the subcategory of $\CBndl_e$ where each object $C^\oast\to M$ must be
enveloped by the cotangent bundle $T^*M\to M$ and the morphisms must be
restrictions of the natural maps between cotangent bundles induced by
base space maps.
\end{definition}
Note that the morphisms are restricted to be open embeddings between
base spaces to make sure that their natural extensions to the tangent
and cotangent bundles are defined unproblematically.

At this point, one may recall some standard notions of Lorentzian
geometry, as long as they are defined only in terms of timelike cones, and
apply them to the geometry cone bundles. Many of the standard theorems
translate as well, some directly and others with some extra effort. We
restrict ourselves to those that are relevant to the issues at hand.

Fix a chronal cone bundle $C\to M$ on a spacetime manifold $M$. Note
that any chronal cone bundle is \emph{time oriented}, since by
definition the fibers consist of single cones rather than double cones
like in the Lorentzian case. By assumptions the cones are directed into
the future.
\begin{definition}
A smooth curve $\gamma$ in $M$ is called \emph{future directed,
timelike} if the tangent to $\gamma$ is everywhere contained in $C$.
The \emph{chronal precedence relation} $I^+\sse M\times M$ (also
$I^+_C$) is defined as
\begin{equation}
	I^+ = \{ (x,y)\in M\times M \mid \exists \gamma,
		~\text{future directed, timelike curve from $x$ to $y$} \} .
\end{equation}
When $(x,y)\in I^+$, we say that $x$ \emph{chronologically precedes} $y$
and also write $x\ll y$.

If we replace $C$ by $\bar{C}$ in the above definitions, we obtain
\emph{causal curves} and the \emph{causal precedence relation} $J^+\sse
M\times M$, denoted $x<y$. The inverse relations are written $I^-$ and
$J^-$.
\end{definition}
It is immediate that $I^+$ is an open, transitive relation ($x\ll y$ and
$y\ll z$ implies $x\ll z$). Using this relation, we can define the usual
causal hierarchy.
\begin{definition}
\begin{enumerate}
\item[(i)]
	If $I^+$ is irreflexive ($x\not\ll x$ or, equivalently, no closed
	timelike curves exist), then $C$ is \emph{chronological}.
\item[(ii)]
	Given an open $N\sse M$, $I^+_{C|_N}$ and $I^+_{C}\cap (N\times N)$ are
	both relations on $N\times N$. We say that $N$ is
	\emph{chronologically compatible} if both of these relations coincide.
	More conventionally, this means that any two points of $N$ that can be
	joined by a timelike curve in $M$ can also be joined by a timelike
	curve in $N$.
	% This means that the chronological precedence relation commutes with
	% restriction (a sheaf-like condition?)
\item[(iii)]
	An open $N\sse M$ is called \emph{chronologically convex} if any timelike
	curve that joins any two points of $N$ must also lie in $N$, which is
	a stronger condition than chronological compatibility.
	% This means that the chronological precedence *groupoid* commutes
	% with restriction.
\item[(iv)]
	The chronal cone bundle $C$ is said to be \emph{strongly chronological} if
	it is chronological and for every $x\in M$ and every neighborhood $N\sse M$
	of $x$, there exists a smaller open neighborhood $L\sse N$ that is
	chronologically convex.
	% Convex neighborhoods form a fundamental system of neighborhoods
	% that generates the manifold topology.
\item[(v)]
	The chronal cone bundle $C$ is said to be \emph{stably chronological}
	if there exists another chronal cone bundle $C'$ that is itself
	chronological and an open neighborhood of the closure of $C$, that is,
	$\bar{C}\setminus\{0\} \sso C'$. For spacelike cone bundles, stable
	chronology is equivalent to the reverse inclusion
	$\bar{C}^{\prime\oast}\setminus\{0\} \sso C^\oast$. The full
	subcategories $\ChrBndl_{sc}$ and $\SpBndl_{sc}$ of $\ChrBndl$ and
	$\SpBndl$ contain only stably chronological cone bundles as objects.
\end{enumerate}
Each of these definitions has an obvious analog when the adjectives
\emph{chronological} or \emph{timelike} are replaced by \emph{causal}.
\end{definition}
Note that \emph{stably chronological} is equivalent to \emph{stably
causal}, so these terms will be used interchangeably.  Moreover, the
chronological chronal cone bundle $C'$ such that contains a stably
chronological chronal cone bundle $C$ can itself be chosen to be stably
chronological.
% XXX: Proof sketch: Use proper containment of $\bar{C}\sse C'$ to build
% a chronal cone bundle C'' such that $\bar{C}\sse C''$ and
% $\bar{C}''\sse C'$. The chronology of $C'$ is faster than the
% chronology of $C''$ that is faster than the causality of $C$.
% Therefore $\bar{C}''$ is causal and $\bar{C}''$ is a neighborhood of
% $\bar{C}$. Therefore $\bar{C}$ is stably causal. The reverse
% implication is similar.
Next we turn from curves to surfaces.
\begin{definition}
Each of the following concepts may be prefaced with $C$- or $C^\oast$-
to be more specific.
\begin{enumerate}
\item[(i)]
	An oriented codim-$1$ surface $S\sso M$ is called \emph{future
	oriented, spacelike} if its oriented conormals are everywhere
	contained in $C^\oast$.
\item[(ii)]
	A codim-$1$ surface $S\sso M$ is called \emph{achronal} if it
	$(S\times S)\cap I^+ = \varnothing$, that is, no two points of $S$ are
	connected by a timelike curve. Similarly, $S$ is \emph{acausal} when
	$(S\times S)\cap J^+ = \varnothing$.
\item[(iii)]
	A codim-$1$ surface $S\sso M$ is called \emph{Cauchy} if it is acausal
	and every inextensible causal curve intersects $S$ exactly once.
\item[(iv)]
	A chronal cone bundle $C\to M$ is called \emph{globally hyperbolic} if
	there exists a Cauchy surface $S\sse M$. A spacelike cone bundle
	$C^\oast\to M$ is called \emph{globally hyperbolic} if $C\to M$
	is. The full subcategories $\ChrBndl_H$ and $\SpBndl_H$ of $\ChrBndl$
	and $\SpBndl$ contain only globally hyperbolic cone bundles as
	objects.
\end{enumerate}
\end{definition}
Next we define some commonly used domains. Fix $S\sso M$ to be a
$C$-acausal codim-$1$ submanifold, such that either $S$ is closed or
$\bar{S}\sso M$ is a submanifold with boundary.
\begin{definition}
\begin{enumerate}
\item[(i)]
	The \emph{future/past domain of influence} $I^\pm(N)$ of a subset
	$N\sse M$ is the set of points $y\in M$ such that there exists $x\in
	N$ with either $x\ll y$ ($+$) or $y\ll x$ ($-$). Let $I(N) = I^+(N)
	\cup I^-(N)$.
\item[(ii)]
	The \emph{domain of dependence} $D(S)$ is the largest open subset of $M$ for
	which $S$ is a Cauchy surface. More commonly, $D(S)$ is the set of points
	$y\in M$ such that every inextensible timelike curve through $y$
	intersects $S$. Let $D^\pm(S) = D(S) \cap I^\pm(S)$.
\item[(iii)]
	An open subset $L\sse M$ is \emph{lens-shaped} with respect to $S$ if
	it can be smoothly factored as $L\cong (-1,1)\times S$, with $t\colon
	L\to (-1,1)$ denoting the projection onto the first factor (the
	\emph{time function}), such that the level set $t=0$ is $S$ and all
	other level sets are spacelike as well as share the same boundary as
	$S$ in $M$ (which may be empty).
\end{enumerate}
\end{definition}

The literature in relativity and Lorentzian geometry mostly makes use of
the notion of \emph{global hyperbolicity} as given above, but
specialized to Lorentzian cone bundles. On the other hand, the
literature on symmetric (and regular) hyperbolic PDE systems mostly
makes use of lens-shaped domains. It is a non-trivial fact that these
two notions coincide. The argument is essentially that the time function
of a lens-shaped domain foliates it with Cauchy surfaces. Conversely, a
globally hyperbolic cone bundle admits a time function and a smooth
factorization that turns it into a lens-shaped domain (glossing over
some details related to spatial compactness). The original argument
establishing the converse link in Lorentzian geometry is due to
Geroch~\cite{geroch-gh}.  However, his argument only established the
existence of a continuous time function. The details necessary to
establish the smooth version of the result are due to more recent work
of Bernal and Sanchez~\cite{bs-smooth1,bs-smooth2}. The very recent
result by Fathi and Siconolfi~\cite{fs}, using completely different
methods, established the existence of a smooth time function and
factorization for a class of cone bundles sufficient for the current
discussion.
\begin{proposition}\label{prp:globhyp-lens}
An open subset $D\sse M$ is $C$-lens-shaped with respect to $S\sse D$
iff it is globally $C|_D$-hyperbolic, with $S$ being $C|_D$-Cauchy.
\end{proposition}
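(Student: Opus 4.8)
The plan is to prove the two implications separately. The forward implication, that a $C$-lens-shaped $D$ is globally $C|_D$-hyperbolic with $S$ a Cauchy surface, is elementary; the reverse implication is the conal-geometric counterpart of the Geroch--Bernal--S\'anchez splitting theorem, and that is where all the analytic content lies. The strategy for the reverse direction is to reduce it to the existence of a smooth Cauchy time function with spacelike level sets, which for the relevant class of cone bundles is supplied by~\cite{bs-smooth1,bs-smooth2,fs}.

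\textbf{Lens-shaped $\Rightarrow$ globally hyperbolic.} For this direction I would argue as follows. Write $D\cong(-1,1)\times S$ with $t\colon D\to(-1,1)$ the projection onto the first factor, oriented so that the future-oriented conormal of each level set $\{t=c\}$ is $\d t$. The spacelikeness of the level sets then says exactly that $\d t$ is a global section of the spacelike cone bundle $C^\oast|_D$, so by the defining property of the strict convex dual $\langle\d t,v\rangle>0$ for every future-directed causal vector $v\in\bar C\setminus\{0\}$; hence $t$ is strictly increasing along every future-directed causal curve in $D$. Consequently $S=\{t=0\}$ is $C|_D$-acausal, and --- by the same argument as in the Lorentzian case (cf.~\cite{bs-smooth1}), namely that an inextensible causal curve cannot have its $t$-image accumulate at an interior value without accumulating as a curve in $D$ --- every inextensible causal curve in $D$ crosses each level set, in particular $S$, exactly once. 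Thus $S$ is $C|_D$-Cauchy.

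\textbf{Globally hyperbolic $\Rightarrow$ lens-shaped.} Here I would proceed in two steps. First, use~\cite{bs-smooth1,bs-smooth2,fs} to produce a smooth \emph{Cauchy time function}: a surjective submersion $\tau\colon D\to\R$ whose differential is a global section of $C^\oast|_D$ (so every level set is a smooth spacelike hypersurface), each of whose level sets is Cauchy, and with $\tau^{-1}(0)=S$ (the realization of the prescribed Cauchy surface as a level set being the content of~\cite{bs-smooth2} in the Lorentzian case). Second, pick --- by a partition of unity, using that $\d\tau\in C^\oast|_D$ makes the affine slice $\{v\in C_x:\langle\d\tau_x,v\rangle=1\}$ a nonempty convex subset of each fiber --- a future-directed timelike vector field $T$ on $D$ normalized by $\langle\d\tau,T\rangle=1$. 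A maximal integral curve of $T$ is an inextensible causal curve in $D$ (it cannot converge in $D$, or it would extend as an integral curve), so it meets the Cauchy surface $S$ exactly once and, meeting each Cauchy level set of $\tau$ exactly once, carries $\tau$ monotonically onto all of $\R$. Sending a point of $D$ to the pair consisting of its $\tau$-value and the unique intersection of its $T$-flow line with $S$ is therefore a diffeomorphism $D\to\R\times S$ with inverse the flow of $T$, taking $\tau$ to the first projection and $S$ to $\{0\}\times S$; postcomposing $\tau$ with an orientation-preserving diffeomorphism $\R\cong(-1,1)$ fixing $0$ relabels the level sets without changing them as subsets of $M$, and exhibits $D$ as $C$-lens-shaped with respect to $S$.

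The main obstacle is the reverse direction, and within it two points already present in the Lorentzian case: first, the existence of a \emph{smooth} spacelike Cauchy time function for a general, non-Lorentzian cone bundle --- the deep input, taken from~\cite{fs}, and the place where hypotheses on $C$ are needed --- and second, arranging simultaneously that the given $S$ be the $t=0$ level set and that all level sets share the boundary of $S$ in $M$. The boundary-matching is vacuous when $S$ is closed; in the manifold-with-boundary case it requires controlling the construction near the common boundary $\del\bar S$, which is the ``spatial compactness'' subtlety I would not pursue in detail.
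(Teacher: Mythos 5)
Your proposal follows essentially the same route as the paper, which in fact offers no formal proof of Prop.~\ref{prp:globhyp-lens} beyond the sketch in the preceding paragraph: the forward direction via the observation that the time function of a lens-shaped domain foliates it by Cauchy surfaces, and the converse via the smooth splitting results of Geroch, Bernal--S\'anchez and Fathi--Siconolfi~\cite{geroch-gh,bs-smooth1,bs-smooth2,fs}, ``glossing over some details related to spatial compactness.'' Your write-up fills in that sketch (the monotonicity of $t$ from $\d t\in C^\oast|_D$, the normalized timelike flow giving the factorization) and flags the same spatial-compactness/boundary caveat the paper itself acknowledges, so it is consistent with the paper's argument and somewhat more explicit.
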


The last point we address in this section is the kind of maps we allow
between chronal cone bundles. Let $\ChrBndl(M)$ and $\SpBndl(M)$ denote
the respective subcategories of $\ChrBndl$ and $\SpBndl$ where the
objects are restricted to have the same base manifold $M$ and all
morphisms must restrict to the identity on the base $M$. Essentially,
the only allowed morphisms between the cone bundles in these categories
are inclusions. These inclusions, and hence the morphisms, of the
corresponding categories, implement a partial order on chronal and
spacelike cone bundles, ordering the bundles by \emph{speed}. Consider
chronal cone bundles $C_1\to M$ and $C_2\to M$, as well as the
corresponding spacelike cone bundles $C_1^\oast\to M$ and
$C_2^\oast\to M$. The morphism
\begin{equation}
\vcenter{\xymatrix{
	C_1 \ar[d] \ar[r] & C_2 \ar[d] \\
	M \ar[r]^{\id}    & M
}} \quad \text{should read as} \quad
\vcenter{\xymatrix{
	\text{\textit{slow}} \ar[d] \ar[r]^{\sse} & \text{\textit{fast}} \ar[d] \\
	M \ar[r]^{\id}                   & M ~ .
}}
\end{equation}
\emph{Fast} and \emph{slow} are used in the conventional sense:
$C_2$-timelike curves can propagate signals faster than $C_1$-timelike
ones. This ordering is reversed for spacelike cone bundles. The morphism
\begin{equation}
\vcenter{\xymatrix{
	C_2^\oast \ar[d] \ar[r] & C_2^\oast \ar[d] \\
	M \ar[r]^{\id}    & M
}} \quad \text{should read as} \quad
\vcenter{\xymatrix{
	\text{\textit{fast}} \ar[d] \ar[r]^{\sse} & \text{\textit{slow}} \ar[d] \\
	M \ar[r]^{\id}                   & M ~ .
}}
\end{equation}
\begin{definition}
Given a chronal cone bundle $C\to M$, a section $\phi\colon M \to F$ is
called \emph{$C$-slow} (also \emph{slower than $C$}) if $\phi^*\Gamma
\sse C$, that is, signals propagated by $C$-timelike curves can travel
at least as fast as signals propagated by $\phi$-timelike curves. Also,
$\phi$ is said to be \emph{strictly slower than $C$} if
$\phi^*\bar{\Gamma}\sse C$. Given a convex dual spacelike cone
bundle $C^\oast\to M$, the section $\phi$ is \emph{(strictly)
$C^\oast$-slow} (also \emph{(strictly) slower than $C^\oast$}) if it is
(strictly) $C$-slow.
\end{definition}
In terms of spacelike cone bundles, $C^\oast$-slowness is equivalent to
$C^\oast\sse \phi^*\Gamma^\oast$, while strict $C^\oast$-slowness is
equivalent to $\bar{C}^\oast\sse \phi^*\Gamma^\oast$.

Consider two spacelike cone bundles $C^\oast_1\sse C^\oast_2$ and two
sections $\phi_i\colon M\to F$, $i=1,2$, such that $\phi_i$ is
$C^\oast_i$-slow. Then clearly $\phi_2$ is also $C^\oast_1$-slow, while
$\phi_1$ need not be $C_2^\oast$-slow. This observation shows that slow
sections can be pulled back to slow sections along morphisms in the
$\SpBndl(M)$ category. It is elaborated on in the next section.

Let us expand our attention to the larger categories $\ChrBndl$ and
$\SpBndl$, where the base manifold is no longer fixed, however only open
embeddings are allowed as morphisms between base spaces (which naturally
extend to morphisms of the total spaces). Consider a pair of objects and
a morphism between them in the $\ChrBndl$ category.
\begin{equation}
\vcenter{\xymatrix{
	C_1 \ar[r]^{T\chi} \ar[d] & C_2 \ar[d] \\
	M_1 \ar[r]^\chi           & M_2 ~ .
}}
\end{equation}
It is interesting to note that the same base space morphism $\chi\colon
M_1 \to M_2$ extends to a morphism between the convex dual cone bundles in the $\SpBndl$ category,
\begin{equation}
\vcenter{\xymatrix{
	C^\oast_1 \ar[rr]^{(T^*\chi)^{-1}} \ar[d] & & C^\oast_2 \ar[d]\\
	M_1 \ar[rr]^\chi           & & M_2 ~ ,
}}
\end{equation}
only if the two cone bundles agree exactly, that is $\chi_* C_1 = C_2$,
or equivalently $C^\oast_1 = \chi^*C^\oast_2$. In this case, it is clear
that the relations $I^+_{C_1}$ and $I^+_{\chi_*^{-1}C_2}$ on the source
manifold $M_1$ are the same. However, it may not be true that the
chronological precedence relation $I^+_{C_2|_{\chi(M_1)}} =
\chi\times\chi(I^+_{C_1})$ is necessarily the same as the restriction
$I^+_{C_2}\cap (\chi(M_1)\times\chi(M_1))$ on the target manifold $M_2$.
The two are only equal when $\chi(M_1)$ is a $C_2$-chronologically
compatible subset of $M_2$.
\begin{definition}\label{def:chr-compat}
A $\ChrBndl$ morphism $T\chi\colon C_1\to C_2$ in is called
\emph{crhonologically compatible} if $\chi(M_1)$ is a
$C_2$-chronologically compatible subset of $M_2$. Similarly, the
morphism is called \emph{chronologically convex} if $\chi(M_1)$ is a
$C_2$-chronologically convex subset of $M_2$.

A $\SpBndl$ morphism $T^*\chi\colon C^\oast_1 \to C^\oast_2$ is called
\emph{chronologically compatible} or \emph{convex} if the corresponding
$\ChrBndl$ morphism $T\chi\colon C_1\to C_2$ exists and is respectively
chronologically compatible or convex.
\end{definition}
When $C_1\to M$ is a globally hyperbolic chronal cone bundle, there is
no distinction between and chronally convex $\ChrBndl$ morphisms. Thus,
in this context, these terms will sometimes be used interchangeably.
\begin{lemma}\label{lem:chr-compat-convex}
Consider chronal cone bundles $C_i\to M_i$, $i=1,2$, and an open
embedding $\chi\colon M_1\to M_2$ such that the induced morphism
$T\chi\colon C_1\to C_2$ in $\ChrBndl$ is chronally compatible. Then
$T\chi$ is also chronally convex.
\end{lemma}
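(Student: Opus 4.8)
The plan is to convert the statement into a point-set property of the image $N:=\chi(M_1)\sse M_2$ and then argue by contradiction with a timelike curve that leaves $N$. By Definition~\ref{def:chr-compat}, $T\chi$ is chronally compatible exactly when $N$ is a $C_2$-chronologically compatible open subset of $M_2$, and chronally convex exactly when $N$ is $C_2$-chronologically convex; so it suffices to show that such an $N$, which moreover carries the globally hyperbolic cone bundle $C_2|_N$ (this being the global hyperbolicity of $C_1$ transported along $\chi$, and the structural input the whole argument rests on), is $C_2$-chronologically convex. Suppose it is not. Then there is a future directed $C_2$-timelike curve $\gamma\colon[0,1]\to M_2$ with $x:=\gamma(0)$ and $y:=\gamma(1)$ in $N$ but $\gamma(s)\notin N$ for some $s\in(0,1)$. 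Since $N$ is open, a routine compactness argument on $[0,1]$ produces the first-exit parameter $a:=\sup\{t:\gamma([0,t])\sse N\}\in(0,1)$, with $\gamma([0,a))\sse N$ and $w:=\gamma(a)\in\partial N$; restricting $\gamma$ to $[0,a]$ and to $[a,1]$ yields $x\ll_{M_2}w$ and $w\ll_{M_2}y$.

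The heart of the argument is to trap $w$ inside $N$. Since $I^+$ is an open relation, $I^+_{M_2}(x)\cap I^-_{M_2}(y)$ is an open neighbourhood of $w$ in $M_2$; because $w\in\partial N$ it meets $N$, so we may pick $w_k\in N$ with $w_k\to w$ and $x\ll_{M_2}w_k\ll_{M_2}y$. Chronological compatibility of $N$ upgrades these relations, which a priori hold only via curves in $M_2$, to ones realized by curves in $N$, namely $x\ll_N w_k\ll_N y$, so $w_k\in I^+_N(x)\cap I^-_N(y)$ for every $k$. Because $(N,C_2|_N)$ is globally hyperbolic, the closure of the chronal diamond $I^+_N(x)\cap I^-_N(y)$ taken inside $N$ is compact --- the conal counterpart of the classical compactness of causal diamonds in globally hyperbolic Lorentzian manifolds, which one can extract from the lens-shaped factorization of Proposition~\ref{prp:globhyp-lens}. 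A subset of $N$ that is compact in $N$ is compact, hence closed, as a subset of $M_2$; since $w=\lim_k w_k$, this forces $w\in N$, contradicting $w\in\partial N$, and the lemma follows.

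I expect the compactness step to be the genuine obstacle. First, it is the one place where global hyperbolicity of the source cone bundle is used essentially, and it is genuinely needed: Minkowski space with one point deleted is chronologically compatible in, but not chronologically convex in, Minkowski space, so the global hyperbolicity of $C_1$ must be read as part of the hypotheses (it is the setting in which the preceding remark, identifying chronally compatible with chronally convex $\ChrBndl$ morphisms, is made). Second, it requires the conal analogues, for $(N,C_2|_N)$, of several basic structural facts of Lorentzian causality theory --- causal simplicity, compactness of causal diamonds, and the coincidence $\overline{I^+\cap I^-}=J^+\cap J^-$; if these are not already available in the needed form, establishing them via Proposition~\ref{prp:globhyp-lens} and the associated time-function/lens-shaped picture is the substantive work. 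Everything else is routine: openness of $I^+$, the first-exit compactness on $[0,1]$, stability of the timelike condition under small perturbations of curves, and the elementary observation that the $M_2$-limit of a sequence lying in a subset that is compact within $N$ must lie in that subset.
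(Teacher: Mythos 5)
Your argument is logically coherent and genuinely different from the paper's, but it hinges on exactly the ingredient you flag as the obstacle, and that ingredient is not available in the paper: the compactness, inside $N=\chi(M_1)$, of the closure of the chronal diamond $I^+_N(x)\cap I^-_N(y)$ for a globally hyperbolic cone bundle. In the Lorentzian setting this is a standard but non-trivial consequence of global hyperbolicity (via causal simplicity and compactness of causal diamonds); for general cone bundles the paper provides only the lens-shaped/Cauchy-surface equivalence of Prop.~\ref{prp:globhyp-lens}, and extracting diamond compactness from the splitting would itself be a substantive limit-curve or uniform-propagation argument. So as written your proof defers its crux to an unproven conal analogue, which is a genuine gap relative to what the paper establishes. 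Everything surrounding it --- the first-exit point $w\in\partial N$, the approximating sequence $w_k\in N$ obtained from openness of $I^+$, the upgrade $x\ll_{M_2}w_k\ll_{M_2}y$ to $x\ll_N w_k\ll_N y$ via chronal compatibility, and the closedness-in-$M_2$ of a set compact in $N$ --- is correct. Your observation that global hyperbolicity of $C_1$ must be read into the hypotheses (Minkowski minus a point being the counterexample otherwise) is also right and matches the paper's intent.

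The paper's own proof sidesteps diamond compactness entirely and is more economical given the available tools: it extends a timelike curve through two chronally related points of $\chi(M_1)$ to an inextensible curve $\gamma_2$ in $M_2$ (Zorn), observes that if convexity failed then $\gamma_1=\chi^{-1}(\gamma_2\cap\chi(M_1))$ would have at least two connected components, each inextensible in $M_1$ and hence each meeting a $C_1$-Cauchy surface $\Sigma_1$, so that $\gamma_2$ meets $\Sigma_2=\chi(\Sigma_1)$ twice; the timelike segment of $\gamma_2$ between those two intersections then violates the achronality of $\Sigma_2$, which follows from chronal compatibility and the acausality of $\Sigma_1$. That route uses only the definitional ``exactly once'' property of Cauchy surfaces, so it is the one to prefer unless you are prepared to first prove compactness of chronal diamonds for globally hyperbolic conal manifolds --- a result that would be independently useful but is real work.
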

\begin{proof}
Consider any two points $p,q\in M_1$ that are $C_1$-chronally related.
Then, by chronal compatibility, $\chi(p),\chi(q)\in M_2$ are also
$C_2$-chronally related. By a simple application of Zorn's lemma, among
all timelike curves passing through $\chi(p)$ and $\chi(q)$ there must
exist some maximal (inextensible) curves $\gamma_2$. For each such
curve, the preimage $\gamma_1 = \chi^{-1}(\gamma_2\cap\chi(M_1))$ is
also inextensible. Assume for the moment that chronal convexity fails.
Then we can chose $\gamma_2$ such that it has at least one point not
contained in $\chi(M)$, which implies that $\gamma_1$ consists of more
than one connected component, each of which is inextensible. On the
other hand, global hyperbolicity implies that each component of
$\gamma_1$ intersects a $C_1$-Cauchy surface $\Sigma_1\sse M$ and so the
same can be said about $\gamma_2$ and $\Sigma_2 = \chi(\Sigma_1)$. In
other words, $\gamma_2$ intersects $\Sigma_2$ more than once, which is
impossible, since from chronal compatibility $M_2$ must be achronal.
Therefore, $T\chi$ must also be chronally convex. \qed
\end{proof}

These conditions on cone bundle morphisms will become important in the
formulation of the generalized classical or quantum LCFT functors.  In
particular, it would be extremely convenient for the construction of an
LCFT, namely verifying the Isotony property
(cf.~Sect.~\ref{sec:gen-isot}), if it were possible, in a spacetime with
a globally hyperbolic chronal cone bundle, to be able to construct a
Cauchy surface by extending an existing acausal surface, which may not
be intersected by all inextensible causal curves. However, this question
does not seem to have received much attention, even in the more
restricted setting of Lorentzian geometry. Therefore, we simply
formulate this property as a plausible conjecture, whose validity would
have to be explored in future work.
\begin{conjecture}\label{cnj:cauchy-ext}
Consider globally hyperbolic spacelike cone bundles $C^\oast\to M$ and
$C^{\prime\oast}\to M'$, together with an embedding $\chi\colon M\to M'$
that induces a chronologically compatible morphism $T^*\chi\colon
C^\oast \to C^{\prime\oast}$. If $\Sigma\sso M$ is a $C^\oast$-Cauchy
surface, then for any compact subset $K\sse \Sigma$, there exists a
$C^{\prime\oast}$-Cauchy surface $\Sigma'\sso M'$ such that $\Sigma'$
agrees with $\chi(\Sigma)$ on $\chi(U)$, where $U$ is a neighborhood of
$K$ in $\Sigma$.
\end{conjecture}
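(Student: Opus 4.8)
The plan is to reduce Conjecture~\ref{cnj:cauchy-ext} to a \emph{Cauchy surface extension} theorem in conal geometry: a compact, spacelike, acausal codimension-$1$ submanifold with boundary of a globally hyperbolic cone bundle is contained in a smooth spacelike Cauchy surface. This is the conal counterpart of the Lorentzian results of Bernal and S\'anchez~\cite{bs-smooth1,bs-smooth2}, whose only known proofs proceed by interpolating Cauchy temporal functions; granting such a statement, the conjecture follows almost formally.

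For the reduction, recall that a $\SpBndl$ morphism $T^*\chi\colon C^\oast\to C^{\prime\oast}$ forces $C^\oast=\chi^*C^{\prime\oast}$, so $\chi$ is the conal analogue of a causal isometric embedding. Since $K\sse\Sigma$ is compact, choose a compact codimension-$0$ submanifold-with-boundary $B\sse\Sigma$ with $K$ in its interior $U$ (e.g.\ a smoothed union of finitely many closed coordinate balls). Then $N:=\chi(B)\sso M'$ is a compact codimension-$1$ submanifold-with-boundary, and it is $C^{\prime\oast}$-spacelike since its oriented conormals are the $\chi$-pushforwards of those of $B$, which lie in $C^\oast=\chi^*C^{\prime\oast}$. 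Moreover $N$ is acausal \emph{in $M'$}: any $C^{\prime\oast}$-causal curve in $M'$ joining two points of $N$ must stay inside $\chi(M)$, because $\chi(M)$ is causally convex in $M'$ --- this is chronological compatibility together with Lemma~\ref{lem:chr-compat-convex} (which upgrades compatibility to convexity since the cone bundles are globally hyperbolic) and the identity $J^{\pm}=\overline{I^{\pm}}$ --- and hence pulls back to a $C^\oast$-causal curve joining two points of $B\sse\Sigma$, which is impossible as $\Sigma$ is $C^\oast$-acausal.

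Applying the extension theorem to $N\sso M'$ yields a smooth $C^{\prime\oast}$-Cauchy surface $\Sigma'\sso M'$ with $N\sse\Sigma'$. Since $\chi(U)\sse N\sse\Sigma'$ and $\chi(U)\sse\chi(\Sigma)$, and near each point of the open set $\chi(U)$ both $\Sigma'$ and $\chi(\Sigma)$ are boundaryless embedded codimension-$1$ submanifolds of $M'$, invariance of domain forces them to coincide on a neighbourhood of $\chi(K)$; pulling that neighbourhood back along $\chi$ gives an open $U'\sse\Sigma$ with $K\sse U'$ on which $\Sigma'$ agrees with $\chi(\Sigma)$. For the extension theorem itself I would follow the Bernal--S\'anchez method: by the smooth splitting theorem for globally hyperbolic cone bundles (the conal counterpart of~\cite{geroch-gh}, available through~\cite{fs}), $M'$ carries a global Cauchy temporal function $\tau'$; near $N$ one constructs a temporal function whose zero level set contains $N$ (possible because $N$ is spacelike and acausal), interpolates it with $\tau'$ via a partition of unity, keeps the differential inside the convex fibres of $C^{\prime\oast}$ and spreads the transition so that the cutoff contributions are negligible, and checks that the result still tends to $\pm\infty$ along every inextensible $C^{\prime\oast}$-causal curve, so that its zero level set is a Cauchy surface.

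The hard part is precisely this last step --- the conal Bernal--S\'anchez extension, and within it the verification that the interpolated temporal function remains \emph{Cauchy}, i.e.\ that no inextensible causal curve becomes ``trapped'' between its level sets after the bending and interpolation. This is already subtle in the Lorentzian case, where it rests on the fine structure of globally hyperbolic splittings, and under the weaker (merely convex) data of a cone bundle the relevant estimates have to be redone carefully, with the usual extra caveats about spatial non-compactness when $N$ (equivalently $\Sigma$) has boundary. It is because this extension result has, to our knowledge, not been established even for Lorentzian metrics in the form needed here that the statement is recorded as a conjecture rather than a theorem.
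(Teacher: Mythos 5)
The first thing to say is that the paper offers no proof of this statement: it is explicitly recorded as a \emph{conjecture}, with the accompanying remark that the underlying question ``does not seem to have received much attention, even in the more restricted setting of Lorentzian geometry'' and that its validity ``would have to be explored in future work.'' So there is no proof in the paper to compare yours against, and your proposal --- which openly concedes that the decisive step is unproven --- is consistent with the paper's own assessment rather than in conflict with it.

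That said, your reduction is a genuine contribution beyond what the paper records. You correctly read off from Sect.~\ref{sec:geom-cones} that a $\SpBndl$ morphism forces $C^\oast=\chi^*C^{\prime\oast}$, you localize to a compact piece $B\supseteq K$ (respecting the paper's own warning, via the lens-shaped-domain example, that extending all of $\Sigma$ may be impossible), and you isolate the crux as a conal Bernal--S\'anchez extension theorem: every compact, spacelike, acausal codim-$1$ submanifold-with-boundary of a globally hyperbolic cone bundle lies in a smooth Cauchy surface. That is exactly the missing ingredient, and your final paragraph correctly explains why it is nontrivial (the interpolated temporal function must remain Cauchy, which is delicate even in the Lorentzian case). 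One soft spot in the reduction itself deserves flagging: to conclude that $N=\chi(B)$ is acausal \emph{in $M'$} you need \emph{causal} convexity of $\chi(M)$, whereas Lem.~\ref{lem:chr-compat-convex} (whose own proof in the paper is already somewhat sketchy) delivers only \emph{chronal} convexity; the upgrade via $J^\pm=\overline{I^\pm}$ is plausible for globally hyperbolic cone bundles but is not established anywhere in the paper for general conal structures, where the boundary $\C$ of the cones can have corners and the usual Lorentzian push-off arguments need re-examination. So your proposal should be read as a well-motivated reduction of one conjecture to another (plus a small secondary gap), not as a proof --- which is precisely the status the paper assigns to the statement.
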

We say that the $C^{\prime\oast}$-Cauchy surface $\Sigma'$
\emph{extends} $K$. If there exists a single $C^{\prime\oast}$-Cauchy
surface $\Sigma'$ that extends every compact $K\sse \Sigma$, we say that
$\Sigma'$ \emph{extends} $\Sigma$. It is possible that there exists no
$\Sigma'$ that extends a given $\Sigma$. Think of a lens-shaped domain
in $M'$ whose spacelike boundary is not smooth.

% [ XXX: illustrations, hopefully to be added later ]

\subsection{Slow sections}\label{sec:slow-sec}
As we have noticed above, a distinctive feature of quasilinear PDE
systems, compared to linear ones, is the field dependence of their
causal structure, as embodied in the geometry of the cone bundles of
timelike vectors and spacelike covectors $\Gamma$ and $\Gamma^\oast$,
whose base space is the field bundle $F$ instead of just the manifold
$M$. This means that any two sections $\phi,\psi\colon M \to F$ of the
field bundle (solution or not) define by pullback the cone bundles
$\phi^*\Gamma\to M$ and $\psi^*\Gamma\to M$. A priori, these cone
bundles need not be related to each other in any way. This complicates
the analysis of the causal structure in contexts where multiple
solutions or sections have to be considered simultaneously. In
particular, such contexts arise when considering nontrivial subsets of
the space of solution sections in classical field theory, probability
distributions on the space of solution sections in classical statistical
mechanics, and deformation quantization of the space of solution sections
in quantum field theory.

It is fortunate that multiple cone bundles can be easily combined via
intersection or convex union. Thus, if we have two sections
$\phi,\psi\colon M\to F$, we can form a chronal cone bundle from their
convex union, $C_{\phi,\psi} = (\phi^*\Gamma + \psi^*\Gamma)$. Then
clearly both $\phi$ and $\psi$ are $C_{\phi,\psi}$-slow, or equivalently
$C_{\phi,\psi}$ is faster than both $\phi$ and $\psi$. Similarly, in
terms of spacelike cone bundles, both $\phi$ and $\psi$ are
$C^\oast_{\phi,\psi}$-slow, where $C^\oast_{\phi,\psi} =
\phi^*\Gamma^\oast \cap \psi^*\Gamma^\oast$.  The only caveat is that
the combined cones may fail to be proper, $C_{\phi,\psi}$ could fail to
be salient or $C^\oast_{\phi,\psi}$ could be empty. In this case, we
call the sections $\phi$ and $\psi$ \emph{chronologically incomparable}.
We can generalize this construction to any collection of sections
$\{\phi_i\}_{i\in I}$. Define $C_I = \sum_{i\in I} \phi_i^*\Gamma$ and
$C^\oast_I = \left(\bigcap_{i\in I} \phi_i^* \Gamma^\oast
\right)^{\circ}$. If either of these cones fails to be proper, we again
call the collection of sections $\{\phi_i\}_{i\in I}$ chronologically
incomparable.

This correspondence between sections and collectively faster cones can
be easily inverted.
\begin{definition}
Given a chronal cone bundle $C\to M$ or the corresponding spacelike cone
bundle $C^\oast\to M$, we denote the set of all sections \emph{strictly
slower than $C\to M$} by $\Secs(F,C)\sse \Secs(F)$, or also
$\Secs(F,C^\oast)$. We can similarly restrict the space of solutions,
$\S(F,C) = \S(F,C^\oast) = \S(F)\cap \Secs(F,C)$. It follows immediately
from the definition that given two chronal cone bundles $C_1\sse C_2$
(equivalently $C^\oast_2 \sse C^\oast_1$ or $C_1$ slower than $C_2$) we
have the inclusions
\begin{equation}\label{eq:slow-incl}
	\Secs(F,C_1) \sse \Secs(F,C_2)
	\quad\text{and}\quad
	\S(F,C_1) \sse \S(F,C_2) .
\end{equation}
Incidentally, we denote by $\Secs_{sc}(F)\sse \Secs(F)$ the space of
stably causal sections and by $\Secs_H(F)\sse \Secs_{sc}(F)$ the space
of all globally hyperbolic sections.  The same notation also extends to
solution spaces: $\S_{sc}(F) = \S(F)\cap \Secs_{sc}(F)$, and $\S_H(F) =
\S(F)\cap \Secs_H(F)$.
\end{definition}
If $C\to M$ is stably causal, then all sections in $\Secs(F,C)$ are a
fortiori also stably causal. It then follows immediately from basic
properties of stable causality that
\begin{equation}\label{eq:sc-cover}
	\Secs_{sc}(F) = \bigcup_{\text{stably causal}~C} \Secs(F,C),
\end{equation}
where the union is over all stably causal cone bundles $C\to M$.
It would be desirable to make a similar statement about $\Secs_H(F)$.
However, that would entail showing that global hyperbolicity is stable
in a certain technical sense. This is a known result for Lorentzian cone
bundles~\cite{bnm}, but remains to be proven in the general case.
However, since $\Secs_H(F)\sse \Secs_{sc}(F)$ (and both are topologized
by their inclusion in $\Secs(F)$), the sets $\Secs_H(F,C) = \Secs_H(F)
\cap \Secs(F,C)$, where $C$ is stably causal, do furnish a cover of
$\Secs_H(F)$. We show below that each $\Secs(F,C)$ is open in $\Secs(F)$
in the Whitney fine topology. Thus, in this topology, $\Secs(F,C)$ and
$\Secs_H(F,C)$, with $C$ stably causal, furnish open covers of
$\Secs_{sc}(F)$ and $\Secs_H(F)$, respectively. An observation that is
important for later developments in Sect.~\ref{sec:classcaus}. While we
could continue to work with an open cover by the sets $\Secs_H(F,C)$, it
is simpler and likely not to make a large difference to instead assume
the following
\begin{conjecture}[Stability of global hyperbolicity]\label{cnj:gh-stab}
The space of globally hyperbolic sections $\Secs_H(F)$ is open in the
space of all sections $\Secs(F)$, provided with the Whitney fine $C^0$
topology. Moreover, the following identity holds
\begin{equation}\label{eq:gh-cover}
	\Secs_H(F) = \bigcup_{\text{globally hyperbolic}~C} \Secs(F,C).
\end{equation}
\end{conjecture}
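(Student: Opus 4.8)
The plan is to deduce both assertions from the covering identity~\eqref{eq:gh-cover}. Granting that each $\Secs(F,C)$ is open in the Whitney fine $C^0$ topology --- which holds because ``$\phi\in\Secs(F,C)$'', i.e.\ $\bar{C}^\oast\setminus\{0\}\sso\phi^*\Gamma^\oast$, is a condition on the pointwise values $\phi(x)$ alone --- the right-hand side of~\eqref{eq:gh-cover} is a union of open sets, so once the identity is established the openness of $\Secs_H(F)$ follows at once. Thus everything reduces to the two inclusions in~\eqref{eq:gh-cover}. The inclusion ``$\supseteq$'' is routine: if $C\to M$ is globally hyperbolic and $\phi\in\Secs(F,C)$, then $C^\oast\sso\phi^*\Gamma^\oast$, and by Proposition~\ref{prp:globhyp-lens} $M$ is $C$-lens-shaped with respect to a $C$-Cauchy surface $S$, i.e.\ $M\cong(-1,1)\times S$ with every level set of the time function $C^\oast$-spacelike and sharing the boundary of $S$ in $M$; since those conormals lie in $C^\oast\sso\phi^*\Gamma^\oast$, the same level sets are $\phi^*\Gamma^\oast$-spacelike, so the very same factorization exhibits $M$ as $\phi^*\Gamma$-lens-shaped, and Proposition~\ref{prp:globhyp-lens} then gives $\phi\in\Secs_H(F)$.

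The inclusion ``$\subseteq$'' is where the work lies. Let $\phi\in\Secs_H(F)$. Again by Proposition~\ref{prp:globhyp-lens} there is a smooth lens-shaped factorization $M\cong(-1,1)\times S$ adapted to $\phi^*\Gamma$, so its time function $t$ has $\d t$ a smooth section of the \emph{open} cone bundle $\phi^*\Gamma^\oast\sso(T^*M)^0$. The idea is to widen the causal cone of $\phi$ --- equivalently, to shrink its cocausal cone to a strictly smaller one that still contains $\d t$ --- without destroying global hyperbolicity. Concretely: fix an auxiliary Riemannian metric on $T^*M$; the Euclidean angular gap between the ray $\R_{>0}\d t(x)$ and the complement of $\phi^*\Gamma^\oast_x$ is a positive, lower semicontinuous function of $x$ (lower semicontinuous because $\phi^*\Gamma^\oast$ is open, so its fibers cannot shrink under limits, and automatically less than $\pi/2$ because the fibers are salient), hence it is bounded below by a smooth positive function $\eps$. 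Let $C^\oast_x$ be the open Euclidean cone of angular radius $\eps(x)$ about $\d t(x)$. Then $C^\oast\to M$ is a proper, smoothly varying spacelike cone bundle with $\d t$ a section and $\bar{C}^\oast\setminus\{0\}\sso\phi^*\Gamma^\oast$, and its convex dual $C=(C^\oast)^\oast$ is a chronal cone bundle; convex duality (using that $\phi^*\Gamma^\oast$ is open) upgrades $\bar{C}^\oast\setminus\{0\}\sso\phi^*\Gamma^\oast$ to $\phi^*\bar\Gamma\setminus\{0\}\sso C$, so $\phi$ is strictly slower than $C$ and $\phi\in\Secs(F,C)$. Finally, the level sets of the \emph{same} factorization have conormals proportional to $\d t\in C^\oast$ and still share the boundary of $S$, so $M$ is $C$-lens-shaped and $C$ is globally hyperbolic by Proposition~\ref{prp:globhyp-lens}. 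This places $\phi$ in the right-hand side of~\eqref{eq:gh-cover} and finishes the identity.

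The hard part is the widening step, which is the conal-geometry counterpart of the stability of global hyperbolicity known for Lorentzian cone bundles~\cite{bnm}, and two points need care. First, although keeping the Cauchy factorization fixed and merely shrinking the cocausal cones works fibrewise, the angular margin $\eps(x)$ must be chosen so as to remain a genuine smooth positive function over all of $M$, and the available margin can decay along the non-compact directions of the Cauchy surface or as one approaches the ends of the interval --- it is precisely to absorb this that one must work in the Whitney \emph{fine}, rather than weak, $C^0$ topology. Second, the construction invokes the smooth lens-shaped factorization of Proposition~\ref{prp:globhyp-lens}, which for general cone bundles currently rests on the recent work of Fathi and Siconolfi~\cite{fs}; proving the conjecture for cone bundles outside the class handled there would first require extending that proposition. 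Granting these, the covering identity~\eqref{eq:gh-cover} holds, and $\Secs_H(F)$, being a union of Whitney-fine-$C^0$-open sets $\Secs(F,C)$, is itself open, as claimed.
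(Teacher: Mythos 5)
The first thing to say is that the paper contains no proof of this statement: it is deliberately labelled a \emph{conjecture}, introduced with the remark that the analogous stability of global hyperbolicity ``is a known result for Lorentzian cone bundles~\cite{bnm}, but remains to be proven in the general case.'' So there is no in-paper argument to compare yours against; what the paper actually does is sidestep the issue by working with the cover of $\Secs_H(F)$ by the sets $\Secs_H(F,C)=\Secs_H(F)\cap\Secs(F,C)$ with $C$ stably causal, which is available without the conjecture.

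On its own merits, your proposal is the natural one --- it is precisely the cone-bundle transplant of the standard Lorentzian stability argument --- and its logical skeleton is sound: reduce openness to the covering identity~\eqref{eq:gh-cover} via Thm.~\ref{thm:slow-open}; get ``$\supseteq$'' by transporting the lens-shaped factorization of $M$ from $C$ to the slower $\phi^*\Gamma$; get ``$\subseteq$'' by taking the $\phi^*\Gamma$-adapted splitting and fattening the causal cone (equivalently, thinning the cocausal cone around $\d t$), then reading off global hyperbolicity of the new bundle from the easy direction of Prop.~\ref{prp:globhyp-lens}. Your duality bookkeeping ($\bar{C}^\oast\setminus\{0\}\sso\phi^*\Gamma^\oast$ with $\phi^*\Gamma^\oast$ open does give $\phi^*\bar\Gamma\setminus\{0\}\sso C$, hence \emph{strict} slowness) is correct, as is the observation that a salient fiber forces the angular gap below $\pi/2$ so that the round cones are proper. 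Two caveats. The minor one: you need $\eps(x)$ \emph{strictly} below the gap, not merely a lower bound, or the closed cone $\bar{C}^\oast_x$ can touch $\del(\phi^*\Gamma^\oast_x)$; this is fixed by halving $\eps$. The substantive one you have already identified yourself, and it is exactly where the paper's own hesitation lies: every step of both inclusions leans on Prop.~\ref{prp:globhyp-lens} --- in particular on its hard direction, the smooth lens-shaped splitting --- applied to the pulled-back characteristic cones $\phi^*\Gamma$ and to arbitrary globally hyperbolic chronal cone bundles. The paper states that proposition but defers its general proof to~\cite{fs}, whose hypotheses need not cover the non-round, possibly corner-ridden cones produced by characteristic geometry. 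So what you have written is a correct conditional proof: it demotes Conj.~\ref{cnj:gh-stab} to a corollary of Prop.~\ref{prp:globhyp-lens} in the required generality, rather than eliminating the open problem, and you would strengthen the write-up by saying so explicitly at the outset rather than only in the closing paragraph.
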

This conjecture also implies the stability of global hyperbolicity in
a common Fr\'echet topology on $\Secs(F)$, which coincides with the
Whitney fine $C^\oo$ topology, as well, as it is finer (has more open
sets) than the Whitney fine $C^0$ topology.

Recall that \emph{Whitney fine $C^0$ topology} (also known as the
\emph{Whitney strong $C^0$ topology} or the \emph{wholly open $C^0$
topology}) on the space of sections of the bundle $F\to M$ is defined as
follows~\cite{hirsch,km}. Consider sets of the form $U(\phi,V)$, with
$\phi\colon M\to F$ a section and $V\sse F$ an open neighborhood of the
image $\phi(M)$, consisting of all sections $\psi\colon M\to F$ such
that $\psi(M)\sse V$. These sets form a sub-base for the open sets of
the \emph{Whitney fine topology} on $\Secs(F)$. This topology should be
contrasted with the \emph{compact open $C^0$ topology} (also known as
the \emph{Whitney weak $C^0$ topology}), where a sub-base for the open
sets consists of sets of the form $U(\phi,K,V)$, with $\phi\colon M\to
F$ a section, $K\sse M$ compact and $V\sse F|_K$ an open neighborhood of
$\phi(K)$, consisting of all sections $\psi\colon M\to F$ such that
$\psi(K)\sse V$. These two topologies coincide only if $M$ is compact.

The $C^\oo$ versions of these topologies are easily defined in terms of
jet prolongations. As before, consider the sets $U_k(\phi,V)$, but now
with $V\sse J^kF$ an open neighborhood of $j^k\phi(M)$, and
$U_k(\phi,K,V)$, but now with $V\sse J^kF|_K$ an open neighborhood of
$j^k\phi(K)$. They form sub-bases for the open sets of the \emph{Whitney
fine $C^k$} and \emph{compact open $C^k$ topologies}, respectively. The
index $k$ may also take on the value $\oo$. With these topologies, of
which the compact open one is somewhat more commonly used, $\Secs(F)$ is
a Fr\'echet space.

\begin{theorem}\label{thm:slow-open}
Given a chronal cone bundle $C\to M$, the subset of smooth sections
strictly slower than $C$, $\Secs(F,C)\sse \Secs(F)$, is open in the
Whitney fine $C^0$ topology.
\end{theorem}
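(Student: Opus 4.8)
The plan is to rewrite the defining condition of $\Secs(F,C)$ as the emptiness of all fibres of a single closed subset of a compact-fibred bundle over $F$, deduce that the resulting ``good'' set of $0$-jets is open in the total space $F$, and then read off openness in the Whitney fine $C^0$ topology directly from the definition of its sub-base. Write $\pi\colon F\to M$ for the bundle projection. By definition, $\phi\in\Secs(F,C)$ iff $\phi^*\bar{\Gamma}\sse C$, i.e. $\bar{\Gamma}_{\phi(x)}\sse C_x$ for every $x\in M$; here, as is implicit throughout Sect.~\ref{sec:geom-cones}, inclusions of cones are read modulo the zero section. Since $\bar{\Gamma}\sso (TM)^0$ is closed (it is a fibrewise closure) while $C\sso TM$ is open, the first move is to compactify the fibres: fix an auxiliary Riemannian metric on $M$ (used only as scaffolding --- nothing below depends on the choice) and let $S^FM := F\times_M SM\to F$ be the pullback of the unit sphere bundle $SM\sso TM$. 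Because $\bar{\Gamma}_{\bar u}$ and $C_{\pi(\bar u)}$ are cones, the inclusion $\bar{\Gamma}_{\phi(x)}\sse C_x$ is equivalent to $\bar{\Gamma}_{\phi(x)}\cap S_xM\sse C_x\cap S_xM$, an inclusion of a compact subset into an open subset of the compact sphere $S_xM$.

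Next, set $K := \bigl(\bar{\Gamma}\cap S^FM\bigr)\setminus \pi^*C$, which is closed in $S^FM$, being the intersection of the closed set $\bar{\Gamma}\cap S^FM$ with the complement of the open set $\pi^*C\cap S^FM$. The projection $S^FM\to F$ is proper --- it is the base change of the proper map $SM\to M$, whose fibres are the compact spheres $S^{n-1}$ --- hence closed, so $\operatorname{pr}(K)\sso F$ is closed and $F_{\mathrm{good}} := F\setminus\operatorname{pr}(K)$ is open in $F$. By construction, $F_{\mathrm{good}}$ is exactly the set of $\bar u\in F$ for which $\bar{\Gamma}_{\bar u}\cap S_{\pi(\bar u)}M\sse C_{\pi(\bar u)}$, i.e. the set of $0$-jets through which the causal cone $\bar{\Gamma}$ is already slower than $C$. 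Consequently, for $\phi\in\Secs(F,C)$ we have $\phi(M)\sso F_{\mathrm{good}}$, so the sub-basic Whitney-fine open set $U(\phi,F_{\mathrm{good}})$ is a neighbourhood of $\phi$; and every $\psi\in U(\phi,F_{\mathrm{good}})$ satisfies $\psi(y)\in F_{\mathrm{good}}$ for all $y\in M$, hence $\psi^*\bar{\Gamma}\sse C$, i.e. $\psi\in\Secs(F,C)$. Thus $\Secs(F,C)$ contains a Whitney-fine neighbourhood of each of its points, so it is open.

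The one genuinely delicate step I anticipate is the compactification of the fibres. Without it, the naive argument ``$\bar{\Gamma}$ is closed, $\pi^*C$ is open, project the difference'' breaks down, because the punctured fibres of $(TM)^0\to F$ are not compact and the projection is not a closed map. Intersecting with the sphere bundle (equivalently, one could projectivise, using that salient cones embed injectively into $\mathbb{P}(TM)$) repairs this and is really the crux of the argument; everything else is routine point-set topology combined with the facts about $\bar{\Gamma}$ and $C$ recorded in Sect.~\ref{sec:geom-cones}. One should also keep in mind that it is genuinely the \emph{strict} slowness that is open here: for the non-strict notion, $\phi^*\Gamma\sse C$ is a containment of open cones and need not be stable under perturbation, which is why the theorem is phrased as it is.
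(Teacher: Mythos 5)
Your proof is correct, but it takes a genuinely different and noticeably more economical route than the paper's. The paper works on the dual side: it fixes $\phi$, rewrites strict slowness as $\bar{C}^\oast\sse\psi^*\Gamma^\oast$, builds a compact fibration $K=\del(\tau^*\cap\bar{C}^\oast)\to M$ by slicing the cocausal cones with an affine sub-bundle of $T^*M$, and then runs a fiber-by-fiber finite-subcover argument over each $K_x$, using that orientation positive-definiteness of $\bar{f}_{(y,v)}\cdot q$ is an open condition on the principal symbol; the resulting neighbourhood $\mathcal{N}$ of $\phi(M)$ is assembled point by point and depends on $\phi$. You instead stay on the primal (timelike-vector) side and compactify once and for all via the pulled-back sphere bundle $S^FM\to F$: the bad set of unit directions is closed, its image under the proper projection to $F$ is closed, and its complement $F_{\mathrm{good}}$ is a \emph{single} open subset of $F$ that serves every $\phi\in\Secs(F,C)$ simultaneously through the sub-basic set $U(\phi,F_{\mathrm{good}})$. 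This buys two things: the principal symbol disappears from the argument entirely, so the theorem is exposed as a fact of pure conal geometry needing only that $\Gamma^\oast$ is open in $(T^*M)^0$ and $C$ is open in $TM$; and it sidesteps the fiberwise covering argument that the paper itself flags as the complicated part relative to the Lorentzian case. The one justification you should tighten is the parenthetical ``it is a fibrewise closure'': a fibrewise closure need not be closed in the total space. What you actually need, and what is true, is that $\bar{\Gamma}=(\Gamma^\oast)^*$ fiberwise, and the fiberwise convex dual of a cone bundle whose total space is open (as $\Gamma^\oast\sso(T^*M)^0$ is, per Sect.~\ref{sec:geom-cones}) is closed in the total space --- a two-line limiting argument in a local trivialization. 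Your closing remark on why strictness is essential is also right, and is the primal-side mirror of the paper's observation that duality converts strict slowness into containment of a closed cone in an open one.
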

\begin{proof}
Pick a section $\phi\in\Secs(F,C)$. We will construct a Whitney fine
$C^0$ neighborhood of it that is contained entirely in $\Secs(F,C)$.
Every element $\psi$ of such a neighborhood would satisfy $\bar{C}^\oast
\sse \psi^*\Gamma^\oast$ (recall that convex duality reverses
inclusion).

The fibers of the cone bundle $\bar{C}^\oast$ are cones of compact
section. We can collect these sections into a continuous fibration $K\to
M$ with compact fibers as follows. Pick an affine sub-bundle $\tau^*\sso
T^*M$ whose fibers are of fiber codim-$1$ and intersect the fibers of
$\bar{C}^\oast$ on compact sets with non-empty interior. Such an affine
sub-bundle is easily constructed by modeling it on the vector sub-bundle
of $T^*M$ that is annihilated by a section of $C\to M$, which always
exists, since each fiber is contractible. Let $K$ be the union of the
boundaries of these fiberwise cross sections, $K=\del(\tau^*\cap
\bar{C}^\oast)$. Essentially, $K\to M$ is a topological sphere bundle
due to convexity of the fibers of $\bar{C}^\oast$. From the strict
$C$-slowness hypotheses, we can conclude that $\phi^*\Gamma^\oast$ is an
open neighborhood of $K$.

Next we pick special open neighborhoods for each point $(x,u,p)\in
F\times_M T^*M$ with $u=\phi(x)$ and $(x,p)\in K$. Let $\bar{f}\colon M
\to \tilde{S}^2F^*\otimes_M TM$ be the adapted principal symbol of the
symmetric hyperbolic PDE system under consideration.  Consider the
following composition of continuous bundle maps
\begin{equation}\label{eq:prsym-comp}
\begin{array}{c@{\quad}c@{\quad}c@{\quad}c@{\quad}cl}
	F\times_M T^*M & \to & (\tilde{S}^2F^*\otimes_M TM)\times_M T^*M & \to
		& \tilde{S}^2D^* & , \\
	(y,v,q) & \mapsto & (y,\bar{f}_{(y,v)},q) & \mapsto
		& (y,\bar{f}_{(y,v)}\cdot q) & .
\end{array}
\end{equation}
We note the following facts: (i) the subset of $\tilde{S}^2F^*$
consisting of orientation positive definite bilinear forms in the fibers
is open, (ii) the above composition of continuous maps sends $(x,u,p)$
in this open set, (iii) $F\times_M (\phi^*\Gamma^\oast)$ is an open
neighborhood of $(x,u,p)$. These observation allow us to conclude that
there exists a neighborhood $\tilde{U}$ of $(x,u,p)$ that maps via the
composition~\eqref{eq:prsym-comp} to orientation positive definite
bilinear forms in $\tilde{S}^2F^*$, such that $\tilde{U}\sse F\times_M
(\phi^*\Gamma^\oast)$. Choosing a local bundle trivialization $\iota$
contained within the base projection of $\tilde{U}$ shows the existence
of an open neighborhood of $(x,u,p)$ of the form $\iota(U\times V\times
W) \sse \tilde{U}$, where $U$ is open in $M$, $V$ is adapted to the
fibers of $F$ and $W$ is adapted to the fibers of $T^*M$. This means
that for any $(y,v,q)\in \iota(U\times V\times W)$ we have that $q$ is
$(y,v)$-spacelike and future oriented.

Next, for each $x\in M$, we let $u=\phi(x)$ and we build special
neighborhoods of $(x,u)\in F$ and $(x,u)\times_M K_x \in F\times_M
T^*M$. For each $(x,u,p) \in (x,u)\times_M K_x$, let
$\iota_{x,p}(U_{x,p}\times V_{x,p} \times W_{x,p})$ be the special
neighborhood described in the previous paragraph. By compactness of the
fiber $K_x$, we can choose finitely many $p_j$ such that the
corresponding neighborhoods cover $(x,u)\times_M K_x$. Denote by
$\pi\colon F\times_M T^*M\to F$ the canonical projection. Then, let
\begin{align}
	\tilde{V}
	&= \bigcap_j \pi\circ \iota_{x,p_j}(U_{x,p_j}
		\times V_{x,p_j} \times W_{x,p_j}) , \\
	\tilde{W}
	&= \pi^{-1}(\tilde{V}) \cap \bigcup_j \iota_{x,p_j}(U_{x,p_j}
		\times V_{x,p_j} \times W_{x,p_j}) .
\end{align}
That is, $\tilde{V}$ is an open neighborhood of $(x,u)\in F$ and
$\tilde{W}$ is an open neighborhood of $(x,u)\times_M K_x\sse
F\times_M T^*M$. Moreover, by construction, for each $(x,v)\in
\tilde{V}$ and each $(x,p)\in K_x$, the covector $p$ is
$(x,v)$-spacelike, since $(x,v,p)\in \tilde{W}$.

Next, by continuity of the section $\phi\colon M\to F$ and the bundle
$K\to M$, for each $x\in M$, we can find an open neighborhood $U_x$ of
$x$, trivializations $\iota_x$ of $F$ and $\iota'_x$ of $T^*M$ on $U_x$
(with $\iota_x = \pi\circ \iota'_x$), an open neighborhood
$\iota_x(U_x\times V_x) \sse \tilde{V}$ of $(x,u)\in F$, and an open
neighborhood $\iota'_x(U_x\times V_x \times W_x) \sse \tilde{W}$ of
$(x,u)\times_M K_x \sso T^*M$, such that $\phi(U_x) \sse
\iota_x(U_x\times V_x)$ and $K|_{U_x} \sse \iota'_x(U_x \times V_x
\times W_x)$. By construction, for every $(y,v,q)\in \iota'_x(U_x\times
V_x \times W_x)$, and a fortiori for every $(y,q)\in K|_{U_x}$, we have
that $q$ is $(y,v)$-spacelike.

Finally, let
\begin{equation}
	\mathcal{N} = \bigcup_{x\in M} \iota_x(U_x \times V_x) ,
\end{equation}
which is an open neighborhood of $\phi(M)$ in $F$ such that, for any
other section $\psi\colon M\to F$ with $\psi(M)\sso \mathcal{N}$, each
$(x,p)\in K$ is $\psi$-spacelike. By properties of convex cones, this
implies that $\bar{C}^\oast \sso \psi^*\Gamma^\oast$. The set
$U(\phi,\mathcal{N})$ of sections whose images are contained in
$\mathcal{N}$ is then a Whitney fine $C^0$ neighborhood of $\phi$ that
is contained in $\Secs(F,C)$. \qed
\end{proof}
This theorem will be used in the following section when discussing the
geometry and topology of the phase space of classical field theory.

An analogous result for Lorentzian cone bundles was first given
in~\cite{lerner}. The above proof is significantly more complicated
because the fibers of both $C^\oast$ and $\phi^*\Gamma^\oast$ cone
bundles need not have elliptical cross sections, like they do in the
Lorentzian case.

\subsection{PDE theory}
\label{sec:pde-theory}
As we shall see in Sect~\ref{sec:classquant}, PDE theory can be seen as
a tool for a non-perturbative construction of classical theory. The
following theorems are the main workhorses in this construction: local
existence, possible global existence, as well as uniqueness, domain of
dependence and finite propagation speed. The details of these results can
be found in the standard literature on hyperbolic
PDEs~\cite{john,lax,hoermander-lect}. See also more in depth discussion
and bibliography in~\cite{bfr}.

\begin{theorem}[Local Existence]\label{thm:locext}
Consider a spacelike initial data set $(S,\varphi)$. There exists an
open neighborhood $U$ of $S$ in $M$, $S\sso U\sse M$, and a solution
section $\phi\colon U\to F|_U$ that agrees with $(S,\varphi)$, that is
$\phi|_S = \varphi$ on $U$.
\end{theorem}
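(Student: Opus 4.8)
\emph{Proof proposal.}
The plan is to reduce the statement to the classical local well-posedness theorem for quasilinear symmetric hyperbolic systems and then to patch the resulting local solutions together using uniqueness on small lens-shaped domains. I would treat the first order symmetric hyperbolic case in detail; the general hyperbolizable case, including the presence of hyperbolically integrable constraints, follows by combining this with the consistency subsystem, as indicated at the end.

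First I would make the problem local and explicit near $S$. Extend the spacelike datum $\varphi\colon S\to \iota^*F$ to a smooth section $\phi_0$ of $F$ on an open neighborhood of $S$ in $M$. Since positive definiteness of $\bar f_{\bar u}\cdot p$ is an open condition on $(\bar u,p)$ (compare the continuity arguments in the proof of Theorem~\ref{thm:slow-open}), after shrinking this neighborhood the oriented conormal field of $S$ remains spacelike with respect to $\phi_0$, and one can pick a time function $t$ on it with $S=\{t=0\}$ and $\d t$ an oriented conormal. Cover $S$ by coordinate charts on which $F$ trivializes and $t$ is a coordinate; in such a chart, writing the adapted equation form $(f,\tilde F^*)$ of $\E$ in coordinates and using that the (symmetric) principal symbol contracted with $\d t$, namely $A^0 := \bar f\cdot\d t$, is symmetric and positive definite by Definition~\ref{def:symhyp}, the system takes the Friedrichs symmetric hyperbolic form
\begin{equation}
	A^0(t,s,\phi)\,\del_t\phi + A^i(t,s,\phi)\,\del_i\phi + B(t,s,\phi) = 0 ,
\end{equation}
with smooth coefficients, all $A^\mu$ symmetric, $A^0$ positive definite, and Cauchy data $\phi|_{t=0}=\varphi$.

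Next I would invoke the standard local existence theorem for such systems, proved by the energy-estimate and Picard-iteration scheme sketched in Sect.~\ref{sec:symhyper} (see~\cite{john,lax,hoermander-lect} and~\cite{geroch-pde}): for each $x\in S$ there is a $t$-slab neighborhood $U_x$ of $x$ and a smooth solution $\phi_x\colon U_x\to F|_{U_x}$ with $\phi_x|_{S\cap U_x}=\varphi$. After shrinking, I would arrange each $U_x$ to be lens-shaped with respect to $S\cap U_x$, so that by Proposition~\ref{prp:globhyp-lens} it is globally $C|_{U_x}$-hyperbolic with $S\cap U_x$ a Cauchy surface, where $C=\phi_x^*\Gamma$. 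On an overlap $U_x\cap U_y$ (shrunk if necessary so that it lies inside the domain of dependence of $S\cap U_x\cap U_y$) the solutions $\phi_x$ and $\phi_y$ share the same Cauchy data on $S\cap U_x\cap U_y$, so the uniqueness part of the symmetric hyperbolic Cauchy problem, equivalently the domain of dependence property, forces $\phi_x=\phi_y$ there. Hence the $\phi_x$ glue to a single smooth solution $\phi$ on $U:=\bigcup_{x\in S}U_x$, an open neighborhood of $S$, with $\phi|_S=\varphi$.

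Finally, for a PDE system that is only hyperbolic with constraints, with compound symmetric hyperbolic system $(f\oplus h,\tilde F^*\oplus\tilde E^*)$ and consistency subsystem $(h,\tilde E^*)$ satisfying $h\circ c=q\circ h$, I would choose Cauchy data for the compound system with $\phi|_S=\varphi$ and the constraint quantity $c[\phi]|_S=0$, which is possible precisely because $(S,\varphi)$ is constrained data, solve the compound system by the above, and then observe that $h[c[\phi]]=q[f[\phi]]=0$. Since $(h,\tilde E^*)$ is \emph{linear} symmetric hyperbolic, $c[\phi]$ solves a linear homogeneous symmetric hyperbolic equation with vanishing Cauchy data and hence vanishes identically by uniqueness; thus $\phi$ solves the original constrained system $\E$, and the general hyperbolizable case reduces to this one via the equivalence map of Definition~\ref{def:pde-equiv}. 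The main obstacle I anticipate is the gluing step: making the locally constructed solutions genuinely compatible requires the uniqueness/domain-of-dependence machinery on carefully chosen lens-shaped neighborhoods together with some bookkeeping about how the time-extent of existence may shrink near the edge of $S$; in the constrained case the propagation-of-constraints argument is the other delicate point, resting essentially on the hyperbolic integrability hypothesis.
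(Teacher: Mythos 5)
Your proposal is correct and matches the paper's treatment: the paper offers no proof of its own for this theorem, deferring entirely to the standard local well-posedness theory for quasilinear symmetric hyperbolic systems (the references to John, Lax, H\"ormander), and your reduction to Friedrichs form, patching by uniqueness on lens-shaped domains, and propagation of constraints via the linear consistency subsystem is exactly the argument those sources (and the paper's own discussion at the end of the PDE theory section) supply. No gaps worth flagging beyond the bookkeeping caveats you already note yourself.
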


Recall that, for a linear PDE system, cone bundle $\Gamma^\oast$ does
not depend on the dynamical fields.  Therefore, it can be seen as a cone
bundle over the spacetime manifold $M$, $\Gamma^\oast\to M$. For linear
equations, the existence result can be strengthened to a global
one~\cite[Ch.XIII]{hoermander-II}, \cite{lax}, \cite[Ch.IV]{taylor},
\cite[Ch.7]{ringstroem}, \cite{bgp}, \cite{waldmann-pde}.
\begin{theorem}[Global Existence]\label{thm:globext}
Consider a spacelike initial data set $(S,\varphi)$ and suppose that the
PDE system is linear. If the spacelike cone bundle $\Gamma^\oast$ is
globally hyperbolic and the surface $S$ is $\Gamma^\oast$-Cauchy, then,
under generic conditions, there exists a global solution section
$\phi\colon M\to F$ that agrees with $(S,\varphi)$, that is $\phi|_S =
\varphi$.
\end{theorem}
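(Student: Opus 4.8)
The plan is to bootstrap the local existence result, Theorem~\ref{thm:locext}, along a global foliation of $M$ by spacelike Cauchy surfaces, using linearity to keep the iteration from degenerating. First I would invoke Proposition~\ref{prp:globhyp-lens}: since the linear cone bundle $\Gamma^\oast\to M$ is globally hyperbolic with $S$ a $\Gamma^\oast$-Cauchy surface, $M$ itself is lens-shaped with respect to $S$ (apply the proposition to the chronal dual $\Gamma$), so we may fix a time function $t\colon M\to(-1,1)$ whose level sets $\Sigma_\lambda = t^{-1}(\lambda)$ are spacelike Cauchy surfaces with $\Sigma_0 = S$. Applying Theorem~\ref{thm:locext} to $(S,\varphi)$ produces a solution on an open neighborhood of $S$, which we may shrink to a slab $t^{-1}((-\delta,\delta))$ for some $\delta>0$.

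Next I would run a maximal-extension argument in the future direction, the past direction being symmetric. Let $T^+\in(0,1]$ be the supremum of $\tau$ for which a solution extending $\varphi$ exists on $t^{-1}((-1,\tau))$; standard uniqueness for the Cauchy problem of a symmetric hyperbolic system (cf.~\cite{john,lax,hoermander-lect}) makes these extensions mutually compatible, so they glue to a single solution $\phi$ on $t^{-1}((-1,T^+))$. The goal is $T^+ = 1$. Suppose instead $T^+ < 1$. The key input is the energy method: the almost-conserved, coercive energy current density $\eps^\tau$ of the symmetric hyperbolic system yields a Gronwall-type inequality $\|j^{k-1}\phi\|_{\Sigma_\lambda} \le Ce^{C\lambda}\|j^{k-1}\phi\|_{\Sigma_0}$, and because the system is \emph{linear} the constant $C$ does not depend on the size of $\phi$. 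Hence $\|j^{k-1}\phi\|_{\Sigma_\lambda}$ stays bounded as $\lambda\uparrow T^+$, so $\phi|_{\Sigma_\lambda}$ converges to spacelike Cauchy data $\varphi^+$ on the Cauchy surface $\Sigma_{T^+}$. Applying Theorem~\ref{thm:locext} to $(\Sigma_{T^+},\varphi^+)$ extends $\phi$ to a slab about $\Sigma_{T^+}$, and finite propagation speed (the domain of dependence property) shows the new piece agrees with $\phi$ on the overlap, contradicting the maximality of $T^+$. Therefore $T^+ = 1$, and symmetrically the past bound is $-1$, so $\phi$ is defined on all of $M$ and restricts to $\varphi$ on $S$.

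The hard part, and the reason the statement carries the qualifier ``under generic conditions'', is making the energy estimate and the passage to the limiting Cauchy surface genuinely global. When the Cauchy surfaces are compact this is routine: coercivity together with a Sobolev chain gives uniform $C^\infty$ control, and the limit $\varphi^+$ exists by completeness. For non-compact $S$ one must either assume bounded geometry and uniformly bounded coefficients, so that the Gronwall constant is honestly uniform, or else avoid a global norm altogether by localizing: by finite propagation speed every $p\in M$ lies in $D(S)$ and the past of $p$ meets $S$ in a relatively compact set $K$, so one solves instead on a lens-shaped sub-domain with base a neighborhood of $K$ (by the same bootstrap, now over a compact base) and patches the resulting local solutions using uniqueness. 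These localization and regularity hypotheses, together with any boundary conditions needed when $S$ has a boundary, are precisely the ``generic conditions'' of the statement; their detailed formulation, along with the alternative direct duality (weak-solution) construction, is deferred to the standard references~\cite{hoermander-II,lax,taylor,ringstroem,bgp,waldmann-pde}.
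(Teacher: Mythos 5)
The paper does not actually prove Theorem~\ref{thm:globext}: it states it as a standard result of the linear theory and defers entirely to the cited references (\cite[Ch.XIII]{hoermander-II}, \cite{lax}, \cite{taylor}, \cite{ringstroem}, \cite{bgp}, \cite{waldmann-pde}), so there is no in-paper argument to compare yours against. What you have written is, in outline, the standard proof that those references contain: split $M\cong(-1,1)\times S$ via Prop.~\ref{prp:globhyp-lens}, run a continuation argument off Thm.~\ref{thm:locext}, and use linearity so that the Gronwall constant in the energy estimate depends only on the coefficients and not on the solution, which rules out finite-time blow-up. The one step that deserves the emphasis you give it in your last paragraph is the non-compact case: Thm.~\ref{thm:locext} only supplies an open neighborhood of the Cauchy surface, not a uniform slab, so the na\"ive supremum argument can stall; your fix --- observing that by global hyperbolicity each $p\in M$ satisfies $p\in D(K')$ for a relatively compact $K'\sse S$, solving on the resulting compact lens-shaped domains, and patching by the uniqueness/domain-of-dependence result --- is exactly how the cited literature handles it, and it also explains the paper's ``under generic conditions'' hedge. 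So the proposal is essentially correct as a sketch; the remaining gaps (precise Sobolev bookkeeping, boundary conditions for $\del S\ne\varnothing$) are the same ones the paper itself leaves to the references.
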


\begin{theorem}[Uniqueness]\label{thm:uniq}
Suppose that $\phi\colon M\to F$ is a solution section, that the
spacelike cone bundle $C^\oast=\phi^*\Gamma^\oast$ is globally
hyperbolic and that $S\sso M$ is a $C^\oast$-Cauchy surface. Then $\phi$
is the unique solution section on $M$ that agrees with the initial data
$(S,\phi|_S)$.
\end{theorem}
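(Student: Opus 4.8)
The plan is to obtain uniqueness from a Gronwall-type energy estimate for a linearized \emph{difference} equation, propagated over $M$ by a continuity (bootstrap) argument along a foliation by $C^\oast$-Cauchy surfaces. Let $\psi\colon M\to F$ be any solution with $\psi|_S=\phi|_S$; we must show $\psi=\phi$. Since $C^\oast=\phi^*\Gamma^\oast$ is globally hyperbolic with $C^\oast$-Cauchy surface $S$, Proposition~\ref{prp:globhyp-lens} exhibits $M$ as $C^\oast$-lens-shaped with respect to $S$: a time function $t\colon M\to(-1,1)$ whose level sets $\Sigma_\tau$ are $C^\oast$-spacelike Cauchy surfaces sharing $S$'s boundary, with $\Sigma_0=S$. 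By symmetry under reversal of the time orientation it suffices to prove $\psi=\phi$ on $M^+:=\{t\ge 0\}$, and I would introduce
\[
	T=\sup\{\tau\in[0,1)\mid \psi=\phi \text{ on } M_\tau\}, \qquad M_\tau:=\{0\le t\le\tau\}.
\]
(If $\E$ carries hyperbolically integrable constraints, the same argument applied to the symmetric hyperbolic subsystem $(f,\tilde F^*)$ of its compound system settles that case, since two solutions of $\E$ in particular solve that subsystem with identical Cauchy data; so I assume $\E$ is itself symmetric hyperbolic.)

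The heart of the proof is the \emph{openness step}: assuming $\psi=\phi$ on $M_{t_0}$ for some $t_0\in[0,1)$, I would produce $\eps>0$ with $t_0+\eps<1$ such that $\psi=\phi$ on $M_{t_0+\eps}$. Write $w=\phi-\psi$ and use the fundamental theorem of calculus along the segment $\lambda\mapsto(1-\lambda)\psi+\lambda\phi$ of sections of the vector bundle $F$: the identity $0=f[\phi]-f[\psi]$ becomes $\widetilde L[w]=0$ for a first order \emph{linear} operator $\widetilde L$ on sections of $F$ whose principal symbol at $x$ is $\int_0^1 \bar f_{\,(1-\lambda)\psi(x)+\lambda\phi(x)}\,\d\lambda$. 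Being an average of symmetric forms it is symmetric, and when contracted with any covector $p$ that is spacelike for every $(1-\lambda)\psi(x)+\lambda\phi(x)$ it is an average of orientation-positive-definite forms, hence itself such; thus $\widetilde L$ is symmetric hyperbolic in the sense of Definition~\ref{def:symhyp} relative to the cone bundle $C^\oast_{\mathrm{avg}}:=\bigl(\bigcap_{\lambda\in[0,1]}((1-\lambda)\psi+\lambda\phi)^*\Gamma^\oast\bigr)^{\circ}$. On $\Sigma_{t_0}$ we have $\psi=\phi$, so $C^\oast_{\mathrm{avg}}$ restricts there to $\phi^*\Gamma^\oast=C^\oast$; by continuity of $\psi$, $\phi$ and smoothness of $\Gamma^\oast\to F$, on a collar $L=\{t_0\le t\le t_0+\eps\}$ the cone bundle $C^\oast_{\mathrm{avg}}$ is a genuine spacelike cone bundle close to $C^\oast|_L$, so the $C^\oast$-spacelike slices $\Sigma_\tau$ stay $C^\oast_{\mathrm{avg}}$-spacelike there and (Proposition~\ref{prp:globhyp-lens}) $L$ is a $C^\oast_{\mathrm{avg}}$-lens-shaped domain with Cauchy surface $\Sigma_{t_0}$. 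The energy method for the linear symmetric hyperbolic operator $\widetilde L$ on $L$ (Section~\ref{sec:symhyper}) then yields a Gronwall bound $\|w\|_{\Sigma_\tau}\le Ce^{C(\tau-t_0)}\|w\|_{\Sigma_{t_0}}$ for $t_0\le\tau\le t_0+\eps$ in an $L^2$-type energy norm, and since $w$ vanishes on $\Sigma_{t_0}$ this forces $w\equiv 0$ on $L$, i.e.\ $\psi=\phi$ on $M_{t_0+\eps}$. Applying the openness step at $t_0=0$ gives $T>0$; the defining set of $T$ is closed by continuity, so $\psi=\phi$ on $M_T$; applying the openness step at $t_0=T$ then contradicts maximality unless $T=1$. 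Hence $\psi=\phi$ on $M^+$, and symmetrically on $\{t\le 0\}$, so $\psi=\phi$ on $M$.

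The step I expect to be the genuine obstacle is the one with no counterpart in the linear theory: keeping the averaged principal symbol of $\widetilde L$ symmetric hyperbolic and the slices $\Sigma_\tau$ spacelike for $C^\oast_{\mathrm{avg}}$ on a collar of positive thickness. This is exactly where the field dependence of the characteristic cones must be controlled, and it becomes tractable only because $\psi$ and $\phi$ agree \emph{exactly} on $\Sigma_{t_0}$, so that $C^\oast_{\mathrm{avg}}$ starts out equal to $C^\oast$. When $\Sigma_{t_0}$ is non-compact one cannot in general choose $\eps$ uniformly along the slice; I would handle this either by invoking the spatial-compactness hypotheses already implicit in the lens-shaped/globally hyperbolic framework, or by first proving $\psi=\phi$ on the past domain of dependence of each relatively compact piece-with-boundary of $\Sigma_{t_0}$ — where the energy estimate lives on an honestly compact lens-shaped domain — and then exhausting $\Sigma_{t_0}$. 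A lesser point is the reduction of the constrained case, which rests on the consistency identity $h\circ c=q\circ f$ to ensure that matching data for $\E$ on $S$ implies matching data for the hyperbolic subsystem.
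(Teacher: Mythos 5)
The paper does not actually prove Theorem~\ref{thm:uniq}: it is stated as one of the standard workhorse results of symmetric hyperbolic PDE theory and referred to the literature (John, Lax, H\"ormander), so there is no in-paper argument to compare against. What you have written is the standard proof found in those references, and as a sketch it is sound: forming $w=\phi-\psi$ and linearizing along the segment $(1-\lambda)\psi+\lambda\phi$ does produce a linear first-order operator whose principal symbol is the average $\int_0^1\bar f_{(1-\lambda)\psi+\lambda\phi}\,\d\lambda$, which is symmetric and, contracted with the conormal of the slice, positive definite on a collar of $\Sigma_{t_0}$ precisely because the two solutions coincide there and positive definiteness is an open condition; the Gronwall estimate on a lens-shaped domain then kills $w$, and the continuity/bootstrap argument propagates this. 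You have also correctly isolated the two genuine obstacles: (i) the collar thickness $\eps$ cannot be chosen uniformly along a non-compact slice, and your fix --- prove uniqueness first on the (compact) domain of dependence of each relatively compact piece of $S$ and then exhaust, using Prop.~\ref{prp:globhyp-lens} to realize these as lens-shaped domains --- is exactly how the literature handles it; (ii) controlling the field dependence of the cones, which is tamed by the exact agreement of the solutions on the starting slice. One minor correction on the constrained case: the consistency identity $h\circ c=q\circ f$ plays no role in the uniqueness direction. Two solutions of the full system $(f\oplus c,\tilde F^*\oplus E)$ both satisfy $f=0$ outright, and agreement of their data on $S$ trivially gives agreement of data for the hyperbolic subsystem, so uniqueness for $(f,\tilde F^*)$ suffices; the consistency identity is needed only for the converse problem of propagating the constraints off the initial surface (existence), which the paper treats separately at the end of Sect.~\ref{sec:pde-theory}.
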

There is a more local, though equivalent, version of the uniqueness
theorem, that usually goes under a different name.
\begin{corollary}[Domain of Dependence]\label{cor:dod}
Let $(S,\varphi)$ be an initial data set and $\phi\colon U \to F|_U$ be
solution section defined on an open neighborhood $U\supset S$ such that
$\phi|_S = \varphi$. If $V\sse U$  is a lens-shaped domain with respect
to $S$ and the spacelike cone bundle $\phi^*\Gamma^\oast$, then $\phi$
is the unique solution section on $V$ that agrees with $(S,\varphi)$.
\end{corollary}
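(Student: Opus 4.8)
The plan is to deduce this directly from the Uniqueness theorem, Theorem~\ref{thm:uniq}, by taking the lens-shaped domain $V$ itself — rather than $M$ — as the ambient spacetime manifold. The key point to appreciate is that Theorem~\ref{thm:uniq} asserts uniqueness of a distinguished solution among \emph{all} solution sections sharing its initial data, the only hypothesis being global hyperbolicity of that distinguished solution's \emph{own} pulled-back spacelike cone bundle; a competing solution is not required to carry a globally hyperbolic cone bundle of its own. Moreover, every ingredient entering Theorem~\ref{thm:uniq} restricts to an open submanifold: the PDE system $\E\sso J^kF$ restricts to $\E|_V\sso J^k(F|_V)$, the principal symbol and the cone bundles $\Gamma^\oast$, $\Gamma$ over $F$ restrict to the corresponding objects over $F|_V$, any solution section of $\E$ restricts to a solution section of $\E|_V$, and $(\phi|_V)^*\Gamma^\oast = (\phi^*\Gamma^\oast)|_V$ — all immediate since membership in $\E$ and (non)characteristicity of a covector are pointwise conditions on jets.

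First I would note that $\phi|_V\colon V\to F|_V$ is a solution section on $V$ and, since $S\sso V$ and $\phi|_S=\varphi$, it agrees with the initial data set $(S,\varphi)$. Next I would apply Proposition~\ref{prp:globhyp-lens}: the hypothesis that $V$ is lens-shaped with respect to $S$ and $\phi^*\Gamma^\oast$ is equivalent to $V$ being globally hyperbolic for the chronal cone bundle $(\phi^*\Gamma)|_V$, with $S$ a Cauchy surface. Setting $C^\oast=(\phi|_V)^*\Gamma^\oast$, this says exactly that $C^\oast$ is globally hyperbolic and that $S\sso V$ is $C^\oast$-Cauchy. Theorem~\ref{thm:uniq}, applied with ambient manifold $V$ and distinguished solution $\phi|_V$, then gives that $\phi|_V$ is the unique solution section on $V$ that agrees with $(S,(\phi|_V)|_S)=(S,\varphi)$, which is exactly the assertion.

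There is essentially no analytic content here beyond what is already packaged into Theorem~\ref{thm:uniq}; the one place requiring (routine) care is verifying that the notions ``solution section'', ``spacelike cone bundle'', ``Cauchy surface'' and ``agrees with an initial data set'' behave correctly under restriction to an open submanifold, so that invoking a theorem phrased for solutions on $M$ in the context of solutions on $V$ is legitimate. This is precisely the bookkeeping behind the remark preceding the statement that the Domain of Dependence corollary is an equivalent, more local reformulation of Theorem~\ref{thm:uniq}: conversely one recovers the latter from Corollary~\ref{cor:dod} by taking $V=M$ whenever $M$ is itself globally hyperbolic, hence lens-shaped by Proposition~\ref{prp:globhyp-lens}.
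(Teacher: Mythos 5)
Your argument is correct and is exactly the paper's own proof: invoke Proposition~\ref{prp:globhyp-lens} to conclude that $V$ is globally hyperbolic with $S$ a $(\phi^*\Gamma^\oast)|_V$-Cauchy surface, then apply Theorem~\ref{thm:uniq} with $V$ as the ambient manifold. The extra bookkeeping you supply about how solutions, cone bundles and Cauchy surfaces restrict to open submanifolds is left implicit in the paper but is the same (routine) step.
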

\begin{proof}
By Prop.~\ref{prp:globhyp-lens}, the interior of $V$ is globally
hyperbolic and $S$ is Cauchy with respect to $\phi^*\Gamma^\oast$. The
result then follows directly from Thm.~\ref{thm:uniq}. \qed
\end{proof}
The same result can also be interpreted as showing finite speed of
propagation of disturbances.
\begin{corollary}[Finite Propagation Speed]\label{cor:finspeed}
Suppose $\phi\colon M\to F$ is a solution section such that the
spacelike cone bundle $C^\oast=\phi^*\Gamma^\oast$ is globally
hyperbolic and the surface $S\sso M$ is $C^\oast$-Cauchy. If $\phi'$ is
another solution section whose restriction to $S$ differs from $\phi$
only on an open submanifold $S'\sse S$, then $\phi$ and $\phi'$ agree in
the interior of the complement of the domain of influence $I_C(S')$.
\end{corollary}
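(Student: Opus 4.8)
\textit{Proof sketch.}
The plan is to show that $\phi$ and $\phi'$ agree on a neighbourhood of each point $x$ in the interior of $M\setminus I_C(S')$; since that set is open this gives the statement. First note that $x\notin\overline{I_C(S')}$ already forces $x\notin\overline{S'}$: every $s'\in S'$ lies in $\overline{I^+_C(s')}\sse\overline{I^+_C(S')}$, so $\overline{S'}\sse\overline{I^+_C(S')}\sse\overline{I_C(S')}$. By the symmetry of the statement under reversal of the time orientation (reflecting all cones through the origin) I may assume that $x$ lies in the causal future $J^+_C(S)$ of $S$ (the case $x\in J^-_C(S)$ is identical, and $J^+_C(S)\cup J^-_C(S)=M$ since $S$ is $C^\oast$-Cauchy). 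Set $\Sigma=S\setminus\overline{S'}$, an open, hence $C^\oast$-spacelike and acausal, codimension-$1$ subsurface of $S$; because $\Sigma\sse S\setminus S'$, the solutions $\phi$ and $\phi'$ restrict to the same initial data on $\Sigma$.

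The geometric heart of the argument is to realize a neighbourhood of $x$ as a $\phi^*\Gamma^\oast$-lens-shaped domain over $\Sigma$. Let $V$ be the interior of the domain of dependence $D(\Sigma)$ computed with respect to the (globally hyperbolic) cone bundle $\phi^*\Gamma^\oast$. Two facts are needed, both standard consequences of global hyperbolicity. \textit{(i)} Every inextensible timelike curve through $x$ meets the Cauchy surface $S$ exactly once, necessarily at a point $y$ with $y\ll x$; if $y$ lay in $\overline{S'}\sse\overline{I^+_C(S')}$, then, using causal simplicity ($\overline{I^+_C(S')}=J^+_C(S')$), openness of the chronology relation and transitivity, one would obtain some $s'\in S'$ with $s'\ll x$, contradicting $x\notin\overline{I_C(S')}$; hence $y\in\Sigma$, so the curve meets $\Sigma$ and $x\in D(\Sigma)$. \textit{(ii)} $x$ does not lie on the Cauchy horizon $\del D(\Sigma)$: a curve witnessing this would accumulate on $\operatorname{edge}\Sigma\sse\overline{S'}$, again placing $x$ in $J^+_C(\overline{S'})\sse\overline{I_C(S')}$, a contradiction. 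Thus $x\in V$. Since the interior of the domain of dependence of an acausal spacelike surface is globally hyperbolic with that surface as a Cauchy surface, Prop.~\ref{prp:globhyp-lens} shows that $V$ is a $\phi^*\Gamma^\oast$-lens-shaped domain with respect to $\Sigma$.

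The conclusion is now immediate from uniqueness. Both $\phi$ and $\phi'$ (the latter by restriction of the global solution) are solution sections on $V$, and both agree with the initial data $(\Sigma,\phi|_\Sigma)$. Applying the Domain of Dependence Corollary~\ref{cor:dod} with $\phi$ as the reference solution, the data $(\Sigma,\phi|_\Sigma)$, and the lens-shaped domain $V$ — the lens being taken with respect to $\phi^*\Gamma^\oast$, exactly as the corollary requires — shows that $\phi$ is the \emph{unique} solution on $V$ agreeing with $(\Sigma,\phi|_\Sigma)$. Hence $\phi'|_V=\phi|_V$, so $\phi$ and $\phi'$ agree on the neighbourhood $V$ of $x$. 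As $x$ ranged over the interior of the complement of $I_C(S')$, the proof is complete.

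I expect steps \textit{(i)}--\textit{(ii)} to be the only real obstacle. The analytic input is a one-line appeal to Cor.~\ref{cor:dod} (equivalently Thm.~\ref{thm:uniq}); everything else is causal bookkeeping needed to place $x$ inside a lens-shaped domain sitting over a piece of $S$ disjoint from $\overline{S'}$. In the Lorentzian setting the required ingredients — $J^+_C(A)$ closed and equal to $\overline{I^+_C(A)}$, transitivity, and the structure of domains of dependence of partial Cauchy surfaces — are classical, and they are expected to carry over to globally hyperbolic conal geometry; a fully rigorous treatment in the general conal case may need to invoke the stability-type statements indicated earlier (cf.\ Conjecture~\ref{cnj:gh-stab}).
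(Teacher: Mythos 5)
Your proposal is correct and follows essentially the same route as the paper: the paper's proof simply asserts that the interior of $M\setminus I(S')$ equals $D(S'')$ with $S''=(S\setminus S')^\circ$ (which is your $\Sigma=S\setminus\overline{S'}$) and then invokes Thm.~\ref{thm:uniq}, while you supply the pointwise causal bookkeeping justifying that set identity and route the final step through Cor.~\ref{cor:dod}, which is the same uniqueness statement. The extra detail in steps \textit{(i)}--\textit{(ii)} is a welcome elaboration of what the paper leaves as ``a consequence of the definitions,'' with the same (acknowledged) reliance on Lorentzian-style causality facts carrying over to the conal setting.
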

In other words, if a disturbance is confined to $S'$, it does not
propagate faster than allowed by the chronal cones $C=\phi^*\Gamma$ of
the undisturbed solution.
\begin{proof}
It is a consequence of the definitions that the interior of the
complement $M\setminus I(S')$ is the domain of dependence $D(S'')$,
where $S'' = (S\setminus S')^\circ$ and the interior is taken with the
respect to the submanifold topology on $S$. Since $D(S'')$ is globally
hyperbolic, the result follows directly from Thm.~\ref{thm:uniq}. \qed
\end{proof}

What the preceding theorems allow us to do is parametrize globally
hyperbolic solution sections in terms of initial data sets. The
parametrization is unfortunately not bijective. Any solution section
$\phi$ on $M$ corresponds to many different initial data, one for each
$\phi$-Cauchy surface $S$, namely $(S,\phi|_S)$. On the other hand, not
all initial data sets correspond to global solutions. There may exist
data sets $(S,\varphi)$ for which the corresponding solution, which is
guaranteed to exist in a neighborhood of $S$ develops singularities
within $\phi$ and so does not extend to a global solution.  It is also
possible that there exists a solution $\phi$ that agrees with the given
initial data, but is not globally hyperbolic on $M$. Let us refer as
\emph{globally hyperbolic initial data sets} to those that come from the
restriction of a globally hyperbolic solution section $\phi$ to a
$\phi$-Cauchy surface $S\sso M$. Two globally hyperbolic initial data
sets that come from the same solution section are said to be equivalent,
which defines an equivalence relation. Thus, by the above uniqueness
theorem, globally hyperbolic solution sections are in bijective
correspondence with equivalence classes of globally hyperbolic initial
data.

The last paragraph refers to solutions of hyperbolic PDE systems. What
about those with hyperbolically integrable constraints? Consider the
adapted equation form $(f\oplus c,\tilde{F}^*\oplus_M E)$, with
$(f,\tilde{F}^*)$ and $(c,E)$ respectively the hyperbolic and
constraints subsystems, and the corresponding hyperbolic compound system
$(f\oplus h, \tilde{F}^*\oplus_M \tilde{E}^*)$ with identity $h\circ c =
q\circ f$. Certainly, if a section $\phi\colon M\to F$ satisfies both
$f[\phi]=0$ and $c[\phi]=0$, so does its restriction (or the restriction
of $j^k\phi$ for sufficiently high $k$) to any $\phi$-Cauchy surface. On
the other hand, consider initial data $(S,\varphi)$ such
that any extension of $\varphi$ to a solution $\phi$ of $f[\phi]=0$ on
an open neighborhood of $S$ satisfies the constraints $c[\phi]|_S=0$. Is
it necessarily true that an extension of $\varphi$ to a solution $\phi$
of $f[\phi]=0$ on all of $M$ also satisfies $c[\phi]=0$ on all of $M$?
That is indeed the case if the section $\phi$ is globally hyperbolic on
$M$ with respect to the compound system, which includes the consistency
subsystem $(h,\tilde{E}^*)$. To see this, notice that $\phi\oplus
c[\phi]$ gives a solution of the compound system on $M$, since the
consistency identity and $q[0] = 0$ yield
\begin{equation}
	h[c[\phi]] = q[f[\phi]] = q[0] = 0.
\end{equation}
On the other hand, we know that the section $\phi\oplus c[\phi]$
restricts to the initial data set $(S,\varphi\oplus0)$ for the compound
system. But the consistency subsystem is linear and zero is always a
solution, implying that $\phi\oplus 0$ is also a solution with the same
initial data. Finally, if the section $\phi$ is in fact globally
hyperbolic with respect to the compound system, the uniqueness
Thm.~\ref{thm:uniq} shows that the two solutions are identical, that is
that $c[\phi] = 0$ on $M$. So, for symmetric hyperbolic systems with
symmetric hyperbolically integrable constraints, globally hyperbolic
solution sections $\phi\colon M\to F$ are in bijective correspondence
with equivalence classes of globally hyperbolic initial data sets
$(S,\varphi)$ satisfying the constraints. Note that the notion of global
hyperbolicity is in both cases with respect to the hyperbolic structure
of the compound system $(f\oplus h, \tilde{F}^*\oplus_M \tilde{E}^*)$.

\subsection{Linear inhomogeneous problems}\label{sec:lin-inhom}
In the preceding section we have discussed the Cauchy problem for
quasilinear systems. If the system under consideration is linear, we can
also discuss the linear algebra of the corresponding inhomogeneous
problem
\begin{equation}
	f[\phi] = \tilde{\alpha}^* ,
\end{equation}
where $\tilde{\alpha}^*$ is a compactly supported dual density, that is,
a section of the bundle $\tilde{F}^*\to M$.

The field independent spacelike cone bundle $\Gamma^\oast$ defined by
the principal symbol $\bar{f}^\mu_{ab}$ endows $M$ with causal
structure. Let us assume that $M$ is globally hyperbolic with respect to
$\Gamma^\oast$.  It is convenient to introduce spaces of sections with
restricted supports, in particular in ways related to the causal
structure.
\begin{definition}
Consider a vector bundle $V\to M$. We define the following subspaces of
the space of sections $\Secs(V)$:
\begin{align}
	\Secs_0(V) &= \{ \phi\in\Secs(V)
		\mid \text{$\supp\phi$ is compact} \} , \\
	\Secs_+(V) &= \{ \phi\in\Secs(V)
		\mid \text{$\supp\phi$ is retarded} \} , \\
	\Secs_-(V) &= \{ \phi\in\Secs(V)
		\mid \text{$\supp\phi$ is advanced} \} , \\
	\Secs_{SC}(V) &= \{ \phi\in\Secs(V) \mid
		\text{$\supp\phi$ is spacelike compact} \} ,
\end{align}
where \emph{retarded support}, \emph{advanced support}, or
\emph{spacelike compact support} means, respectively, that $\supp\phi
\sso I^+(K)$, $\supp\phi\sso I^-(K)$, or $\supp\phi\sso I(K)$ for some
compact $K\sso M$.

The corresponding subspaces of the solution space $\S(F)$ are defined as
\begin{equation}
	\S_{0,\pm,SC}(F) = \S(F) \cap \Secs_{0,\pm,SC}(F) .
\end{equation}
\end{definition}

\subsubsection{Duhamel's principle}
It is a well known fact of classical PDE theory that for some equations
(like the heat and wave equations) the solutions of Cauchy and
inhomogeneous problems are equivalent. That is, a solution of one
problem can be obtained from the other. This is usually known as
\emph{Duhamel's Principle}~\cite{ch2,evans-pde}. For completeness,
we present it here in a form appropriate for our geometric formulation
of symmetric hyperbolic systems. In addition, once we have the retarded
and advanced Green functions for the inhomogeneous problem, we can
construct an exact sequence that conveniently parametrizes the solution
space of the linear PDE. This parametrization will be useful in
Sects.~\ref{sec:formal-symp} and~\ref{sec:formal-pois} for the
construction and analysis of the symplectic and Poisson structures of
classical field theory.

Consider a Cauchy surface $\Sigma\sso M$ (where the Cauchy property
holds with respect to $\Gamma^\oast\to M$). Let $t\colon M\to \R$ be a
time function, such that $t|_\Sigma = 0$ and $\d{t}$ is future oriented.
The ability to solve the Cauchy problem on $\Sigma$ means that we have
access to
\begin{definition}
The \emph{Cauchy Green function} $\tilde{\G}_\Sigma\colon
\Secs_0(F|_\Sigma) \to \Secs_{SC}(F)$ is a linear map uniquely defined
by the requirement
\begin{equation}
	\phi = \tilde{\G}_\Sigma[\varphi]
	~~ \implies ~~
	f[\phi] = 0 ~~\text{and}~~ \phi|_\Sigma = \varphi.
\end{equation}
\end{definition}
More explicitly, its action is given by the contraction of a bitensor
distribution on $M\times \Sigma$ with the initial data on $\Sigma$:
\begin{equation}
	\tilde{\G}_\Sigma[\varphi]^a(x)
		= \int_\Sigma \d\tilde{s}\, {\G_\Sigma}^a_b(x,s) \varphi^b(s) ,
\end{equation}
where we have used local coordinates $(x^i,u^a)$ on $F$ and $(s^j,v^b)$
on $F|_\Sigma$, and $\d\tilde{s}$ is the coordinate volume $(n-1)$-form on
$\Sigma$.

Consider a compactly supported dual density $\tilde{\alpha}^*\in
\Secs_0(\tilde{F}^*)$. The ability to solve the inhomogeneous problem
$f[\phi] = \tilde{\alpha}^*$ means that we have access to
\begin{definition}
The \emph{retarded/advanced Green function} $\G_\pm\colon
\Secs_0(\tilde{F}^*) \to \Secs_{SC}(F)$ is a linear map defined uniquely
by the requirement
\begin{equation}
	\phi_\pm = \G_\pm[\tilde{\alpha}^*]
	~~\implies~~
	f[\phi_\pm] = \tilde{\alpha}^*
	~~\text{and}~~
	\supp \phi_\pm \sse I^\pm(\supp \tilde{\alpha}^*) ,
\end{equation}
where $+$ denotes the retarded and $-$ the advanced the boundary
condition.
\end{definition}
More explicitly, its action is given by the contraction of a bitensor
distribution on $M\times M$ with the source term,
\begin{equation}
	\G_\pm[\tilde{\alpha}^*]^a(x) = \int_M \G_\pm^{ab}(x,y)
		\alpha^*_b(y)\, \d\tilde{y} ,
\end{equation}
where we have used the local coordinates $(x^i,u^a)$ and $(y^j,v^b)$ on
the two copies of $F$, and $\alpha^*_b(y)\,\d\tilde{y}$ are the
components of $\tilde{\alpha}^*$ in local coordinates, with
$\d\tilde{y}$ the coordinate volume form on $M$. We also use the
notation $\G(x|y) = \G(x,y)$ for the Cauchy or retarded/advanced Green
functions, to highlight the different roles of the arguments.

Consider any section of the field bundle $\phi\colon M\to F$ and define
$\theta_\pm = \Theta(\pm t)$, with $\Theta(t)$ the Heaviside step
function, which are the characteristic functions of $S^\pm =
I^\pm(\Sigma)$, the future and past of $\Sigma$. The main identity that
allows us to relate the Cauchy and retarded/advanced Green functions is
\begin{equation}\label{eq:f-dtheta}
	f[\theta_\pm \phi] = (\bar{f}\cdot\d\theta_\pm) \phi + \theta_\pm f[\phi] .
\end{equation}
A calculation in local coordinates $(x^i,u^a)$ reveals
$\d\theta_\pm = \pm \delta(t) \, \d{t}$ and
\begin{equation}\label{eq:ft-def}
	[(\bar{f}\cdot\d\theta_\pm) \phi]_a
	= \pm\delta(t)\, \d{t}_i \bar{f}^i_{ab} \phi^b
	= \pm\delta(t)\, \bar{f}^t_{ab} \phi^b ,
\end{equation}
where for convenience, we have defined $\bar{f}^t_{ab}$ by the equation
$\d{t}\wedge \bar{f}^t_{ab} = \d{t}_i \bar{f}^i_{ab}$.

We are now ready to prove what is known in classical PDE theory as
\begin{lemma}[Converse Duhamel's Principle]
	In local coordinates $(x^i,u^a)$ on $(t,s^j,v^b)$ on the two copies of
	$F$, with the second restricting to $(s^j,v^b)$ on $F|_\Sigma$, the
	Cauchy Green function can be expressed in terms of the
	retarded/advanced Green functions as follows
	\begin{equation}
		{\G_\Sigma}^a_b(x,s)\,\d\tilde{s}
		= \iota^*\sum_{\pm} \pm \G_{\pm}^{ac}(x|0,s) \bar{f}^t_{cb}(0,s) ,
	\end{equation}
	where the pullback is along the inclusion $\iota\colon \Sigma \sso M$,
	corresponding to $t=0$.
\end{lemma}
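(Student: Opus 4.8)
The plan is to run the classical \emph{Duhamel} construction in reverse: realize the Cauchy solution as the retarded minus the advanced Green operator applied to the singular source obtained by differentiating a solution multiplied by a step function, and then read off the kernel identity. Concretely, fix arbitrary compactly supported data $\varphi\in\Secs_0(F|_\Sigma)$, set $\phi=\tilde{\G}_\Sigma[\varphi]\in\Secs_{SC}(F)$, so that $f[\phi]=0$, $\phi|_\Sigma=\varphi$, and by finite propagation speed (Cor.~\ref{cor:finspeed}) $\supp\phi$ lies in the causal hull of the compact set $\supp\varphi\sso\Sigma$. Since $\theta_\pm\phi$ is locally integrable and $f$ is a first-order differential operator, the Leibniz-type identity~\eqref{eq:f-dtheta} holds distributionally and, combined with $f[\phi]=0$ and the coordinate formula~\eqref{eq:ft-def}, gives
\begin{equation}
	f[\theta_\pm\phi]=(\bar{f}\cdot\d\theta_\pm)\phi
		= \pm\,\tilde{\alpha}^*,
	\qquad
	[\tilde{\alpha}^*]_a(y) = \delta(t(y))\,\bar{f}^t_{ab}(0,s)\,\varphi^b(s) ,
\end{equation}
where $\tilde{\alpha}^*$ is a compactly supported distributional dual density supported on $\Sigma$, depending on $\phi$ only through $\varphi=\phi|_\Sigma$.

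Because $\phi$ is spacelike compact, $\supp(\theta_+\phi)$ is retarded and $\supp(\theta_-\phi)$ advanced relative to $\supp\tilde{\alpha}^*$, so the defining uniqueness property of $\G_\pm$ — applied, as is standard, to compactly supported distributional sources — identifies $\theta_+\phi=\G_+[\tilde{\alpha}^*]$ and $-\theta_-\phi=\G_-[\tilde{\alpha}^*]$. Adding these, and using $\theta_++\theta_-=1$ off the measure-zero surface $\Sigma$, yields $\phi=\theta_+\phi+\theta_-\phi=\G_+[\tilde{\alpha}^*]-\G_-[\tilde{\alpha}^*]=\sum_\pm\pm\,\G_\pm[\tilde{\alpha}^*]$. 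It remains to compare kernels: the left side is $\phi^a(x)=\int_\Sigma\d\tilde{s}\,{\G_\Sigma}^a_b(x,s)\,\varphi^b(s)$, while $\G_\pm[\tilde{\alpha}^*]^a(x)=\int_M\G_\pm^{ac}(x,y)\,[\tilde{\alpha}^*]_c(y)$, in which the factor $\delta(t)$ localizes the $y$-integral onto $\Sigma$ — using the time function as a transverse coordinate, $\d\tilde{y}=\d t\wedge\d\tilde{s}$ near $\Sigma$ — and replaces the integrand by its pullback $\iota^*$ along $\iota\colon\Sigma\sso M$. This produces $\int_\Sigma\iota^*\big[\sum_\pm\pm\,\G_\pm^{ac}(x|0,s)\,\bar{f}^t_{cb}(0,s)\big]\varphi^b(s)$, and since $\varphi\in\Secs_0(F|_\Sigma)$ was arbitrary the two kernels agree, which is the asserted formula.

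The step I expect to be the main obstacle is the support bookkeeping that legitimizes the uniqueness argument: one must verify that $\theta_\pm\phi$, which a priori is only known to be spacelike compact, is genuinely retarded/advanced with respect to $\supp\tilde{\alpha}^*\sso\Sigma$. This follows from $\Sigma$ being acausal together with the monotonicity of the time function $t$ along causal curves, which forces the portion of $\supp\phi$ with $\pm t\ge 0$ to lie in $J^\pm(\supp\varphi)$ and hence in the required support class; it also uses the standard fact that the retarded/advanced parametrices of a hyperbolic operator act continuously on compactly supported distributions, so that $\G_\pm[\tilde{\alpha}^*]$ makes sense for the delta-type source. Everything else is routine manipulation of the identities~\eqref{eq:f-dtheta}--\eqref{eq:ft-def}, the definitions of the Cauchy and retarded/advanced Green functions, and the global hyperbolicity of $M$ with respect to $\Gamma^\oast$ assumed in this subsection.
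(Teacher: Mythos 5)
Your proof is correct and follows essentially the same route as the paper: both hinge on the identity~\eqref{eq:f-dtheta} applied to $\theta_\pm\phi$ to produce the surface delta source $\pm\delta(t)\,\d t\wedge\bar f^t\cdot\varphi$, the identification $\theta_\pm\phi=\pm\G_\pm$ of that source, and the final localization of the $y$-integral onto $\Sigma$ to read off the kernel. The only (immaterial) difference is the direction of the uniqueness argument — you decompose the given Cauchy solution and invoke uniqueness of the retarded/advanced problem, while the paper builds $\phi_\pm=\G_\pm[\cdot]$ first and invokes uniqueness of the Cauchy problem; your explicit support bookkeeping fills in the step the paper passes over with ``Clearly \dots satisfies $\theta_\pm\phi=\pm\phi_\pm$.''
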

\begin{proof}
Consider $\varphi \in \Secs_0(F|_\Sigma)$. We will construct a solution
$\phi$ agreeing with this initial data, $\phi|_\Sigma = \varphi$, using
retarded/advanced Green functions. If such a solution $\phi$ existed,
then from the identity~\eqref{eq:f-dtheta} it would satisfy
\begin{equation}\label{eq:phi-initd}
	f[\theta_\pm \phi]_a
	= \pm \delta(t)\,\d{t}\wedge \bar{f}^t_{ab} \phi^b
	= \pm \delta(t)\,\d{t}\wedge \bar{f}^t_{ab} \varphi^b ,
\end{equation}
where the $\delta(t)$ prefactor allows us to consider only the values of
$\phi|_\Sigma = \varphi^b$. Conversely, knowing only the initial data on
$\Sigma$, we can define the following solutions to distributional
inhomogeneous problems:
\begin{equation}
	f[\phi_\pm] = \delta(t)\,\d{t}\wedge \bar{f}^t\cdot\phi
	\quad \implies \quad
	\phi_\pm = \G_\pm[\delta(t)\,\d{t}\wedge \bar{f}^t\cdot \varphi] .
\end{equation}
Clearly, the section $\phi = \phi_+ - \phi_-$ is then a solution of
$f[\phi] = 0$ that satisfies the identities $\theta_\pm \phi = \pm\phi_\pm$
and hence, by Eq.~\eqref{eq:phi-initd}, agrees with initial data
$\phi|_\Sigma = \varphi$. Global hyperbolicity of $M$ implies that $\phi$ is
unique. Hence we can define the Cauchy Green function by
$\tilde{G}_\Sigma[\varphi] = \phi$. More explicitly, in local coordinates
$(x^i,u^a)$ and $(t,s^j,v^b)$ on the two copies of $F$, we find
\begin{align}
	\phi^a(x)
	&= \sum_{\pm}\pm \int_M \G^{ac}_\pm(x|t,s) \delta(t)\, \d{t}\wedge
		\bar{f}^t_{cb}(t,s) \phi^b(t,s) \\
	&= \int_\Sigma \iota^* \left[
		\sum_{\pm}\pm \G_\pm^{ac}(x|0,s) \bar{f}^t_{cb}(0,s) \right] \varphi^b(s)
	= \int_\Sigma \d\tilde{s}\, {\G_\Sigma}^a_b(x,s) \varphi^b(s) ,
\end{align}
wherefrom the desired result follows immediately. \qed
\end{proof}
To obtain the retarded/advanced Green functions from the Cauchy ones, we
need only set up an initial value problem whose solution determines the
retarded and advanced responses to the point source: $f[\psi_\pm] =
\id_F\tilde{\delta}(x,y)$. In local coordinates $(x^i,u^a)$ and
$(y^i,v^b)$ on the two copies of $F$, $(\id_F)_a^b = \delta_a^b$
(Kronecker delta) and $\tilde{\delta}(x,y) = \prod_i
\delta(x^i-y^i)\,\d\tilde{x}$ (Dirac delta densitized on the first
argument). Without loss of generality, we introduce local coordinates
$(t,r^j,v^b)$ on $F$ such that the point $y$ lies on the surface
$\iota\colon \Sigma\sso M$, $y=(0,r)$ with $\Sigma$ a Cauchy surface
defined by $t=0$. We define also $\underline{f}_t^{ab} =
(\iota^*\bar{f}^t_{ab})^{-1}$, an $(n-1)$-multivector field on $\Sigma$,
which means that
\begin{equation}
	(\iota^*\bar{f}^t_{ab})\cdot \underline{f}_t^{bc} =
	\bar{f}^t_{ab}\cdot (\iota_*\underline{f}_t^{bc}) =
	\delta_a^c ,
\end{equation}
where $\iota^*$ denotes the pullback of $(n-1)$-forms from $M$ to
$\Sigma$, $\iota_*$ denotes the pushforward of $(n-1)$-multivectors from
$\Sigma$ to its image in $M$, and the $\cdot$ denotes the contraction of
the $(n-1)$-form indices with the corresponding $(n-1$)-multivector
indices. We then have the direct form of
\begin{lemma}[Duhamel's Principle]
In local coordinates $(x^i,u^a) = (t,s^i,u^a)$ and $(y^j,v^b) =
(t,r^j,v^b)$ on the two copies of $F$, restricting to $(s^i,u^a)$ and
$(r^j,v^b)$ on the corresponding copies of $F|_\Sigma$, the
retarded/advanced Green function can be expressed in terms of the Cauchy
Green function as follows:
\begin{equation}
	\G_\pm^{ab}(x,y)
		= \pm\theta_\pm(x) {\G_\Sigma}^a_c(x,r)\,\d\tilde{r} \cdot
			\underline{f}_t^{cb}(r) .
\end{equation}
\end{lemma}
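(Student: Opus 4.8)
The plan is to reverse the argument of the preceding Converse Duhamel Lemma: rather than assembling the Cauchy Green function out of $\G_\pm$, I feed a point source supported on $\Sigma$ into $\G_\pm$ and recognise the output as $\theta_\pm$ times a Cauchy evolution. The only genuinely substantive ingredient is the uniqueness of retarded and advanced solutions, i.e.\ Thm.~\ref{thm:uniq} applied to $M$, which is globally hyperbolic with respect to $\Gamma^\oast$.

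First I would invoke the identity~\eqref{eq:f-dtheta} exactly as in the Converse Duhamel proof: if $\phi\colon M\to F$ solves $f[\phi]=0$ with Cauchy data $\phi|_\Sigma=\varphi$, then $f[\theta_\pm\phi]=\pm\delta(t)\,\d t\wedge\bar{f}^t\cdot\varphi$, and $\theta_\pm\phi$ has retarded (resp.\ advanced) support relative to $\supp\varphi$ by finite propagation speed. Uniqueness of the inhomogeneous solution with prescribed support then gives $\theta_+\phi=\G_+[\delta(t)\,\d t\wedge\bar{f}^t\cdot\varphi]$ and $\theta_-\phi=-\G_-[\delta(t)\,\d t\wedge\bar{f}^t\cdot\varphi]$. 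Writing $\phi={\G_\Sigma}[\varphi]$, this is the single relation
\begin{equation}
	\G_\pm[\delta(t)\,\d t\wedge\bar{f}^t\cdot\varphi]=\pm\theta_\pm\,{\G_\Sigma}[\varphi].
\end{equation}

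Next I would specialise $\varphi$ so that the source on the left becomes the point source $\id_F\tilde{\delta}(\cdot,y)$ with $y=(0,r)\in\Sigma$. Splitting the coordinate volume form as $\d\tilde{x}=\d t\wedge\d\tilde{s}$ exhibits $\id_F\tilde{\delta}(x,y)$ in the form $\delta(t)\,\d t\wedge\beta$, where $\beta$ is a delta density on $\Sigma$ concentrated at $r$ carrying the index structure $\delta^b_a$. Solving $\iota^*\bar{f}^t\cdot\varphi=\beta$ for $\varphi$ by means of the defining property $\bar{f}^t_{ac}\cdot\underline{f}_t^{cb}=\delta^b_a$ of $\underline{f}_t^{ab}=(\iota^*\bar{f}^t_{ab})^{-1}$ yields a $\varphi^c$ that is concentrated at $r$ and proportional to $\underline{f}_t^{cb}(r)$. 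Substituting into the integral representation ${\G_\Sigma}[\varphi]^a(x)=\int_\Sigma\d\tilde{s}\,{\G_\Sigma}^a_c(x,s)\,\varphi^c(s)$ collapses the $\Sigma$-integral against the delta and leaves ${\G_\Sigma}^a_c(x,r)\,\d\tilde{r}\cdot\underline{f}_t^{cb}(r)$. Comparing with the kernel definition $\G_\pm[\id_F\tilde{\delta}(\cdot,y)]^{ab}(x)=\G_\pm^{ab}(x,y)$ then produces the stated formula, the prefactor $\pm\theta_\pm(x)$ being inherited from the displayed relation above.

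I expect the main obstacle to be purely notational: keeping track of the density weights and the pairing of the $(n-1)$-form $\d\tilde{r}$ with the $(n-1)$-multivector $\underline{f}_t^{cb}(r)$ when the $\Sigma$-integral is collapsed, and verifying that the result is independent of the choice of adapted coordinates $(t,r^j)$ and of $\Sigma$. One should also confirm the support: ${\G_\Sigma}[\varphi]$ with $\varphi$ concentrated at $r$ is supported in $J(r)$ by Cor.~\ref{cor:finspeed}, and $\{\pm t\ge 0\}\cap J(r)=J^\pm(r)$ up to the point $r$ itself because $\Sigma$ is $\Gamma^\oast$-acausal, so $\pm\theta_\pm(x){\G_\Sigma}^a_c(x,r)$ indeed has the retarded/advanced support required of $\G_\pm^{ab}(\cdot,y)$. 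As a consistency check one could further verify that this formula and that of the Converse Duhamel Lemma are mutually inverse. \qed
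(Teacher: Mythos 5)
Your proposal is correct and is essentially the paper's own argument read in the opposite order: the paper takes the candidate kernel $\pm\theta_\pm(x)\,{\G_\Sigma}^a_c(x,r)\,\d\tilde{r}\cdot\underline{f}_t^{cb}(r)$ and verifies via the identity~\eqref{eq:f-dtheta} that applying $f$ produces the point source $\delta_a^c\,\tilde\delta(x|0,r)$ with the right support, whereas you first record the relation $\G_\pm[\delta(t)\,\d t\wedge\bar{f}^t\cdot\varphi]=\pm\theta_\pm\,\G_\Sigma[\varphi]$ and then specialise $\varphi$ to the $\underline{f}_t$-preimage of the delta. The ingredients (Eq.~\eqref{eq:f-dtheta}, the inverse $\underline{f}_t=(\iota^*\bar{f}^t)^{-1}$, uniqueness of the retarded/advanced solution) are identical, so this counts as the same proof.
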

\begin{proof}
The result can be verified by direct calculation. First, note that when
$x=(0,s)$ the Cauchy Green function satisfies
${\G_{\Sigma}}^a_b(0,s|r)\,\d\tilde{r} = \delta^a_b \prod_j
\delta(s^j-r^j)\,\d\tilde{r}$. Let
\begin{equation}
	\phi^{ab}(x,r)
	= {\G_\Sigma}^a_c(x,r)\,\d\tilde{r}\cdot \underline{f}_t^{cb}(r).
\end{equation}
It is clear that $\phi$ is a solution of $f[\phi] = 0$ with initial data
\begin{equation}
	(\phi|_\Sigma)^{ab}(s,r)
	= \prod_j \delta(s^j-r^j)\,\d\tilde{r}
		\cdot \underline{f}_t^{ab}(r) .
\end{equation}
If $\phi_\pm^{ab}(x,r) = \pm \theta_\pm(x) \phi^{ab}(x,r)$,
then Eq.~\eqref{eq:f-dtheta} implies
\begin{align}
	f[\phi_\pm]_a^c(x,r)
	&= \delta(t)\,\d{t}\wedge \bar{f}^t_{ab}(x) \phi^{bc}(t,s|r) \\
	&= \delta(t)\,\d{t}\wedge \bar{f}^t_{ab}(x)
		\prod_j \delta(s^j-r^j)\,\d\tilde{r}\cdot \underline{f}_t^{bc}(r) \\
	&= \delta(t)\prod_j \delta(s^j-r^j)\,\d{t}\wedge\d\tilde{s}\,
		(\iota^*\bar{f}^t_{ab})(r)\cdot \underline{f}_t^{ac}(r) \\
	&= \delta_a^c \delta(x|0,r) .
\end{align}
Since $y=(0,r)$, we find the desired identity from $\G_\pm^{ab}(x,y) =
\G_\pm^{ab}(x|0,r) = \phi_\pm^{ab}(x,r)$. For other values of $t$ we can
find $\G_\pm^{ab}(x|t,r)$ in a similar way. \qed
\end{proof}

\subsubsection{Causal Green function (without constraints)}
Now that we are sure to have access to the retarded/advanced green
functions $\G_\pm$ for the linear, symmetric hyperbolic system $f[\phi]
= 0$, we can define the so-called \emph{causal Green function}
\begin{equation}\label{eq:caus-green}
	\G = \G_+ - \G_- .
\end{equation}
This new Green function helps to conveniently parametrize the space of
solutions $\S_{SC}(F) \cong \ker f \sso \Secs_{SC}(F)$ by featuring in
the following
\begin{proposition}\label{prp:exact}
The sequence
\begin{equation}\label{eq:hyp-seq}
\vcenter{\xymatrix{
	0
		\ar[r] &
	\Secs_0(F)
		\ar[r]^{f} &
	\Secs_0(\tilde{F}^*)
		\ar[r]^{\G} &
	\Secs_{SC}(F)
		\ar[r]^{f} &
	\Secs_{SC}(\tilde{F}^*)
		\ar[r] &
	0 ,
}}
\end{equation}
is exact (in the sense of linear algebra).
\end{proposition}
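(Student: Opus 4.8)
The plan is to verify exactness at each of the four nodes, reading the sequence from left to right, using nothing beyond the Cauchy-problem machinery of Sects.~\ref{sec:pde-theory}--\ref{sec:lin-inhom}: the uniqueness Theorem~\ref{thm:uniq} (applicable to any solution of $f[\psi]=0$ on a domain of dependence, since for a linear system $\Gamma^\oast$ lives over $M$ and is field independent, and $M$ is assumed globally hyperbolic with respect to it), the defining inhomogeneity and support properties of $\G_\pm$, and the Leibniz identity~\eqref{eq:f-dtheta} in the smooth form $f[\chi\phi]=(\bar f\cdot\d\chi)\phi+\chi f[\phi]$. On the geometric side I will use only standard facts about globally hyperbolic spacetimes: the existence of a smooth Cauchy time function $t\colon M\to\R$ foliating $M$ by Cauchy surfaces (available via the discussion preceding Prop.~\ref{prp:globhyp-lens}), and that the causal diamonds $J^+(K_1)\cap J^-(K_2)$ and the slices $J^\pm(K)\cap\{t=c\}$ are compact for compact $K,K_1,K_2$. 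One should first record the trivial well-definedness points: $f$ never enlarges supports, so it maps $\Secs_0(F)\to\Secs_0(\tilde F^*)$ and $\Secs_{SC}(F)\to\Secs_{SC}(\tilde F^*)$, while $\supp\G\tilde\alpha^*\sse J(\supp\tilde\alpha^*)$ shows $\G$ maps $\Secs_0(\tilde F^*)$ into $\Secs_{SC}(F)$.

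\emph{Left end and the node $\Secs_0(\tilde F^*)$.} For injectivity of $f$ on $\Secs_0(F)$: if $f[\phi]=0$ with $\supp\phi$ compact, choose a Cauchy surface $\Sigma=\{t=c\}$ with $c$ below $\min t(\supp\phi)$, so $\phi|_\Sigma=0$; since $D(\Sigma)=M$, Theorem~\ref{thm:uniq} forces $\phi=0$. For $\G\circ f=0$: given $\phi\in\Secs_0(F)$, the section $\phi-\G_+f[\phi]$ solves $f[\cdot]=0$ and has support in $J^+(\supp\phi)$ (because $\supp f[\phi]\sse\supp\phi$ and $\supp\G_+f[\phi]\sse J^+(\supp f[\phi])$), hence vanishes on a Cauchy surface far in the past and therefore everywhere; thus $\phi=\G_+f[\phi]$, and symmetrically $\phi=\G_-f[\phi]$, so $\G f[\phi]=0$. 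Conversely, if $\G\tilde\alpha^*=0$ then $\phi:=\G_+\tilde\alpha^*=\G_-\tilde\alpha^*$ has support in the compact set $J^+(\supp\tilde\alpha^*)\cap J^-(\supp\tilde\alpha^*)$, so $\phi\in\Secs_0(F)$, and $f[\phi]=\tilde\alpha^*$ exhibits $\tilde\alpha^*\in\im(f|_{\Secs_0})$.

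\emph{The node $\Secs_{SC}(F)$ and the right end.} That $\im\G\sse\ker(f|_{\Secs_{SC}})$ is immediate from $f[\G\tilde\alpha^*]=\tilde\alpha^*-\tilde\alpha^*=0$. For the reverse inclusion, take $\phi\in\Secs_{SC}(F)$ with $f[\phi]=0$ and $\supp\phi\sse J(K)$, $K$ compact. Fix a smooth Cauchy time function $t$ and a cutoff $\chi=\rho(t)$ with $\rho\equiv0$ on $\{t\le a\}$ and $\rho\equiv1$ on $\{t\ge b\}$. Then $\beta:=f[\chi\phi]=(\bar f\cdot\d\chi)\phi$ has $\supp\beta\sse\{a\le t\le b\}\cap J(K)$, which is compact, so $\beta\in\Secs_0(\tilde F^*)$; moreover $\chi\phi$ and $\G_+\beta$ both solve $f[\cdot]=\beta$ and both vanish on the Cauchy surface $\{t=a-1\}$, so uniqueness gives $\chi\phi=\G_+\beta$, and likewise $(1-\chi)\phi=-\G_-\beta$ (from $f[(1-\chi)\phi]=-\beta$ and vanishing on $\{t=b+1\}$), whence $\phi=\G_+\beta-\G_-\beta=\G\beta$. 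Finally, for surjectivity of $f$ on $\Secs_{SC}(\tilde F^*)$: given $\tilde\alpha^*$ with $\supp\tilde\alpha^*\sse J(K)$, split $\tilde\alpha^*=\chi\tilde\alpha^*+(1-\chi)\tilde\alpha^*$ with the same cutoff. The part $\chi\tilde\alpha^*$ has support in $\{t\ge a\}\cap J(K)\sse J^+(K)\cup L_2\sse J^+(L_+)$ for suitable compact $L_2,L_+$ (the ``future'' piece sits in $J^+(K)$, the ``past'' piece is caught between two Cauchy surfaces and is compact), i.e.\ it has retarded support; symmetrically $(1-\chi)\tilde\alpha^*$ has advanced support. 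Applying the extensions of $\G_+$, $\G_-$ to such sources (see below) and setting $\phi:=\G_+(\chi\tilde\alpha^*)+\G_-((1-\chi)\tilde\alpha^*)$ gives $f[\phi]=\tilde\alpha^*$, while $\supp\phi\sse J^+(L_+)\cup J^-(L_-)\sse J(L_+\cup L_-)$ shows $\phi\in\Secs_{SC}(F)$.

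\emph{Main obstacle.} The one step that uses more than the material literally stated is the last: it needs the retarded and advanced Green operators, defined in Sect.~\ref{sec:lin-inhom} only on compactly supported sources, to extend to sources of retarded (resp.\ advanced) support, preserving both $f[\G_\pm\beta]=\beta$ and $\supp\G_\pm\beta\sse J^\pm(\supp\beta)$. This is standard in the linear hyperbolic PDE literature already cited: write the source as a locally finite sum of compactly supported pieces via a partition of unity, apply $\G_\pm$ term by term, and use the support property together with compactness of causal diamonds to check that the resulting sum is locally finite, smooth, and solves the equation with the stated support. I would isolate this as a short lemma preceding the proposition (or simply cite it). Everything else---the support bookkeeping and the repeated appeals to Theorem~\ref{thm:uniq}---is routine once the smooth Cauchy time function is in hand.
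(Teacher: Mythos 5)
Your proof is correct and is essentially the argument the paper itself relies on: the paper gives no proof of its own but cites \cite[Thm.3.4.7]{bgp}, \cite[Lem.3.2.1]{wald-qft} and \cite[Ch.3,Cor.5]{bf-lcqft}, and your writeup reproduces exactly that standard chain of reasoning (uniqueness plus support bookkeeping for the first three nodes, and the retarded/advanced splitting of a spacelike-compact source for the final surjection). Your cutoff construction $\beta = f[\chi\phi]$ is moreover literally the splitting map $f_\chi$ that the paper introduces right afterwards in Lem.~\ref{lem:exsplit}, and the extension of $\G_\pm$ to retarded/advanced sources that you flag is indeed the only nonroutine ingredient and is handled precisely as you describe in the cited reference for the last surjection.
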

The proof given in~\cite[Thm.3.4.7]{bgp} and~\cite[Lem.3.2.1]{wald-qft}
(which excludes the final surjection) for wave equations directly
carries through to the symmetric hyperbolic case. The final surjection
is covered by the proof in~\cite[Ch.3,Cor.5]{bf-lcqft}.

We can interpret the above proposition in the following way. Since
$\S_{SC}(F) \cong \im \G$, we can express any solution to the
homogeneous problem as $\phi = \G[\tilde{\alpha}^*]$, where $\alpha\in
\Secs_0(\tilde{F}^*)$ is some smooth dual density of compact support.
Also, since $\Secs_{SC}(F) \cong \im f$, for any dual density
$\tilde{\alpha}^*$ with spatially compact support, there exists a
solution $\phi$ with spatially compact support of the inhomogeneous
problem $f[\phi] = \tilde{\alpha}^*$.

\begin{definition}\label{def:adapt-pu}
Consider one Cauchy surface $\Sigma\sso M$ and two more Cauchy surfaces
$\Sigma^\pm\sso M$ to the past and future of $\Sigma$, where
$\Sigma^\pm\sso I^\pm(\Sigma)$, and let $S^\pm = I^\pm(\Sigma^\mp)$. Let
$\{\chi_+,\chi_-\}$ be a partition of unity adapted to the open cover
$\{S^+,S^-\}$ of $M$, that is, $\chi_+ + \chi_- = 1$ and $\supp \chi_\pm
\sso S^\pm$. We call $\{\chi_+, \chi_-\}$ a \emph{partition of unity
adapted to the Cauchy surface $\Sigma$}.
\end{definition}

\begin{lemma}\label{lem:exsplit}
The exact sequence of Prop.~\ref{prp:exact} splits at
\begin{equation}
	\Secs_0(\tilde{F}^*) \cong \Secs_0(F) \oplus \S_{SC}(F)
	\quad\text{and}\quad
	\Secs_{SC}(F) \cong \S_{SC}(F) \oplus \Secs_{SC}(\tilde{F}^*) .
\end{equation}
Given a partition of unity $\{\chi_+,\chi_-\}$ adapted to a Cauchy
surface $\Sigma$, there exist (noncanonical) splitting maps
\begin{align}
	f_\chi \colon & \im\G \to\Secs_0(\tilde{F}^*) , &
		f_\chi[\phi] &= \pm f^\pm_\chi[\phi] = \pm f[\chi_\pm\phi] , \\
	\G_\chi \colon & \Secs_{SC}(\tilde{F}^*)\to \Secs_{SC}(F) , &
		\G_\chi[\tilde{\alpha}^*]
		&= \G_+[\chi_+\tilde{\alpha}^*]
			+ \G_-[\chi_-\tilde{\alpha}^*] .
\end{align}
\end{lemma}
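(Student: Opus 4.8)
The plan is to split the four-term sequence of Prop.~\ref{prp:exact} into the two short exact sequences $0\to\Secs_0(F)\xrightarrow{f}\Secs_0(\tilde F^*)\xrightarrow{\G}\S_{SC}(F)\to0$ and $0\to\S_{SC}(F)\hookrightarrow\Secs_{SC}(F)\xrightarrow{f}\Secs_{SC}(\tilde F^*)\to0$, using that exactness identifies $\im\G$ with $\ker(f\colon\Secs_{SC}(F)\to\Secs_{SC}(\tilde F^*))=\S_{SC}(F)$. It then suffices to check that $f_\chi$ is a right inverse of the surjection $\G\colon\Secs_0(\tilde F^*)\twoheadrightarrow\S_{SC}(F)$ and that $\G_\chi$ is a right inverse of the surjection $f\colon\Secs_{SC}(F)\twoheadrightarrow\Secs_{SC}(\tilde F^*)$; the stated direct-sum decompositions are then formal, with $\ker\G=\im(f|_{\Secs_0(F)})\cong\Secs_0(F)$ by injectivity of $f$ on compactly supported sections. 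Throughout I keep the notation of Def.~\ref{def:adapt-pu}, so $\supp\chi_+\sse J^+(\Sigma^-)$ with $\chi_+\equiv1$ on $J^+(\Sigma^+)$, dually for $\chi_-$, and $M$ is globally hyperbolic for $\Gamma^\oast$.

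For $\G_\chi$, I first check it is well defined. Given $\tilde\alpha^*\in\Secs_{SC}(\tilde F^*)$ with support in $I(K)$, the section $\chi_+\tilde\alpha^*$ is no longer compactly supported, but $\supp(\chi_+\tilde\alpha^*)\cap J^-(x)\sse J^+(\Sigma^-)\cap J^-(x)$ is compact for every $x$ --- the standard global-hyperbolicity fact that the causal past of a point meets the causal future of a Cauchy surface in a compact set --- so $\chi_+\tilde\alpha^*$ has past-compact support and $\G_+$, extended to past-compact sources in the usual way, is defined on it; the same fact shows $\supp(\chi_+\tilde\alpha^*)\sse J^+(K')$ for a compact $K'$, hence $\G_+[\chi_+\tilde\alpha^*]\in\Secs_{SC}(F)$, and symmetrically for $\G_-[\chi_-\tilde\alpha^*]$. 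The splitting identity $f\circ\G_\chi=\id$ is then immediate from $f\G_\pm=\id$ and $\chi_++\chi_-=1$: $f[\G_\chi[\tilde\alpha^*]]=\chi_+\tilde\alpha^*+\chi_-\tilde\alpha^*=\tilde\alpha^*$.

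For $f_\chi$, well-definedness has two pieces. Linearity of $f$ and $f[\phi]=0$ for $\phi\in\S_{SC}(F)$ give $f[\chi_+\phi]=-f[\chi_-\phi]$, so $f_\chi[\phi]$ is unambiguous, and its support lies in $\supp(\chi_+\phi)\cap\supp(\chi_-\phi)\sse J^+(\Sigma^-)\cap J^-(\Sigma^+)\cap I(K)$, which is compact because the slab between $\Sigma^-$ and $\Sigma^+$ meets each of $I^+(K)$ and $I^-(K)$ in a compact set (again the Cauchy-surface-meets-cone-of-a-compact fact); thus $f_\chi[\phi]\in\Secs_0(\tilde F^*)$. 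To prove $\G\circ f_\chi=\id$ on $\S_{SC}(F)$, set $\beta=f_\chi[\phi]=f[\chi_+\phi]$ and consider $u=\G_+[\beta]-\chi_+\phi$ and $w=\G_-[\beta]+\chi_-\phi$. Using $f\G_\pm=\id$ and $f[\chi_-\phi]=-\beta$, both $u$ and $w$ solve $f[\,\cdot\,]=0$; both are spacelike compact; and since $\supp\G_+[\beta]\sse J^+(\supp\beta)\sse J^+(\Sigma^-)$ while $\chi_+$ vanishes on $I^-(\Sigma^-)$, $u$ vanishes on $I^-(\Sigma^-)$, and dually $w$ vanishes on $I^+(\Sigma^+)$. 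Picking a Cauchy surface of $M$ inside $I^-(\Sigma^-)$ (resp.\ $I^+(\Sigma^+)$), available from the smooth splitting $M\cong\R\times S$, the uniqueness Thm.~\ref{thm:uniq} forces $u\equiv0$ and $w\equiv0$, whence $\G[\beta]=\G_+[\beta]-\G_-[\beta]=\chi_+\phi+\chi_-\phi=\phi$.

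The main obstacle is the support bookkeeping: every well-definedness claim rests on repeated use of the global-hyperbolicity lemma that the causal future (past) of a Cauchy surface meets the causal past (future) of a compact set in a compact set, and on the --- standard, but needing to be stated --- extension of $\G_\pm$ to sources with past- (future-) compact support, which is precisely what gives the defining formula for $\G_\chi$ meaning. Once these are in hand the rest is formal. It is also worth recording explicitly that the resulting splittings are noncanonical, depending on the choice of $\Sigma$ and of the adapted partition of unity, two choices differing by an endomorphism factoring through $\Secs_0(F)$, respectively $\Secs_{SC}(\tilde F^*)$.
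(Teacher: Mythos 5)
Your proof is correct and follows essentially the same route as the paper's: verify $f\circ\G_\chi=\id$ from $\chi_++\chi_-=1$, and $\G\circ f_\chi=\id$ from uniqueness of the retarded/advanced solutions (the paper invokes this uniqueness directly where you spell it out via $u$ and $w$, and it controls the support of $f_\chi[\phi]$ via the identity $f[\chi_\pm\phi]=(\bar f\cdot\d\chi_\pm)\phi$ where you intersect supports — both work). Your explicit remarks on extending $\G_\pm$ to past-/future-compact sources and on the compactness bookkeeping only make precise details the paper leaves implicit.
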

\begin{proof}
We demonstrate that the maps defined in Def.~\ref{def:adapt-pu} are
precisely the splitting maps needed for this Lemma. Note that the
splitting maps are not canonical, as they depend on the choice of a
Cauchy surface and a partition of unity adapted to it.

When $\phi\in\S_{SC}(F) \cong \im\G$, the identity~\ref{eq:f-dtheta}
shows that
\begin{equation}
	f^\pm_\chi[\phi] = f[\chi_\pm\phi] = (\bar{f}\cdot\d\chi_\pm) \phi
\end{equation}
does in fact have compact support, as $\supp\phi$ is spacelike compact
while $\supp\d\chi_\pm \linebreak \sso S^+\cap S^-$ is timelike compact.
Also, since $\d(\chi_+ + \chi_-) = 0$, we have $f^+_\chi[\phi] +
f^-_\chi[\phi] = 0$, which means that the map $f_\chi = \pm f^\pm_\chi$
is well defined. On the other hand, we have $\G_\pm[f[\chi_\pm\phi]] =
\chi_\pm\phi$ from the uniqueness of solutions to the inhomogeneous
problem with retarded/advanced support. The definition of the causal
Green function then immediately implies that $\G\circ f_\chi = \pm\id$
on $\S_{SC}(F)$. Also, a direct calculation shows that $f\circ \G_\chi =
\id$ on $\Secs_{SC}(\tilde{F}^*)$:
\begin{equation}
	f\circ \G_\chi [\tilde{\alpha}^*]
	= \chi_+\tilde{\alpha}^* + \chi_-\tilde{\alpha}^*
	= \tilde{\alpha}^* .
\end{equation}
This concludes the proof. \qed
\end{proof}

We conclude this section by noting a simple but important fact.
\begin{lemma}[Covariance]\label{lem:green-covar}
Consider two manifolds $M$ and $M'$ with globally hyperbolic, linear,
symmetric hyperbolic PDE systems on them, with respective equation forms
$(f,\tilde{F}^*)$ and $(f',\tilde{F}^{\prime*})$. Suppose that the open
embedding $\chi\colon M\to M'$ is such that $(f,\tilde{F}^*) = (\chi^*
f', \chi^*\tilde{F}^{\prime*})$ and, moreover, the induced morphism of
chronal cone bundles is chronally compatible. Then the Causal Green
function of the two systems agree: $\G = (\chi\times \chi)^* \G'$.
\end{lemma}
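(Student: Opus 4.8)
The plan is to prove the stronger statement that the retarded and advanced fundamental solutions are themselves transported correctly, $\G_+=(\chi\times\chi)^*\G'_+$ and $\G_-=(\chi\times\chi)^*\G'_-$, and then to subtract. The hypothesis of chronal compatibility enters only through a single support estimate; everything else is naturality of the local differential operator $f$ together with the uniqueness built into the definitions of $\G_\pm$ and $\G'_\pm$.

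Fix $\tilde{\alpha}^*\in\Secs_0(\tilde{F}^*)$ and write $K=\supp\tilde{\alpha}^*$, a compact subset of $M$. Since $\chi$ is an open embedding, extension by zero makes $\tilde{\alpha}^*$ into a smooth section $\tilde{\alpha}^{\prime*}$ of $\tilde{F}^{\prime*}$ over $M'$ with $\supp\tilde{\alpha}^{\prime*}=\chi(K)$; in bitensor notation $\tilde{\alpha}^{\prime*}=(\chi\times\chi)_*\tilde{\alpha}^*$. Put $\phi_+=\G_+[\tilde{\alpha}^*]$, a section over $M$, and $\psi_+=\chi^*\bigl(\G'_+[\tilde{\alpha}^{\prime*}]\bigr)$, the pullback to $M$ of the restriction to $\chi(M)$ of the corresponding solution on $M'$. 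Because $(f,\tilde{F}^*)=(\chi^*f',\chi^*\tilde{F}^{\prime*})$ and $f'$ is a differential operator, hence local, one has $f[\psi_+]=\chi^*\bigl(f'[\G'_+[\tilde{\alpha}^{\prime*}]]\bigr)=\chi^*\tilde{\alpha}^{\prime*}=\tilde{\alpha}^*=f[\phi_+]$.

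The key step I would carry out next is the support estimate $\supp\psi_+\sso I^+_M(K)$. The retarded support property defining $\G'_+$ gives $\supp\bigl(\G'_+[\tilde{\alpha}^{\prime*}]\bigr)\sso I^+_{M'}(\chi(K))$, hence $\supp\psi_+\sso\chi^{-1}\bigl(\chi(M)\cap I^+_{M'}(\chi(K))\bigr)$. If $w\in M$ satisfies $\chi(w)\in I^+_{M'}(\chi(K))$, then some $\chi(k)$ with $k\in K$ is joined to $\chi(w)$ by a future directed timelike curve in $M'$; both endpoints lie in $\chi(M)$, so by chronal compatibility of $\chi$ (Def.~\ref{def:chr-compat}) they are already joined by a timelike curve inside $\chi(M)$, whence $w\in I^+_M(K)$. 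Thus $\psi_+$ has retarded support and solves $f[\,\cdot\,]=\tilde{\alpha}^*$, so by the uniqueness of the retarded fundamental solution (the defining property of $\G_+$, itself a consequence of Thm.~\ref{thm:uniq} on the globally hyperbolic $M$) we get $\psi_+=\phi_+$. Running the same argument with $I^-$ in place of $I^+$ gives $\chi^*\bigl(\G'_-[\tilde{\alpha}^{\prime*}]\bigr)=\G_-[\tilde{\alpha}^*]$.

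Subtracting the two identities, $\chi^*\bigl(\G'[\tilde{\alpha}^{\prime*}]\bigr)=\G[\tilde{\alpha}^*]$ for every $\tilde{\alpha}^*\in\Secs_0(\tilde{F}^*)$, and unwinding the bitensor kernels of the two causal Green functions this says exactly $\G=(\chi\times\chi)^*\G'$. I expect the support estimate $\supp\psi_+\sso I^+_M(K)$ to be the only real obstacle, and it is precisely there that chronal compatibility of the embedding is indispensable: without it the retarded Green functions --- though not their difference, the causal Green function --- genuinely fail to be covariant, since the $M'$-retarded solution can leak into parts of $\chi(M)$ that lie outside the $M$-future of the source.
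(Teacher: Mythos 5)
Your proposal is correct and follows essentially the same route as the paper: the paper likewise reduces to showing $\G_\pm = (\chi\times\chi)^*\G'_\pm$ and invokes uniqueness of the retarded/advanced Green functions together with the support condition guaranteed by chronal compatibility. You have simply spelled out the support estimate that the paper leaves implicit, which is where the chronal compatibility hypothesis genuinely enters.
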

\begin{proof}
Since $\G = \G_+ - \G_-$, the result follows trivially if we know that
$\G_\pm = (\chi\times \chi)^*\G'_\pm$. The latter identity follows from
the uniqueness of the retarded/advanced Green functions on $M$ and the
fact that the pullbacks $(\chi\times\chi)^* \G'_\pm$ do in fact satisfy
the desired Green function identity with appropriate boundary
conditions, which is implied by chronal compatibility. \qed
\end{proof}

\subsubsection{Causal Green function (with constraints)}\label{sec:caus-green}
The presence of constraints complicates the parametrization of solution
spaces to the homogeneous and inhomogeneous problems. Recall that a
symmetric hyperbolic system with hyperbolically integrable constraints
consists of the hyperbolic subsystem $(f,\tilde{F}^*)$, a constraints
subsystem $(c,E)$, and a consistency subsystem $(h,\tilde{E}^*)$
satisfying the identity $h\circ c = q\circ h$ for some differential
operator $q$. Since in this section we are concerned with linear
systems, we take all differential operators to be linear. Moreover, the
causal structure is presumed to be determined by the symbol of the
linear symmetric hyperbolic compound system $(f\oplus h, \tilde{F}^*
\oplus \tilde{E}^*)$.

We will not discuss the most general kind of constraints and restrict
our attention only to \emph{parametrizable} ones. By the term
parametrizable, we mean that there exist an additional vector bundle
$E'\to M$ and additional differential operators $h'$, $c'$ and $q'$,
which fit into the following commutative diagram
\begin{equation}
\vcenter{\xymatrix{
	\Secs(E') \ar[d]^{h'} \ar[r]^{c'} &
		\Secs(F) \ar[d]^{f} \ar[r]^{c} &
		\Secs(E) \ar[d]^{h} \\
	\Secs(\tilde{E}^{\prime*}) \ar[r]^{q'} &
		\Secs(\tilde{F}^*) \ar[r]^{q} &
		\Secs(\tilde{E}^*)
}}
\end{equation}
such that $(h',\tilde{E}^{\prime*})$ is symmetric hyperbolic, and that
the horizontal complexes of differential operators are formally exact
(meaning that $c\circ c' = 0$, $q\circ q' = 0$ and that the
corresponding principal symbols form an exact sequence of vector bundle
maps), that is, they form a \emph{elliptic complex}~\cite[\textsection
XIX.4]{hoermander-III}. Since both $(h,\tilde{E}^*)$ and
$(h',\tilde{E}^{\prime*})$ are symmetric hyperbolic, we can define their
retarded/advanced Green functions, $\H_\pm$ and $\H'_\pm$, as well as
their causal Green functions, $\H = \H_+ - \H_-$ and $\H' = \H'_+ -
\H'_-$. All these operators then fit into the following commutative
diagram:
\begin{equation}\label{eq:chyp-seq}
\vcenter{\xymatrix{
	0
		\ar[r] &
	\Secs_0(E')
		\ar[d]^{c'} \ar[r]^{h'} &
	\Secs_0(\tilde{E}^{\prime*})
		\ar[d]^{q'} \ar[r]^{\H'} &
	\Secs_{SC}(E')
		\ar[d]^{c'} \ar[r]^{h'} &
	\Secs_{SC}(\tilde{E}^{\prime*})
		\ar[d]^{q'} \ar[r] &
	0  \\
	0
		\ar[r] &
	\Secs_0(F)
		\ar[d]^{c} \ar[r]^{f} &
	\Secs_0(\tilde{F}^*)
		\ar[d]^{q} \ar[r]^{\G} &
	\Secs_{SC}(F)
		\ar[d]^{c} \ar[r]^{f} &
	\Secs_{SC}(\tilde{F}^*)
		\ar[d]^{q} \ar[r] &
	0  \\
	0
		\ar[r] &
	\Secs_0(E)
		\ar[r]^{h} &
	\Secs_0(\tilde{E}^*)
		\ar[r]^{\H} &
	\Secs_{SC}(E)
		\ar[r]^{h} &
	\Secs_{SC}(\tilde{E}^*)
		\ar[r] &
	0
}}
\end{equation}
We call the constraints subsystem \emph{globally parametrizable} if the
elliptic complexes constituting the columns of the above diagram
are all exact (their cohomologies are trivial).

\begin{lemma}\label{lem:inhom-constr}
The retarded/advanced inhomogeneous problem
\begin{align}
	f[\phi] &= \tilde{\beta}^* , \\
	c[\phi] &= \gamma ,
\end{align}
with $\tilde{\beta}^*\in \Secs_0(\tilde{F}^*)$ and $\gamma \in
\Secs_\pm(E)$, is solvable for $\phi\in \Secs_\pm(F)$ iff $h[\gamma] =
q[\tilde{\beta}^*]$.
\end{lemma}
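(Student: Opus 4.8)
The plan is to split into the two implications. The ``only if'' direction follows in one line from the consistency identity $h\circ c=q\circ f$, while for ``if'' I would solve the hyperbolic part of the problem with retarded (resp.\ advanced) support and then check that the constraint is automatically satisfied, the latter step resting on uniqueness of retarded/advanced solutions for the symmetric hyperbolic compound system.

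For necessity, suppose $\phi\in\Secs_\pm(F)$ satisfies $f[\phi]=\tilde{\beta}^*$ and $c[\phi]=\gamma$. Then $h[\gamma]=h[c[\phi]]=q[f[\phi]]=q[\tilde{\beta}^*]$; note that $q[\tilde{\beta}^*]$ has compact support since $\tilde{\beta}^*$ does, so this is a genuine restriction on $\gamma$.

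For sufficiency, I would work throughout with the linear symmetric hyperbolic compound system $(f\oplus h,\tilde{F}^*\oplus\tilde{E}^*)$, for which $M$ is globally hyperbolic by the standing assumption of this subsection and which therefore admits retarded/advanced Green functions. Applying the retarded (resp.\ advanced) Green function of the compound system to the compactly supported source $\tilde{\beta}^*\oplus q[\tilde{\beta}^*]\in\Secs_0(\tilde{F}^*\oplus\tilde{E}^*)$ produces a pair $(\phi,\eta)\in\Secs_\pm(F\oplus E)$ with $f[\phi]=\tilde{\beta}^*$ and $h[\eta]=q[\tilde{\beta}^*]$. Then both $(0,\eta-\gamma)$ and $(0,\eta-c[\phi])$ are sections of $F\oplus E$ with retarded (resp.\ advanced) support that solve the homogeneous compound system: indeed $f[0]=0$, and $h[\eta-\gamma]=q[\tilde{\beta}^*]-h[\gamma]=0$ by the hypothesis $h[\gamma]=q[\tilde{\beta}^*]$, while $h[\eta-c[\phi]]=q[\tilde{\beta}^*]-q[f[\phi]]=0$ by the consistency identity (here I use that $c$, being a differential operator, does not enlarge supports, so $c[\phi]$ has retarded/advanced support just as $\phi$ does). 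Finally, a homogeneous solution of a symmetric hyperbolic system with retarded (resp.\ advanced) support must vanish identically: a time function is strictly increasing along future timelike curves, so a Cauchy surface chosen strictly to the past (resp.\ future) of the compact set bounding the support carries vanishing Cauchy data, whence Uniqueness (Thm~\ref{thm:uniq}) forces the solution to be zero. Applying this to the two pairs gives $\eta=\gamma$ and $\eta=c[\phi]$, so $c[\phi]=\gamma$ and $\phi\in\Secs_\pm(F)$ is the required solution.

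The step I expect to require the most care is the decoupling just used. The problem as posed couples the hyperbolic equation for $\phi$ to the constraint $c[\phi]=\gamma$, so one cannot directly invoke a uniqueness theorem for the constraint subsystem in isolation, and in particular one should not assume that $M$ is globally hyperbolic with respect to the consistency subsystem $(h,\tilde{E}^*)$ by itself. Passing to the compound system, whose homogeneous solutions include every pair $(0,\zeta)$ with $\zeta$ a homogeneous solution of $h$, is exactly what lets every uniqueness assertion (and every causal past/future $I^\pm(\cdot)$ and every Cauchy surface) refer to the one causal structure for which global hyperbolicity is assumed in this subsection, namely that of the compound system, whose timelike cone bundle is the convex union $\Gamma_f+\Gamma_h$ and whose spacelike cone bundle is the intersection $\Gamma_f^\oast\cap\Gamma_h^\oast$. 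Global parametrizability of the diagram~\eqref{eq:chyp-seq} plays no role in this lemma.
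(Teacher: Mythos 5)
Your proof is correct and follows essentially the same route as the paper's: take $\phi=\G_\pm[\tilde{\beta}^*]$ and verify the constraint via $c[\phi]=\H_\pm[q[\tilde{\beta}^*]]=\H_\pm[h[\gamma]]=\gamma$. You additionally spell out the necessity direction (which the paper leaves implicit) and justify the intertwining identity $c\circ\G_\pm=\H_\pm\circ q$ and the identity $\H_\pm\circ h=\id$ on $\Secs_\pm(E)$ by uniqueness of retarded/advanced solutions of the compound system, where the paper simply reads these off the commutative diagram~\eqref{eq:chyp-seq}.
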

\begin{proof}
Let $\phi = \G_\pm[\tilde{\beta^*}]$. We obviously have $f[\phi] =
\tilde{\beta}^*$. It remains to check
\begin{equation}
	c[\phi]
	= c[\G_+[\tilde{\beta}]]
	= \H_+[q[\tilde{\beta}^*]]
	= \H_+[h[\gamma]]
	= \gamma .
\end{equation}
This concludes the proof. \qed
\end{proof}

\subsubsection{Green functions and adjoints}\label{sec:green-adj}
We conclude this section by remarking the identities
\begin{equation}\label{eq:adj-id}
	(\G_\pm)^* = \G^*_\mp ,
\end{equation}
where on the left hand side $(G_\pm)^*$ denotes the adjoint of the
retarded/advanced Green function $\G_\pm$ of the equation $f[\phi]=0$,
and on the right hand side $\G^*_\mp$ denotes the advanced/retarded
Green function of the adjoint equation $f^*[\phi] = 0$. Note that taking
the adjoint flips the support between retarded and advanced.
\begin{definition}\label{def:green-form}
Given two differential operators $f,f^*\colon \Secs(F) \to
\Secs(\tilde{F}^*)$ are said to be mutually \emph{adjoint} if there
exists a bilinear differential operator $G\colon \Secs(F)\times \Secs(F)
\to \Forms^{n-1}(M)$ such that
\begin{equation}\label{eq:green-form}
	f[\phi]\cdot \psi - \phi\cdot f^*[\psi]
	= \d G(\phi,\psi).
\end{equation}
for any sections $\phi,\psi\colon M\to F$. The $(n-1)$-form valued
bilinear differential operator $\G(\phi,\psi)$ is called a \emph{Green
form} associated to $f$ and $f^*$~\cite[\textsection IV.5]{palais},
\cite[\textsection V.1.3]{ch1}. Note that Eq.~\eqref{eq:green-form}
defines $G$ up to the addition of an exact
form~\cite{anderson-big,gms}, $G\sim G+\d{H}$.
Denote by $[G]$ the uniquely defined equivalence class modulo exact
local bilinear forms $\d{H}(-,-)$.
\end{definition}
Notice that the principal symbols of
mutually adjoint, first order, linear, symmetric hyperbolic%
	\footnote{It is at this point that it be comes convenient to have
	chosen the adapted equation form of symmetric hyperbolic systems to be
	densitized. With the standard, non-densitized definition, the adjoint
	operator $f^*$ is in any case densitized. So it is only in symmetric
	hyperbolic form after contraction with a nowhere vanishing degree $n$
	multivector field.} %
differential operators $f$ and $f^*$ are negatives of each other
\begin{equation}
	\bar{f}^* = - \bar{f}.
\end{equation}
Also, the Green form of $f$ and $f^*$, given local coordinates
$(x^i,u^a)$ on $F$, has the following representative
\begin{equation}\label{eq:gf-rep}
	G(\phi,\psi) = \phi\cdot (\tr \bar{f})\cdot \psi
		= (\tr \bar{f})_{ab} \phi^a \psi^b ,
\end{equation}
where $(\tr\bar{f})_{ab}$ are $(n-1)$-forms obtained by contracting the
single contravariant index of the symbol $\bar{f}$ with one of its
covariant $n$-form indices. When pulled back to a codim-$1$ surface
$\iota\colon\Sigma \sso M$, the Green form forms a density that can be
integrated over $\Sigma$. Let $\Sigma$ be defined as the zero set $t=0$
of a smooth function $t$. Then a straight forward coordinate calculation
shows that
\begin{equation}\label{eq:pb-gf-rep}
	\iota^* G(\phi,\psi)
	= \iota^* (\tr \bar{f})_{ab} \phi^a \psi^b
	= \iota^* \bar{f}^t_{ab} \phi^a \psi^b,
\end{equation}
where $\bar{f}^t_{ab}$ was defined in Eq.~\eqref{eq:ft-def}. In
particular, when $\Sigma$ is spacelike and future oriented, this shows
that for a symmetric hyperbolic differential operator $f$ the pulled
back Green form $\iota^* G(-,-)$ defines an orientation positive
definite, symmetric bilinear form on the fibers of the restricted field
bundle $F|_\Sigma\to \Sigma$. This fact will be used in
Sect.~\ref{sec:tt*-sols}.

If we keep the orientation on $M$ fixed, from the relation between their
principal symbols, the symmetric hyperbolic equation forms
$(f,\tilde{F}^*)$ and $(f^*,\tilde{F}^*)$ define the same spacelike
covectors, except for the future/past orientation, which gets flipped.
This means that the causal relations defined by $f^*$ are simply the
reverse of those defined by $f$. In this section, when using $\pm$
indices to denote retarded or advanced support, we always refer to the
notions of past and future defined by $f$.

Recall that we may introduce a natural pairing between elements $\phi
\in \Secs(F)$ and $\tilde{\alpha} \in \Secs(\tilde{F}^*)$ given by
\begin{equation}
	\langle \phi, \tilde{\alpha} \rangle =
	\langle \tilde{\alpha}, \phi \rangle =
	\int \phi\cdot \tilde{\alpha}^* .
\end{equation}
The pairing is only partially defined, that is, for simplicity, only for
those pairs of sections for which the integrand $\phi\cdot
\tilde{\alpha}^*$ has compact support. Its properties, including
non-degeneracy, will discussed in more detail in
Sects.~\ref{sec:tt*-conf}--\ref{sec:tt*-sols-constr}. It is easy to show
that the adjoint $f^*$ coincides with the adjoint of $f$ with respect to
this natural pairing: $\langle f[\phi], \psi \rangle = \langle \phi,
f^*[\psi] \rangle$. This natural pairing allows us to define adjoints
for integral operators like Green functions, namely $\langle
\G_\pm[\tilde{\alpha}^*], \tilde{\beta}^* \rangle = \langle
\tilde{\alpha}^*, (G_\pm)^*[\tilde{\beta}^*] \rangle$.

It is straight forward to check that the identities $f\circ \G_\pm =
\G_\pm \circ f = \id$ hold on $\Secs_{\pm}(F)$ as well as characterize
the retarded/advanced Green functions, and also that the natural pairing
$\langle -, -\rangle$ is non-degenerate on the spaces
$\Secs_\pm(F)\times \Secs_\mp(\tilde{F}^*)$. It is now easy to verify
the adjoint identities~\eqref{eq:adj-id} since
\begin{align}
	\int_M \phi_\mp \cdot f\circ \G_\pm[\tilde{\alpha}^*_\pm]
	&= \langle \phi_\mp, \tilde{\alpha}^*_\pm \rangle
		= \int_M (\G_\pm)^*\circ f^*[\phi_\mp] \cdot \tilde{\alpha}^*_\pm \\
	\int_M \G_\pm\circ f[\phi_\pm] \cdot \tilde{\alpha}^*_\mp ,
	&= \langle \phi_\pm, \tilde{\alpha}^*_\mp \rangle
		= \int_M \phi_\mp \cdot f^*\circ (\G_\pm)^*[\tilde{\alpha}^*_\mp] ,
\end{align}
for any $\phi_\pm \in \Secs_\pm(F)$ and $\tilde{\alpha}^*_\pm \in
\Secs_\pm(F)$. The causal Green functions then satisfy $(\G)^* = -\G^*$,
where $\G^*$ is the causal Green function for $f^*$.

\section{Construction of the Classical Field Theory}
\label{sec:classquant}
Classical mechanical systems, and field theories in particular, have
three standard levels of description: \emph{spacetime}, \emph{phase
space}, and \emph{observables}. At the spacetime level, the mechanical
system is specified by the underlying \emph{spacetime} as a manifold $M$, by
the dynamical degrees of freedom as a field bundle $F\to M$, and by the
dynamics as a PDE system $\E\sso J^k(F,M)$. At the phase space level,
the set of all possible solutions forms a (possibly infinite
dimensional) symplectic manifold, referred to as the \emph{phase space}. At the
level of observables, the mechanical system is associated with the
Poisson algebra of smooth functions on the phase space, known as the
\emph{algebra of observables}. For locally covariant field theories, all of
these descriptions should be associated functorially to the given
spacetime manifold $M$. A discussion of these functorial aspects is
delayed until Sect.~\ref{sec:natural}. For now, we discuss individual
PDE systems.

So far, we have only discussed field theories at the spacetime level, as
PDE systems. To move on to the phase space level, we must consider the
space of solutions of the PDE system. To become a viable phase space, it
must be equipped with symplectic structure, preferably in a
spacetime-local way (which is precised later on). It is well known that
this is possible when the PDE system has an equation form of
Euler-Lagrange equations of a local variational principle (or
\emph{local action principle}). The corresponding spacetime-local
symplectic structure is given by a horizontally conserved symplectic
current density on the jet space $J^\oo F$. If this symplectic current
density is provided along with the PDE system, its space of solutions
can be directly treated as the phase space of a classical field theory,
without the need to introduce a local action. What is less well
known~\cite{kh-inv}, and perhaps a little surprising, is that a
local action can be recovered from a symplectic current density. In
other words, the two ways of specifying a classical field theory are
essentially equivalent.

The construction of the phase space of the classical field theory is
broken up below into three sections. Sect.~\ref{sec:var-sys} starts with
a local Lagrangian and extracts from it a local presymplectic form,
which is used the construct the symplectic and Poisson tensors on the
space of solutions in Sect.~\ref{sec:symp-pois}. Because the space of
a PDE system is in general infinite dimensional, we first establish some
formal properties of its tangent and cotangent spaces necessary for the
discussion of these phase space structures in Sect.~\ref{sec:formal-dg},
which constitutes the bulk of this Section. Making the discussion the
relevant infinite dimensional geometry non-formal would require
substantially more functional analytical detail, which would detract
from the geometric focus of this paper. More details in this direction
can be found in~\cite{bsf,fr-bv,rejzner-thesis,bfr}.

Before proceeding, it is worth remarking that the construction of the
symplectic and Poisson structures on the classical phase space more
commonly carried out in a Hamiltonian framework. However, such an
approach usually requires a non-canonical 3+1 decomposition of an
underlying 4 dimensional spacetime (and similarly in higher dimensions),
which destroys manifest 4-dimensional spacetime covariance. On the other
hand, we hold spacetime covariance as an important guiding principle
underlying the construction of locally covariant field theories,
cf.~Sects.~\ref{sec:freelcft} and~\ref{sec:natural}. Fortunately, it has
been known for a long time, that the symplectic and Poisson structures
on the phase space can be built directly from the Lagrangian without
giving up spacetime covariance~\cite{lw,dewitt-qft,peierls}.  In
fact, it is known from general principles that the usual Hamiltonian
formalism is subsumed as a special case of this Lagrangian
formalism~\cite{bhs,henneaux-elim,fr-pois}.

\subsection{Variational systems}\label{sec:var-sys}
Consider a field vector bundle $F\to M$ over an $n$-dimensional manifold
$M$. A local action functional of order $k$ on $F\to M$ is a function
$S[\phi]$ of sections $\phi\colon M \to F$,
\begin{equation}
	S[\phi] = \int_M (j^k\phi)^*\L,
\end{equation}
where $\L$, the Lagrangian density, is a section of the bundle
$(\Lambda^n M)^k\to J^kF$ densities, which could depend on jet
coordinates of order up to $k$. The Lagrangian density is called local
because, given a section $\phi$ and local coordinates $(x^i,u^a_I)$ on
$J^kF$, the pullback at $x\in M$ can be written as 
\begin{equation}
	(j^k\phi)^*\L(x) = \L(x^i,\del_I\phi^a(x)) ,
\end{equation}
which depends only on $x$ and on the derivatives of $\phi$ at $x$ up to
order $k$. For the most part, the integral over $M$ can be considered
formal, since all the necessary properties will be derived from $\L$. On
the other hand, the finiteness of $S[\phi]$ or related quantities may be
important while discussing boundary conditions in spacetimes with
non-compact spatial extent. However, we will no discuss these issues
below.

Recall that Sect.~\ref{sec:jets} introduces the variational
bicomplex $\Forms^{h,v}(F)$ of vertically and horizontally graded
differential forms on $J^\oo F$. Below, we use the notation introduced
in that section. A Lagrangian density is then an element
$\L\in\Forms^{n,0}(F)$ that can be projected to $J^kF$. Incidentally the
usual variational derivative of variational calculus can be put into
direct correspondence with the vertical differential $\dv$ on this
complex, which is how the name \emph{variational bicomplex} was
established~\cite{anderson-big,anderson-small}.

Let $(x^i,u^a_I)$ be a set of adapted coordinates on the $\oo$-jet
bundle $J^\oo F$, where all the following calculations can be lifted. Any
result that depends only on jets of finite order can then be projected
on to the appropriate finite dimensional jet bundle. Using the
integration by parts identity~\eqref{eq:byparts} if necessary, we can
always write the first vertical variation of the Lagrangian density as
\begin{equation}\label{eq:dvL}
	\dv \L = \EL_a\wedge\dv{u^a} - \dh\theta.
\end{equation}
All terms proportional to $\dv{u^a_I}$, $|I|>0$, have been absorbed into
$\dh\theta$. In the course of the performing the integrations by parts,
$\EL_a$ can acquire dependence on jets up to order $2k$, and $\theta$ on
jets up to order $2k-1$. Note that $\EL_a=0$ are the
\emph{Euler-Lagrange equations} associated with the action functional
$S[\phi]$ or the Lagrangian density $\L$. We can identify the form
$\EL_a\wedge\dv u^a$ with a bundle morphism $\EL\colon J^{2k}F \to
\tilde{F}^*$. Therefore, $(\EL,\tilde{F}^*)$ is an equation form
of a PDE system $\E_\EL\sso J^{2k}F$ on $F$ of order $2k$. A PDE system
with an equation form given by Euler-Lagrange equations of a Lagrangian
density is said to be \emph{variational}. Also, the form $\theta$ is an
element of $\Forms^{n-1,1}(F)$, projectable to $J^{2k-1}F$. It is
referred to as the \emph{presymplectic potential current density}.
Applying the vertical exterior differential to $\theta$ we obtain the
\emph{presymplectic current density} (or the \emph{presymplectic current
density defined by $\L$} if the extra precision is necessary).
\begin{equation}\label{eq:omega-def}
	\omega = \dv \theta,
\end{equation}
with $\omega\in \Forms^{n-1,2}(F)$. This terminology implies that
$\omega$ can be integrated over a codim-$1$ spacetime surface to
construct a presymplectic form, Sect.~\ref{sec:formal-symp}, which is
then necessarily local. This method of construction a symplectic form on
the phase space of classical field theory is sometimes referred to as
the \emph{covariant phase space method}~\cite{lw,cw,abr}.

The following lemma is an easy consequence of the definition of
$\omega$.
\begin{lemma}
The form $\omega\in\Forms^{n-1,2}(F)$ defined in
Eq.~\eqref{eq:omega-def} is both horizontally and vertically closed when
pulled back to $\iota_\oo\colon \E^\oo_\EL\sse J^{2k} F$:
\begin{align}
	\dh \iota_\oo^* \omega &= 0 , \\
	\dv \iota_\oo^* \omega &= 0 .
\end{align}
\end{lemma}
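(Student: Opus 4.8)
The plan is to verify the two closure identities separately, both following directly from the definitions $\theta$, $\omega=\dv\theta$, and the first variation formula $\dv\L = \EL_a\wedge\dv u^a - \dh\theta$. Throughout I work on $J^\oo F$, where the bicomplex differentials $\dh$ and $\dv$ anticommute, $\dh^2 = \dv^2 = 0$, and $\dh\dv = -\dv\dh$; any finite-order statement can then be projected down. Recall also that the pullback by a prolonged solution section $j^\oo\phi$ (equivalently, restriction to $\E^\oo_\EL$ followed by such a pullback) intertwines $\dh$ with $\d$ on $M$ and kills $\dv$ of anything, but here we want the sharper \emph{bicomplex-level} statement after pullback along the inclusion $\iota_\oo\colon\E^\oo_\EL\hookrightarrow J^\oo F$, so I will argue on the jet side and only invoke that $\iota_\oo^*(\EL_a) = 0$ as the defining property of $\E^\oo_\EL$ (since $\E_\EL$ is cut out by $\EL = 0$, and its infinite prolongation by all $\dh$-derivatives of $\EL$, which are exactly what vanishes on $\E^\oo_\EL$).

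\emph{Vertical closedness.} This is immediate and needs nothing about $\E^\oo_\EL$: since $\omega = \dv\theta$ and $\dv^2 = 0$, we get $\dv\omega = \dv\dv\theta = 0$ already on all of $J^\oo F$, hence a fortiori $\dv\iota_\oo^*\omega = \iota_\oo^*\dv\omega = 0$. So the first identity is free.

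\emph{Horizontal closedness.} Here is where the equations of motion enter. Apply $\dv$ to the first variation formula: $0 = \dv\dv\L = \dv(\EL_a\wedge\dv u^a) - \dv\dh\theta$. Using $\dv\dh = -\dh\dv$ on the last term, this rearranges to $\dh\dv\theta = \dh\omega = \dv(\EL_a\wedge\dv u^a) = (\dv\EL_a)\wedge\dv u^a$ (the term $\EL_a\wedge\dv\dv u^a$ vanishes, and $\dh$ applied to nothing is nothing — more precisely one must be a little careful with signs in the graded Leibniz rule for $\dv$ acting on the $(n,1)$-form $\EL_a\wedge\dv u^a$, but the upshot is that $\dh\omega$ equals a $2$-vertical form each of whose terms is proportional to $\EL_a$ or to a total horizontal derivative $\dh(\EL_a)$ of it). Now restrict along $\iota_\oo$: on $\E^\oo_\EL$ every $\EL_a$ and every horizontal derivative of it vanishes, so $\iota_\oo^*(\dh\omega) = 0$. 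Since $\iota_\oo$ is an embedding of jet manifolds over $M$, pullback commutes with the horizontal differential, giving $\dh\iota_\oo^*\omega = 0$.

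\textbf{Main obstacle.} The only subtlety is bookkeeping: making the graded Leibniz computation of $\dv(\EL_a\wedge\dv u^a)$ precise in the $(h,v)$-bigrading, tracking signs, and confirming that the result is genuinely a sum of terms each divisible (in the relevant $\mathcal{O}(J^\oo F)$-module sense) by some horizontal total derivative of $\EL_a$ — equivalently, that it lies in the differential ideal generated by $\EL$, which is exactly the ideal whose zero locus is $\E^\oo_\EL$. This is routine variational-bicomplex manipulation (it is the standard computation showing $\omega$ is $\dh$-closed on shell), so the two-line argument above is the whole content; no analysis or geometry beyond the definitions is required.
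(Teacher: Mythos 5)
Your proposal is correct and follows essentially the same route as the paper: vertical closedness is immediate from $\dv^2=0$, and horizontal closedness comes from applying $\dv$ to the first variation formula, using $\dh\dv=-\dv\dh$ to get $\dh\omega = -\dv\EL_a\wedge\dv u^a$, and observing that this lies in the differential ideal (generated by $\EL_a$ and $\dv\EL_a$) annihilated by $\iota_\oo^*$. Your sign on $\dh\omega$ is off by $-1$, but as you note this is immaterial to the vanishing conclusion.
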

\begin{proof}
The horizontal and vertical differentials on $\E_\EL$ are defined by
pullback along $\iota_\oo$, that is, $\dh \iota_\oo^* = \iota_\oo^* \dh$
and $\dv \iota_\oo^* = \iota_\oo^* \dv$. Since $\omega=\dv\theta$ is
already vertically closed on $J^{2k-1}F$, it is a fortiori vertically
closed on $\E$. The rest is a consequence of the nilpotence and
anti-commutativity of $\dh$ and $\dv$:
\begin{align}
	0 = \dv^2\L
		&= \dv\EL_a\wedge\dv{u^a} - \dv\dh\theta, \\
	\dh\omega
		&= \dh\dv\theta
		= -\dv \EL_a\wedge\dv{u^a} , \\
	\dh\iota_\oo^*\omega
		&= \iota_\oo^*\dh\omega
		= -\iota_\oo^* \dv\EL_a \wedge \dv u^a
		= 0 ,
\end{align}
since $\EL_a$ and $\dv\EL_a$ generate the differential ideal in
$\Forms^*(J^\oo F)$ annihilated by the pullback $\iota_\oo^*$. \qed
\end{proof}
In fact, we will promote the name \emph{presymplectic current density}
to any form satisfying these properties.
\begin{definition}
Given a PDE system $\iota\colon \E \sso J^kF$ we call a form $\omega$ a
\emph{presymplectic current density compatible with $\E$} if $\omega\in
\Forms^{n-1,2}(F)$ and it is both horizontally and
vertically closed on solutions:
\begin{align}
	\dh \iota_\oo^* \omega &= 0 , \\
	\dv \iota_\oo^* \omega &= 0 .
\end{align}
\end{definition}
The particular form $\omega$ defined by Eq.~\eqref{eq:omega-def} will be
referred to as the presymplectic current density associated to or
obtained from the Lagrangian density $\L$, if there is any potential
confusion.

\subsection{Formal differential geometry of solution spaces}
\label{sec:formal-dg}
Before describing the symplectic and Poisson structures on the space of
solutions, we should say something about the differential geometry of
the manifold of solutions of a PDE system as well as its tangent and
cotangent spaces. As usual for infinite dimensional manifolds, there are
some subtleties.

The main goal of this section is to describe the \emph{formal} tangent
and formal cotangent spaces of the manifold arbitrary field sections and
the manifold of solution sections. The adjective formal, in the last
sentence, alludes to the fact that we avoid most technical issues of
infinite dimensional analysis and concentrate on what would be dense
subspaces of the true tangent and cotangent spaces with a reasonable for
their topologies. Results are algebraic and (finite dimensional)
geometric identities that would form the core of an earnest functional
analytical formulation of their non-formal versions. The formal tangent
and cotangent spaces have a natural dual pairing, which we prove to be
non-degenerate, as a substitute for the absence of true topological
duality between them. In the presence of constraints, the proof is
carried out under some additional sufficient conditions.

We start with Sect.~\ref{sec:top-choice}, which discusses the choices of
topology on the infinite dimensional spaces of field configurations and
solutions. Sects.~\ref{sec:constr} and Sect.~\ref{sec:gauge} discuss
sufficient conditions on the constraints and gauge transformations
needed for later results. Sects.~\ref{sec:tt*-conf}, \ref{sec:tt*-sols}
and~\ref{sec:tt*-sols-constr} define the formal tangent and cotangent
spaces in the progressively more complicated cases of the space of field
configurations, the space of solutions (without constraints), and the
space of solutions (with constraints). Finally,
Sect.~\ref{sec:formal-forms} uses the preceding discussion to define
formal local differential forms, of which the symplectic form will be an
example.

\subsubsection{Choice of topology}\label{sec:top-choice}
What we really want to do is describe the space of solutions $\S_H(F)$.
However, we first start with the space $\Secs_H(F)$ or arbitrary field
sections, because it has a simpler structure. In fact, since $F\to M$ is
a vector bundle, $\Secs(F)$ is a vector space. It can be turned into a
topological vector space for several reasonable choices of topology. One
such choice is the \emph{compact open topology} (or more precisely the
\emph{$C^\oo$ compact open topology}). When the base ma\-ni\-fold $M$ is not
compact, the \emph{Whitney topology} (also known as the \emph{Whitney
$C^\oo$ topology}, or the \emph{wholly open topology}) is another
natural choice.  Unfortunately, $\Secs(F)$ ceases to be a topological
vector space%
	\footnote{With this choice of topology, $C^\oo(M)$ is still a
	topological ring, but not a topological algebra over $\R$. The space
	of sections $\Secs(F)$ is then a topological module over $C^\oo(M)$,
	where both have the Whitney topology.}, %
because multiplication by scalars fails to be continuous. These two
topologies coincide iff the base manifold $M$ is compact. The Whitney
topology is naturally singled out by Thm.~\ref{thm:slow-open}, which
shows that the sets of slow sections $\Secs(F,C)$ are Whitney-open%
	\footnote{They are proved to be open in the Whitney $C^0$ topology.
	They are also open in the Whitney $C^\oo$ topology, since the latter
	is finer (it has more open sets).} %
in $\Secs_H(F)$. For the purposes of the discussion in
Sect.~\ref{sec:limits}, it is advantageous to consider a cover of
the space of solutions by open sets of slow sections. On the other hand,
the differential geometry of infinite dimensional manifolds not modeled
on a topological vector space becomes significantly more complicated.
Also, it becomes more difficult to make contact with the approach
of~\cite{fr-bv,rejzner-thesis,bfr}, who use the compact open topology.

One can think of different ways out of this impasse. One could bite the
bullet and consider infinite dimensional manifolds modeled on
topological $C^\oo(M)$-modules. One could also restrict the spacetime
manifold $M$ to be the interior of a compact manifold with boundary and
restrict $\Secs(F)$ to the set of sections that extend continuously in
some way to the boundary. At the moment, we remain agnostic about these
or other possibilities, as they do not affect the results presented
below, and simply assume that some choice has been made so that the sets
$\Secs(F,C)$ are open in $\Secs_H(F)$, or could be effectively treated
as such. For example, the choice of the functor $C^\oo(-)$ that assigns
the algebra of smooth functions to our infinite dimensional manifolds is
insensitive to the difference between the compact open and Whitney
topologies. Therefore, we presume (assuming also that
Conj.~\ref{cnj:gh-stab} holds) the following hypothesis
\begin{hypothesis}\label{hyp:opencover}
The space $\Secs_H(F)$ of globally hyperbolic sections is open in
$\Secs(F)$, and hence a manifold modeled on $\Secs(F)$, with an open
cover by $\Secs(F,C)$, with globally hyperbolic $C\to M$, as in
Eq.~\eqref{eq:gh-cover}.
\end{hypothesis}

Now, the space of solutions can be topologized as a subset $\S(M)\sso
\Secs(F)$. We are not interested in all solutions, only in the open
subset of globally hyperbolic ones, $\S_H(M) = \S(M)\cap \Secs_H(M)$.
This will be our classical phase space.

\subsubsection{Constraints}\label{sec:constr}
Recall that we are considering a symmetric hyperbolic PDE system with
constraints defined on the field bundle $F\to M$, whose equation form is
given by the hyperbolic subsystem $(f,\tilde{F}^*)$, the constraint
subsystem $(c,E)$ and the consistency subsystem $(h,\tilde{E}^*)$,
satisfying the identity $h\circ c = q\circ f$ for some differential
operator $q$. The formal tangent vectors on the space of solutions will
essentially consist of solutions of the linearized version of this PDE
system. Supposing that some dynamical linearization point
$\phi\in\S_H(M)$ is held fixed, the PDE subsystems linearized about
$\phi$ will consist of $(\dot{f},\tilde{F}^*)$, $(\dot{c},E)$ and
$(\dot{h},\tilde{E}^*)$, satisfying $\dot{h}\circ \dot{c} = \dot{q}
\circ \dot{f}$, respectively. All of $\dot{f}$, $\dot{c}$, $\dot{h}$,
$\dot{q}$ are now linear differential operators. The essential
properties of linear, symmetric hyperbolic systems with constraints that
we will refer to are laid out in Sect.~\ref{sec:lin-inhom}.  Dually, the
formal cotangent space will be defined using the adjoint operators
$\dot{f}^*\colon \Secs(F)\to \Secs(\tilde{F}^*)$, $\dot{c}^*\colon
\Secs(\tilde{E}^*)\to \Secs(\tilde{F}^*)$, $\dot{h}^*\colon \Secs(E)\to
\Secs(\tilde{E}^*)$, $\dot{q}^*\colon \Secs(E)\to \Secs(F)$, which
satisfy $\dot{c}^*\circ \dot{h}^* = \dot{f}^* \circ \dot{q}^*$. Note
that $(\dot{f}^*, \tilde{F}^*)$ and $(\dot{h}^*, \tilde{E}^*)$ are also
symmetric hyperbolic.

When dealing with constrained systems, some results covered in this
section will require the further sufficient condition that the
constraints be parametrizable so that we can extend both the linearized
system and its adjoint to the following commutative diagrams:
\begin{equation}\label{eq:glpar}
\vcenter{\xymatrix{
	0
		\ar[r] &
	\Secs_0(E')
		\ar[d]^{\dot{c}'} \ar[r]^{\dot{h}'} &
	\Secs_0(\tilde{E}^{\prime*})
		\ar[d]^{\dot{q}'} \ar[r]^{\H'} &
	\Secs_{SC}(E')
		\ar[d]^{\dot{c}'} \ar[r]^{\dot{h}'} &
	\Secs_{SC}(\tilde{E}^{\prime*})
		\ar[d]^{\dot{q}'} \ar[r] &
	0  \\
	0
		\ar[r] &
	\Secs_0(F)
		\ar[d]^{\dot{c}} \ar[r]^{\dot{f}} &
	\Secs_0(\tilde{F}^*)
		\ar[d]^{\dot{q}} \ar[r]^{\G} &
	\Secs_{SC}(F)
		\ar[d]^{\dot{c}} \ar[r]^{\dot{f}} &
	\Secs_{SC}(\tilde{F}^*)
		\ar[d]^{\dot{q}} \ar[r] &
	0  \\
	0
		\ar[r] &
	\Secs_0(E)
		\ar[r]^{\dot{h}} &
	\Secs_0(\tilde{E}^*)
		\ar[r]^{\H} &
	\Secs_{SC}(E)
		\ar[r]^{\dot{h}} &
	\Secs_{SC}(\tilde{E}^*)
		\ar[r] &
	0
}}
\end{equation}
and
\begin{equation}\label{eq:glpar*}
\vcenter{\xymatrix{
	0
		\ar@{<-}[r] &
	\Secs_{SC}(\tilde{E}^{\prime*})
		\ar@{<-}[d]^{\dot{c}^{\prime*}} \ar@{<-}[r]^{\dot{h}^{\prime*}} &
	\Secs_{SC}(E')
		\ar@{<-}[d]^{\dot{q}^{\prime*}} \ar@{<-}[r]^{\H^{\prime*}} &
	\Secs_{0}(\tilde{E}^{\prime*})
		\ar@{<-}[d]^{\dot{c}^{\prime*}} \ar@{<-}[r]^{\dot{h}^{\prime*}} &
	\Secs_{0}(E')
		\ar@{<-}[d]^{\dot{q}^{\prime*}} \ar@{<-}[r] &
	0  \\
	0
		\ar@{<-}[r] &
	\Secs_{SC}(\tilde{F}^*)
		\ar@{<-}[d]^{\dot{c}^*} \ar@{<-}[r]^{\dot{f}^*} &
	\Secs_{SC}(F)
		\ar@{<-}[d]^{\dot{q}^*} \ar@{<-}[r]^{\G^*} &
	\Secs_{0}(\tilde{F}^*)
		\ar@{<-}[d]^{\dot{c}^*} \ar@{<-}[r]^{\dot{f}^*} &
	\Secs_{0}(F)
		\ar@{<-}[d]^{\dot{q}^*} \ar@{<-}[r] &
	0  \\
	0
		\ar@{<-}[r] &
	\Secs_{SC}(\tilde{E}^*)
		\ar@{<-}[r]^{\dot{h}^*} &
	\Secs_{SC}(E)
		\ar@{<-}[r]^{\H^*} &
	\Secs_{0}(\tilde{E}^*)
		\ar@{<-}[r]^{\dot{h}^*} &
	\Secs_{0}(E)
		\ar@{<-}[r] &
	0
}}
\end{equation}
The rows form exact sequences, while the columns form elliptic
complexes, as described in Sect.~\ref{sec:caus-green}. Note that the
adjoint diagram also describes a symmetric hyperbolic system with
hyperbolically integrable constraints, except that the role of the
constraint subsystem is now played by $(\dot{q}^{\prime*},E')$ and the
consistency subsystem is $(\dot{h}^{\prime*},\tilde{E}^{\prime*})$,
which satisfies the consistency identity $\dot{h}^{\prime*}\circ
\dot{q}^{\prime*} = \dot{c}^{\prime*}\circ \dot{f}^*$.

Since these systems are linear, their causal structures are field
independent. Since we will be discussing both the linearized system and
its adjoint system, we will refer to the causal structure defined by the
extended compound systems $(\dot{h}'\oplus \dot{f}\oplus \dot{h},
\tilde{E}^{\prime*}\oplus_M \tilde{F}^*\oplus_M \tilde{E}^*)$ and
$(\dot{h}^*\oplus \dot{f}^*\oplus \dot{h}^{\prime*}, \tilde{E}^*\oplus_M
\tilde{F}^*\oplus_M \tilde{E}^{\prime*})$.  In fact, it is easy to check
that their causal structures coincide and in turn coincide with that of
the dynamical linearization point $\phi\in \S_H(M)$ and hence are
globally hyperbolic.  Recall that adjoint symmetric hyperbolic systems
have opposite notions of future and past. We shall always take future
and past to be defined by $\dot{h}'\oplus \dot{f}\oplus \dot{h}$, rather
than its adjoint.

\subsubsection{Gauge transformations}\label{sec:gauge}
% XXX: The differential operator requirement on \dot{g} may be
% superfluous, since it preserves supports and that may already imply
% that it is a differential operator. cf. "Peetre theorem"
Many important classical field theories exhibit gauge invariance, like
Maxwell theory, Yang-Mills theory, and GR. A \emph{gauge transformation}
is a family of maps $g_\eps\colon \Secs(F) \to \Secs(F)$, parametrized
by sections $\eps\in\Secs(P)$ of the \emph{gauge parameter bundle} $P\to
M$, that take solutions to solutions, while not changing modifying a
field section outside the support of $\delta$, $g_\delta[\phi](x) =
\phi(x)$ if $x\not\in \supp \eps$, which may be compact. If we linearize
about some pair of background section $\delta \to \delta + \eps$, we
obtain a linearized gauge transformation $g_\eps[\phi] \to g_\eps[\phi]
+ \dot{g}[\eps]$. It is another requirement on gauge transformations
that the \emph{generator of linearized gauge transformations} $\dot{g}\colon \Secs(P)
\to \Secs(F)$ is a differential operator, which may depend on the
background sections $\delta$ and $\phi$.

Equivalence classes of sections under gauge transformations are
considered physically equivalent. Therefore, physical observables will
consist only of those functions on phase space that are gauge invariant
(constant on orbits of gauge transformations). Equivalently, observables
are annihilated by the action of linearized gauge transformations.
Another way to look at it, is to consider observables as functions on
the space of gauge orbits, denoted $\bar{\Secs}_H(F)$ for the space of
field configurations and $\bar{\S}_H(F)$ for the space of solutions. We
call the space of globally hyperbolic solution sections modulo gauge
transformations, $\bar{S}_H(F) = \S_H(F)/{\sim}$, the \emph{physical
phase space}.

Often it is convenient to impose subsidiary conditions on field sections,
called \emph{gauge fixing}, that restrict the choice of representatives
of gauge equivalence classes. The gauge fixing is called \emph{full} if
they only allow a unique representative from each equivalence class, and
otherwise called partial. The gauge transformations that are compatible
with a partial gauge fixing are called \emph{residual}.

Unfortunately, PDE systems with gauge invariance cannot have a
well-posed initial value problem, and hence cannot be hyperbolized.
However, the addition of subsidiary conditions on field sections can
make the new PDE system equivalent to a hyperbolic one, usually with
constraints.  In practice, many hyperbolic systems with constraints
arise after adding such gauge fixing conditions to a non-hyperbolic
system with gauge invariance.  However, there may remain non-trivial
residual gauge freedom. For later convenience, as we did with
constraints, we restrict our attention to what we call
\emph{recognizable gauge transformations}. That is, given linearized
gauge transformations of the form $\dot{g}[\eps]$ and a partially gauge
fixed symmetric hyperbolic $(\dot{f},\tilde{F}^*)$, we can fit them into
the following commutative diagram, whose columns form elliptic
complexes:
\begin{equation}\label{eq:glrec}
\vcenter{\xymatrix{
	0
		\ar[r] &
	\Secs_0(P)
		\ar[d]^{\dot{g}} \ar[r]^{\dot{k}} &
	\Secs_0(\tilde{P}^*)
		\ar[d]^{\dot{s}} \ar[r]^{\K} &
	\Secs_{SC}(P)
		\ar[d]^{\dot{g}} \ar[r]^{\dot{k}} &
	\Secs_{SC}(\tilde{P}^*)
		\ar[d]^{\dot{s}} \ar[r] &
	0 \\
	0
		\ar[r] &
	\Secs_0(F)
		\ar[d]^{\dot{g}'} \ar[r]^{\dot{f}} &
	\Secs_0(\tilde{F}^*)
		\ar[d]^{\dot{s}'} \ar[r]^{\G} &
	\Secs_{SC}(F)
		\ar[d]^{\dot{g}'} \ar[r]^{\dot{f}} &
	\Secs_{SC}(\tilde{F}^*)
		\ar[d]^{\dot{s}'} \ar[r] &
	0  \\
	0
		\ar[r] &
	\Secs_0(P')
		\ar[r]^{\dot{k}'} &
	\Secs_0(\tilde{P}^{\prime*})
		\ar[r]^{\K'} &
	\Secs_{SC}(P')
		\ar[r]^{\dot{k}'} &
	\Secs_{SC}(\tilde{P}^{\prime*})
		\ar[r] &
	0
}}
\end{equation}
Their adjoints fit into the adjoint diagram whose columns are also
elliptic complexes:
\begin{equation}\label{eq:glrec*}
\vcenter{\xymatrix{
	0
		\ar@{<-}[r] &
	\Secs_{SC}(\tilde{P}^*)
		\ar@{<-}[d]^{\dot{g}^*} \ar@{<-}[r]^{\dot{k}^*} &
	\Secs_{SC}(P)
		\ar@{<-}[d]^{\dot{s}^*} \ar@{<-}[r]^{\K^*} &
	\Secs_{0}(\tilde{P}^*)
		\ar@{<-}[d]^{\dot{g}^*} \ar@{<-}[r]^{\dot{k}^*} &
	\Secs_{0}(P)
		\ar@{<-}[d]^{\dot{s}^*} \ar@{<-}[r] &
	0 \\
	0
		\ar@{<-}[r] &
	\Secs_{SC}(\tilde{F}^*)
		\ar@{<-}[d]^{\dot{g}^{\prime*}} \ar@{<-}[r]^{\dot{f}^*} &
	\Secs_{SC}(F)
		\ar@{<-}[d]^{\dot{s}^{\prime*}} \ar@{<-}[r]^{\G^*} &
	\Secs_{0}(\tilde{F}^*)
		\ar@{<-}[d]^{\dot{g}^{\prime*}} \ar@{<-}[r]^{\dot{f}^*} &
	\Secs_{0}(F)
		\ar@{<-}[d]^{\dot{s}^{\prime*}} \ar@{<-}[r] &
	0  \\
	0
		\ar@{<-}[r] &
	\Secs_{SC}(\tilde{P}^{\prime*})
		\ar@{<-}[r]^{\dot{k}^{\prime*}} &
	\Secs_{SC}(P')
		\ar@{<-}[r]^{\K^{\prime*}} &
	\Secs_{0}(\tilde{P}^{\prime*})
		\ar@{<-}[r]^{\dot{k}^{\prime*}} &
	\Secs_{0}(P')
		\ar@{<-}[r] &
	0
}}
\end{equation}
The systems $(\dot{k},\tilde{P}^*)$ and $(\dot{k}^{\prime*},
\tilde{P}^{\prime*})$ are required to be symmetric hyperbolic and $P'\to
M$ is called the \emph{gauge invariant field bundle}, while $\dot{g}'$
is called the \emph{operator of gauge invariant field combinations}.

The above commutative diagrams are formally similar to those of
symmetric hyperbolic systems with parametrizable constrains, as
described in Sect.~\ref{sec:caus-green}. By analogy, we say that the
gauge transformations are \emph{globally recognizable} if all the
vertical elliptic complexes in the two diagrams above are exact.

\subsubsection{Formal $T$ and $T^*$ for configurations}
\label{sec:tt*-conf}
Here we consider a section $\phi\in \Secs_H(F)$ and examine the formal
tangent and cotangent spaces at $\phi$, $T_\phi\Secs = T_\phi\Secs_H(F)$
and $T^*_\phi\Secs = T^*_\phi \Secs_H(F)$.
\begin{definition}
We define the \emph{formal full tangent space at $\phi$} as the set of
spacelike compact sections We define and the \emph{formal full
cotangent space at $\phi$} as the set
\begin{equation}
	T_\phi\Secs \cong \Secs_{SC}(F)
	\quad\text{and}\quad
	T^*_\phi\Secs \cong \Secs_0(\tilde{F}^*) .
\end{equation}
The natural pairing $\langle-,-\rangle \colon T_\phi\Secs \times
T^*_\phi\Secs\to \R$ is
\begin{equation}
	\langle \psi, \tilde{\alpha}^* \rangle
	= \int_M \psi\cdot \tilde{\alpha}^* .
\end{equation}
\end{definition}
\begin{lemma}
The natural pairing between $T_\phi\Secs$ and $T^*_\phi\Secs$ is
non-degenerate.
\end{lemma}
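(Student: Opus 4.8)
The plan is to verify non-degeneracy separately in each of the two arguments, and in each case to reduce the statement to the fundamental lemma of the calculus of variations (du Bois-Reymond). First I would show that if $\psi\in\Secs_{SC}(F)$ satisfies $\langle\psi,\tilde{\alpha}^*\rangle=0$ for every $\tilde{\alpha}^*\in\Secs_0(\tilde{F}^*)$, then $\psi=0$. Suppose instead $\psi(x_0)\neq 0$ for some $x_0\in M$. Working in a bundle trivialization over a coordinate chart around $x_0$, pick a component $\psi^a$ with $\psi^a(x_0)\neq 0$ and, by continuity, a small coordinate ball $B$ containing $x_0$ on which $\psi^a$ has constant sign. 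Take $\tilde{\alpha}^*$ to be the dual density whose only nonvanishing component is $\alpha^*_a=\rho\,\d\tilde{x}$, with $\rho\geq 0$ a bump function supported in $B$ with $\rho(x_0)>0$; then $\langle\psi,\tilde{\alpha}^*\rangle=\int_B \psi^a\rho\,\d\tilde{x}\neq 0$, a contradiction. Since $\tilde{\alpha}^*$ has compact support it is a legitimate element of $\Secs_0(\tilde{F}^*)$, and the partially defined pairing is defined on this pair because the integrand has compact support.

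Conversely, suppose $\tilde{\alpha}^*\in\Secs_0(\tilde{F}^*)$ satisfies $\langle\psi,\tilde{\alpha}^*\rangle=0$ for all $\psi\in\Secs_{SC}(F)$. Here I would use the inclusion $\Secs_0(F)\sse\Secs_{SC}(F)$, since compactly supported sections are in particular spacelike compact; hence it suffices to test against compactly supported $\psi$. The same bump-function construction as above, now choosing $\psi$ peaked at a point where some component of $\tilde{\alpha}^*$ is nonzero, forces $\tilde{\alpha}^*=0$. Again the pairing is defined on the pairs used because $\psi$ has been chosen with compact support.

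The argument is entirely elementary and local; the one point worth flagging is that, the natural pairing being only partially defined, one must check that the test elements constructed actually lie in a domain where the pairing makes sense — which they do, since in each direction one of the two factors has been chosen with compact support. No global or causal input (global hyperbolicity, support conditions beyond compactness, the linearized equations) is needed here: non-degeneracy at this stage is purely the statement that the fibrewise perfect pairing of the vector bundle $F$ with the dual density bundle $\tilde{F}^*$, integrated against the standard $C^\infty$ duality between smooth sections and compactly supported sections, is non-degenerate. The subtler situation, where the relevant spaces are cut down to solutions of the (linearized) PDE system and its adjoint, is exactly what is postponed to Sects.~\ref{sec:tt*-sols} and~\ref{sec:tt*-sols-constr}.
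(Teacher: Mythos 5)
Your argument is correct and is exactly the route the paper takes: the paper's entire proof is the one-line remark that this is the fundamental lemma of the calculus of variations, which is precisely the bump-function argument you spell out in both arguments. Your extra care about the pairing being only partially defined (one factor always compactly supported) is a sensible elaboration of the same standard proof, not a different approach.
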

This is essentially the fundamental lemma of the calculus of variations
and the proof is standard~\cite[\textsection IV.3.1]{ch1}.

Since the physical phase space will be identified with the space of
gauge orbits $\bar{\S}_H(F)$ in the solution space $\S_H(F)$, given a
solution section $\phi\in\S_H(F)$, the formal tangent space
$T_\phi\bar{\S} = T^*_\phi\bar{\S}_H(F)$ at the corresponding
equivalence class $[\phi]\in\bar{\S}_H(F)$ in the space of gauge orbits
consists of equivalence classes of linearized solutions up to linearized
gauge transformations. Dually, the formal cotangent space
$T^*_\phi\bar{S}=T^*_\phi\bar{\S}_H(F)$ will consist of dual densities
annihilated by the adjoint of infinitesimal gauge transformation
generator.

Since gauge transformations act on field configurations and not just
solutions, it makes sense to consider all field configurations related
by gauge transformations as physically equivalent. Thus, we also
introduce $\bar{\Secs}_H(F)$ as the space of gauge orbits of field
configurations as well as its formal tangent and cotangent spaces. The
natural pairing between them is shown to be non-degenerate under the
condition of global recognizability, that is, the vertical elliptic
complexes in diagrams~\eqref{eq:glrec} and~\eqref{eq:glrec*} are exact.
We deal with field configurations first and delay the discussion of
solutions to the next section.

The exactness of the composition $\dot{g}'\circ\dot{g}=0$ ensures that
we can recognize pure gauge field configurations, which are of the form
$\psi=\dot{g}[\eps]$ for some $\phi$-spacelike compact section
$\eps\colon M\to P$, precisely as those $\phi$-spacelike compact field
sections $\psi\colon M\to F$ that give vanishing gauge invariant field
combinations $\dot{g}'[\psi] = 0$.  On the other hand, the exactness of
the dual composition $\dot{g}^*\circ \dot{g}^{\prime*} = 0$ ensures that
we can parametrize gauge invariant, compactly supported dual densities
$\tilde{\alpha}^*\colon M\to \tilde{F}^*$, those satisfying
$\dot{g}^*[\alpha] = 0$, precisely as the image of the differential
operator $\dot{g}^{\prime*}$ acting on compactly supported sections of
$\tilde{P}^{\prime*}\to M$.

Since gauge transformations act on arbitrary sections, not just
solutions, we introduce the formal tangent and cotangent spaces for
arbitrary and solution sections at the same time. As before, fix a
section $\phi$ in $\Gamma_H(F)$ or $\S_H(F)$, as needed.
\begin{definition}
The \emph{formal gauge invariant full tangent space} at $\phi$ is the
set of gauge equivalence classes of $\phi$-spacelike compact sections,
\begin{align}
	T_\phi\bar{\Secs} = T_\phi\bar{\Secs}_H(F)
		&= \{ [\psi] \mid \psi\in \Secs_{SC}(F) \} , \\
	[\psi] &\sim \psi + \dot{g}[\eps], ~
			\text{with}~ \eps\in \Secs_{SC}(P) .
\end{align}
The \emph{formal gauge invariant full cotangent space} at $\phi$ is the
set of compactly supported gauge invariant dual densities,
\begin{equation}
	T_\phi\bar{\Secs} = T^*_\phi\bar{\Secs}_H(F)
		= \{ \tilde{\alpha}^* \in \Secs_0(\tilde{F}^*) \mid
			\dot{g}^*[\alpha]=0 \} .
\end{equation}
The natural pairing $\langle-,-\rangle\colon T_\phi\bar{\Secs} \times
T^*_\phi\bar{\Secs} \to \R$ is 
\begin{equation}
	\langle [\psi], \tilde{\alpha}^* \rangle
	= \int_M \psi\cdot\tilde{\alpha}^* .
\end{equation}
\end{definition}
\begin{lemma}\label{lem:ginv-full-nondegen}
If the gauge transformations are globally recognizable, the natural
pairing between the gauge invariant spaces $T_\phi\bar{\Secs}$ and
$T^*_\phi\bar{\Secs}$ is non-degenerate.
\end{lemma}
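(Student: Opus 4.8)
The plan is to prove non-degeneracy separately in the two arguments of the pairing, with the global recognizability hypothesis entering exactly twice: once to parametrize the gauge invariant dual densities, and once to recognize the pure-gauge directions. Before anything else I would record that the pairing descends to gauge classes: replacing $\psi$ by $\psi+\dot{g}[\eps]$ with $\eps\in\Secs_{SC}(P)$ changes $\langle[\psi],\tilde{\alpha}^*\rangle$ by $\int_M\dot{g}[\eps]\cdot\tilde{\alpha}^* = \int_M\eps\cdot\dot{g}^*[\alpha] = 0$, the last equality because $\tilde{\alpha}^*$ is gauge invariant; here the integrand $\dot{g}[\eps]\cdot\tilde{\alpha}^*$ has compact support (product of a spacelike compact and a compactly supported section), so the adjointness identity for $\dot{g}$ holds with no boundary term.

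For non-degeneracy in the first slot, I would start from $\psi\in\Secs_{SC}(F)$ with $\langle[\psi],\tilde{\alpha}^*\rangle=0$ for every gauge invariant $\tilde{\alpha}^*\in\Secs_0(\tilde{F}^*)$. Global recognizability makes the vertical complex of~\eqref{eq:glrec*} exact at $\Secs_0(\tilde{F}^*)$, so $\ker\dot{g}^* = \im\dot{g}^{\prime*}$ there, and hence every such $\tilde{\alpha}^*$ equals $\dot{g}^{\prime*}[\tilde{\beta}^*]$ for some $\tilde{\beta}^*\in\Secs_0(\tilde{P}^{\prime*})$. Then $0 = \int_M\psi\cdot\dot{g}^{\prime*}[\tilde{\beta}^*] = \int_M\dot{g}'[\psi]\cdot\tilde{\beta}^*$ for all such $\tilde{\beta}^*$, and since $\dot{g}'[\psi]\in\Secs_{SC}(P')$ while $\tilde{\beta}^*$ ranges over $\Secs_0(\tilde{P}^{\prime*})$, the fundamental lemma of the calculus of variations applied on $P'\to M$ forces $\dot{g}'[\psi]=0$. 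Now I would invoke global recognizability again: the vertical complex of~\eqref{eq:glrec} is exact at $\Secs_{SC}(F)$, so $\ker\dot{g}'=\im\dot{g}$, whence $\psi=\dot{g}[\eps]$ for some $\eps\in\Secs_{SC}(P)$, i.e.\ $[\psi]=0$ in $T_\phi\bar{\Secs}$.

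For non-degeneracy in the second slot, I would take $\tilde{\alpha}^*\in T^*_\phi\bar{\Secs}$ pairing to zero with every $[\psi]$, restrict attention to test sections $\psi\in\Secs_0(F)\sse\Secs_{SC}(F)$, and conclude $\tilde{\alpha}^*=0$ directly from the non-degeneracy of the configuration pairing established earlier (the fundamental lemma of the calculus of variations). I do not expect a genuine obstacle; the only delicate point is support bookkeeping, namely checking at each step that the integrands $\psi\cdot\tilde{\alpha}^*$, $\dot{g}[\eps]\cdot\tilde{\alpha}^*$ and $\dot{g}'[\psi]\cdot\tilde{\beta}^*$ are compactly supported, so that the integrals converge and the adjointness identities for $\dot{g}$ and $\dot{g}'$ hold without boundary contributions --- which follows from the standard interplay of compact and spacelike compact supports on a globally hyperbolic spacetime. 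The substantive input is entirely the exactness of the vertical elliptic complexes in~\eqref{eq:glrec} and~\eqref{eq:glrec*}.
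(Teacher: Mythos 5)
Your proposal is correct and follows essentially the same route as the paper: the fundamental lemma of the calculus of variations handles the cotangent slot, while the tangent slot uses test densities of the form $\dot{g}^{\prime*}[\tilde{\beta}^{\prime*}]$ to deduce $\dot{g}'[\psi]=0$ and then the exactness of the third column of~\eqref{eq:glrec} at $\Secs_{SC}(F)$ to recognize $\psi$ as pure gauge. The only (harmless) difference is that for the test densities the paper needs merely the complex property $\dot{g}^*\circ\dot{g}^{\prime*}=0$ rather than the full exactness of~\eqref{eq:glrec*} at $\Secs_0(\tilde{F}^*)$ that you invoke.
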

\begin{proof}
Non-degeneracy in the second argument follows once again from the
fundamental lemma of the calculus of variations: $\langle [\psi], \alpha
\rangle = \langle \psi, \alpha \rangle = 0$ for all $\psi\in
T_\phi\Secs$ implies that $\alpha = 0$.

Non-degeneracy in the first argument requires an appeal to the global
recognizability of the gauge transformations. Suppose that $\langle
[\psi], \alpha \rangle = 0$ for all $\alpha\in T^*_\phi\bar{\Secs}$.  We
need to show that this implies $\psi = \dot{g}[\eps]$ is pure gauge, for
some $\phi$-spacelike compactly supported $\eps\colon M\to P$. From the
elliptic complex property of the first column of~\eqref{eq:glrec*}, we
are free to use gauge invariant dual densities of the form $\alpha =
\dot{g}^{\prime*}[\tilde{\eps}^{\prime*}]$ for arbitrary
$\tilde{\eps}^{\prime*}\in \Secs_0(\tilde{P}^{\prime*})$ and find
\begin{equation}
	\langle [\psi] , \tilde{\alpha}^* \rangle
	= \langle \psi, \tilde{\alpha}^* \rangle
	= \langle \psi, \dot{g}^{\prime*}[\tilde{\eps}^{\prime*}] \rangle
	= \langle \dot{g}'[\psi], \tilde{\eps}^{\prime*} \rangle .
\end{equation}
Since $\tilde{\eps}^{\prime*}$ could be arbitrary, the vanishing of
$\langle \dot{g}'[\psi], \tilde{\eps}^{\prime*} \rangle$ implies that
$\dot{g}'[\psi] = 0$. On the other hand, from the exactness of the third
column~\eqref{eq:glrec}, we can then conclude that $\psi =
\dot{g}[\eps]$ for some $\eps\in \Secs_{SC}(P)$. This shows that $\psi$
is pure gauge and completes the proof. \qed
\end{proof}

\subsubsection{Formal $T$ and $T^*$ for solutions (without constraints)}%
\label{sec:tt*-sols}
The formal tangent space $T_\phi\S$ will consist of linearized
solutions, that is solutions of the linearized constrained hyperbolic
system $\dot{f}[\psi] = 0$ and $\dot{c}[\psi] = 0$. The formal cotangent
space will naturally consist of equivalence classes of dual densities up
to the images of the adjoints of $\dot{f}$ and $\dot{c}$. After giving
the precise definitions below, we prove that that the natural pairing
between these formal tangent and cotangent spaces is non-degenerate,
provided the constraints $\dot{c}[\psi] = 0$ are trivial. The discussion
of the case with non-trivial constraints is deferred to the next
section.

\begin{definition}
We define the \emph{formal solutions tangent space at $\phi$} as the set of
spacelike compact linearized solution sections,
\begin{equation}
	T_\phi\S = T_\phi\S_H(F) = \{ \psi \in \Secs_{SC}(F) \mid 
		\dot{f}[\psi] = \dot{c}[\psi] = 0 \} .
\end{equation}
We define the \emph{formal solutions cotangent space at $\phi$} as the set
of equivalence classes of compactly supported dual densities,
\begin{align}
	T^*_\phi\S = T^*_\phi\S_H(F) &= \{ [\tilde{\alpha}^*] \mid
		\tilde{\alpha}^* \in \Secs_0(\tilde{F}^*) \} , \\
	[\tilde{\alpha}^*] &\sim \tilde{\alpha}^*
		+ \dot{f}^*[\xi] + \dot{c}^*[\tilde{\eps}^*],  ~~\text{with}~~
		\xi \in \Secs_0(F), ~ \tilde{\eps}^*\in \Secs_0(\tilde{E}^*) .
\end{align}
The natural pairing $\langle-,-\rangle \colon T_\phi\S \times
T^*_\phi\S\to \R$ is
\begin{equation}
	\langle \psi, [\tilde{\alpha}^*] \rangle
		= \int_M \psi\cdot \tilde{\alpha}^* .
\end{equation}
\end{definition}

Before proving the non-degeneracy of the above pairing, we introduce an
auxiliary pairing and prove its non-degeneracy first. Recall that the
Green form (Def.~\ref{def:green-form}) associated to $\dot{f}$ and
$\dot{f}^*$ is orientation positive definite when pulled back to a
future oriented, spacelike surface, or at least the representative in
Eq.~\eqref{eq:pb-gf-rep} is.
\begin{definition}\label{def:green-pairing}
Let $\iota\colon \Sigma\sso M$ be a future oriented, $\phi$-Cauchy
surface and $G$ the Green form associated to $\dot{f}$. The Green
pairing on $\Sigma$ is a bilinear form $\langle -,- \rangle_{G,\Sigma}
\colon \Secs_0(F|_\Sigma)^2 \to \R$ given by
\begin{equation}
	\langle \varphi_1, \varphi_2 \rangle_{G,\Sigma}
	= \int_\Sigma \iota^* G(\varphi_1,\varphi_2)
	= \int_\Sigma \iota^* \bar{f}^t_{ab} \varphi_1^a \varphi_2^b .
\end{equation}
\end{definition}
\begin{lemma}\label{lem:green-posdef}
The Green pairing $\langle -,- \rangle_{G,\Sigma}$ depends only on the
equivalence class $[G]$ and is positive definite (hence non-degenerate).
\end{lemma}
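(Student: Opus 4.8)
The plan is to verify the two assertions of Lemma~\ref{lem:green-posdef} in turn: that $\langle-,-\rangle_{G,\Sigma}$ depends only on the class $[G]$, and that it is positive definite. Both follow quickly from Eq.~\eqref{eq:pb-gf-rep}, from the orientation positive-definiteness of the pulled back Green form on a future oriented spacelike surface recorded there, and from Stokes' theorem; the only real work is bookkeeping about well-definedness.

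For independence of the representative, I would argue as follows. By Def.~\ref{def:green-form} any two representatives of $[G]$ differ by an exact local form, $G' = G + \d H$ with $H$ a local $(n-2)$-form valued bilinear differential operator. Given $\varphi_1,\varphi_2\in\Secs_0(F|_\Sigma)$, I would first extend them to compactly supported sections of $F\to M$; this is harmless, and the pairing is independent of the extension, because by Eq.~\eqref{eq:pb-gf-rep} the canonical representative Eq.~\eqref{eq:gf-rep} pulls back to the purely algebraic expression $\iota^*\bar f^t_{ab}\varphi_1^a\varphi_2^b$, which sees only $\varphi_i|_\Sigma$. Since $\iota^*$ commutes with $\d$,
\[
	\iota^*(G'-G)(\varphi_1,\varphi_2)
	= \iota^*\,\d H(\varphi_1,\varphi_2)
	= \d\bigl(\iota^* H(\varphi_1,\varphi_2)\bigr)
\]
is the exterior derivative of a compactly supported $(n-2)$-form on $\Sigma$, so Stokes' theorem makes its integral over $\Sigma$ vanish --- there is no boundary contribution, since the support of $H(\varphi_1,\varphi_2)$ is compact and hence disjoint from any boundary of $\bar\Sigma$. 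This gives $\langle\varphi_1,\varphi_2\rangle_{G',\Sigma}=\langle\varphi_1,\varphi_2\rangle_{G,\Sigma}$.

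For positive definiteness, I would work with the canonical representative, so that $\langle\varphi,\varphi\rangle_{G,\Sigma}=\int_\Sigma\iota^*\bar f^t_{ab}\varphi^a\varphi^b$. The linearized operator $\dot f$ is symmetric hyperbolic (Sect.~\ref{sec:constr}), its principal symbol being that of the original system frozen at $\phi$, and a smooth acausal $\phi$-Cauchy surface $\Sigma$ is future oriented spacelike for the associated cone bundle $\phi^*\Gamma^\oast$. Hence, by the property of $\bar f$ recorded after Eq.~\eqref{eq:pb-gf-rep} (ultimately the positive-definiteness clause of Def.~\ref{def:symhyp}), $\iota^*\bar f^t_{ab}$ is orientation positive definite on the fibres of $F|_\Sigma$: for each $x\in\Sigma$ the quantity $\iota^*\bar f^t_{ab}(x)\varphi^a(x)\varphi^b(x)$ is a positive density when $\varphi(x)\ne0$ and the zero density otherwise. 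Integrating this everywhere-nonnegative density over $\Sigma$ with its induced orientation yields a number $\ge0$ that is strictly positive as soon as $\varphi$ is not identically zero, since a nonvanishing smooth section is nonzero on a nonempty open set. Therefore $\langle-,-\rangle_{G,\Sigma}$ is positive definite, and a fortiori non-degenerate; by the first part this conclusion is independent of the chosen representative.

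I do not expect a serious obstacle here. The points needing care are the well-definedness bookkeeping --- interpreting $\iota^* G(\varphi_1,\varphi_2)$ for sections defined only over $\Sigma$ (handled by Eq.~\eqref{eq:pb-gf-rep} and extension-independence), and confirming that the compact support conditions genuinely eliminate the boundary term in Stokes' theorem --- together with the small point that a future oriented $\phi$-Cauchy surface is indeed spacelike for $\phi^*\Gamma^\oast$, which is what lets Def.~\ref{def:symhyp} apply and forces $\iota^*\bar f^t$ to be positive definite on the fibres of $F|_\Sigma$.
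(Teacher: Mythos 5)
Your proof is correct and follows essentially the same route as the paper's: representatives of $[G]$ differ by an exact term $\d H$ whose pullback to $\Sigma$ integrates to zero by Stokes' theorem with compact support, and positive definiteness comes from integrating the orientation-positive density $\iota^*\bar f^t_{ab}\varphi^a\varphi^b$ given by the canonical representative of Eq.~\eqref{eq:pb-gf-rep}. The extra bookkeeping you supply about extending sections off $\Sigma$ and the extension-independence of the pairing is a reasonable elaboration of what the paper leaves implicit, but it does not change the argument.
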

\begin{proof}
Any two representatives $G_1$ and $G_2$ of $[G]$ will differ by an exact
term $\d{H}$, with $H(-,-)$ a bilinear bidifferential operator.
Therefore, the integrands $\iota^* G_i(\varphi_1,\varphi_2)$ will differ
by the exact term $\d\iota^* H(\varphi_1,\varphi_2)$, with necessarily
compact support. Therefore, since $\Sigma$ has no boundary, we can use
any representative of $[G]$ to evaluate the pairing.

As already mentioned above, the representative given by
Eq.~\eqref{eq:pb-gf-rep} is orientation positive,
$\iota^*G(\varphi,\varphi) > 0$ at $s\in \Sigma$, whenever $\varphi(s)
\ne 0$. It then an elementary conclusion that
\begin{equation}
	\langle \varphi, \varphi \rangle_{G,\Sigma}
	= \int_\Sigma \iota^* G(\varphi,\varphi) > 0
\end{equation}
whenever $0\ne \varphi\in \Secs_0(F|_\Sigma)$. Therefore, the Green
paring on $\Sigma$ is positive definite and non-degenerate, which
concludes the proof. \qed
\end{proof}

\begin{lemma}\label{lem:sols-nondegen}
If the constraints $\dot{c}[\phi] = 0$ are trivial, then the natural
pairing between $T_\phi\S$ and $T^*_\phi\S$ is non-degenerate.
\end{lemma}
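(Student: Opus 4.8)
The plan is to prove non-degeneracy in each slot separately, reducing in both cases to the fundamental lemma of the calculus of variations (as already used in Sect.~\ref{sec:tt*-conf}) together with the exact sequence of Prop.~\ref{prp:exact}. One first checks that the pairing is well defined on classes: for $\psi\in T_\phi\S$ and $\xi\in\Secs_0(F)$ one has $\langle\psi,\dot f^*[\xi]\rangle = \langle\dot f[\psi],\xi\rangle = 0$ since $\psi$ is a linearized solution, and with trivial constraints the equivalence relation on $T^*_\phi\S$ reduces to $\tilde\alpha^*\sim\tilde\alpha^*+\dot f^*[\xi]$, so no $\dot c^*$-term intervenes. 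Non-degeneracy in the first argument is then immediate: if $\langle\psi,[\tilde\alpha^*]\rangle = 0$ for every class $[\tilde\alpha^*]$, then $\int_M\psi\cdot\tilde\alpha^* = 0$ for every $\tilde\alpha^*\in\Secs_0(\tilde F^*)$, and the fundamental lemma forces $\psi = 0$.

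For the second argument, suppose $\tilde\alpha^*\in\Secs_0(\tilde F^*)$ satisfies $\langle\psi,\tilde\alpha^*\rangle = 0$ for all $\psi\in T_\phi\S$. Here the hypothesis of trivial constraints is essential: it makes $T_\phi\S$ equal to $\ker(\dot f\colon\Secs_{SC}(F)\to\Secs_{SC}(\tilde F^*))$, which by exactness of Prop.~\ref{prp:exact} (applied to the linear symmetric hyperbolic operator $\dot f$, whose causal structure is that of the dynamical linearization point $\phi$ and hence globally hyperbolic) is exactly the image of $\G$ on all of $\Secs_0(\tilde F^*)$. So every $\psi\in T_\phi\S$ is $\G[\tilde\beta^*]$ for some $\tilde\beta^*\in\Secs_0(\tilde F^*)$, and the assumption becomes $\langle\G[\tilde\beta^*],\tilde\alpha^*\rangle = 0$ for all $\tilde\beta^*$. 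Invoking the adjoint identity $(\G)^* = -\G^*$ of Sect.~\ref{sec:green-adj} (with $\G^*$ the causal Green function of $\dot f^*$), this rewrites as $\int_M\G^*[\tilde\alpha^*]\cdot\tilde\beta^* = 0$ for all $\tilde\beta^*\in\Secs_0(\tilde F^*)$, so the fundamental lemma gives $\G^*[\tilde\alpha^*] = 0$. Since $\dot f^*$ is again symmetric hyperbolic with the time-reversed (still globally hyperbolic) causal structure, Prop.~\ref{prp:exact} applies to it as well, and exactness at $\Secs_0(\tilde F^*)$ identifies $\ker\G^*$ with $\dot f^*[\Secs_0(F)]$. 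Hence $\tilde\alpha^* = \dot f^*[\xi]$ for some $\xi\in\Secs_0(F)$, that is $[\tilde\alpha^*] = 0$ in $T^*_\phi\S$.

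The step I expect to require the most care is the bookkeeping around the adjoint identities~\eqref{eq:adj-id}: because $\dot f^*$ has future and past interchanged relative to $\dot f$, one must be consistent throughout about which causal structure labels ``retarded'' and ``advanced'', and verify that $\G^*$ obeys the same defining identities and the same exact sequence as $\G$. As an alternative to this Green-function computation --- presumably the reason the Green pairing was introduced just above --- one may instead identify $T_\phi\S$ with compactly supported Cauchy data on a $\phi$-Cauchy surface $\Sigma$ (via existence, uniqueness and finite propagation speed) and recognize the natural pairing, up to a sign, as the Green pairing $\langle-,-\rangle_{G,\Sigma}$ of Def.~\ref{def:green-pairing}, whose positive-definiteness (Lemma~\ref{lem:green-posdef}) yields non-degeneracy at once; I would nonetheless keep the exact-sequence route as the main argument, since it requires no choice of $\Sigma$.
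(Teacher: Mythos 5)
Your argument is correct, but it follows a genuinely different route from the paper's. The paper proves non-degeneracy in the second slot by setting $\xi_+ = \G^*_+[\tilde{\alpha}^*]$ and computing, via the Green form identity, that $\langle \psi, [\tilde{\alpha}^*]\rangle = -\langle \psi|_{\Sigma^+}, \xi_+|_{\Sigma^+}\rangle_{G,\Sigma^+}$ for a Cauchy surface $\Sigma^+$ to the future of $\supp\tilde{\alpha}^*$; since $\psi|_{\Sigma^+}$ ranges over all of $\Secs_0(F|_{\Sigma^+})$, positive-definiteness of the Green pairing (Lem.~\ref{lem:green-posdef}) forces $\xi_+|_{\Sigma^+}=0$, and sweeping $\Sigma^+$ over a foliation of the future shows $\supp\xi_+$ is compact, exhibiting $\tilde{\alpha}^* = \dot{f}^*[\xi_+]$ explicitly. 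You instead parametrize $T_\phi\S = \im\G$ via exactness of Prop.~\ref{prp:exact} at $\Secs_{SC}(F)$, transfer the hypothesis to $\G^*[\tilde{\alpha}^*]=0$ using $(\G)^*=-\G^*$, and then invoke exactness of the adjoint sequence at $\Secs_0(\tilde{F}^*)$ to conclude $\tilde{\alpha}^*\in\dot{f}^*[\Secs_0(F)]$. Both are valid; your route is shorter and Cauchy-surface-free, but it outsources the essential support argument to Prop.~\ref{prp:exact} (whose proof of exactness at $\Secs_0(\tilde{F}^*)$ is, under the hood, the same ``retarded-equals-advanced-hence-compact'' mechanism), and your bookkeeping worry about retarded/advanced conventions for $\dot{f}^*$ is the right one to flag. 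The paper's Green-pairing route earns its keep in the two lemmas that follow (Lem.~\ref{lem:ginv-sols-nondegen} and Lem.~\ref{lem:sols-constr-nondegen}), where constraints and gauge invariance break the clean exact sequence and one can only control $\dot{q}^{\prime*}[\xi_+]$ rather than $\xi_+$ itself; your closing ``alternative'' correctly identifies this as the reason the Green pairing was introduced, though note the paper pairs $\psi|_{\Sigma^+}$ against the restriction of the propagated source $\xi_+$, not against Cauchy data of a second solution.
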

\begin{proof}
Non-degeneracy in the first argument follows again from the fundamental
lemma of the calculus of variations: $\langle \psi, [\tilde{\alpha}^*] \rangle =
0$ for all $\tilde{\alpha}^*\in \Secs_0(\tilde{F}^*)$ implies that $\psi = 0$.

Non-degeneracy in the second argument is more tricky. Suppose we have
$\langle \psi, [\tilde{\alpha}^*] \rangle = 0$ for all $\phi$-spatially
compact linearized solutions $\psi\in T_\phi\S$. From this we need to
deduce that $[\tilde{\alpha}^*] = [0]$, which means $\tilde{\alpha}^* =
\dot{f}^*[\xi]$ for some compactly supported $\xi\in \Secs_0(F)$.  Let
$K=\supp \tilde{\alpha}^*$ and $\iota_+\colon\Sigma^+ \sso M$ a future
oriented, $\phi$-Cauchy surface in the future of $K$ and not
intersecting it. Let $\theta_-$ be the characteristic function of
$I^-(\Sigma^+)$, the past of $\Sigma^+$. If $\Sigma^+$ is defined as a
regular zero set $t=0$ of smooth function $t$ on $M$, then $\d\theta_- =
-\delta(t)\,\d{t}$.

Also, let
$\xi_+ = \G^*_+[\tilde{\alpha}^*]$ be the retarded solution of
$\dot{f}^*[\xi_+] = \tilde{\alpha}^*$, so then
$\supp\xi_+ \sse I^+(K)$. Note that the intersection $I^-(\Sigma^+)
\cap I^+(K)$ is compact and contains $K$. In particular, we have $\theta_-
\tilde{\alpha}^* = \tilde{\alpha}^*$, from which follows
\begin{align}
	\langle \psi, [\tilde{\alpha}^*] \rangle
		&= \langle \psi, \theta_- \tilde{\alpha}^* \rangle
		= \int_M \theta_-\psi \cdot \dot{f}^*[\xi_+] \\
		&= -\int_M \theta_-
				(\dot{f}[\psi]\cdot\xi_+ - \psi\cdot\dot{f}^*[\xi_+])
		= -\int_M \theta_- \d G(\psi,\xi_+) \\
		&= \int_M \d\theta_-\wedge G(\psi,\xi_+)
		= -\int_M \delta(t)\,\d{t}\wedge G(\psi,\xi_+)	\\
		&= -\int_{\Sigma^+} \iota_+^* G(\psi,\xi_+)
		= -\langle \psi|_{\Sigma^+}, \xi_+|_{\Sigma^+} \rangle_{G,\Sigma^+} .
\end{align}
Since $\psi$ is allowed to be any $\phi$-spatially compactly supported
linearized solution, its restriction $\psi|_{\Sigma^+}$ to the Cauchy
surface $\Sigma^+$ could be any element of $\Secs_0(F|_{\Sigma^+})$. The
non-degeneracy of the Green pairing, Lem.~\ref{lem:green-posdef}, then
implies that the restriction $\xi_+|_{\Sigma^+}$ is identically zero.

On the other hand, by global hyperbolicity, we can foliate the future of
$\Sigma_+$ with other $\phi$-Cauchy surfaces and apply the same argument
to each of them. Then $\xi_+$ must vanish identically in the future of
$\Sigma_+$, so its support is contained in the intersection
$I^-(\Sigma_+)\cap I^+(K)$, which we have already noted is compact.
Finally, we have the desired conclusion that $\tilde{\alpha}^* =
\dot{f}^*[\xi]$ with $\xi = \xi_+\in \Secs_0(F)$. \qed
\end{proof}

In the presence of gauge symmetries, the formal tangent space consists
of equivalence classes of linearized solutions up to gauge
transformations. On the other hand, the formal cotangent space is
restricted to equivalence represented by gauge invariant dual densities.
After giving the precise definitions below, we prove that the natural
pairing between these formal tangent and cotangent spaces is
non-degenerate, provided the constraints $\dot{c}[\phi] = 0$ are trivial
and the gauge transformation are globally recognizable. The discussion
of the case with non-trivial constraints is deferred to the next
section.
\begin{definition}\label{def:tt*-sols-gauge}
We define the \emph{formal gauge invariant solutions tangent space} at
$\phi$ as the set of gauge equivalence classes of $\phi$-spacelike
compact linearized solution sections,
\begin{align}
\notag
	T_\phi\bar{\S} = T_\phi\bar{\S}_H(F)
		&= \{ [\psi] \mid \psi\in \Secs_{SC}(F), \dot{f}[\psi] = 0,
			\dot{c}[\phi] = 0 \} , \\
	[\psi] &\sim \psi + \dot{g}[\eps], ~\text{with}~ 
		\eps\in \Secs_{SC}(P) .
\end{align}
The \emph{formal gauge invariant solutions cotangent space} at $\phi$ is
the set of equivalence classes of compactly supported gauge invariant
dual densities,
\begin{align}
\notag
	T^*_\phi\bar{\S} = T^*_\phi\bar{\S}_H(F)
		&= \{ [\tilde{\alpha}^*] \mid \tilde{\alpha}^* \in \Secs_0(\tilde{F}^*) ,
			\dot{g}^*[\tilde{\alpha}^*] = 0 \} , \\
	[\tilde{\alpha}^*] &\sim \tilde{\alpha}^*
		+ \dot{f}^*[\xi] + \dot{c}^*[\tilde{\eps}^*], \\
		&{} \quad \text{with}~
			\xi\in \Secs_0(F), \tilde{\eps}^* \in \Secs_0(\tilde{E}^*) \\
		&{} \quad \text{and}~
			\dot{g}^*[\dot{f}^*[\xi] + \dot{c}^*[\tilde{\eps}^*]] = 0 .
\end{align}
The natural pairing $\langle-,-\rangle\colon T_\phi\bar{\S}\times
T^*_\phi\bar{\S} \to \R$ is
\begin{equation}
	\langle [\psi], [\tilde{\alpha}^*] \rangle
		= \int_M \psi\cdot\tilde{\alpha}^* .
\end{equation}
\end{definition}

\begin{lemma}\label{lem:ginv-sols-nondegen}
If the constraints $\dot{c}[\phi] = 0$ are trivial and the gauge
transformations are globally recognizable (cf.~diagrams~\eqref{eq:glrec}
and~\eqref{eq:glrec*}), then the natural pairing between $T_\phi\bar\S$
and $T^*_\phi\bar\S$ is non-degenerate.
\end{lemma}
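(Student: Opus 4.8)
The plan is to obtain this lemma as a combination of Lemma~\ref{lem:sols-nondegen} (which handles the hyperbolic‑solution constraint in the absence of a $\dot{c}$‑obstruction) and Lemma~\ref{lem:ginv-full-nondegen} (which handles the gauge quotient). First I would check that the pairing descends to the quotients: if $\tilde{\alpha}^*$ is gauge invariant, then $\langle \dot{g}[\eps],\tilde{\alpha}^* \rangle = \langle \eps,\dot{g}^*[\tilde{\alpha}^*] \rangle = 0$, while if $\psi$ is a linearized solution of the (trivially constrained) hyperbolic system, then $\langle \psi,\dot{f}^*[\xi] + \dot{c}^*[\tilde{\eps}^*] \rangle = \langle \dot{f}[\psi],\xi \rangle + \langle \dot{c}[\psi],\tilde{\eps}^* \rangle = 0$; the integrand $\psi\cdot\tilde{\alpha}^*$ has compact support because one factor is spacelike compact and the other compactly supported. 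Since, by Def.~\ref{def:tt*-sols-gauge}, every class $[\psi]\in T_\phi\bar{\S}$ is represented by an honest $\phi$‑spacelike compact linearized solution, testing the pairing against all of $T_\phi\bar{\S}$ amounts to testing $\langle\psi,\tilde{\alpha}^*\rangle$ against all $\psi\in T_\phi\S$.

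For non‑degeneracy in the first slot I would reproduce the argument of Lemma~\ref{lem:ginv-full-nondegen} almost verbatim, the solution property of $\psi$ playing no role. Suppose $\langle[\psi],[\tilde{\alpha}^*]\rangle = 0$ for all gauge‑invariant $[\tilde{\alpha}^*]$. Testing against the admissible (compactly supported, gauge‑invariant) dual densities $\tilde{\alpha}^* = \dot{g}^{\prime*}[\tilde{\eps}^{\prime*}]$ — which are gauge invariant because the first column of \eqref{eq:glrec*} is a complex, $\dot{g}^*\circ\dot{g}^{\prime*}=0$ — gives $\langle \dot{g}'[\psi],\tilde{\eps}^{\prime*} \rangle = 0$ for all $\tilde{\eps}^{\prime*}\in\Secs_0(\tilde{P}^{\prime*})$, hence $\dot{g}'[\psi]=0$ by the fundamental lemma of the calculus of variations. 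Exactness of the column through $\Secs_{SC}(F)$ in \eqref{eq:glrec} — part of the global recognizability hypothesis — then produces $\eps\in\Secs_{SC}(P)$ with $\psi = \dot{g}[\eps]$, i.e.\ $[\psi]=0$.

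For non‑degeneracy in the second slot I would reduce to Lemma~\ref{lem:sols-nondegen}. If $\langle[\psi],[\tilde{\alpha}^*]\rangle = 0$ for all $[\psi]\in T_\phi\bar{\S}$, then by the reduction above $\langle\psi,\tilde{\alpha}^*\rangle = 0$ for every $\psi\in T_\phi\S$, so Lemma~\ref{lem:sols-nondegen} (applicable because the constraints are trivial) gives $\tilde{\alpha}^* = \dot{f}^*[\xi]$ with $\xi\in\Secs_0(F)$ (concretely $\xi = \G^*_+[\tilde{\alpha}^*]$, with compact support). It remains only to observe that this primitive is automatically admissible for the equivalence relation defining $T^*_\phi\bar{\S}$: since $\tilde{\alpha}^*$ is gauge invariant, $\dot{g}^*[\dot{f}^*[\xi]] = \dot{g}^*[\tilde{\alpha}^*] = 0$, so $\tilde{\alpha}^* = \dot{f}^*[\xi] + \dot{c}^*[0]$ with the added term gauge invariant, i.e.\ $[\tilde{\alpha}^*] = [0]$ in $T^*_\phi\bar{\S}$.

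I do not expect a genuine analytic obstacle here: the argument is pure bookkeeping, and the main care is to match the support classes ($\Secs_0$ versus $\Secs_{SC}$) so that the correct row of each diagram supplies each surjectivity and the correct column supplies each exactness. The one conceptual point is confirming that the gauge quotient and the constraint quotient do not interfere, and this is exactly where the triviality of the constraints is used — it removes both the condition $\dot{c}[\psi]=0$ from the definition of $T_\phi\bar{\S}$ and the $\dot{c}^*$‑image from that of $T^*_\phi\bar{\S}$. Dropping that hypothesis would migrate the real difficulty to precisely this interaction, which is the subject of the next section.
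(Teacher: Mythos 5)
Your proposal is correct and follows essentially the same route as the paper: the first slot is handled by testing against gauge-invariant densities of the form $\dot{g}^{\prime*}[\tilde{\eps}^{\prime*}]$ and invoking the exactness of the relevant column of diagram~\eqref{eq:glrec}, while the second slot reduces to the Green-pairing argument of Lem.~\ref{lem:sols-nondegen} (which the paper simply re-runs inline rather than citing as a black box). Your two extra bookkeeping checks --- that the pairing descends to the quotients and that the primitive $\dot{f}^*[\xi]$ satisfies the admissibility condition $\dot{g}^*[\dot{f}^*[\xi]]=0$ required by Def.~\ref{def:tt*-sols-gauge} --- are left implicit in the paper but are worth making explicit.
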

\begin{proof}
Unfortunately, we cannot directly rely on the fundamental lemma of the
calculus of variations to prove non-degeneracy in the second argument of
$\langle [\psi], [\tilde{\alpha}^*] \rangle = \langle \psi,
\tilde{\alpha}^* \rangle$, since we are only allowed to use equivalence
classes represented by gauge invariant dual densities,
$\dot{g}^*[\tilde{\alpha}^*] = 0$. Fortunately, any dual density of the
form $\tilde{\alpha}^* = \dot{g}^{\prime*}[\tilde{\beta}^{\prime*}]$,
with unrestricted choice of $\tilde{\beta}^{\prime*} \in
\Secs_0(\tilde{P}^{\prime*})$, is an allowed representative. Then the
fundamental lemma of the calculus of variations and
\begin{equation}
	\langle [\psi], [\tilde{\alpha}^*] \rangle
	= \langle \psi, \dot{g}^{\prime*}[\tilde{\beta}^{\prime*}] \rangle
	= \langle \dot{g}'[\psi], \tilde{\beta}^{\prime*} \rangle = 0 ,
\end{equation}
for arbitrary $\tilde{\beta}^{\prime*}$, imply that $\dot{g}'[\psi] =
0$. By global recognizability of gauge transformations, or more
precisely the exactness of the third column of
diagram~\eqref{eq:glrec}, we have $\psi = \dot{g}[\eps]$ for some
$\eps\in \Secs_{SC}(P)$. That is, $\psi$ must be pure gauge, $[\psi] =
[0]$, which proves non-degeneracy of the natural pairing in the second
argument.

To prove non-degeneracy in the first argument, we proceed as in the
proof of Lem.~\ref{lem:sols-nondegen}. Namely, assume that we have a
gauge invariant dual density $\dot{g}^*[\tilde{\alpha}^*]=0$, such that $\langle
[\psi], [\tilde{\alpha}^*] \rangle$ for all $[\psi] \in T_\phi\bar\S$,
and let $\xi_+ = \G^*_+[\tilde{\alpha}^*]$, so that $\dot{f}^*[\xi_+] =
\tilde{\alpha}^*$.  Also, let $K = \supp\tilde{\beta}^{\prime*}$,
$\Sigma^+ \sso M$ a future oriented, $\phi$-Cauchy surface that does not
intersect $K$ itself but does intersect $I^+(K)$, and $\theta_-$ the
characteristic function of $I^-(\Sigma^+)$. Then
\begin{equation}
	\langle [\psi], [\tilde{\alpha}^*] \rangle
	= \langle \psi, \theta_- \dot{f}^*[\xi_+] \rangle
	= - \langle \psi|_{\Sigma^+}, \xi_+|_{\Sigma^+} \rangle_{G,\Sigma^+} ,
\end{equation}
which implies that $\supp \xi_+ \sso I^-(\Sigma^+)$ and hence is
compact. In other words, with $\xi = \xi_+ \in \Secs_0(F)$, we have
$[\tilde{\alpha}^*] = [\dot{f}^*[\xi]] = [0]$ and the that natural
pairing is non-degenerate in the first argument as well. \qed
\end{proof}

\subsubsection{Formal $T$ and $T^*$ for solutions (with constraints)}
\label{sec:tt*-sols-constr}
When the constraints are non-trivial, the proof that
the natural pairing between formal tangent and cotangent spaces is
non-degenerate complicates a bit. In fact, it requires the sufficient
condition that the constraints be globally pa\-ra\-metr\-izable. We
first handle the case without gauge invariance.

\begin{lemma}\label{lem:sols-constr-nondegen}
If the constraints $\dot{c}[\phi] = 0$ are globally parametrizable
(cf.~diagrams~\eqref{eq:glpar} and~\eqref{eq:glpar*}), then the natural
pairing between $T_\phi\S$ and $T^*_\phi\S$ is non-degenerate.
\end{lemma}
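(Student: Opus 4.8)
The plan is to establish non-degeneracy in each slot separately. The first slot is routine: the pairing descends to the cotangent quotient because $\langle\psi,\dot f^*[\xi]\rangle=\langle\dot f[\psi],\xi\rangle=0$ and $\langle\psi,\dot c^*[\tilde\eps^*]\rangle=\langle\dot c[\psi],\tilde\eps^*\rangle=0$ for $\psi\in T_\phi\S$, and then $\langle\psi,\tilde\alpha^*\rangle=0$ for all $\tilde\alpha^*\in\Secs_0(\tilde F^*)$ forces $\psi=0$ by the fundamental lemma of the calculus of variations, exactly as in Lem.~\ref{lem:sols-nondegen}.

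For the second slot, let $\tilde\alpha^*\in\Secs_0(\tilde F^*)$ annihilate $T_\phi\S$ and set $\chi=\G^*[\tilde\alpha^*]\in\Secs_{SC}(F)$, so that $\dot f^*[\chi]=0$. The first step I would carry out is a \emph{reduction}: $[\tilde\alpha^*]=[0]$ if and only if $\chi=\dot q^*[\gamma]$ for some spacelike compact solution $\gamma$ of the consistency system, $\dot h^*[\gamma]=0$. Indeed, if $\tilde\alpha^*=\dot f^*[\xi]+\dot c^*[\tilde\eps^*]$ then, applying $\G^*$ and using $\G^*\circ\dot f^*=0$ on $\Secs_0(F)$ together with the commuting square $\G^*\circ\dot c^*=\dot q^*\circ\H^*$ of~\eqref{eq:glpar*}, one gets $\chi=\dot q^*[\gamma]$ with $\gamma=\H^*[\tilde\eps^*]\in\Secs_{SC}(E)$, which solves $\dot h^*[\gamma]=0$; conversely, given such a $\gamma$, exactness of the $\dot h^*$-hyperbolic row of~\eqref{eq:glpar*} (cf.~Prop.~\ref{prp:exact}) writes $\gamma=\H^*[\tilde\eps^*]$ with $\tilde\eps^*\in\Secs_0(\tilde E^*)$, the same square gives $\G^*[\tilde\alpha^*-\dot c^*[\tilde\eps^*]]=0$, and exactness of the $\dot f^*$-hyperbolic row produces $\xi\in\Secs_0(F)$ with $\tilde\alpha^*-\dot c^*[\tilde\eps^*]=\dot f^*[\xi]$.

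The second step is to produce such a $\gamma$, and this is where global parametrizability enters. Exactness of the $\dot f$-hyperbolic row plus commutativity of~\eqref{eq:glpar} shows $\G[\dot q'[\eta]]\in T_\phi\S$ for all $\eta\in\Secs_0(\tilde{E}^{\prime*})$ (its $\dot c$-image is $\H[\dot q[\dot q'[\eta]]]=0$ since $\dot q\circ\dot q'=0$), so from $0=\langle\G[\dot q'[\eta]],\tilde\alpha^*\rangle=-\langle\eta,\dot{q}^{\prime*}[\chi]\rangle$ (using $(\G)^*=-\G^*$, Sect.~\ref{sec:green-adj}) and the fundamental lemma we get $\dot{q}^{\prime*}[\chi]=0$. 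Exactness of the column elliptic complexes of~\eqref{eq:glpar*} — which is precisely the global parametrizability hypothesis — then gives $\chi=\dot q^*[\delta_0]$ for some $\delta_0\in\Secs_{SC}(E)$; the adjoint consistency identity $\dot f^*\circ\dot q^*=\dot c^*\circ\dot h^*$ together with $\dot f^*[\chi]=0$ forces $\dot c^*[\dot h^*[\delta_0]]=0$; and, since $\dot c^*$ is the leading map of the column elliptic complex through $\Secs_{SC}(\tilde E^*)$ and hence injective there, $\dot h^*[\delta_0]=0$. Thus $\gamma=\delta_0$ works and the reduction concludes the proof.

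The step I expect to be the main obstacle is the careful invocation of global parametrizability in the second step: one must check that the stated exactness of the column elliptic complexes of~\eqref{eq:glpar} and~\eqref{eq:glpar*} genuinely delivers both the surjectivity $\ker\dot{q}^{\prime*}|_{\Secs_{SC}(F)}=\im\dot q^*|_{\Secs_{SC}(E)}$ and the injectivity of $\dot c^*$ on spacelike compact constraint densities, and that all the support conventions (compact versus spacelike compact) stay consistent through the chase. A more hands-on alternative mirrors the end of Lem.~\ref{lem:sols-nondegen}: cutting $\tilde\alpha^*$ against a future $\phi$-Cauchy surface $\Sigma$ reduces matters to showing that the restriction to $\Sigma$ of the retarded solution of $\dot f^*[\,\cdot\,]=\tilde\alpha^*$ is Green-orthogonal (Lem.~\ref{lem:green-posdef}) to the constrained Cauchy data induced on $\Sigma$, then recognizing that restriction via the constraint complex transferred to $\Sigma$ as a multiplier term $(\dot q^*[\gamma])|_\Sigma$, and foliating the future of $\Sigma$ to confine supports; there the delicate point is instead the transfer of the constraint elliptic complex to the Cauchy surface.
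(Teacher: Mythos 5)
Your proof is correct, and the core of your second‑slot argument is genuinely different from the paper's. The paper works with the \emph{retarded} solution $\xi_+=\G^*_+[\tilde{\alpha}^*]$: it pairs against test solutions $\dot{c}'[\psi']$ with $\dot{h}'[\psi']=0$ through the positive-definite Green pairing on a Cauchy surface $\Sigma^+$ to the future of $\supp\tilde{\alpha}^*$ (via the transfer identity $\langle \dot{c}'[\psi'],\xi_+\rangle_{G,\Sigma^+}=\langle\psi',\dot{q}^{\prime*}[\xi_+]\rangle_{H',\Sigma^+}$), concludes that $\dot{q}^{\prime*}[\xi_+]$ vanishes on the future of $\Sigma^+$ by foliating with further Cauchy surfaces, then propagates this to all of $M$ by solving $\dot{f}^*[\eta]=0$ with $\eta=\xi_+$ there and invoking hyperbolic integrability of $\dot{q}^{\prime*}[\eta]=0$, and finally subtracts the retarded piece $\dot{q}^*[\H^*_+[\tilde{\eps}^*]]$ to land in $\Secs_0(F)$ --- exactly the ``hands-on alternative'' you sketch at the end. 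You instead work with the \emph{causal} Green function $\chi=\G^*[\tilde{\alpha}^*]$ and obtain $\dot{q}^{\prime*}[\chi]=0$ globally in one stroke from $(\G)^*=-\G^*$ and the fundamental lemma; note that your test family $\G[\dot{q}'[\eta]]=\dot{c}'[\H'[\eta]]$ coincides with the paper's by the commuting square and row exactness of diagram~\eqref{eq:glpar}, but your computation bypasses the Cauchy-surface Green pairing and the propagation step entirely, the support bookkeeping being absorbed into your reduction lemma through $\ker\G^*=\im(\dot{f}^*|_{\Secs_0(F)})$. Your version is shorter and makes the role of diagrams~\eqref{eq:glpar} and~\eqref{eq:glpar*} more transparent; the paper's is more local and analytically self-contained, using only uniqueness of the Cauchy problem and Lem.~\ref{lem:green-posdef}. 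The one delicate point you flag --- deducing $\dot{h}^*[\delta_0]=0$ from $\dot{c}^*[\dot{h}^*[\delta_0]]=0$, i.e.\ injectivity of $\dot{c}^*$ on $\Secs_{SC}(\tilde{E}^*)$, which requires reading ``trivial cohomology of the columns'' as including the end terms --- is silently used by the paper as well, in the step asserting $\eta=\dot{q}^*[\H^*[\tilde{\eps}^*]]$, so you are not introducing a gap that the paper's own proof avoids.
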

\begin{proof}
Non-degeneracy in the first argument follows again from the fundamental
lemma of the calculus of variations: $\langle \psi, [\tilde{\alpha}^*]
\rangle = \langle \psi, \tilde{\alpha}^* \rangle = 0$ for all $\alpha\in
\Secs_0(\tilde{F}^*)$ implies that $\psi=0$.

Non-degeneracy in the second argument means that $\langle \psi,
[\tilde{\alpha}^*] \rangle = 0$ for all $\psi\in T_\phi\S$ implies
$[\tilde{\alpha}^*] = [0]$, which means that
$\dot{g}^*[\tilde{\alpha}^*] = 0$ and $\tilde{\alpha}^* = \dot{f}^*[\xi]
+ \dot{c}^*[\tilde{\eps}^*]$ for some $\xi\in \Secs_0(F)$ and
$\tilde{\eps}^*\in \Secs_0(\tilde{E}^*)$.  Using the same approach as in
the unconstrained case, define $\xi_+ = \G^*_+[\tilde{\alpha}^*]$, so
that $\dot{f}^*[\xi_+] = \tilde{\alpha}^*$.  Also, let $K=\supp \alpha$
and $\Sigma^+\sso M$ a future oriented, $\phi$-Cauchy surface that does
not intersect $K$ itself but does intersect $I^+(K)$.  Unfortunately, we
will not be able to show that $\xi_+$ vanishes outside a compact set,
but instead we will show that $\dot{q}^{\prime*}[\xi_+]$ vanishes
outside a compact set.

To do that, we first need the identity
\begin{equation}
	\langle \dot{c}'[\psi']|_{\Sigma^+}, \xi_+|_{\Sigma^+} \rangle_{G,\Sigma^+}
	= \langle \psi'|_{\Sigma^+}, \dot{q}^{\prime*}[\xi_+]|_{\Sigma^+}
		\rangle_{H',\Sigma^+} ,
\end{equation}
where $H'$ is the Green form associated to the symmetric hyperbolic
operator $\dot{h}'$ and $\psi'\in \Secs_{SC}(E')$ satisfies
$\dot{h}'[\psi'] = 0$, hence also $\dot{f}\circ\dot{c}'[\psi] = 0$. Let
$C'$ denote the Green forms associated to the operator $\dot{c}'$. This
identity then follows from
\begin{align}
	\d G(\dot{c}'[\psi'],\xi_+)
	&= -\dot{c}'[\psi']\cdot \dot{f}^*[\xi_+] \\
	&= -\psi'\cdot \dot{c}^{\prime*}[\dot{f}^*[\xi_+]]
			-\d C'(\psi',\dot{f}^*[\xi_+]) \\
	&= -\psi'\cdot \dot{h}^{\prime*}[\dot{q}^{\prime*}[\xi_+]]
			-\d C'(\psi',\tilde{\alpha}^*) \\
	&= \d H'(\psi',\dot{q}^{\prime*}[\xi_+])
			-\d C'(\psi',\tilde{\alpha}^*)
\end{align}
and the fact that $\supp C'(\psi',\tilde{\alpha}^*) \sse \supp
\tilde{\alpha}^*$ does not intersect $\Sigma^+$. Since we are certainly
allowed to use $\psi = \dot{c}'[\psi']$, using the same argument as in
Lem.~\ref{lem:sols-nondegen}, we have
\begin{equation}
	\langle \dot{c}'[\psi], [\tilde{\alpha}^*] \rangle
	= -\langle \dot{c}'[\psi']|_{\Sigma^+}, \xi_+|_{\Sigma^+} \rangle_{G,\Sigma^+}
	= -\langle \psi'|_{\Sigma^+}, \dot{q}^{\prime*}[\xi_+]|_{\Sigma^+}
		\rangle_{H',\Sigma^+} ,
\end{equation}
which forces $\dot{q}^{\prime*}[\xi_+] = 0$ on $\Sigma^+$. In fact, we
can also conclude that the same equality holds on a neighborhood $S^+$
of $\Sigma^+$ that contains $I^+(\Sigma^+)$, on which we have
$\dot{f}^*[\xi_+] = 0$ as well.

Now, let $\eta$ denote the unique solution of $\dot{f}^*[\eta] = 0$ on
$M$ such that $\eta = \xi_+$ on $S^+$. Since, by the global
parametrizability assumption (cf.~diagram~\eqref{eq:glpar*}) the
constraint $\dot{q}^{\prime*}[\eta] = 0$ is hyperbolically integrable,
it is satisfied everywhere in $M$ as well. Therefore, from the results
of Sect.~\ref{sec:caus-green}, it follows that there exists
$\tilde{\eps}^*\in \Secs_0(\tilde{E}^*)$ such that $\eta =
\dot{q}^*[\H^*[\tilde{\eps}^*]]$.

Let $\eta_+ = \dot{q}^*[\H_+^*[\tilde{\eps}^*]]$ and $\xi = \xi_+ -
\eta_+$. Then, by construction, $\xi$ has compact support and
\begin{align}
	\tilde{\alpha}^*
	&= \dot{f}^*[\xi_+] = \dot{f}^*[\xi] + \dot{f}^*[\eta_+] \\
	&= \dot{f}^*[\xi] + \dot{f}^*[\dot{q}^*[\H_+^*[\tilde{\eps}^*]]]
	= \dot{f}^*[\xi] + \dot{c}^*[\dot{h}^*[\H_+^*[\tilde{\eps}^*]]] \\
	&= \dot{f}^*[\xi] + \dot{c}^*[\tilde{\eps}^*] ,
\end{align}
as desired, implying $[\tilde{\alpha}^*] = [0]$, which completes the
argument that the natural pairing is non-degenerate in the second
argument. \qed
\end{proof}

The final non-degeneracy result is for the case where both non-trivial
constraints and gauge transformations are present.

\begin{lemma}\label{lem:ginv-constr-nondegen}
If the constraints are globally parametrizable
(cf.~diagrams~\eqref{eq:glpar} and~\eqref{eq:glpar*}) and the gauge
transformations are globally recognizable (cf.~diagrams~\eqref{eq:glrec}
and~\eqref{eq:glrec*}), then the natural pairing between $T_\phi\bar\S$
and $T^*_\phi\bar\S$ is non-degenerate.
\end{lemma}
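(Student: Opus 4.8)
The plan is to assemble this result from the two refinements already in hand: the gauge bookkeeping of the proof of Lem.~\ref{lem:ginv-sols-nondegen} and the constraint argument of the proof of Lem.~\ref{lem:sols-constr-nondegen}. As in those cases, non-degeneracy in the second argument is the ``gauge half'' and non-degeneracy in the first argument is the ``constraint half''; the genuinely new work is checking that the two elliptic-complex structures, that of diagrams~\eqref{eq:glpar}--\eqref{eq:glpar*} and that of~\eqref{eq:glrec}--\eqref{eq:glrec*}, do not obstruct one another.

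For non-degeneracy in the \emph{second} argument I would repeat the argument of Lem.~\ref{lem:ginv-sols-nondegen} essentially verbatim: given $[\psi]$ with $\langle[\psi],[\tilde{\alpha}^*]\rangle=0$ for all gauge-invariant $\tilde{\alpha}^*$, it suffices to test against the admissible representatives $\tilde{\alpha}^*=\dot{g}^{\prime*}[\tilde{\beta}^{\prime*}]$ (admissible because these automatically satisfy $\dot{g}^*[\tilde{\alpha}^*]=0$), whence $\langle\dot{g}'[\psi],\tilde{\beta}^{\prime*}\rangle=0$ for arbitrary $\tilde{\beta}^{\prime*}\in\Secs_0(\tilde{P}^{\prime*})$; the fundamental lemma of the calculus of variations forces $\dot{g}'[\psi]=0$, and exactness of the third column of~\eqref{eq:glrec} yields $\psi=\dot{g}[\eps]$ with $\eps\in\Secs_{SC}(P)$, i.e. $[\psi]=[0]$. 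This half is entirely insensitive to the presence of constraints.

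For non-degeneracy in the \emph{first} argument I would transplant the argument of Lem.~\ref{lem:sols-constr-nondegen}, now testing only against gauge classes. Assume $\dot{g}^*[\tilde{\alpha}^*]=0$ and $\langle[\psi],[\tilde{\alpha}^*]\rangle=0$ for all $[\psi]\in T_\phi\bar{\S}$. Put $\xi_+=\G^*_+[\tilde{\alpha}^*]$, choose a future-oriented $\phi$-Cauchy surface $\Sigma^+$ with $K=\supp\tilde{\alpha}^*$ strictly to its past, and test against $\psi=\dot{c}'[\psi']$ with $\psi'\in\Secs_{SC}(E')$, $\dot{h}'[\psi']=0$; because $\dot{c}\circ\dot{c}'=0$ and $\dot{f}\circ\dot{c}'=0$ such a $\psi$ is a genuine spacelike-compact linearized solution of the constrained system and hence a legitimate element of $T_\phi\bar{\S}$. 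The same Green-form computation as in Lem.~\ref{lem:sols-constr-nondegen} gives $\langle\psi'|_{\Sigma^+},\dot{q}^{\prime*}[\xi_+]|_{\Sigma^+}\rangle_{H',\Sigma^+}=0$, and Lem.~\ref{lem:green-posdef} forces $\dot{q}^{\prime*}[\xi_+]=0$ on a neighbourhood $S^+\supset I^+(\Sigma^+)$, on which also $\dot{f}^*[\xi_+]=0$. Solving $\dot{f}^*[\eta]=0$ globally with $\eta=\xi_+$ on $S^+$, propagating the adjoint constraint $\dot{q}^{\prime*}[\eta]=0$ off $S^+$ by its hyperbolic integrability (diagram~\eqref{eq:glpar*}), and invoking the exactness results of Sect.~\ref{sec:caus-green} to write $\eta=\dot{q}^*[\H^*[\tilde{\eps}^*]]$ with $\tilde{\eps}^*\in\Secs_0(\tilde{E}^*)$, the corrected potential $\xi=\xi_+-\dot{q}^*[\H_+^*[\tilde{\eps}^*]]$ then has compact support and satisfies $\tilde{\alpha}^*=\dot{f}^*[\xi]+\dot{c}^*[\tilde{\eps}^*]$, so $[\tilde{\alpha}^*]=[0]$.

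The step I expect to be the main obstacle is reconciling this decomposition with the equivalence relation of Def.~\ref{def:tt*-sols-gauge}, which requires the correction to satisfy $\dot{g}^*[\dot{f}^*[\xi]+\dot{c}^*[\tilde{\eps}^*]]=0$: this is automatic here since $\dot{f}^*[\xi]+\dot{c}^*[\tilde{\eps}^*]=\tilde{\alpha}^*$ and $\dot{g}^*[\tilde{\alpha}^*]=0$ by hypothesis, but one must still check that the auxiliary data entering the two halves are mutually consistent --- in particular that the compound gauge system of~\eqref{eq:glrec}, the compound constraint system of~\eqref{eq:glpar}, and their adjoints all carry the \emph{same} globally hyperbolic causal structure as the linearization point $\phi$, so that ``spacelike compact'' is unambiguous and the various Green functions $\G^*_\pm$, $\H^*_\pm$ used across the two arguments agree on overlaps. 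Granting this (as assumed in Sect.~\ref{sec:constr}) together with the already-established non-degeneracies, the two halves combine with no further obstruction, completing the proof.
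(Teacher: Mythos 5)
Your proposal is correct and is exactly the route the paper takes: the paper's own ``proof'' consists of the single remark that the result follows by combining the methods of Lem.~\ref{lem:ginv-sols-nondegen} (the gauge half, giving $[\psi]=[0]$ via $\dot{g}^{\prime*}$-representatives and exactness of the third column of~\eqref{eq:glrec}) and Lem.~\ref{lem:sols-constr-nondegen} (the constraint half, giving $[\tilde{\alpha}^*]=[0]$ via the test solutions $\dot{c}'[\psi']$ and the correction $\xi=\xi_+-\dot{q}^*[\H_+^*[\tilde{\eps}^*]]$). Your write-up supplies the details the paper leaves implicit, including the useful observation that the extra condition $\dot{g}^*[\dot{f}^*[\xi]+\dot{c}^*[\tilde{\eps}^*]]=0$ in Def.~\ref{def:tt*-sols-gauge} holds automatically because the sum equals the gauge-invariant $\tilde{\alpha}^*$.
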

The proof follows straight forwardly by combining the methods of the
preceding non-degeneracy lemmas, namely
Lem.~\ref{lem:ginv-sols-nondegen} and
Lem.~\ref{lem:sols-constr-nondegen}.

\subsubsection{Formal local differential forms}\label{sec:formal-forms}
In this section, we use differential forms on $J^\oo F$ to construct
formal differential forms on $\Secs_H(F)$ or $\S_H(F)$.

Consider any form $\alpha\in\Forms^{h,v}(F)$ and suppose that
$\phi\colon M\to F$ is any section. We already know that the pullback
along the jet prolongation $j^\oo\phi$ intertwines the horizontal
differential $\dh$ with the de~Rham differential on $M$,
$\d(j^\oo\phi)^*\alpha = (j^\oo\phi)^* \dh\alpha$. However, if $\alpha$
has non-zero vertical degree, $v\ne 0$, then automatically
$(j^\oo\phi)^*\alpha = 0$ from the defining properties of vertical
forms. Though, if $\alpha$ is first contracted with $v$ evolutionary
vector fields $\hat{\psi}_j$ before effecting the pullback, the result
\begin{equation}
	(j^\oo\phi)^*\iota_{\hat{\psi}_{1}}\cdots\iota_{\hat{\psi}_{v}} \alpha ,
\end{equation}
being the pullback of vertical degree $0$ form, is not-necessarily zero.
Recall that, a section $\psi\colon M\to F$ defines a corresponding
evolutionary vector field $\hat{\psi}$ that has the action
$\hat{\psi}(x^i) = 0$, $\hat{\psi}(u^a_I) = \del_I \psi^a$ in local
coordinates $(x^i,u^a_I)$.

In fact, if we were concerned with calculus on $\Secs_H(F)$ or
$\S_H(F)$, the infinite dimensional manifolds of sections of $F\to M$
and of solution sections of a PDE system, we could use vertical
differential forms on $J^\oo F$ to define \emph{local} differential
forms on the manifold of sections. A \emph{point local} form $A$ depends
only on a specific point $x\in M$. That is, evaluated at
$\phi\in\Secs_H(F)$, $A$ depends only on $\phi(x)$ and finitely many of
its derivatives at the same point. A \emph{local form} is an integral
over an $M$-family of point local forms, with respect to some smooth
density on $M$. A degree $v$ form on $\Secs_H(F)$ or $\S_H(F)$ would be
a functional of $v$ tangent vectors $\psi_j$, which would correspond to
sections of $F\to M$ or linearized solution sections. Since we are not
delving into the details of how these infinite dimensional objects are
defined, we are content to formally associate a local differential form
$A_N[\phi]$ of degree $v$ on $\Secs_H(F)$ to each differential form
$\alpha$ of degree $(h,v)$ on $J^\oo F$ as follows:
\begin{equation}\label{eq:vform-def}
	A_N(\psi_1,\ldots,\psi_v)[\phi]
	= \int_N (j^\oo\phi)^*[\iota_{\hat{\psi}_v}\cdots\iota_{\hat{\psi}_1}
		\alpha] ,
\end{equation}
where $N\sse M$ is a submanifold of dimension $h$.

Recall that, given local coordinates $(x^i,u^a)$ on $F$, an evaluation
functional is a function of $\Secs_H(F)$ of the form $\Phi_x^a[\phi] =
\phi^a(x)$. A vector field $\Psi$ on $\Secs_H(F)$ with $(\Lie_\Psi
\Phi_x^a)[\phi]$ depending only on $\phi(x)$ and finitely many
derivatives at the same point is called a \emph{local vector field}.  We
can formally associate a local vector $\Psi$ field to any evolutionary
vector field $\hat{\psi}$ on $J^\oo F$, by defining its action on formal
local $0$-forms on $\Secs_H(F)$ by
\begin{equation}
	(\Lie_\Psi A_N)[\phi] = \int_N (j^\oo\phi)^* \Lie_{\hat{\psi}} \alpha .
\end{equation}
This is enough to define a formal exterior derivative $\delta$ on
$\Secs_H(F)$, whose action on formal local forms can be formally given
by the standard expressions with contractions and Lie derivatives. It
boils down to
\begin{align}
	(\delta A_N)(\Psi_0,\cdots,\Psi_v)[\phi]
	&= \sum_{j=0}^v (-)^j [\Lie_{\Psi_j}
			(\iota_{\Psi_v}\cdots \check{\iota}_{\Psi_j}
				\cdots \iota_{\Psi_0} A_N)[\phi] \\
\notag & {} \qquad
			- \Lie_{\Psi_j}(\iota_{\Psi_v}\cdots \iota_{\Psi_{j+1}})
				\iota_{\Psi_{j-1}} \cdots \iota_{\Psi_0} A_N[\phi] ] \\
	&= \sum_{j=0}^v (-)^j \int_N (j^\oo\phi)^*
		[\Lie_{\hat{\psi}_j}
			(\iota_{\hat{\psi}_v}\cdots \check{\iota}_{\hat{\psi}_{j}}
				\cdots\iota_{\hat{\psi}_0}\alpha) \\
\notag
	&\qquad {}
		- \Lie_{\hat{\psi}_j}(\iota_{\hat{\psi}_v}\cdots\iota_{\hat{\psi}_{j+1}})
				\iota_{\hat{\psi}_{j-1}}\cdots\iota_{\hat{\psi}_0} \alpha] \\
	&= (j^\oo\phi)^*\iota_{\hat{\psi}_v}\cdots\iota_{\hat{\psi}_0}
			\dv \alpha ,
\end{align}
where $\check{}$ means that the marked symbol is omitted. The last
formula gives the real significance of the vertical differential. To
derive the first equality, one must make use of the formal
identification~\eqref{eq:vform-def} and the identity~\eqref{eq:jet-Lie}.
The second equality results from simplifications in the usual
differential calculus on $J^\oo F$.

\subsection{Symplectic and Poisson structure}\label{sec:symp-pois}
In this section, we endow the space of solutions $\S_H(F)$ of a
variational PDE system with the structures of both a symplectic and a
Poisson manifold (or rather formal versions of these structures), really
turning it into the phase space of classical field theory. First, a
variational system has to be put into symmetric hyperbolic forms,
possibly with constraints. If the constraints are globally
parametrizable, the gauge transformations globally recognizable and the
hyperbolization of the variational Euler-Lagrange system satisfies an
extra sufficient condition (which identifies part of the constraints
with gauge fixing conditions), then we can use a generalized Peierls
formula to construct the (formal) Poisson bivector and hence show that
the covariant presymplectic structure is invertible and hence actually
symplectic.

\subsubsection{Constraints and gauge fixing}\label{sec:constr-gf}
Consider a Lagrangian density $\L\in \Omega^{n,0}(F)$ of order $k$ and the
corresponding Euler-Lagrange PDE system $\E_\EL\sso J^{2k}F$, with equation
form $(\EL,\tilde{F}^*)$. More often then not, the EL equation form is
not directly an adapted equation form for a symmetric hyperbolic system.
That could be true for several reasons. It is possible that $\E_\EL$ is
in fact symmetric hyperbolic and that the EL equations need only be
hyperbolized. However, if the EL equations are not of first order, then
would need to be first reduced to first order form (this may be done
already at the level of the Lagrangian, by introducing auxiliary
fields). On the other hand, the EL equations may only admit a
hyperbolization after several prolongations that identify all the
relevant integrability conditions. Finally, if the Lagrangian is
singular (it has non-trivial gauge invariance), the EL system cannot be
hyperbolized at all, since systems with gauge invariance do not have a
well-posed initial value problem. In that case, one would first have to
introduce gauge fixing conditions, whose conjunction with the EL system
is hyperbolizable (Sect.~\ref{sec:integrability}).

Below, we postulate a condition on the hyperbolization of a gauge fixed
variational EL system that in the next section will be shown to be
sufficient to construct certain Green functions needed for the Peierls
formula for the Poisson bracket. By a hyperbolization we mean an
equivalence between the gauge fixed EL equations for $(\EL\oplus
c_g,\tilde{F}^*\oplus E_g)$ and a constrained symmetric hyperbolic
system $(f\oplus c, \tilde{F}^*\oplus E)$. Here, $c_g\colon \Secs(F)\to
\Secs(E_g)$ is a differential operator constituting the gauge fixing
condition. The equivalence is presumed to have the following form
\begin{equation}
	\left\{
		\begin{aligned}
			\EL &= R\circ(f\oplus c)  \\
			c_g &= R_g\circ c  \\
		\end{aligned}
	\right.
	\quad \iff \quad
	\left\{
		\begin{aligned}
			f &= \bar{R}\circ(\EL\oplus c_g)  \\
			c &= \bar{R}_g \circ (\EL\oplus c_g) 
		\end{aligned}
	\right. ,
\end{equation}
where the $R$, $\bar{R}$, $R_g$ and $\bar{R}_g$ are (possibly
non-linear) differential operators.

Since we will be concerned with symplectic and Poisson tensors, we can
work with individual (formal) tangent and cotangent spaces at a fixed
dynamical linearization point $\phi\in \S_H(F)$, which is a globally
hyperbolic solution of the PDE system defined by the equation form
$(f\oplus c, \tilde{F}^*\oplus E)$. In other words, we need to work with
the linearized versions of each of the hyperbolic system, the
constraints, the Euler-Lagrange system, the gauge fixing conditions, the
gauge transformations, as well as the hyperbolization. As before, we
denote the adapted equation form of the linearized hyperbolic system
$(\dot{f},\tilde{F}^*)$. The linearized constraints are presumed to be
globally parametrizable and fit into the commutative
diagrams~\eqref{eq:glpar} and~\eqref{eq:glpar*}. The linearized gauge
transformations are presumed to be globally recognizable and fit into
the commutative diagrams~\eqref{eq:glrec} and~\eqref{eq:glrec*}. The
linearized EL equations are denoted $(\J,\tilde{F}^*)$ and are also
called the \emph{Jacobi system}, with $\J$ the Jacobi
operator~\cite{dewitt-qft}, while the linearized gauge fixing
conditions are denoted by
the equation form $(\dot{c}_g,E_g)$. In local coordinates $(x^i,u^a)$ on
$F$, the components of the Jacobi operator satisfy the identity
\begin{equation}
	\J^I_{ab} \wedge \dv u^b_I = \dv \EL_a .
\end{equation}
The equivalence of the linearized systems takes the following form:
\begin{equation}\label{eq:Jgf-equiv}
	\left\{
		\begin{aligned}
			\J &= r\circ \dot{f} + r_c\circ \dot{c}  \\
			\dot{c}_g &= r_g\circ \dot{c}  \\
		\end{aligned}
	\right.
	\quad \iff \quad
	\left\{
		\begin{aligned}
			\dot{f} &= \bar{r}\circ \J + \bar{r}_c\circ \dot{c}_g \\
			\dot{c} &= \bar{r}_\J\circ \J + \bar{r}_g\circ \dot{c}_g
		\end{aligned}
	\right.
\end{equation}
If the operator $\bar{r}_\J$ is non-vanishing, it means that part of the
constraints consist of integrability conditions of the Jacobi system.

Note that, strictly speaking, the $r$- and $\bar{r}$- differential
operators effecting the equivalence are not inverses of each other.
Their compositions may differ from the identity by some differential
operator that factors through a differential identity, that is,
$\dot{q}\circ \dot{f} - \dot{h}\circ \dot{c} = 0$ or $\dot{g}^*\circ \J
= 0$. In other words, we must have
\begin{align}
\label{eq:rinv1}
	r\circ \bar{r} + r_c\circ\bar{r}_\J &= \id + p_\J \circ \dot{g}^* ,
		& \bar{r} \circ r &= \id + p_f \circ \dot{q} , \\
\label{eq:rinv2}
	r\circ \bar{r}_c + r_c\circ \bar{r}_g &= 0 ,
		& \bar{r}\circ r_c + \bar{r}_c\circ r_g &= -p_f\circ \dot{h} , \\
\label{eq:rinv3}
	r_g\circ \bar{r}_\J &= p_g \circ \dot{g}^* ,
		& \bar{r}_\J \circ r &= p_c\circ \dot{q} , \\
\label{eq:rinv4}
	r_g\circ \bar{r}_g &= \id ,
		& \bar{r}_g\circ r_g + \bar{r}_\J \circ r_c &= \id - p_c\circ \dot{h} ,
\end{align}
for some differential operators $p_\J$, $p_f$, $p_g$ and $p_c$. Also,
the identity $\dot{q}\circ \dot{f} - \dot{h}\circ \dot{c} = 0$, when
expressed in terms of the $\J$ and $\dot{c}_g$ operators, is identically
satisfied when
\begin{align}
\label{eq:consist1}
	\dot{q}\circ \bar{r} - \dot{h}\circ \bar{r}_\J &= q_\J\circ \dot{g}^*, \\
\label{eq:consist2}
	\dot{q}\circ \bar{r}_c - \dot{h}\circ \bar{r}_g &= 0 .
\end{align}
It is worth noting that the above relations involving the $r$- and
$\bar{r}$-operators follow from the equivalence~\eqref{eq:Jgf-equiv}
only when $\J$ and $\dot{c}_g$ satisfy no additional differential
identities. However, we will simply presume that they hold as needed
sufficient conditions for the derivation of the Peierls formulas later
in Sect.~\ref{sec:formal-pois}.

Finally, to make sure that the condition $\dot{c}_g[\psi] = 0$ in fact
constitutes a gauge fixing condition, we require the following
compatibility between the gauge transformation operator and the
constraints that we shall refer to as the \emph{gauge fixing
compatibility} condition:
\begin{equation}\label{eq:gf}
	\dot{s}' \circ \bar{r}_c\circ\dot{c}_g = 0
\end{equation}
This condition connects constraints (represented by $\dot{c}_g$) and
gauge transformations (represented by $\dot{s}'$). Essentially, this
condition says that the part of the constraints $\dot{c}[\psi] = 0$ that
comes from $\dot{c}_g[\psi] = 0$ is sufficient, when adjoined to
$\J[\psi] = 0$ to make the gauge fixed system hyperbolizable.

To summarize, we require that the PDE system of interest satisfies the
following conditions: (a) variationality, (b) hyperbolizability, (c)
global parametrizability of constraints, (d) global recognizability of
gauge transformations, and (e) gauge fixing compatibility. The
consequences of imposing these conditions are explored in the following
section. We stress that these conditions are sufficient for our purposes
and can in fact satisfied by most fundamental physical field theories,
but some of the same results could also hold under weaker conditions.

\subsubsection{Causal Green functions}\label{sec:Jgreen}
The goal of this section is to use the gauge fixing compatibility
condition~\eqref{eq:gf} to construct a causal Green function for the
gauge fixed Jacobi system.

First, recall that the Jacobi system, due to its variational character
is easily shown to be self-adjoint:
\begin{equation}
	\J^* = \J .
\end{equation}
Also, gauge invariance and N\"other's second theorem imply the identities
\begin{equation}
	\J \circ \dot{g} = 0
	\quad\text{and}\quad
	\dot{g}^* \circ \J = 0 .
\end{equation}
The equivalence of the gauge fixed Jacobi system and a constrained
hyperbolic system postulated in~\eqref{eq:Jgf-equiv} then gives
\begin{equation}
	\dot{s}\circ \dot{k}
	= \dot{f}\circ \dot{g}
	= \bar{r}_c \circ \dot{c}_g \circ \dot{g} .
\end{equation}
Suppose that $\psi \in \Secs_{SC}(F)$ such that $\dot{c}_g[\psi] = 0$
and $\psi = \dot{g}[\eps']$ for some $\eps'\in\Secs_{SC}(P)$. Then
$\dot{s}\circ \dot{k}[\eps'] = \bar{r}_c \circ \dot{c}_g \circ
\dot{g}[\eps'] = 0$, or $\dot{k}[\eps'] = \tilde{\beta}^*$, where
$\tilde{\beta}^* \in \ker \dot{s} \sso \Secs_{SC}(\tilde{P}^*)$.  Let
$\{\chi_\pm\}$ be a partition of unity adapted to a Cauchy surface
(Def.~\ref{def:adapt-pu}) and recall the associated splitting maps
(Lem.~\ref{lem:exsplit}), which lead to the identity
\begin{equation}
	\dot{g}[\K_\chi[\tilde{\beta}^*]]
	= \sum_\pm \G_\pm[\dot{s}[\chi_\pm \tilde{\beta}^*]]
	= \G[\dot{s}_\chi[\tilde{\beta}^*]] ,
\end{equation}
where $\dot{s}_\chi[\tilde{\beta}^*] = \pm \dot{s}[\chi_\pm
\tilde{\beta}^*] \in \Secs_0(\tilde{F}^*)$. Note that
$\dot{s}'[\dot{s}_\chi[\tilde{\beta}^*]] = 0$, which by global
recognizability implies that there exists a $\tilde{\gamma}^*\in
\Secs_0(\tilde{P}^*)$ such that $\dot{s}[\tilde{\gamma}^*] =
\dot{s}_\chi[\tilde{\beta}^*]$. Note that $\tilde{\gamma}^*$ cannot be
just $\chi_+\tilde{\beta}^*$ or $-\chi_-\tilde{\beta}^*$, because their
supports need not be compact. Now, let
$\eps = \eps' - \K_\chi[\tilde{\beta}^*] + \K[\tilde{\gamma}^*]$ and
notice that $\dot{k}[\eps] = 0$ as well as
\begin{align}
	\dot{g}[\eps]
	&= \dot{g}[\eps'] - \dot{g}[\K_\chi[\tilde{\beta}^*]]
		+ \dot{g}[\K[\tilde{\gamma}^*]] \\
	&= \psi - \G[\dot{s}_\chi[\tilde{\beta}^*]] + \G[\dot{s}[\tilde{\gamma}^*]] \\
	&= \psi .
\end{align}
We have just proven
\begin{lemma}
If there exists $\psi\in \Secs_{SC}(F)$ such that $\dot{c}_g[\psi] = 0$
and $\psi\in \im \dot{g}$, there exists $\eps\in \Secs_{SC}(P)$ such
that $\dot{k}[\eps] = 0$ and $\psi = \dot{g}[\eps]$.
\end{lemma}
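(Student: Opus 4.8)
The plan is to take the given $\eps'\in\Secs_{SC}(P)$ (so that $\psi=\dot g[\eps']$), to measure precisely how far $\eps'$ is from satisfying $\dot k[\eps']=0$, and then to correct $\eps'$ by a spacelike compact element of $\ker\dot g$ that exactly repairs that failure; the corrected section will be the desired $\eps$. The point to keep in mind is that subtracting something in $\ker\dot g$ does not disturb the identity $\dot g[\eps]=\psi$, so the only job is to also kill $\dot k[\eps]$.

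First I would isolate the obstruction. Using the commuting square $\dot s\circ\dot k=\dot f\circ\dot g$ read off from diagram~\eqref{eq:glrec}, the hyperbolization identity $\dot f=\bar r\circ\J+\bar r_c\circ\dot c_g$ from~\eqref{eq:Jgf-equiv}, and N\"other's second theorem $\J\circ\dot g=0$, one obtains $\dot s\circ\dot k=\bar r_c\circ\dot c_g\circ\dot g$. Evaluating on $\eps'$ and using $\dot c_g[\dot g[\eps']]=\dot c_g[\psi]=0$ gives $\dot s[\dot k[\eps']]=0$, so $\tilde\beta^*:=\dot k[\eps']$ lies in $\ker\dot s$, and it is spacelike compact because $\dot k$ is a differential operator. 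Next I would invoke the splitting maps of Lemma~\ref{lem:exsplit} for the symmetric hyperbolic operator $\dot k$: pick a partition of unity $\{\chi_\pm\}$ adapted to a Cauchy surface and form $\K_\chi[\tilde\beta^*]=\K_+[\chi_+\tilde\beta^*]+\K_-[\chi_-\tilde\beta^*]\in\Secs_{SC}(P)$, which satisfies $\dot k\circ\K_\chi=\id$. Since $\dot f[\dot g[\K_\pm[\tilde\beta^*]]]=\dot s[\dot k[\K_\pm[\tilde\beta^*]]]=\dot s[\tilde\beta^*]=\dot f[\G_\pm[\dot s[\tilde\beta^*]]]$ and both sides have retarded (resp.\ advanced) support, uniqueness of the retarded/advanced problem yields $\dot g\circ\K_\pm=\G_\pm\circ\dot s$, whence $\dot g[\K_\chi[\tilde\beta^*]]=\G[\dot s_\chi[\tilde\beta^*]]$ with $\dot s_\chi[\tilde\beta^*]:=\pm\dot s[\chi_\pm\tilde\beta^*]$ compactly supported (as in the proof of Lemma~\ref{lem:exsplit}, using $\dot s[\tilde\beta^*]=0$). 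The elliptic-complex property of the column of~\eqref{eq:glrec} containing $\dot s$ and $\dot s'$ gives $\dot s'[\dot s_\chi[\tilde\beta^*]]=0$, and global recognizability (exactness of that column) produces $\tilde\gamma^*\in\Secs_0(\tilde P^*)$ with $\dot s[\tilde\gamma^*]=\dot s_\chi[\tilde\beta^*]$.

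Finally I would set $\eps=\eps'-\K_\chi[\tilde\beta^*]+\K[\tilde\gamma^*]$, which is spacelike compact since each summand is. Exactness of~\eqref{eq:hyp-seq} for $\dot k$ gives $\dot k\circ\K=0$, so $\dot k[\eps]=\tilde\beta^*-\tilde\beta^*+0=0$; and, again using $\dot g\circ\K_\pm=\G_\pm\circ\dot s$, one computes $\dot g[\eps]=\psi-\G[\dot s_\chi[\tilde\beta^*]]+\G[\dot s[\tilde\gamma^*]]=\psi$, as required. The step I expect to be the main obstacle is the support bookkeeping underlying this last paragraph: checking that $\dot s_\chi[\tilde\beta^*]$ is genuinely compactly supported and that the identities $\dot g\circ\K_\pm=\G_\pm\circ\dot s$ are legitimate on the only-retarded/advanced-supported arguments at hand — this is exactly the content imported from Lemma~\ref{lem:exsplit} and the uniqueness theorems. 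It should also be remarked that one cannot simply take $\tilde\gamma^*=\pm\chi_\pm\tilde\beta^*$, since those have noncompact support; the detour through global recognizability is precisely what forces $\eps$ to come out spacelike compact.
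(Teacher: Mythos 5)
Your proof is correct and follows essentially the same route as the paper: isolate the obstruction $\tilde\beta^*=\dot k[\eps']\in\ker\dot s$ via $\dot s\circ\dot k=\bar r_c\circ\dot c_g\circ\dot g$, then correct by $\eps=\eps'-\K_\chi[\tilde\beta^*]+\K[\tilde\gamma^*]$ using the adapted splitting maps and global recognizability to produce the compactly supported $\tilde\gamma^*$. Your extra justification of $\dot g\circ\K_\pm=\G_\pm\circ\dot s$ by uniqueness of the retarded/advanced problem, and the remark that $\pm\chi_\pm\tilde\beta^*$ cannot serve as $\tilde\gamma^*$, match what the paper uses implicitly or states explicitly.
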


Equivalence with a constrained hyperbolic system now allows us to solve
the inhomogeneous problem
\begin{equation}
	\J[\psi] = \tilde{\alpha}^*, \quad \dot{c}_g[\psi] = 0 ,
\end{equation}
where the source must necessarily satisfy the gauge invariance condition
$\dot{g}^*[\tilde{\alpha}^*] = 0$. The equivalent inhomogeneous problem in
symmetric hyperbolic form is
\begin{equation}
	\dot{f}[\psi] = \bar{r}[\tilde{\alpha}^*], \quad
	\dot{c}[\psi] = \bar{r}_\J[\tilde{\alpha}^*] .
\end{equation}
Recall from Lem.~\ref{lem:inhom-constr} that this system is solvable iff
the sources satisfy the consistency identity:
\begin{equation}
	\dot{q}[\bar{r}[\tilde{\alpha}^*]]
		- \dot{h}[\bar{r}_\J[\tilde{\alpha}^*]]
	= p_\J\circ \dot{g}^*[\tilde{\alpha}^*]
	= 0 ,
\end{equation}
which is obviously satisfied, after using identity~\eqref{eq:consist1}, for any gauge
invariant source. The retarded and advanced solutions to this
inhomogeneous problem are then $\psi_\pm = \G_\pm
[\bar{r}[\tilde{\alpha}^*]]$. This means that
$\bar{r}_\J[\tilde{\alpha}^*] =
\dot{c}[\G_\pm[\bar{r}[\tilde{\alpha}^*]]$ and, in particular,
$\dot{c}_g[\psi] = 0$. Motivated by this formula, we introduce the
following retarded, advanced and causal Green functions for the gauge
fixed Jacobi system:
\begin{equation}\label{eq:E-def}
	\EE_\pm = \G_\pm \circ \bar{r}
	\quad\text{and}\quad
	\EE = \EE_+ - \EE_- = \G\circ \bar{r} .
\end{equation}
One can immediately check that $\psi = \EE[\tilde{\alpha}^*]$ satisfies
both $\dot{f}[\psi] = 0$ and $\dot{c}[\psi] = 0$, whenever
$\tilde{\alpha}^*$ is a gauge invariant dual density. By the
equivalence~\eqref{eq:Jgf-equiv}, the same solution also satisfies
$\J[\psi] = 0$ and $\dot{c}_g[\psi] = 0$.

\begin{theorem}\label{thm:Jexsplit}
Provided the gauge fixed Jacobi system $\J[\psi] = 0$, $\dot{c}_g[\psi]
= 0$ is equivalent to a constrained hyperbolic system obeying the gauge
fixing compatibility condition~\eqref{eq:gf}, as described in Sect.~\ref{sec:constr-gf},
we can define the Jacobi causal Green function $\EE$, Eq.~\eqref{eq:E-def},
such that the following diagram
\begin{equation}
\vcenter{\xymatrix{
	%0
		                  %\ar[r] &
	& \Secs_0(P)
		\ar[d]^{\dot{g}}  %\ar[r] &
	& 0
		\ar[d]            %\ar[r] &
	& \Secs_{SC}(P)
		\ar[d]^{\dot{g}}  %\ar[r] &
	& 0
		\ar[d]            %\ar[r] &
	%0
	\\
	0
		                   \ar[r] &
	\Secs_0(F)
		\ar[d]             \ar[r]^{\J} &
	\Secs_0(\tilde{F}^*)
		\ar[d]^{\dot{g}^*} \ar[r]^{\EE} &
	\Secs_{SC}(F)
		\ar[d]             \ar[r]^{\J} &
	\Secs_{SC}(\tilde{F}^*)
		\ar[d]^{\dot{g}^*} \ar[r] &
	0
	\\
	%0
		                  %\ar[r] &
	& 0
		                  %\ar[r] &
	& \Secs_0(\tilde{P}^*)
		                  %\ar[r] &
	& 0
		                  %\ar[r] &
	& \Secs_{SC}(\tilde{P}^*)
		                  %\ar[r] &
	%0
}}
\end{equation}
becomes an exact sequence after taking the vertical cohomologies.
Moreover, we have the following splittings at $\Secs_0(\tilde{F}^*)$ and
$\Secs_{SC}(F)$:
\begin{equation}
	\ker \dot{g}^* \cong \im \J \oplus \S_{SC}(F)
	\quad\text{and}\quad
	\coker \dot{g} \cong \S_{SC}(F) \oplus \ker\dot{g}^* ,
\end{equation}
where a $\phi$-Cauchy surface $\Sigma\sso M$ and a partition of unity
$\{\chi_\pm\}$ adapted to it define the splitting maps
(cf.~Def.~\ref{def:adapt-pu} and~Lem.~\ref{lem:exsplit})
\begin{align}
	\J_\chi\colon& \S_{SC}(F) \to \Secs_0(\tilde{F}^*) ,
		& \J_\chi[\psi] &= \pm \J[\chi_\pm \psi]
			= r\circ \dot{f}_\chi[\psi] + r_c\circ \dot{c}_\chi[\psi] , \\
	\EE_\chi\colon& \Secs_{SC}(\tilde{P}^{\prime*}) \to \Secs_{SC}(F) ,
		& \EE_\chi[\tilde{\alpha}^*]
			&= \sum_\pm \G_\pm[\bar{r}[\dot{g}^{\prime*}[\chi_\pm\tilde{\alpha}^*]] ,
\end{align}
where $\S_{SC}(F)$ denotes the spacelike compactly supported solutions
of $\dot{f}[\phi] = 0$, $dot{c}[\phi] = 0$, and $\dot{c}_\chi[\psi] =
\dot{c}[\pm\chi_\pm\psi]$.
\end{theorem}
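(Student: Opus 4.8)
The plan is to transcribe the proofs of Prop.~\ref{prp:exact} and Lem.~\ref{lem:exsplit}, ``gauge-reducing'' every step with the global recognizability of the gauge transformations (diagrams~\eqref{eq:glrec} and~\eqref{eq:glrec*}) and the global parametrizability of the constraints (diagrams~\eqref{eq:glpar} and~\eqref{eq:glpar*}), while the gauge fixing compatibility~\eqref{eq:gf} together with the equivalence relations~\eqref{eq:Jgf-equiv}--\eqref{eq:consist2} supplies the identities needed to pass between the gauge fixed Jacobi system and its symmetric hyperbolic presentation. The columns of the displayed diagram are the degree $(-1,0,1)$ truncations of the elliptic complexes postulated in Sect.~\ref{sec:constr-gf}, so their only nontrivial cohomology sits in degree $0$ and equals $\coker\dot{g}$ at the $\Secs_\bullet(F)$ terms and $\ker\dot{g}^*$ at the $\Secs_\bullet(\tilde{F}^*)$ terms; for $\J$ the chain-map conditions are precisely the N\"other identities $\J\circ\dot{g}=0$ and $\dot{g}^*\circ\J=0$, and for $\EE$ they hold trivially because an opposite corner of each relevant square is a zero object, so both $\J$ and $\EE$ descend to the maps on cohomology asserted in the statement.

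First I would record the arithmetic of $\EE_\pm=\G_\pm\circ\bar{r}$. Combining the ``inverse'' relations~\eqref{eq:rinv1}--\eqref{eq:rinv2} with the linearized consistency identity $\dot{q}\circ\dot{f}-\dot{h}\circ\dot{c}=0$ and $r_g\circ\dot{c}=\dot{c}_g$ yields the operator identity $\bar{r}\circ\J=\dot{f}-\bar{r}_c\circ\dot{c}_g$; since $\G_\pm\circ\dot{f}=\id$ on $\Secs_\pm(F)$, this gives $\EE_\pm\circ\J=\id-\G_\pm\circ\bar{r}_c\circ\dot{c}_g$ there, and applying $\dot{g}'$ and using the commuting square $\dot{g}'\circ\G_\pm=\K'_\pm\circ\dot{s}'$ (the retarded/advanced refinement of the corresponding square of~\eqref{eq:glrec}, with $\K'_\pm$ the Green functions of $\dot{k}'$) together with $\dot{s}'\circ\bar{r}_c\circ\dot{c}_g=0$ from~\eqref{eq:gf} shows the correction term lies in $\ker\dot{g}'=\im\dot{g}$ by the exact third column of~\eqref{eq:glrec}. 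Dually, $\dot{f}\circ\G_\pm=\id$, $\dot{c}\circ\G_\pm=\H_\pm\circ\dot{q}$, the identity~\eqref{eq:consist1}, $\H_\pm\circ\dot{h}=\id$, and the left half of~\eqref{eq:rinv1} give $\J\circ\EE_\pm=\id$ on gauge invariant dual densities of retarded/advanced support. Finally $\supp\EE_\pm[\tilde{\alpha}^*]\sse I^\pm(\supp\tilde{\alpha}^*)$ is inherited from $\G_\pm$.

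With these in hand I would prove the exactness and the splittings together, exactly as in Lem.~\ref{lem:exsplit}. Injectivity of the left $\J$ on cohomology: if $\psi\in\Secs_0(F)$ and $\J[\psi]=0$, then $\psi\in\Secs_+(F)\cap\Secs_-(F)$ and the previous paragraph forces $\psi=\G_\pm[\bar{r}_c[\dot{c}_g[\psi]]]$ for both signs; applying $\dot{g}'$ and using the commuting square and~\eqref{eq:gf} gives $\dot{g}'[\psi]=0$, whence $\psi\in\im\dot{g}_0$ by global recognizability. The middle splitting maps are the $\J_\chi$ and $\EE_\chi$ of the statement, built from a partition of unity $\{\chi_\pm\}$ adapted to a $\phi$-Cauchy surface (Def.~\ref{def:adapt-pu}); they are well defined because $\d\chi_\pm$ has timelike compact support while spacelike compact solutions have spacelike compact support, so every intervening support is compact, and one checks $\EE\circ\J_\chi=\pm\id$ on $\S_{SC}(F)$ and $\J\circ\EE_\chi=\id$ on the gauge invariant dual densities (parametrized as $\dot{g}^{\prime*}[\tilde{\gamma}^{\prime*}]$ via the exact column of~\eqref{eq:glrec*}), from which the remaining exactness and the direct sum decompositions both drop out. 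The gauge-orbit bookkeeping in the off-shell pieces is handled by the Lemma preceding the theorem (improving the support of a gauge parameter by re-expressing it through the splitting maps of Lem.~\ref{lem:exsplit}), and the nontrivial-constraint contributions are treated as in Lem.~\ref{lem:sols-constr-nondegen}: one shows $\dot{q}^{\prime*}$ of the relevant retarded solution vanishes outside a compact set by pushing it through a Cauchy surface with the Green pairing, then invokes hyperbolic integrability of $\dot{q}^{\prime*}[\eta]=0$ and the elliptic complex~\eqref{eq:glpar*}.

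The main obstacle will be organizing the bookkeeping rather than any single hard estimate: three interlocking structures (the hyperbolic subsystem, the globally parametrizable constraints, the globally recognizable gauge transformations), the non-canonical equivalence operators $r,\bar{r},r_c,\bar{r}_c,\bar{r}_\J,r_g,\bar{r}_g$ whose compositions reduce to the identity only up to the differential identities~\eqref{eq:rinv1}--\eqref{eq:consist2}, and the requirement to check each claim at the level of cohomology classes (modulo $\im\dot{g}$ on one side, with gauge invariant representatives on the other). The genuine conceptual content is narrow and localized in the two appeals to $\dot{g}'\circ\G_\pm=\K'_\pm\circ\dot{s}'$ combined with~\eqref{eq:gf}, which are exactly what force $\EE=\G\circ\bar{r}$ to behave like a causal Green function for the gauge reduced system; the rest is a faithful transcription of the unconstrained, ungauged arguments of Sects.~\ref{sec:lin-inhom} and~\ref{sec:tt*-sols-constr}.
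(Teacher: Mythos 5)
Your proposal is correct and follows essentially the same route as the paper: the operator identities $\bar{r}\circ\J = \dot{f} - \bar{r}_c\circ\dot{c}_g$ and $\J\circ\EE_\pm = \id + p_\J\circ\dot{g}^*$, the commuting square $\dot{g}'\circ\G_\pm = \K'_\pm\circ\dot{s}'$ combined with the gauge fixing compatibility~\eqref{eq:gf}, the splitting maps $\J_\chi$ and $\EE_\chi$, and global recognizability are exactly the ingredients the paper uses to check exactness node by node (with the one correction that $\EE\circ\J_\chi$ is $\id$ only modulo $\im\dot{g}$, as you acknowledge elsewhere). The only harmless divergence is your suggestion to import the Green-pairing-on-a-Cauchy-surface argument of Lem.~\ref{lem:sols-constr-nondegen} for the constraint contributions; the paper handles these purely algebraically via the identities \eqref{eq:rinv1}--\eqref{eq:consist2} and the commutativity of diagrams~\eqref{eq:glpar} and~\eqref{eq:glrec}, which you already have in hand.
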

\begin{proof}
The proof consists of checking the desired conclusions at each object of
horizontal sequence in the above diagram.
\begin{enumerate}
\item
	\emph{If $\psi\in \Secs_0(F)$ and $\J[\psi] = 0$, then $\psi =
	\dot{g}[\eps]$ for some $\eps\in \Secs_0(P)$.}

	Let $\tilde{\alpha}^* = \dot{f}[\psi]$. From the
	equivalence~\eqref{eq:Jgf-equiv}, we have
	\begin{equation}
		\tilde{\alpha}^*
		= \bar{r}\circ \J[\psi] + \bar{r}_c\circ \dot{c}_g[\psi]
		= \bar{r}_c \circ \dot{c}_g[\psi] .
	\end{equation}
	Then the gauge fixing compatibility condition~\eqref{eq:gf} implies
	$\dot{s}'[\tilde{\alpha}^*] = 0$.  Now, note the identity $\psi =
	\G_+[\tilde{\alpha}^*]$, which holds from the uniqueness of solutions
	with retarded support and from the fact that $\supp \psi$ is
	automatically retarded by virtue of being compact. Then
	\begin{equation}
		\dot{g}'[\psi]
		= \dot{g}'[\G_+[\tilde{\alpha}^*]]
		= \K'_+[\dot{s}'[\tilde{\alpha}^*]]
		= 0 ,
	\end{equation}
	which, together with global recognizability of the gauge
	transformations allows us to conclude that $\psi = \dot{g}[\eps]$, for
	some $\eps\in \Secs_0(P)$.
\item
	\emph{Every $\tilde{\alpha}^*\in \Secs_0(\tilde{F}^*)$, such that
	$\EE[\tilde{\alpha}^*] = \dot{g}[\eps]$, with $\eps\in \Secs_0(P)$,
	and $\dot{g}^*[\tilde{\alpha}] = 0$, can be written as
	$\tilde{\alpha}^* = \J[\psi]$ for some $\psi \in \Secs_0(F)$. We also
	have $\im \EE\circ \J \sse \im \dot{g}$.}

	The condition
	\begin{equation}
		\EE[\tilde{\alpha}^*]
		= \G[\bar{r}[\tilde{\alpha}^*]]
		= \dot{g}[\eps]
	\end{equation}
	implies that $\dot{s}\circ\dot{k}[\eps] = \dot{f}\circ\dot{g}[\eps] =
	0$, which in turn implies that $\dot{g}[\eps] =
	\dot{g}[\K[\tilde{\gamma}^*]] = \G[\dot{s}[\tilde{\gamma}^*]]$, for
	some $\tilde{\gamma}^* \in \Secs_0(\tilde{P}^*)$. We then have
	\begin{equation}
		\G[\bar{r}[\tilde{\alpha}^*]-\dot{s}[\tilde{\gamma}^*]] = 0 ,
	\end{equation}
	which implies that
	$\bar{r}[\tilde{\alpha}^*] - \dot{s}[\tilde{\gamma}^*] = \dot{f}[\psi]$, where
	$\psi\in \Secs_0(F)$ is unique and given by 
	\begin{equation}
		\psi
		= \G_+[\bar{r}[\tilde{\alpha}^*] - \dot{s}[\tilde{\gamma}^*]]
		= \G_-[\bar{r}[\tilde{\alpha}^*] - \dot{s}[\tilde{\gamma}^*]],
	\end{equation}
	since compact support is both retarded and advanced. Then the
	following calculation gives the desired conclusion:
	\begin{align}
		\J[\psi]
		&= r\circ \dot{f}[\psi] + r_c \circ \dot{c}[\psi] \\
		&= r\circ \bar{r}[\tilde{\alpha}^*]
			+ r_c \circ \dot{c}\circ \G_+ \circ \bar{r}[\tilde{\alpha}^*] \\
\notag & \quad {}
			- r\circ\dot{s}[\tilde{\gamma}^*]
			- r_c\circ\dot{c}\circ \G_+\circ \dot{s}[\tilde{\gamma}^*] \\
		&= (\id + p_\J \circ \dot{g}^* - r_c\circ \bar{r}_\J)[\tilde{\alpha}^*]
			+ r_c \circ \H_+ \circ \dot{q} \circ \bar{r}[\tilde{\alpha}^*] \\
\notag & \quad {}
			- r\circ\dot{s}[\tilde{\gamma}^*]
			- r_c\circ (\bar{r}_\J\circ \J + \bar{r}_g\circ \dot{c}_g)
					\circ \dot{g} \circ \K_+[\tilde{\gamma}^*] \\
		&= \tilde{\alpha}^* + r_c \circ
			(-\bar{r}_\J + \H_+\circ \dot{h} \circ \bar{r}_\J
				+ \H_+ \circ q_\J \circ \dot{g}^*)[\tilde{\alpha}^*] \\
\notag & \quad {}
			- r\circ \dot{s}[\tilde{\gamma}^*]
			+ r\circ (\bar{r}_c \circ \dot{c}_g \circ \dot{g})\circ
					\K_+[\tilde{\gamma}^*] \\
		&= \tilde{\alpha}^* ,
	\end{align}
	where we have used the definition~\eqref{eq:E-def} and
	identities~\eqref{eq:Jgf-equiv}, \eqref{eq:rinv1}, \eqref{eq:consist1}
	and the commutativity of diagrams~\eqref{eq:glpar} and~\eqref{eq:glrec}.
	Now, consider the following composition of operators applied to any
	$\psi \in \Secs_0(F)$:
	\begin{align}
		\dot{g}'[\EE\circ \J[\psi]]
		&= \dot{g}'\circ \G \circ \bar{r} \circ \J [\psi] \\
		&= \dot{g}'\circ (\G \circ \dot{f}) [\psi]
			- \K' \circ (\dot{s}' \circ \bar{r}_c \circ \dot{c}_g) [\psi] \\
		&= 0 ,
	\end{align}
	where the last two terms vanish by the exactness of the causal Green
	function sequence~\eqref{eq:hyp-seq} and the gauge fixing compatibility
	condition~\eqref{eq:gf}. Global recognizability then dictates that
	there exists a section $\eps\in \Secs_0(P)$ such that $\EE\circ
	\J[\psi] = \dot{g}[\eps]$, which is the desired result.
\item
	\emph{Every solution $\psi\in \Secs_{SC}(F)$ of $\J[\psi] = 0$ is of
	the form $\psi = \EE[\tilde{\alpha}^*] + \dot{g}[\eps]$ for some
	$\eps\in \Secs_{SC}(P)$ and $\tilde{\alpha}^*\in
	\Secs_0(\tilde{F}^*)$, with $\dot{g}^*[\tilde{\alpha}^*] = 0$.}

	Let $\Sigma\sso M$ be a $\phi$-Cauchy surface and $\{\chi_\pm\}$ a
	partition of unity adapted to it. Let $\tilde{\alpha}^* =
	\J_\chi[\psi]$ and $\psi' = \EE[\tilde{\alpha}^*]$, where now
	$\J[\psi'] = 0$ and $\dot{c}[\psi'] = 0$. Also, denote
	$\dot{c}_{g\chi}[\psi] = \dot{c}_g[\pm\chi_\pm\psi]$. It remains to
	show that $\psi - \psi'$ is pure gauge:
	\begin{align}
		\dot{g}'[\psi']
		&= \dot{g}'\circ \EE[\tilde{\alpha}^*]
		= \dot{g}'\circ \G \circ \bar{r}[\tilde{\alpha}^*] \\
		&= \dot{g}'\circ \G \circ \bar{r}[
			r\circ \dot{f}_\chi[\psi] + r_c\circ \dot{c}_\chi[\psi]] \\
		&= \dot{g}'\circ \G[\dot{f}_\chi[\psi]
			+ \bar{r}_c\circ r_g \circ \dot{c}_\chi[\psi]] \\
		&= \dot{g}'[\psi]
			+ \K'\circ(\dot{s}'\circ \bar{r}_c\circ \dot{c}_{g\chi})[\psi] \\
		&= \dot{g}'[\psi] ,
	\end{align}
	where we have appealed to identities~\eqref{eq:rinv1}, \eqref{eq:rinv2}
	and the gauge fixing compatibility condition~\eqref{eq:gf}. Now, we have
	$\dot{g}'[\psi-\psi'] = 0$, which implies, together with global
	recognizability, that $\psi-\psi' = \dot{g}[\eps]$ for some $\eps\in
	\Secs_{SC}(P)$.
\item \label{itm:4}
	\emph{Every $\tilde{\alpha}^*\in \Secs_{SC}(\tilde{F}^*)$, such that
	$\dot{g}^*[\tilde{\alpha}^*] = 0$, can be written as $\tilde{\alpha}^*
	= \J[\psi]$ for some $\psi\in \Secs_{SC}(F)$.}

	Global recognizability implies that there exists a
	$\tilde{\eps}^{\prime*}\in \Secs_{SC}(\tilde{P}^{\prime*})$ such that
	$\tilde{\alpha}^* = \dot{g}^{\prime*}[\tilde{\eps}^{\prime*}]$. Let
	$\psi = \EE_\chi[\tilde{\eps}^{\prime*}]$.
	Then the following calculation gives the desired conclusion:
	\begin{align}
		\J[\psi]
		&= \sum_\pm (r\circ \dot{f} + r_c\circ \dot{c}) \circ \G_\pm\circ
			\bar{r}[\dot{g}^{\prime*}[\chi_\pm \tilde{\eps}^{\prime*}]] \\
		&= r\circ \bar{r}[\tilde{\alpha}^*]
			+ r_c \circ \bar{r}_\J[\tilde{\alpha}^*]
		= (\id + p_\J\circ \dot{g}^*)[\tilde{\alpha}^*] \\
		&= \tilde{\alpha}^* .
	\end{align}
\item
	\emph{We have $\EE\circ \J_\chi = \id \pmod{\im \dot{g}}$ on $\S_{SC}(F)$.}

	Let $\psi \in \S_{SC}(F)$. Direct calculation shows
	\begin{align}
		\dot{g}'[\EE\circ \J_\chi[\psi]]
		&= \dot{g}'\circ \G \circ \bar{r} \circ \J_\chi[\psi] \\
		&= \dot{g}'[\G\circ \dot{f}_\chi[\psi]]
			+ \H \circ (\dot{s}'\circ \bar{r}_c \circ \dot{c}_g) [\psi] \\
		&= \dot{g}'[\psi]
	\end{align}
	where we have used a splitting identity from
	Lem.~\ref{lem:exsplit} and the gauge fixing compatibility condition~\eqref{eq:gf}.
	It follows that $\psi-\EE\circ \J_\chi \in \im \dot{g}$ is pure gauge.
\item
	\emph{We have $\J\circ \EE_\chi = \dot{g}^{\prime*}$ on
	$\Secs_{SC}(\tilde{P}^{\prime*})$.}

	This was already checked in item~\ref{itm:4} above.
	\qed
\end{enumerate}
\end{proof}

It is easy to see from its variational nature that the Jacobi operator
is self-adjoint $(\J)^* = \J$. If it were directly invertible, the Green
functions $\EE_\pm$ would satisfy the same relation with their adjoints
as shown in Sect.~\eqref{sec:green-adj}, making the causal Green
function anti-self-adjoint, $\EE^* = -\EE$. However, due to gauge
invariance the relation of the gauge fixed Green functions to their
adjoints is more complicated.
\begin{lemma}\label{lem:E-anti}
When restricted to act on gauge invariant dual densities, the causal
Green function of the gauge fixed Jacobi system is anti-self-adjoint up
to gauge:
\begin{equation}
	(\EE)^* = \EE \pmod{\im \dot{g}} .
\end{equation}
\end{lemma}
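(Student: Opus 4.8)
The plan is to exploit the factorization $\EE = \G \circ \bar{r}$ from Eq.~\eqref{eq:E-def} together with the known adjoint identities for the symmetric hyperbolic causal Green function $\G$ and the self-adjointness $\J^* = \J$, $\dot g^*\circ\J = 0$ from N\"other's second theorem. Working with the natural pairing $\langle -,-\rangle$, I want to show that for gauge invariant dual densities $\tilde\alpha^*, \tilde\beta^*$ (i.e.\ $\dot g^*[\tilde\alpha^*] = \dot g^*[\tilde\beta^*] = 0$) one has $\langle \EE[\tilde\alpha^*], \tilde\beta^*\rangle = -\langle \tilde\alpha^*, \EE[\tilde\beta^*]\rangle$ modulo a term that is killed because $\tilde\beta^*$ is gauge invariant — equivalently, $(\EE)^*[\tilde\beta^*] - (-\EE[\tilde\beta^*]) \in \im\dot g$.

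First I would set $\psi = \EE[\tilde\alpha^*]$ and $\chi = \EE[\tilde\beta^*]$; by Thm.~\ref{thm:Jexsplit}, $\psi$ and $\chi$ are spacelike compact solutions of $\J[\,\cdot\,] = 0$ (and satisfy $\dot c[\,\cdot\,] = 0$, $\dot c_g[\,\cdot\,] = 0$). Then I would use the splitting $\EE\circ\J_\chi = \id \pmod{\im\dot g}$ on $\S_{SC}(F)$ from the theorem to write, for a $\phi$-Cauchy surface $\Sigma$ with adapted partition of unity $\{\chi_\pm\}$,
\begin{equation}
	\langle \EE[\tilde\alpha^*], \tilde\beta^* \rangle
	= \langle \psi, \tilde\beta^* \rangle
	= \langle \EE\circ\J_\chi[\psi], \tilde\beta^* \rangle + \langle \dot g[\eps], \tilde\beta^* \rangle ,
\end{equation}
where the last term vanishes by gauge invariance of $\tilde\beta^*$ (since $\langle \dot g[\eps], \tilde\beta^* \rangle = \langle \eps, \dot g^*[\tilde\beta^*]\rangle = 0$). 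I would then replace $\tilde\beta^*$ using item~\ref{itm:4} of Thm.~\ref{thm:Jexsplit}: a gauge invariant $\tilde\beta^*$ equals $\J[\chi]$ for $\chi \in \Secs_{SC}(F)$, and integrate by parts via the Green form for $\J$ (using $\J^* = \J$). The Heaviside-cutoff trick from the proof of Lem.~\ref{lem:sols-nondegen}, i.e.\ inserting $\theta_\pm$ characteristic functions of $I^\pm$ of a Cauchy surface and using Eq.~\eqref{eq:f-dtheta}, converts the pairing of $\EE[\tilde\alpha^*]$ against $\J[\chi]$ into a surface integral over $\Sigma$, which is manifestly symmetric under $\tilde\alpha^* \leftrightarrow \tilde\beta^*$ up to the sign flip that distinguishes retarded from advanced (this is exactly the mechanism behind $(\G_\pm)^* = \G^*_\mp$ in Sect.~\ref{sec:green-adj}).

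The main obstacle I anticipate is bookkeeping the gauge ambiguity cleanly: because $\EE = \G\circ\bar r$ and $\bar r$ is only a one-sided inverse (Eqs.~\eqref{eq:rinv1}--\eqref{eq:rinv4} show $\bar r\circ r = \id + p_f\circ\dot q$, etc.), a naive adjoint computation produces extra terms proportional to $\dot g^*$, $\dot q$, $\dot h$, and $p_\J$, and I must show each of these either annihilates a gauge invariant source or produces something in $\im\dot g$ after applying $\G$. Concretely, I expect to need the identity $\dot s\circ\dot k = \dot f\circ\dot g$ and the gauge fixing compatibility condition~\eqref{eq:gf}, exactly as they were used in the proof of Thm.~\ref{thm:Jexsplit}, to absorb these correction terms. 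Once the surface-integral representation is established, the antisymmetry is immediate, so the real content is justifying that all the off-diagonal correction terms in the Green identity vanish modulo gauge; I would organize that as a short lemma-internal computation citing~\eqref{eq:rinv1}, \eqref{eq:consist1}, \eqref{eq:gf}, and the exactness of the sequence~\eqref{eq:hyp-seq}.
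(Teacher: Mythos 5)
Your reading of the lemma is right (despite the displayed equation, the content is $(\EE)^* = -\EE$ modulo gauge terms, as the text, Eq.~\eqref{eq:E-anti} and its use in Lem.~\ref{lem:Pi-form} confirm), and your overall mechanism — causal supports, a Green-type identity for $\J$, self-adjointness $\J^*=\J$, and careful tracking of the gauge correction terms — is the same one that drives the paper's proof. But the paper's argument is a three-line operator computation that your plan circles around without quite landing on: it first establishes $\J\circ\EE_\pm = \id + p_\J\circ\dot{g}^*$ directly from \eqref{eq:Jgf-equiv} and \eqref{eq:rinv1}, then evaluates $(\EE_\mp)^*\circ\J\circ\EE_\pm$ in two ways — once by absorbing $\J\circ\EE_\pm$ on the right, once by writing $(\EE_\mp)^*\circ\J = (\EE_\mp)^*\circ\J^* = (\J\circ\EE_\mp)^*$ — obtaining $\EE_\pm = (\EE_\mp)^* + (\EE_\mp)^*\circ p_\J\circ\dot{g}^* - \dot{g}\circ p_\J^*\circ\EE_\pm$, and subtracts the two signs. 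This is the abstract packaging of exactly the retarded-against-advanced Green-identity computation you describe; the compact overlap of the retarded and advanced supports is what licenses the compositions.

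Two steps in your plan as written would go astray. First, the factorization $\EE=\G\circ\bar r$ and the identity $(\G_\pm)^*=\G^*_\mp$ are a red herring here: taking adjoints of $\G$ and $\bar r$ separately forces you to relate the adjoint hyperbolization $(\dot f^*, r^*, \bar r^*)$ back to the original one, and the paper establishes no such identities. The only thing you need the factorization for is the single composite identity $\J\circ\EE_\pm = \id + p_\J\circ\dot{g}^*$, which you never explicitly name but which is the pivot of the whole proof. Second, replacing $\tilde\beta^*$ by $\J[\chi]$ for a \emph{generic} spacelike compact $\chi$ via item~\ref{itm:4} of Thm.~\ref{thm:Jexsplit} loses the support control that makes the integration by parts close: such a $\chi$ is determined only up to a spacelike compact homogeneous solution, and the resulting surface integral $\int_\Sigma \iota^*G_\J(\EE[\tilde\alpha^*],\chi)$ shifts by the symplectic pairing of $\EE[\tilde\alpha^*]$ with that homogeneous piece, which is generically nonzero (this is precisely the non-degeneracy of $\Omega$). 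The correct choice is $\chi = \EE_\mp[\tilde\beta^*]$ — i.e., you should use $\J\circ\EE_\mp[\tilde\beta^*] = \tilde\beta^*$ for gauge invariant $\tilde\beta^*$ — whose advanced/retarded support makes the overlap with $\EE_\pm[\tilde\alpha^*]$ compact, kills the boundary terms outright, and yields $(\EE_\pm)^* = \EE_\mp$ on gauge invariant densities, hence the claimed anti-self-adjointness of $\EE$.
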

\begin{proof}
First, note that from identities~\eqref{eq:Jgf-equiv}
and~\eqref{eq:rinv1} we have
\begin{align}
	\J\circ\EE_\pm[\tilde{\alpha}^*]
	&= \sum_\pm (r\circ \dot{f} + r_c\circ \dot{c})\circ \G_\pm \circ
		\bar{r}[\tilde{\alpha}^*] \\
	&= r\circ \bar{r}[\tilde{\alpha}^*] + r_c\circ \bar{r}_\J[\tilde{\alpha}^*]\\
	&= (\id + p_\J\circ \dot{g}^*)[\tilde{\alpha}^*] .
\end{align}
It the follows that
\begin{align}
	(\EE_\mp)^* \circ \J \circ \EE_\pm
		&= (\EE_\mp)^* \circ (\J \circ \EE_\pm)
		= (\EE_\mp)^* + (\EE_\mp)^*\circ p_\J \circ \dot{g}^* , \\
	(\EE_\mp)^* \circ \J \circ \EE_\pm
		&= ((\EE_\mp)^*\circ \J^*) \circ \EE_\pm
		= \EE_\pm + \dot{g}\circ p_\J^* \circ \EE_\pm , \\
	\text{and hence}\quad
	\EE_\pm &= (\EE_\mp)^* + (\EE_\mp)^*\circ p_\J \circ \dot{g}^*
		- \dot{g}\circ p_\J^* \circ \EE_\pm .
\end{align}
Given that $\EE = \EE_+ - \EE_-$, we then have
\begin{equation}\label{eq:E-anti}
	\EE = -(\EE)^* - \dot{g}\circ p_\J^*\circ \EE
		- (\EE)^*\circ p_\J^* \circ \dot{g}^* ,
\end{equation}
which gives the desired conclusion. \qed
\end{proof}

% XXX: make term "purely hyperbolic" more prominent.
As noted earlier in this section, there is considerable gauge freedom
left that is compatible with the partial gauge fixing considered above.
We call this kind of gauge fixing \emph{purely hyperbolic}. Any further
gauge fixing conditions are then called \emph{residual}. We leave the
consideration of residual gauge fixing to future work. A principal
difficulty in dealing with residual gauge fixing conditions is that the
resulting constraints are no longer parametrizable (such as operators
that are elliptic on a family of spatial slices). Thus, the kernel of
the gauge fixing conditions may contain very few, if any solutions with
spacelike compact support, which would be difficult to fit into the
current formal framework for tangent and cotangent spaces to the space
of solutions.

\subsubsection{Formal symplectic structure}\label{sec:formal-symp}
In this section we define a formal presymplectic structure on a slow
patch $\S(F,C)$ of the solution space $\S_H(F)$; using the language of
Sect.~\ref{sec:formal-forms}, it consists of a formal local differential
$2$-form $\Omega_\Sigma$ that is formally closed, $\delta\Omega_\Sigma =
0$. In a later section, Sect.~\ref{sec:formal-pois}, we shall see that
some sufficient conditions ensure that $\Omega_\Sigma$ is invertible and
hence symplectic. Thus, the solution space becomes the phase space of
classical field theory, as it can be endowed with both formal symplectic
and Poisson structures. In the presence of gauge symmetries, this
presymplectic form projects to an actual symplectic form on the space of
gauge equivalence classes of solutions, that is, the physical phase
space, or rather a slow patch $\bar{S}(F,C)$ thereof. The symplectic
forms will also be seen in a later section, Sect.~\ref{sec:glob-phsp},
to agree on overlapping slow patches and hence constitute symplectic
structure on the global phase space $\S_H(F)$.

Fix a globally hyperbolic chronal cone bundle $C\to M$, as well as a
$C$-Cauchy surface $\Sigma\sso M$.
\begin{definition}
Given a presymplectic potential current density $\omega\in
\Forms^{n-1,2}(F)$ compatible with a PDE system $\iota\colon \E\sse
J^kF$ and a closed codim-$1$ surface $\Sigma\sso M$, we can define the
following local differential $2$-form on the solution space $\S(F,C)$:
\begin{equation}
	\Omega_\Sigma(\psi,\xi)[\phi]
	= \int_\Sigma\omega(\psi,\xi)[\phi]
	= \int_\Sigma(j^\oo\phi)^* [\iota_{\hat{\xi}}\iota_{\hat{\psi}}\omega] .
\end{equation}
\end{definition}
Recall that two formal tangent vectors $\psi,\xi\in T_\phi\S$ correspond
to linearized solutions with spatially compact support, while
$\hat{\psi}$ and $\hat{\xi}$ are evolutionary vector fields on $J^\oo F$
defined by them.

Ideally, we would then prove that $\Omega_\Sigma[\phi]$ is a smooth
differential $2$-form on the infinite dimensional manifold $\S(F,C)$.
However, that would necessitate a level of functional analytic detail
going beyond the formal approach we have adopted. Instead, we note that
the dependence of $\Omega_\Sigma[\phi]$ on the solution section $\phi$
is entirely through the ordinary differential form $\omega$ on the jet
bundle $J^\oo F$ (or even one of its finite dimensional projections $J^k
F$). Since $\omega$ is smooth (by hypothesis), we also declare
$\Omega_\Sigma$ \emph{formally smooth}. In addition to $\phi$, the form
$\Omega_\Sigma[\Phi]$ depends on the surface $\Sigma$. For every
$\phi\in \S(F,C)$, the supports of the linearized solutions $\psi$ and
$\xi$ (formal tangent vectors) will have compact intersection with any
surface that is $\phi$-Cauchy. A fortiori, since $\Sigma$ is
$\phi$-Cauchy for every $\phi\in \S(F,C)$, the integral defining
$\Omega_\Sigma(\psi,\xi)[\phi]$ converges. Moreover, because the
presymplectic current density $\omega$ is horizontally closed on
solutions, the value of the integral remains well defined and does not
change as $\Sigma$ is varied within the same homology class.

It is now straightforward to prove the following
\begin{lemma}
The formal local $2$-form $\Omega_\Sigma$ is formally closed,
$\delta\Omega_\Sigma = 0$, and it depends only on the homology class of
$\Sigma$.
\end{lemma}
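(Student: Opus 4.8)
The plan is to reduce both claims — formal closedness $\delta\Omega_\Sigma = 0$ and homology invariance — to the two properties of $\omega$ established in the preceding lemma, namely that $\iota_\oo^*\omega$ is $\dh$-closed and $\dv$-closed on solutions. The key computational identity is the one derived at the end of Sect.~\ref{sec:formal-forms}: for a formal local $v$-form $A_N$ associated to $\alpha\in\Forms^{h,v}(F)$, its formal exterior derivative $\delta A_N$ is the formal local $(v{+}1)$-form associated to $\dv\alpha$, integrated over the \emph{same} submanifold $N$. I would apply this with $\alpha = \omega$ (so $h = n-1$, $v = 2$) and $N = \Sigma$.

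For closedness, the identity gives $(\delta\Omega_\Sigma)(\psi_0,\psi_1,\psi_2)[\phi] = \int_\Sigma (j^\oo\phi)^*[\iota_{\hat\psi_2}\iota_{\hat\psi_1}\iota_{\hat\psi_0}\dv\omega]$. But the integrand is the pullback along $j^\oo\phi$ of a form on $J^\oo F$, and since $\phi$ is a solution this pullback factors through $\iota_\oo$, so $\iota_\oo^*\dv\omega = \dv\iota_\oo^*\omega = 0$ by the lemma. Hence $\delta\Omega_\Sigma = 0$. (One should note that the contractions with evolutionary fields commute past $\iota_\oo^*$ appropriately — this is just the remark, used already in that section, that evolutionary vector fields are tangent to the prolonged solution submanifold, so $\iota_\oo^*\iota_{\hat\psi}\dv\omega = \iota_{\hat\psi}\iota_\oo^*\dv\omega$ up to the usual care with the restricted fields.)

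For homology invariance, I would argue as follows. Suppose $\Sigma$ and $\Sigma'$ are homologous $C$-Cauchy surfaces, so $\Sigma - \Sigma' = \del R$ for some $n$-chain $R$ in $M$ (lying in a $\phi$-globally hyperbolic region). For fixed linearized solutions $\psi,\xi$ with spacelike compact support, the intersection $\supp\psi\cap\supp\xi\cap R$ is compact, so Stokes' theorem applies: $\Omega_\Sigma(\psi,\xi)[\phi] - \Omega_{\Sigma'}(\psi,\xi)[\phi] = \int_{\del R}(j^\oo\phi)^*[\iota_{\hat\xi}\iota_{\hat\psi}\omega] = \int_R \d\big((j^\oo\phi)^*[\iota_{\hat\xi}\iota_{\hat\psi}\omega]\big) = \int_R (j^\oo\phi)^*\big[\dh(\iota_{\hat\xi}\iota_{\hat\psi}\omega)\big]$, using that the de Rham differential on $M$ intertwines with $\dh$ under jet prolongation. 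Since the $\hat\psi,\hat\xi$ are $\dh$-closed (evolutionary fields satisfy $\Lie_{\hat\psi}\dh = \dh\Lie_{\hat\psi}$ and have no horizontal component) the contractions commute with $\dh$ up to sign, reducing the integrand to $\pm\iota_{\hat\xi}\iota_{\hat\psi}(j^\oo\phi)^*\dh\omega$, and $\iota_\oo^*\dh\omega = \dh\iota_\oo^*\omega = 0$ on solutions. Hence the difference vanishes.

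The main obstacle is keeping the formal bookkeeping honest: the evolutionary vector fields $\hat\psi$, $\hat\xi$ live on all of $J^\oo F$, but the linearized-solution conditions only constrain their restrictions to the prolonged solution submanifold $\iota_\oo\colon\E^\oo\sso J^\oo F$, so one must be careful that $\iota_\oo^*(\dv\omega)$ and $\iota_\oo^*(\dh\omega)$ genuinely control $\iota_\oo^*(\iota_{\hat\xi}\iota_{\hat\psi}\dv\omega)$ etc. — i.e.\ that contraction and the pullback $\iota_\oo^*$ commute in the needed sense. This is exactly the point handled implicitly in the proof of the previous lemma (where $\EL_a$ and $\dv\EL_a$ generate the ideal killed by $\iota_\oo^*$), and in the fully rigorous version it would require the standard fact that evolutionary fields are tangent to $\E^\oo$ when evaluated along solutions; in the present formal setting it is safe to invoke it. Everything else is a routine application of Stokes' theorem plus the $\dh$/$\d$ intertwining and the already-proven lemma.
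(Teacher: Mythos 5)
Your proposal is correct and follows essentially the same route as the paper's proof: closedness is obtained by passing $\delta$ through the integral to $\dv\omega$, factoring the pullback $(j^\oo\phi)^*$ through $\iota_\oo^*$ and invoking the vertical closure of $\iota_\oo^*\omega$, while homology invariance is reduced to de~Rham closedness of the integrand via the intertwining of $\d$ with $\dh$, the anticommutation identity~\eqref{eq:idh-comm}, the horizontal closure of $\iota_\oo^*\omega$, and Stokes' theorem. Your explicit remarks on the compactness of $\supp\psi\cap\supp\xi\cap R$ and on the commutation of contractions past $\iota_\oo^*$ only make explicit points the paper defers or handles implicitly.
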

\begin{proof}
Fix a background solution $\phi\in \S_H(F)$ and consider linearized
solutions $\chi,\xi,\psi\in T_\phi\S$. It is straight forward to check
that $\Omega_\Sigma$ is vertically closed:
\begin{align}
	(\delta\Omega_\Sigma)(\chi,\xi,\psi)[\phi]
	&= \int_\Sigma (j^\oo\phi)^*
		[\iota_{\hat\psi}\iota_{\hat\xi}\iota_{\hat\chi}\dv\omega]
	= \int_\Sigma (j^\oo\phi)^*\iota_\oo^*
		[\iota_{\hat\psi}\iota_{\hat\xi}\iota_{\hat\chi}\dv\omega] \\
	&= \int_\Sigma (j^\oo\phi)^*
		[\iota_{\hat\psi}\iota_{\hat\xi}\iota_{\hat\chi} (\dv\iota^*\omega)]
	= 0 ,
\end{align}
where we have used the vertical closure $\iota_\oo^*\omega$, from the
presymplectic hypothesis.

Independence of the representative of the homology class of $\Sigma$
follows if we can show that the integrand of
$\Omega_\Sigma(\chi,\xi)[\phi] = \int_\Sigma (j^\oo\phi)^*
[\iota_{\hat\xi}\iota_{\hat\chi} \omega]$ is de~Rham closed on $M$. This
follows directly from the horizontal closure of $\iota_\oo^*\omega$,
again from the presymplectic hypothesis:
\begin{equation}
	\d[(j^\oo\phi)^*\iota_{\hat\xi}\iota_{\hat\chi}\omega]
	= (j^k\phi)^*\dh\iota^*[\iota_{\hat\xi}\iota_{\hat\chi}\omega]
	= (j^k\phi)^*\iota_{\hat\xi}\iota_{\hat\chi}[\dh\iota^*\omega]
	= 0 .
\end{equation}
Note that, strictly speaking, on needs to apply Stokes' theorem to
establish independence of the representative $\Sigma$, which requires
the convergence of all intermediate integrals. This property will
actually hold in the main application of this result, which is
Def.~\ref{def:symp} below, so we do not discuss it in more detail here.

In both cases, we have introduced the pullback $\iota_\oo^*$ because
$\phi$ is a solution and $\chi,\xi,\psi$ are linearized solutions, so
that the pullback $(j^\oo\phi)^*$ factors through $\iota_\oo^*$. \qed
\end{proof}

\begin{definition}\label{def:symp}
Given a (gauge fixed, if necessary) variational PDE system on the field
bundle $F\to M$, with Lagrangian density $\L$, that is equivalent to a
quasilinear, symmetric hyperbolic system on $F\to M$ with constraints
and a globally hyperbolic chronal cone bundle $C\to M$, we define the
formal \emph{variational presymplectic form} on the space $\S(F,C)$ of
$C$-slow solutions as $\Omega = \Omega_\Sigma$, where we use the
presymplectic current density $\omega$ associated to $\L$ and $\Sigma$
is $C$-Cauchy.
\end{definition}
Using the smooth splitting of $M$ induced by a globally hyperbolic cone
bundle, it is straight forward to show that all $C$-Cauchy surfaces
belong to the same homology%
	\footnote{The appropriate homology theory here should correspond to a
	variant of locally finite Borel-Moore homology, where one considers
	only chains whose intersection with every $C$-spatially compact set
	is compact. This variant does not appear to have gotten any attention
	in the literature and thus deserves further study.} %
class. That, together with the fact that a $C$-Cauchy surface will
compactly intersect the support of any $\phi$-spacelike compact
linearized solution, with $\phi\in\S(F,C)$, shows that $\Omega$ is well
defined, independently of the choice of $\Sigma$.

A bilinear form defines a linear map from a vector space to its
algebraic dual. A similar statement holds for a continuous bilinear form
and to topological dual space. However, our formal cotangent space
$T^*_\phi\bar{S}$ is neither the algebraic nor the topological dual of
the formal tangent space $T_\phi\bar{S}$. Thus we have to check this
property for $\Omega$ by hand. This will be accomplished using an analog
of a splitting map for the Jacobi system from Thm.~\ref{thm:Jexsplit},
which is analogous Lem.~\ref{lem:exsplit} for purely hyperbolic systems.
The argument in the proof was inspired by~\cite{fr-pois}.
\begin{lemma}
The presymplectic form $\Omega$ defines the following map from the
formal tangent space to the formal cotangent space:
\begin{align}
	\Omega\colon & T_\phi\S \to T^*_\phi\S \\
		& \psi \mapsto [\tilde{\alpha}^*] , \\
	\text{with} ~~
	\tilde{\alpha}^* &= \J_\chi[\psi] = \pm\J[\chi_\pm\psi] ,
\end{align}
where $\J\colon \Secs(F) \to \Secs(\tilde{F}^*)$ the Jacobi differential
operator at the dynamical linearization point $\phi\in \S_H(F)$ and
$\{\chi_\pm\}$ is a partition of unity adapted to a $C$-Cauchy surface
$\Sigma$.
\end{lemma}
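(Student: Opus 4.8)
The plan is to prove the two things the statement bundles together: that $\tilde{\alpha}^* = \pm\J[\chi_\pm\psi]$ is a well-defined element of $\Secs_0(\tilde{F}^*)$ whose class in $T^*_\phi\S$ is independent of the auxiliary data (the $C$-Cauchy surface $\Sigma$ together with the flanking surfaces $\Sigma^\pm$ of Def.~\ref{def:adapt-pu}, and the adapted partition $\{\chi_\pm\}$), and that under the natural pairing this class represents the functional $\xi\mapsto\Omega_\Sigma(\psi,\xi)[\phi]$ on $T_\phi\S$. The argument is a close relative of the partition-of-unity splittings of Lem.~\ref{lem:exsplit} and Thm.~\ref{thm:Jexsplit}, in the spirit of~\cite{fr-pois}.

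\emph{Well-definedness.} Since $\psi\in T_\phi\S$ obeys $\dot{f}[\psi]=\dot{c}[\psi]=0$, the equivalence~\eqref{eq:Jgf-equiv} gives $\J[\psi]=0$, so $\J[\chi_+\psi]=-\J[\chi_-\psi]$ and the two signs in $\tilde{\alpha}^*$ agree. Because $\J$ is a differential operator, $\supp\J[\chi_+\psi]$ lies in $\supp\chi_+\cap\supp\chi_-$, hence in a causal diamond $J^+(\Sigma^-)\cap J^-(\Sigma^+)$ between two Cauchy surfaces, and also in the spacelike compact set $\supp\psi$; in a globally hyperbolic spacetime this intersection is compact, so $\tilde{\alpha}^*=\J_\chi[\psi]\in\Secs_0(\tilde{F}^*)$.

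\emph{Identification with $\Omega$.} First recall the variational identity behind $\omega$: since $\dv^2\L=0$ gives $\dh\omega=-\dv\EL_a\wedge\dv u^a$ on $J^\oo F$ and $\dh$ anticommutes with interior products of evolutionary vector fields, contracting with the fields $\hat{\eta},\hat{\xi}$ of arbitrary sections $\eta,\xi\colon M\to F$ and pulling back along the background solution $j^\oo\phi$ yields, up to a fixed overall sign, $\d[\bar{G}(\eta,\xi)]=\J[\eta]\cdot\xi-\J[\xi]\cdot\eta$, where $\bar{G}(\eta,\xi):=(j^\oo\phi)^*[\iota_{\hat{\xi}}\iota_{\hat{\eta}}\omega]$ is the integrand of $\Omega_\Sigma$ and $\J$ is the Jacobi operator at $\phi$; thus $\bar{G}$ is a Green $(n-1)$-form for the self-adjoint $\J$ (cf.\ the proof of the lemma preceding Def.~\ref{def:symp}). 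Now fix $\psi,\xi\in T_\phi\S$. Since $\J[\xi]=0$, this identity applied to $\eta=\chi_+\psi$ gives $\J[\chi_+\psi]\cdot\xi=\pm\d[\bar{G}(\chi_+\psi,\xi)]$, an exact $n$-form with the compact support found above. Pick $C$-Cauchy surfaces $\Sigma'$ strictly to the past of $\Sigma^-$ and $\Sigma''$ strictly to the future of $\Sigma^+$; then $\chi_+$ vanishes together with all its derivatives on a neighborhood of $\Sigma'$, while $\chi_+\equiv1$ on a neighborhood of $\Sigma''$. Applying Stokes' theorem on the region between $\Sigma'$ and $\Sigma''$ --- the boundary integrals converging because $\psi$ and $\xi$ are spacelike compact --- and using $\chi_+\psi=\psi$ near $\Sigma''$, $\chi_+\psi=0$ near $\Sigma'$, we obtain
\[
  \int_M \xi\cdot\J[\chi_+\psi] \;=\; \pm\int_M \d[\bar{G}(\chi_+\psi,\xi)] \;=\; \pm\int_{\Sigma''}\bar{G}(\psi,\xi) \;=\; \pm\,\Omega_\Sigma(\psi,\xi)[\phi] ,
\]
where the last step uses the homology-independence of $\Omega_\Sigma$ from the lemma preceding Def.~\ref{def:symp} (all $C$-Cauchy surfaces being homologous). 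Hence $\langle\xi,\J_\chi[\psi]\rangle=\pm\,\Omega_\Sigma(\psi,\xi)[\phi]$ for every $\xi\in T_\phi\S$, which is exactly the statement that $\Omega$ sends $\psi$ to the class of $\J_\chi[\psi]$, up to the sign convention for associating a map $T_\phi\S\to T^*_\phi\S$ to the $2$-form.

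\emph{Independence of the choices.} Given two sets of adapted data, the difference $\chi_+-\chi'_+$ vanishes in the far past and far future, hence is supported between two Cauchy surfaces, so $\eta:=(\chi_+-\chi'_+)\psi$ has compact support; then $\J[\chi_+\psi]-\J[\chi'_+\psi]=\J[\eta]=\J^*[\eta]=\dot{f}^*[r^*[\eta]]+\dot{c}^*[r_c^*[\eta]]$, using $\J^*=\J$ and~\eqref{eq:Jgf-equiv}, with $r^*[\eta]\in\Secs_0(F)$ and $r_c^*[\eta]\in\Secs_0(\tilde{E}^*)$, so the difference is null in $T^*_\phi\S$ by the very definition of that space. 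I expect the main obstacle to be the support bookkeeping: showing $\J[\chi_\pm\psi]$ is genuinely compactly supported (which uses both global hyperbolicity of $C$ and spacelike compactness of $\psi$), placing $\Sigma'$ and $\Sigma''$ so that $\chi_+$ is locally constant together with all derivatives near each, and checking convergence of the Stokes boundary terms --- together with pinning down the sign in the $\omega$--Jacobi Green-form identity so that the correspondence with $\Omega$ comes out with the stated orientation.
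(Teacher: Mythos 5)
Your proposal is correct and follows essentially the same route as the paper: the heart of both arguments is the variational identity $\dh\omega=-\dv\EL_a\wedge\dv u^a$ (i.e.\ that the pulled-back $\omega$ is a Green form for the self-adjoint Jacobi operator) combined with Stokes' theorem and the adapted partition of unity; the paper merely organizes the Stokes step by splitting $\psi=\chi_+\psi+\chi_-\psi$ and deforming $\Sigma$ to $\Sigma^\mp$ separately, whereas you apply it once to $\chi_+\psi$ over a slab between a far-past and far-future Cauchy surface, which is an equivalent bookkeeping. The one genuine divergence is the independence-of-choices step: the paper invokes the splitting identity $\psi=\EE\circ\J_\chi[\psi]$ and the cohomological exactness of Thm.~\ref{thm:Jexsplit} to place $\J_{\chi'}[\psi]-\J_\chi[\psi]$ in $\im\J$, while you observe directly that $(\chi_+-\chi'_+)\psi$ is compactly supported and decompose $\J=\J^*=\dot{f}^*\circ r^*+\dot{c}^*\circ r_c^*$ via~\eqref{eq:Jgf-equiv} to land in the null space of $T^*_\phi\S$ --- a more elementary argument that avoids the causal Green function $\EE$ altogether.
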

\begin{proof}
Using the adapted partition of unity, we can write any solution of
$\dot{f}[\psi] = 0$ as $\psi = \psi_+ + \psi_-$, with $\psi_\pm =
\chi_\pm \psi$. If the solution also satisfies the purely hyperbolic
gauge fixing condition $\dot{s}^*\circ r^*[\psi] = 0$, by the
equivalence discussed earlier in this section it also satisfies the
Jacobi equation $\J[\psi] = 0$. Hence $\J[\psi_+ + \psi_-] = 0$ or
$\J[\psi_+] = -\J[\psi_-] = \J_\chi[\psi]$. Note that the support of
$\J[\psi_\pm]$ is compact, since $\psi_\pm$ satisfy the Jacobi equation
away from the intersection $S^+\cap S^-\cap \supp \psi$, which is by
hypothesis compact.

Next, we want to find a dual density $\tilde{\alpha}^*$ that satisfies
$\Omega[\phi](\xi,\psi) = \langle \xi,\tilde{\alpha}^*\rangle$ for any
$\xi\in T_\phi\S$, which in particular satisfies $\J[\xi]=0$. Recall
that an adapted partition of unity also depends on two additional
$C$-Cauchy surfaces $\Sigma^\pm\sso I^\pm(\Sigma)$ and the supports of
the partition are contained in $\supp\chi_\pm \sse S^\pm =
I^\pm(\Sigma^\mp)$.
\begin{align}
	\Omega(\xi,\psi)[\phi]
	&= \int_\Sigma \omega(\xi,\psi)[\phi]
	= \sum_\pm \int_\Sigma \omega(\xi,\psi_\pm)[\phi] \\
	&= \sum_\pm\int_{\Sigma^\mp}\omega[\phi](\xi,\psi_\pm)
		+ \sum_\pm \pm\int_{S^\pm\cap I^\mp(\Sigma)}\d\omega(\xi,\psi_\pm)[\phi] \\
	&= \sum_\pm \pm\int_{I^\mp(\Sigma)}(\dh\omega)(\xi,\psi_\pm)[\phi] \\
	&= \sum_\pm \pm\int_{I^\mp(\Sigma)}
		(-\dv\EL_a\wedge\dv u^a)(\xi,\psi_\pm)[\phi] \\
	&= \sum_\pm \mp\int_{I^\mp(\Sigma)}
		[(\J^I_{ab}\del_I\xi^b)\psi_\pm^a - (\J^I_{ab}\del_I\psi_\pm^b)\xi^a] \\
	&= \int_{I^-(\Sigma)} \xi\cdot\J[\psi_+]
		- \int_{I^+(\Sigma)} \xi\cdot\J[\psi_-] \\
	&= \int_{I^-(\Sigma)} \xi\cdot \J_\chi[\psi]
		- \int_{I^+(\Sigma)} \xi\cdot(-\J_\chi[\psi]) \\
	&= \int_M \xi\cdot \J_\chi[\psi]
\end{align}
After the integration by parts, the boundary integrals over $\Sigma^\pm$
were dropped since they did not intersect the support of their
integrands. Since the $\supp \psi_\pm \sse S^\pm$, the integration over
$S^\pm\cap I^\mp(\Sigma)$ could be extended to all of $I^\mp(\Sigma)$.
The term $\psi_\pm\cdot \J[\xi]$ was dropped since $\xi$ is a linearized
solution.

Recall that we are not interested in the dual density $\tilde{\alpha}^*
= \J_\chi[\psi]$ specifically, which explicitly depends on the adapted
partition of unity, but rather the equivalence class
$[\tilde{\alpha}^*]\in T^*_\phi\S$, defined modulo $\im \J$. Consider
another adapted partition of unity $\{\chi'_\pm\}$. Because they both
provide splitting maps, we have $\psi = \EE\circ\J_\chi[\psi] = \EE\circ
\J_{\chi'}[\psi]$. Then $\EE[\J_{\chi'}[\psi]-\J_\chi[\psi]] = 0$. By
exactness, $\J_\chi[\psi]$ and $\J_{\chi'}[\psi]$ must differ by an
element of $\im \J$; in other words, they represent the same equivalence
class in $T^*_\phi\S$.

To complete the proof, we use the non-degeneracy of the natural pairing
between $T_\phi\S$ and $T^*_\phi\S$ to define the operator $\Omega$ by
the formula
\begin{equation}
	\langle \xi, \Omega \psi \rangle
	= \Omega[\phi](\xi,\psi)
	= \langle \xi, \tilde{\alpha}^* \rangle
	= \langle \xi, [\tilde{\alpha}^*] \rangle ,
\end{equation}
so that $\Omega\psi = [\tilde{\alpha}^*] \in T^*_\phi\S$, with
$\tilde{\alpha}^* = \J_\chi[\psi]$. \qed
\end{proof}

Notice that in the presence of gauge symmetries (the gauge fixing is
only partial), the form $\Omega$ is degenerate, since every pure gauge
solution lies in its kernel:
\begin{equation}
	\Omega(\dot{g}[\eps],\psi)[\phi]
	= \langle \dot{g}[\eps], \pm\J[\chi_\pm\psi] \rangle
	= \langle \eps, \pm\dot{g}^*\circ\J[\chi_\pm\psi] \rangle
	= 0
\end{equation}
for any $\psi$, since N\"other's second theorem implies that
$\dot{g}^*\circ \J = 0$~\cite{lw}.
\begin{corollary}\label{cor:bOmega}
The $2$-form $\Omega$ on $T_\phi\S$ projects to a $2$-form $\bar\Omega$
on $T_\phi\bar\S$ and hence defines a map
\begin{align}
	\bar{\Omega}\colon & T_\phi\bar\S \to T^*_\phi\bar\S \\
		& [\psi] \mapsto [\tilde{\alpha}^*] , \\
	\text{with} ~~
	\tilde{\alpha}^* &= \J_\chi[\psi] .
\end{align}
\end{corollary}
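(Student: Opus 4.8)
The plan is to verify that the bilinear form $\Omega$ on $T_\phi\S$ descends to the quotient $T_\phi\bar\S = T_\phi\S/\im\dot g$, and that the induced map $\bar\Omega$ is well-defined as a map into $T^*_\phi\bar\S$. Two things must be checked: (i) $\Omega(\psi,\xi)[\phi]$ depends only on the gauge equivalence classes $[\psi]$ and $[\xi]$, so that $\bar\Omega([\psi],[\xi])[\phi]$ makes sense; and (ii) the cotangent vector $[\tilde\alpha^*] = \J_\chi[\psi]\in T^*_\phi\S$ actually lies in the subspace $T^*_\phi\bar\S$ of gauge invariant classes, i.e.\ satisfies $\dot g^*[\tilde\alpha^*]=0$.

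First I would establish (i). The preceding Lemma already showed that $\Omega$ annihilates pure gauge directions in the second slot: $\Omega(\dot g[\eps],\psi)[\phi]=0$ for all $\psi$, using N\"other's second theorem $\dot g^*\circ\J=0$ and the identity $\Omega(\xi,\psi)[\phi]=\langle\xi,\J_\chi[\psi]\rangle$. By antisymmetry of $\Omega$ the same holds in the first slot, $\Omega(\psi,\dot g[\eps])[\phi]=0$. Hence if $\psi'=\psi+\dot g[\eps_1]$ and $\xi'=\xi+\dot g[\eps_2]$ with $\eps_1,\eps_2\in\Secs_{SC}(P)$, bilinearity gives $\Omega(\psi',\xi')[\phi]=\Omega(\psi,\xi)[\phi]$. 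This shows $\Omega$ factors through $T_\phi\bar\S\times T_\phi\bar\S$, defining $\bar\Omega$.

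For (ii), I would compute $\dot g^*[\J_\chi[\psi]]$ directly. Since $\J_\chi[\psi]=\pm\J[\chi_\pm\psi]$ and $\dot g^*\circ\J=0$ (N\"other's second theorem again), we get $\dot g^*[\J_\chi[\psi]]=\pm\dot g^*\circ\J[\chi_\pm\psi]=0$, so $\tilde\alpha^*=\J_\chi[\psi]$ indeed represents an element of $T^*_\phi\bar\S$. Finally, one must check that the class $[\tilde\alpha^*]$ in $T^*_\phi\bar\S$ is independent of the chosen adapted partition of unity $\{\chi_\pm\}$: this is exactly the argument already given in the proof of the preceding Lemma (two adapted partitions yield $\J_\chi[\psi]$ and $\J_{\chi'}[\psi]$ differing by an element of $\im\J$, by applying the splitting identity $\EE\circ\J_\chi=\id\pmod{\im\dot g}$ from Thm.~\ref{thm:Jexsplit} to both and using exactness), so nothing new is needed. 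Assembling these pieces, $\bar\Omega\colon T_\phi\bar\S\to T^*_\phi\bar\S$, $[\psi]\mapsto[\tilde\alpha^*]=[\J_\chi[\psi]]$, is well-defined, which is the assertion.

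The only mild subtlety — and the point I would spell out most carefully — is keeping track of supports: when $\psi$ (or $\eps_i$) is only spacelike compact rather than compact, one must confirm that $\chi_\pm\psi$ is a legitimate argument and that $\J[\chi_\pm\psi]$ has compact support so that the pairing $\langle\xi,\J_\chi[\psi]\rangle$ converges. But this was already handled in the proof of the preceding Lemma (the support of $\J[\psi_\pm]$ meets $S^+\cap S^-\cap\supp\psi$, which is compact), so the present corollary is essentially a formal consequence of that Lemma together with N\"other's second theorem, and I do not expect any genuine obstacle.
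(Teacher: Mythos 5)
Your proposal is correct and follows essentially the same route the paper takes: the corollary is presented there as an immediate consequence of the displayed computation $\Omega(\dot{g}[\eps],\psi)[\phi]=\langle\eps,\pm\dot{g}^*\circ\J[\chi_\pm\psi]\rangle=0$ (N\"other's second theorem $\dot{g}^*\circ\J=0$) together with the preceding lemma, which is exactly what you assemble. Your extra care about antisymmetry in the first slot, the target landing in $T^*_\phi\bar\S$, and independence of the adapted partition of unity merely makes explicit what the paper leaves implicit.
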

In other words, formally, the quotient projection to the physical phase
space effects a presymplectic reduction $(\S_H(F),\Omega) \to
(\bar{S}_H(F),\bar\Omega)$. We shall see later on that $\bar\Omega$ is
non-degenerate and hence symplectic.

\subsubsection{Formal Poisson bivector, Peierls formula}\label{sec:formal-pois}
In this section we show that the formal symplectic form $\bar{\Omega}$
defined above is invertible and that its inverse, the \emph{formal
Poisson bivector} $\Pi$, is given by the Peierls formula
\begin{equation}
	\Pi = \EE ,
\end{equation}
where $\EE$ is the causal Green function of the Jacobi operator $\J$ as
defined in Sect.~\ref{sec:constr-gf}. To show that $\Pi$ is indeed a Poisson bivector, it
suffices to show that (a) it is an antisymmetric bilinear form on the
formal cotangent space, (b) it defines a map from the formal cotangent
space to the formal tangent space and (c) it is a two-sided inverse of
$\bar{\Omega}$ defined in Cor.~\ref{cor:bOmega}. The fact that $\Pi$ defines a Poisson
bracket, with its Leibniz and Jacobi identities, then formally follows
from standard arguments.

\begin{lemma}\label{lem:pure-hyp-peierls}
The Peierls formula specifies a map from the formal cotangent space to
the formal tangent space:
\begin{align}
	\Pi\colon & T^*_\phi\bar{\S}\to T_\phi\bar{\S} \\
		& [\tilde{\alpha}^*] \mapsto [\psi] , \quad
		\text{with}\quad \dot{g}^*[\tilde{\alpha}^*] = 0 \\
\text{and}\quad
	\psi = {}& \EE[\tilde{\alpha}^*] .
\end{align}
\end{lemma}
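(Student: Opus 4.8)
The plan is to verify two things: that for each compactly supported, gauge-invariant representative $\tilde{\alpha}^*$ of a class in $T^*_\phi\bar{\S}$ (so $\dot{g}^*[\tilde{\alpha}^*]=0$) the section $\psi=\EE[\tilde{\alpha}^*]$ is a legitimate representative of an element of $T_\phi\bar{\S}$, and that the resulting gauge class $[\psi]$ does not depend on the chosen representative of $[\tilde{\alpha}^*]$. The first point is quick. Since $\EE=\G\circ\bar{r}$ with $\bar{r}$ a differential operator (hence support preserving) and $\G\colon\Secs_0(\tilde{F}^*)\to\Secs_{SC}(F)$ by Prop.~\ref{prp:exact}, we get $\psi\in\Secs_{SC}(F)$; and, as already recorded right after Eq.~\eqref{eq:E-def}, a gauge-invariant source forces $\dot{f}[\psi]=0$ and $\dot{c}[\psi]=0$ (equivalently $\J[\psi]=0$, $\dot{c}_g[\psi]=0$), so $[\psi]$ genuinely lies in $T_\phi\bar{\S}$.

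For the second point I would argue by duality rather than push the adjoint equivalences through $\EE$ by hand. By linearity it is enough to show: if $\tilde{\alpha}^*=\dot{f}^*[\xi]+\dot{c}^*[\tilde{\eps}^*]$ with $\xi\in\Secs_0(F)$, $\tilde{\eps}^*\in\Secs_0(\tilde{E}^*)$ and $\dot{g}^*[\tilde{\alpha}^*]=0$, then $\EE[\tilde{\alpha}^*]\in\im\dot{g}$, that is $[\EE\tilde{\alpha}^*]=0$ in $T_\phi\bar{\S}$. By the non-degeneracy of the natural pairing between $T_\phi\bar{\S}$ and $T^*_\phi\bar{\S}$ (Lem.~\ref{lem:ginv-constr-nondegen}), this reduces to showing $\langle\EE[\tilde{\alpha}^*],\tilde{\beta}^*\rangle=0$ for every compactly supported gauge-invariant $\tilde{\beta}^*$. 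Writing $\langle\EE[\tilde{\alpha}^*],\tilde{\beta}^*\rangle=\langle\tilde{\alpha}^*,(\EE)^*[\tilde{\beta}^*]\rangle$ and using that $(\EE)^*$ agrees with $\EE$ modulo $\im\dot{g}$ on gauge-invariant inputs (Lem.~\ref{lem:E-anti}), the $\im\dot{g}$ ambiguity pairs to zero against $\tilde{\alpha}^*$ because $\dot{g}^*[\tilde{\alpha}^*]=0$; so it remains to see $\langle\tilde{\alpha}^*,\EE[\tilde{\beta}^*]\rangle=0$. Moving $\dot{f}^*$ and $\dot{c}^*$ back across the pairing gives $\langle\dot{f}^*[\xi],\EE\tilde{\beta}^*\rangle=\langle\xi,\dot{f}[\EE\tilde{\beta}^*]\rangle$ and $\langle\dot{c}^*[\tilde{\eps}^*],\EE\tilde{\beta}^*\rangle=\langle\tilde{\eps}^*,\dot{c}[\EE\tilde{\beta}^*]\rangle$, and both vanish by the first point applied to the gauge-invariant $\tilde{\beta}^*$. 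Support bookkeeping is routine: $\xi$, $\tilde{\eps}^*$, $\tilde{\alpha}^*$, $\tilde{\beta}^*$ are compactly supported while $\EE$ outputs spacelike compact support, so every intermediate integral converges. The argument is carried out in the general constrained-and-gauge case; the unconstrained or gauge-free sub-cases follow by deleting the corresponding terms.

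The step I expect to be the main obstacle is precisely this descent to equivalence classes. A frontal attack --- substituting $\dot{f}^*=\J\circ\bar{r}^*+\dot{c}_g^*\circ\bar{r}_c^*$ and $\dot{c}^*=\J\circ\bar{r}_\J^*+\dot{c}_g^*\circ\bar{r}_g^*$ into $\EE=\G\circ\bar{r}$ and trying to land in $\im\dot{g}$ via the relation $\im(\EE\circ\J)\sse\im\dot{g}$ from Thm.~\ref{thm:Jexsplit} --- stalls on the leftover $\dot{c}_g^*$ terms and on the fact that the $r$- and $\bar{r}$-operators are mutual inverses only modulo differential identities (Eqs.~\eqref{eq:rinv1}--\eqref{eq:rinv4}). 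The duality argument circumvents this by trading the algebra for the vanishing of a pairing, which in turn only uses properties of $\EE$ on gauge-invariant densities that are already in hand. The points deserving care are confirming that the equivalence relation defining $T^*_\phi\bar{\S}$ does force the difference $\dot{f}^*[\xi]+\dot{c}^*[\tilde{\eps}^*]$ to be gauge invariant --- which is what makes Lem.~\ref{lem:E-anti} and Lem.~\ref{lem:ginv-constr-nondegen} applicable --- and that each adjoint manipulation keeps the support classes of the two paired factors compatible with the natural pairing.
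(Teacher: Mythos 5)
Your proof is correct, but it takes a genuinely different route from the paper's. The paper argues constructively: using the equivalence~\eqref{eq:Jgf-equiv} and the self-adjointness of $\J$, it rewrites an arbitrary representative of $[0]\in T^*_\phi\bar\S$ as $\J[\xi]+\dot{c}_g^*[\tilde{\gamma}^*]$, and then pushes $\EE$ through each piece by direct operator algebra --- $\EE\circ\J[\xi]\in\im\dot{g}$ from Thm.~\ref{thm:Jexsplit}, the identity $\dot{c}_g\circ\EE=(r_g\circ\H\circ q_\J)\circ\dot{g}^*$, the anti-self-adjointness relation~\eqref{eq:E-anti} and the identity~\eqref{eq:g*cg*} --- ending with an explicit gauge parameter that exhibits $\Pi[\J[\xi]+\dot{c}_g^*[\tilde{\gamma}^*]]$ as pure gauge. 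You instead dualize: you reduce descent to equivalence classes to the vanishing of $\langle\EE[\tilde{\alpha}^*],\tilde{\beta}^*\rangle$ against every gauge-invariant test density $\tilde{\beta}^*$, which follows from~\eqref{eq:E-anti} together with the fact that $\EE$ of a gauge-invariant density is a spacelike-compact linearized solution, and you then invoke the non-degeneracy of the natural pairing, Lem.~\ref{lem:ginv-constr-nondegen}. Your route is shorter and sidesteps the web of $r$-/$\bar{r}$-identities~\eqref{eq:rinv1}--\eqref{eq:rinv4}; it also correctly exploits that the generators of the cotangent equivalence are, by Def.~\ref{def:tt*-sols-gauge}, themselves required to be gauge invariant, which is exactly what makes Lem.~\ref{lem:E-anti} applicable. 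What it gives up is constructiveness (no explicit gauge parameter), and it transfers the real work onto Lem.~\ref{lem:ginv-constr-nondegen}, which is precisely where the global parametrizability and recognizability hypotheses enter; since that lemma is established earlier and independently of $\EE$, there is no circularity, and both arguments ultimately rest on the same standing assumptions. (A minor remark: the statement of Lem.~\ref{lem:E-anti} reads $(\EE)^*=\EE\pmod{\im\dot{g}}$ while Eq.~\eqref{eq:E-anti} actually yields $(\EE)^*=-\EE$ modulo gauge on gauge-invariant densities; you quote the former, but since your goal is the vanishing of a pairing the overall sign is immaterial.)
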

\begin{proof}
The challenge is to show that $\Pi$ maps equivalence classes to
equivalence classes (Def.~\ref{def:tt*-sols-gauge}). That is, that any representative
$\tilde{\alpha}^* + \dot{f}^*[\xi] + \dot{c}^*[\tilde{\gamma}^*]$ of an
equivalence class $[\tilde{\alpha}^*]\in T^*_\phi\bar\S$, with
$\dot{g}^*[\tilde{\alpha}^*] = 0$, gets mapped to the same equivalence
class in $T_\phi\bar\S$. By linearity, it suffices to check that $[0]\in
T^*_\phi\bar\S$ is mapped to $[0]\in T_\phi\bar\S$. Recall that any
solution representing $[0]\in T_\phi\bar\S$ is pure gauge
$\dot{g}[\eps]$. Note that the equivalence~\eqref{eq:Jgf-equiv} of the
$(\dot{f}\oplus \dot{c},\tilde{F}^*\oplus E)$ and $(\J\oplus \dot{c}_g,
\tilde{F}^*\oplus E_g)$ equation forms, together with the
self-adjointness of the Jacobi operator $(\J)^* = \J$, allows us to
rewrite any representative of $[0]\in T^*_\phi\bar\S$ as $\J[\xi] +
\dot{c}_g^*[\tilde{\gamma}^*]$, for some $\xi\in \Secs_0(F)$ and
$\tilde{\gamma}^*\in \Secs_0(\tilde{E}^*_g)$.  This representative will
also satisfy the identity
\begin{equation}\label{eq:g*cg*}
	\dot{g}^*\circ \dot{c}^*_g[\tilde{\gamma}^*]
	= \dot{g}^*[\J[\xi] + \dot{c}^*_g[\tilde{\gamma}^*]]
	= 0 .
\end{equation}
Direct calculation then shows that
\begin{align}
	\Pi[\J[\xi] + \dot{c}^*_g[\tilde{\gamma}^*]]
	&= \EE \circ \J[\xi] + \EE\circ \dot{c}^*_g[\tilde{\gamma}^*] \\
	&= \dot{g}[\eps] - (\dot{c}_g\circ \EE)^*[\tilde{\gamma}^*]
		- \dot{g}\circ p_\J^*\circ \EE\circ \dot{c}^*_g[\tilde{\gamma}^*] \\
\notag & \qquad {}
		- (\EE)^*\circ p_\J[\dot{g}^*\circ \dot{c}^*_g[\tilde{\gamma}^*]] \\
	&= \dot{g}[\eps
		- q_\J^* \circ (\H)^* \circ r_g^*[\tilde{\gamma}^*]
		- p_\J^*\circ \EE\circ \dot{c}^*_g[\tilde{\gamma}^*]]
\end{align}
is pure gauge. We have used the identity that $\EE\circ \J[\xi] =
\dot{g}[\eps]$ for some $\eps\in \Secs_0(P)$ (Thm.~\ref{thm:Jexsplit}),
the anti-self-adjointness identity~\eqref{eq:E-anti},
that (Eqs.~\eqref{eq:Jgf-equiv} and~\eqref{eq:consist1})
\begin{align}
	\dot{c}_g\circ \EE
	&= r_g \circ (\dot{c} \circ \G) \circ \bar{r}
	= r_g \circ (\H \circ \dot{q}\circ \bar{r}) \\
	&= (r_g \circ \H \circ q_\J) \circ \dot{g}^*
\end{align}
and the identity~\eqref{eq:g*cg*}.

Therefore, we can conclude that if $[\tilde{\alpha}^*] = [0]$, then
$[\EE[\tilde{\alpha}^*]] = [0]$. \qed
\end{proof}

\begin{lemma}\label{lem:Pi-form}
The Peierls formula defines an antisymmetric bilinear form on the formal
cotangent space:
\begin{equation}
	\Pi([\tilde{\alpha}^*],[\tilde{\beta}^*])
	= \langle \Pi[\tilde{\alpha}^*] , [\tilde{\beta}^*] \rangle
	= - \Pi([\tilde{\beta}^*],[\tilde{\alpha}^*]) ,
\end{equation}
for any $[\tilde{\alpha}^*],[\tilde{\beta}^*] \in T^*_\phi\bar{\S}$.
\end{lemma}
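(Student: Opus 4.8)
The plan is to reduce the antisymmetry of $\Pi$ to the anti-self-adjointness of $\EE$ established in Lemma~\ref{lem:E-anti}, modulo the gauge subtleties. First I would unwind the definitions: for $[\tilde\alpha^*],[\tilde\beta^*]\in T^*_\phi\bar\S$ we have $\dot g^*[\tilde\alpha^*] = \dot g^*[\tilde\beta^*] = 0$, and by Lemma~\ref{lem:pure-hyp-peierls} the expression $\Pi([\tilde\alpha^*],[\tilde\beta^*]) = \langle \EE[\tilde\alpha^*],\tilde\beta^*\rangle$ is well defined (independent of representatives). The pairing $\langle -,-\rangle$ between $\Secs_{SC}(F)$ and $\Secs_0(\tilde F^*)$ was introduced in Sect.~\ref{sec:green-adj}, and $\EE[\tilde\alpha^*]\in\Secs_{SC}(F)$ while $\tilde\beta^*\in\Secs_0(\tilde F^*)$, so the integral converges.

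The key step is the computation
\begin{align*}
	\langle \EE[\tilde\alpha^*],\tilde\beta^*\rangle
	&= \langle \tilde\alpha^*, (\EE)^*[\tilde\beta^*]\rangle \\
	&= \langle \tilde\alpha^*, -\EE[\tilde\beta^*]
		- \dot g\circ p_\J^*\circ \EE[\tilde\beta^*]
		- (\EE)^*\circ p_\J^*\circ \dot g^*[\tilde\beta^*]\rangle ,
\end{align*}
using the adjoint relation for integral operators from Sect.~\ref{sec:green-adj} in the first line and the identity~\eqref{eq:E-anti} in the second. The third term on the right vanishes because $\dot g^*[\tilde\beta^*] = 0$ (that is the content of $\tilde\beta^*$ being a gauge invariant representative). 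The middle term pairs against $\tilde\alpha^*$ as $\langle \tilde\alpha^*, \dot g\circ p_\J^*\circ \EE[\tilde\beta^*]\rangle = \langle \dot g^*[\tilde\alpha^*], p_\J^*\circ \EE[\tilde\beta^*]\rangle = 0$, since $\dot g^*[\tilde\alpha^*] = 0$ as well. What survives is $\langle \EE[\tilde\alpha^*],\tilde\beta^*\rangle = -\langle \tilde\alpha^*, \EE[\tilde\beta^*]\rangle = -\langle \EE[\tilde\beta^*],\tilde\alpha^*\rangle = -\Pi([\tilde\beta^*],[\tilde\alpha^*])$, which is the claim.

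The main obstacle I anticipate is bookkeeping the support conditions so that every integration by parts / adjoint move is legitimate: $\EE = \G\circ\bar r$ sends compactly supported gauge invariant dual densities to spacelike compact solutions, and one must check that the "pure gauge correction" terms $\dot g\circ p_\J^*\circ \EE[\tilde\beta^*]$ land in a space on which the pairing with $\tilde\alpha^*$ makes sense and is genuinely zero rather than merely formally zero — this is where the adjoint-of-a-differential-operator identity $\langle f[\phi],\psi\rangle = \langle\phi,f^*[\psi]\rangle$ for $\dot g$ and $\dot g^*$ must be invoked with attention to which of the two factors has compact support. A secondary point worth stating explicitly is representative-independence: although Lemma~\ref{lem:pure-hyp-peierls} already guarantees $\Pi[\tilde\alpha^*]$ depends only on $[\tilde\alpha^*]\in T_\phi\bar\S$, one should also remark that $\langle [\psi],\tilde\beta^*\rangle$ depends only on $[\tilde\beta^*]$ precisely because $\psi = \EE[\tilde\alpha^*]$ satisfies $\dot f[\psi]=\dot c[\psi]=0$ (Thm.~\ref{thm:Jexsplit}), so that $\langle \psi, \dot f^*[\xi] + \dot c^*[\tilde\gamma^*]\rangle = \langle \dot f[\psi],\xi\rangle + \langle \dot c[\psi],\tilde\gamma^*\rangle = 0$ after the usual integration by parts, the boundary terms vanishing by the compact-versus-spacelike-compact overlap. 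With these pairings shown well defined on equivalence classes, the three-line computation above completes the proof.
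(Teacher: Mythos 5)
Your proof is correct and takes essentially the same route as the paper's: the adjoint move $\langle \EE[\tilde{\alpha}^*],\tilde{\beta}^*\rangle = \langle \tilde{\alpha}^*,(\EE)^*[\tilde{\beta}^*]\rangle$ followed by the anti-self-adjointness identity~\eqref{eq:E-anti}, with the two correction terms killed by $\dot{g}^*[\tilde{\beta}^*]=0$ and $\dot{g}^*[\tilde{\alpha}^*]=0$ respectively. The only difference is cosmetic: where you discharge the pure-gauge term explicitly via $\langle \tilde{\alpha}^*,\dot{g}\circ p_\J^*\circ\EE[\tilde{\beta}^*]\rangle = \langle \dot{g}^*[\tilde{\alpha}^*],\cdot\rangle = 0$, the paper absorbs the same fact into passing to the equivalence classes $[\EE[\tilde{\beta}^*]]$ and $[\tilde{\alpha}^*]$.
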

\begin{proof}
Recall that the representatives always satisfy
$\dot{g}^*[\tilde{\alpha}^*] = \dot{g}^*[\tilde{\beta}^*] = 0$.
Appealing directly to the anti-self-adjointness
identity~\eqref{eq:E-anti} we have
\begin{align}
	\Pi([\tilde{\alpha}^*],[\tilde{\beta}^*])
	&= \langle \Pi[\tilde{\alpha}^*], [\tilde{\beta}^*] \rangle
	= \langle \EE[\tilde{\alpha}^*] , \tilde{\beta}^* \rangle
	= \langle (\EE)^*[\tilde{\beta}^*] , \tilde{\alpha}^* \rangle \\
	&= -\langle (\EE[\tilde{\beta}^*]
			+ \dot{g}\circ p_\J^*\circ \EE[\tilde{\beta}^*]
			+ (\EE)^*\circ p_\J\circ \dot{g}^*[\tilde{\beta}^*])
			, \tilde{\alpha}^* \rangle \\
	&= -\langle [\EE[\tilde{\beta}^*]] , [\tilde{\alpha}^*] \rangle
	= - \langle \Pi[\tilde{\beta}^*] , [\tilde{\alpha}^*] \rangle \\
	&= -\Pi([\tilde{\beta}^*],[\tilde{\alpha}^*]) . \quad \qed
\end{align}
\end{proof}

\begin{theorem}
The Peierls formula gives a two-sided inverse to the formal symplectic
form, $\bar{\Omega}\Pi = \id$ on $T^*_\phi\bar{S}$ and $\Pi\bar{\Omega}
= \id$ on $T_\phi\bar{S}$.
\end{theorem}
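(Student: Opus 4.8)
The plan is to establish the two identities separately. The first one, $\Pi\bar{\Omega}=\id$ on $T_\phi\bar{\S}$, will be essentially immediate from Theorem~\ref{thm:Jexsplit}; the second one, $\bar{\Omega}\Pi=\id$ on $T^*_\phi\bar{\S}$, will require a Stokes/homology-jump argument together with the non-degeneracy of the natural pairing.

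For $\Pi\bar{\Omega}=\id$: given $[\psi]\in T_\phi\bar{\S}$ with representative a spacelike compact solution $\psi\in\S_{SC}(F)$ of $\J[\psi]=\dot{c}_g[\psi]=0$, Corollary~\ref{cor:bOmega} gives $\bar{\Omega}[\psi]=[\J_\chi[\psi]]$ with $\J_\chi[\psi]=\pm\J[\chi_\pm\psi]$, and N\"other's second theorem ($\dot{g}^*\circ\J=0$) shows $\dot{g}^*[\J_\chi[\psi]]=0$, so this is a legitimate class in $T^*_\phi\bar{\S}$. By Lemma~\ref{lem:pure-hyp-peierls} and~\eqref{eq:E-def} one then gets $\Pi\bar{\Omega}[\psi]=[\EE\circ\J_\chi[\psi]]$, and it remains only to quote the splitting identity $\EE\circ\J_\chi=\id\pmod{\im\dot{g}}$ on $\S_{SC}(F)$ from Theorem~\ref{thm:Jexsplit}, which yields $\EE\circ\J_\chi[\psi]=\psi+\dot{g}[\eps]$ for some $\eps\in\Secs_{SC}(P)$, hence $\Pi\bar{\Omega}[\psi]=[\psi]$. (Once $\bar{\Omega}\Pi=\id$ is known, this direction also follows formally from the antisymmetry of $\bar{\Omega}$, Lemma~\ref{lem:Pi-form}, and non-degeneracy of the pairing.)

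For $\bar{\Omega}\Pi=\id$: fix $[\tilde{\alpha}^*]\in T^*_\phi\bar{\S}$ with representative satisfying $\dot{g}^*[\tilde{\alpha}^*]=0$, and set $\psi=\EE[\tilde{\alpha}^*]=\psi^+-\psi^-$, where $\psi^\pm=\EE_\pm[\tilde{\alpha}^*]=\G_\pm[\bar{r}[\tilde{\alpha}^*]]$, so that $\supp\psi^\pm\sse I^\pm(K)$ with $K=\supp\tilde{\alpha}^*$ compact and, from the proof of Lemma~\ref{lem:E-anti}, $\J[\psi^\pm]=(\id+p_\J\circ\dot{g}^*)[\tilde{\alpha}^*]=\tilde{\alpha}^*$; hence $\J[\psi]=0$, so $\psi$ is a genuine spacelike compact solution and $[\psi]=\Pi[\tilde{\alpha}^*]$. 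Since $\bar{\Omega}$ is characterized by $\langle\xi,\bar{\Omega}[\psi]\rangle=\Omega(\xi,\psi)[\phi]$ (cf.~Cor.~\ref{cor:bOmega}) and the natural pairing between $T_\phi\bar{\S}$ and $T^*_\phi\bar{\S}$ is non-degenerate (Lemma~\ref{lem:ginv-constr-nondegen}), it suffices to show $\Omega(\xi,\EE[\tilde{\alpha}^*])[\phi]=\langle\xi,\tilde{\alpha}^*\rangle$ for every linearized solution $\xi$. To this end I would take $\phi$-Cauchy surfaces $\Sigma_1,\Sigma_2$ respectively to the past and future of $K$, bounding a region $R\sse M$, and combine Stokes' theorem with $\dh\omega=-\dv\EL_a\wedge\dv u^a$ (from~\eqref{eq:dvL}--\eqref{eq:omega-def}), the identity $\dv\EL_a=\J^I_{ab}\dv u^b_I$, and the facts that $\phi$ is on-shell and $\xi$ solves the Jacobi equations (so $(j^\oo\phi)^*\iota_{\hat\xi}\dv\EL_a=\J[\xi]_a=0$) to obtain
\begin{equation}
	\Omega_{\Sigma_2}(\xi,\psi^+)[\phi]-\Omega_{\Sigma_1}(\xi,\psi^+)[\phi]
	=\int_{R}\xi\cdot\J[\psi^+]=\int_{R}\xi\cdot\tilde{\alpha}^* .
\end{equation}
All the intermediate integrals converge because $\supp\xi$ is spacelike compact and $\supp\psi^\pm$ is retarded/advanced, so the product supports meet $R$ and every $\phi$-Cauchy surface in a compact set. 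As $\Sigma_1$ lies below $K$, $\Omega_{\Sigma_1}(\xi,\psi^+)=0$; as $K\sse R$, $\int_R\xi\cdot\tilde{\alpha}^*=\langle\xi,\tilde{\alpha}^*\rangle$; and $\Omega_{\Sigma_2}(\xi,\psi^-)=0$ since $\supp\psi^-$ does not meet $\Sigma_2$. Because $\psi$ is itself a solution, $\Omega_\Sigma(\xi,\psi)[\phi]$ is homology-invariant, so evaluating on $\Sigma_2$ gives $\Omega(\xi,\EE[\tilde{\alpha}^*])[\phi]=\Omega_{\Sigma_2}(\xi,\psi^+)-\Omega_{\Sigma_2}(\xi,\psi^-)=\langle\xi,\tilde{\alpha}^*\rangle$, as desired.

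The hard part will be precisely this second computation: keeping careful track of supports and orientations so the Stokes-theorem manipulations are legitimate (the supports at play are noncompact, only spacelike or retarded/advanced compact), and correctly identifying the twice-contracted, pulled-back $\dh\omega$ with the bilinear concomitant $\xi\cdot\J[\psi^\pm]$, which relies on $\phi$ solving the Euler--Lagrange equations, $\xi$ solving the linearized (Jacobi) equations, and the structure~\eqref{eq:dvL} of the first variation. Once that identity and the support bookkeeping are in place, the first direction is a one-line appeal to Theorem~\ref{thm:Jexsplit} and the second follows by non-degeneracy of the natural pairing.
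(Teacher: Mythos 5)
Your proof is correct, and the first identity is handled essentially as in the paper: $\bar{\Omega}[\psi]=[\J_\chi[\psi]]$, $\Pi$ applies $\EE$, and the splitting identity $\EE\circ\J_\chi=\id\pmod{\im \dot{g}}$ of Thm.~\ref{thm:Jexsplit} closes the loop (your parenthetical remark that this direction also follows from $\bar{\Omega}\Pi=\id$ together with antisymmetry of $\Omega$ and $\Pi$ and non-degeneracy of the pairing is likewise valid). For the second identity you take a genuinely different route. The paper stays entirely inside the splitting/exactness machinery: it writes $\langle[\psi],\bar{\Omega}\Pi[\tilde{\alpha}^*]\rangle=\langle\psi,\J_\chi\circ\EE[\tilde{\alpha}^*]\rangle$, applies $\EE$ to $\J_\chi\circ\EE[\tilde{\alpha}^*]$, invokes $\EE\circ\J_\chi=\id\pmod{\im\dot{g}}$ once more, and then uses the cohomological exactness of Thm.~\ref{thm:Jexsplit} at $\Secs_0(\tilde{F}^*)$ to conclude $\J_\chi\circ\EE[\tilde{\alpha}^*]-\tilde{\alpha}^*\in\im\J$, i.e.\ that the two cotangent classes coincide. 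You instead re-derive the pairing $\Omega(\xi,\EE[\tilde{\alpha}^*])=\langle\xi,\tilde{\alpha}^*\rangle$ from scratch by splitting $\EE=\EE_+-\EE_-$ into retarded and advanced parts, using $\J[\EE_\pm[\tilde{\alpha}^*]]=\tilde{\alpha}^*$ on gauge invariant representatives and a Stokes argument based on $\dh\omega=-\dv\EL_a\wedge\dv u^a$; this is the classical Peierls-versus-covariant-phase-space computation, and it amounts to rerunning the paper's own proof of the lemma preceding Cor.~\ref{cor:bOmega} with the partition-of-unity split $\chi_\pm\psi$ replaced by the retarded/advanced split. What your route buys is a self-contained and geometrically transparent argument that makes the retarded/advanced supports and the domain-of-dependence structure explicit, and that does not need the exactness of Thm.~\ref{thm:Jexsplit} at $\Secs_0(\tilde{F}^*)$ for this direction; what it costs is that you must redo the support and convergence bookkeeping (compactness of the intersection of $I^+(K)$ with the past of a Cauchy surface and with the spacelike compact $\supp\xi$) and the orientation/sign tracking that the paper packages once and for all into that earlier lemma. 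Both arguments terminate with the same appeal to the non-degeneracy of the natural pairing, Lem.~\ref{lem:ginv-constr-nondegen}.
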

The argument in the proof below was inspired by~\cite{fr-pois}
and~\cite[Lem.3.2.1]{wald-qft}. However, the argument has been
generalized to handle hyperbolic systems with constraints and gauge
invariance (presuming purely hyperbolic gauge fixing).
\begin{proof}
The proof uses in an essential way the splitting identities of
Thm.~\ref{thm:Jexsplit}. Consider any $[\psi]\in T_\phi\bar\S$ and
$[\tilde{\alpha}^*]\in T^*_\phi\bar\S$. To use these splitting
identities, we introduce a $\phi$-Cauchy surface $\Sigma \sso M$ and a
partition of unity $\{\chi_\pm\}$ adapted to it. Then
\begin{align}
	\langle \Pi \bar{\Omega} [\psi], [\tilde{\alpha}^*] \rangle
	&= \langle \EE\circ \J_\chi[\psi], \tilde{\alpha}^* \rangle \\
	&= \langle \psi + \dot{g}[\eps] , \tilde{\alpha}^* \rangle
			\quad \text{(for some $\eps\in \Secs_{SC}(P)$)} \\
	&= \langle [\psi] , [\tilde{\alpha}^*] \rangle .
\end{align}
Therefore, from the non-degeneracy of the natural pairing between
$T_\phi\bar\S$ and $T^*_\phi\bar\S$, we concluded that $\Pi\bar{\Omega}
= \id$. Similarly,
\begin{equation}
	\langle [\psi], \bar{\Omega} \Pi [\tilde{\alpha}^*] \rangle
	= \langle \psi , \J_\chi \circ \EE [\tilde{\alpha}^*] \rangle .
\end{equation}
But then
\begin{equation}
	\EE[\J_\chi \circ \EE[\tilde{\alpha}^*]]
	= (\EE \circ \J_\chi)\circ \EE[\tilde{\alpha}^*]
	= \EE[\tilde{\alpha}^*] + \dot{g}[\eps] ,
\end{equation}
for some $\eps\in \Secs_0(P)$. But, by the cohomological exactness of
Thm.~\ref{thm:Jexsplit}, this means that $\J_\chi\circ \EE[\tilde{\alpha}^*] -
\tilde{\alpha}^* = \J[\xi]$ for some $\xi\in \Secs_0(F)$. In other
words,
\begin{equation}
	\langle [\psi] , \bar{\Omega} \Pi [\tilde{\alpha}^*] \rangle
	= \langle \psi, \tilde{\alpha}^* + \J[\xi] \rangle
	= \langle [\psi], [\tilde{\alpha}^*] \rangle .
\end{equation}
Therefore, from the non-degeneracy of the natural pairing between
$T_\phi\bar\S$ and $T^*_\phi\bar\S$, we concluded that $\bar{\Omega}\Pi
= \id$. \qed
\end{proof}

\subsubsection{Algebra of observables, locality}\label{sec:obsv-local}
This section and the next complete the construction of a classical field
theory by constructing, at least formally, the Poisson algebra of
observables associated with it. Along the way, several important
points are discussed. (i) Since the focus of this work is more geometric
than analytical, some parts of the construction are kept formal. (ii) In
the physics literature, the given construction is often called
\emph{on-shell}. We discuss its relation to and trade offs with respect
to the \emph{off-shell} formalism, which has received considerable
attention in the recent literature on perturbative algebraic quantum
field theory (pAQFT). (iii) We define the notion of spacetime support,
corresponding to the spacetime localization of observables, and use it
to prove a classical version of \emph{microcausality} that is
generalized to field theories with dynamical causal structure.

We will use the Peierls formula for the Poisson bivector $\Pi$ to
construct a \emph{formal} Poisson bracket on the \emph{formal} algebra
of observables on the phase space $(\bar{\S}_H(F),\bar\Omega)$ of a
classical field theory. This section will actually concentrate on a
single slow patch $\S(F,C)\sse \S_H(F)$ of full phase space.
Globalization to the full phase space $\S_H(F)$ by covering it with
slow patches will be considered in the following section,
Sect.~\ref{sec:glob-phsp}.

In the last paragraph, the word \emph{formal} was used in two different
ways. The algebra that we will consider is the polynomial algebra
generated by local functionals with compact spacetime support. This
algebra is much smaller than a reasonable space of smooth functions on
the infinite dimensional manifold of solutions $\S_H(F)$, and hence is
only a formal substitute. However, this polynomial algebra is much less
complicated from the analytical point of view and is expected to be
dense in the larger, ultimately desired algebra $C^\oo(\S_H(F))$,
however it is rigorously defined. Also, the Poisson bracket of two
elements of this polynomial algebra, if defined using the bivector
$\Pi$, is not an element of the same algebra, though still a well
defined function on $\S_H(F)$. However, just as the polynomial algebra
is expected to be dense in $C^\oo(\S_H(F))$, the Poisson bracket is
expected to be continuous, so nested applications of the Poisson bracket
can be considered by replacing each evaluation thereof by an
approximating sequence within the polynomial algebra. In this sense, the
definition of the Poisson bracket is given only formally. The hypotheses
needed to justify its non-formal use are summarized in Hyp.~\ref{hyp:Coo}.

Before moving on to the technical part of this section, briefly discuss
the relation between our approach to the algebra of observables and the
one taken in the recent literature on
pAQFT~\cite{hollands-ym,bdf,fr-bv,rejzner-thesis,bfr}. That
work is in the so-called \emph{off-shell} formalism, while we work in
the \emph{on-shell} one. The space of all globally hyperbolic solution
sections $\S_H(F)\sso \Secs_H(F)$ is a submanifold%
	\footnote{This discussion should be considered formal and analogous to
	the situation where all manifolds are finite dimensional.  The
	justification of its conclusions is a matter of ongoing work in this
	field.} %
of the space of all globally hyperbolic sections and is sometimes
referred to as the \emph{shell}.  Thus, \emph{on-shell} field
configurations correspond to solutions, while the \emph{off-shell} ones
are not so restricted. Dually, on the algebraic side, the algebra of
smooth functions on $\S_H(F)$ is a quotient $C^\oo(\S_H(F)) \cong
C^\oo(\Secs_H(F)) / I_\E$ by the ideal $I_\E$ generated by the equations
of motion (an equation form of the PDE system $J^kF\supset \E\to M$).
Correspondingly, $C^\oo(\S_H(F))$ is called the on-shell algebra while
$C^\oo(\Secs_H(F))$ is called the off-shell algebra. The on-shell
formalism attempts to construct the on-shell algebra directly as the
algebra of functions on the space of solutions (or gauge equivalence
classes thereof), while the off-shell formalism rather constructs it as
the above quotient of the off-shell algebra (with a further quotient to
factor out gauge).

The off-shell formalism is advantageous in perturbative algebraic
quantum field theory (pAQFT) as it dramatically simplifies the
perturbative renormalization of interacting
theories~\cite{hollands-ym,bdf,fr-bv,rejzner-thesis}.
On the other hand, the on-shell formalism connects more directly with
the mathematical PDE literature. It is the theorems on the
well-posedness of hyperbolic PDE systems, as quoted in
Sec.~\ref{sec:pde-theory} that will allow us to establish the validity
of the (generalized) Causality and Time Slice axioms in
Sec.~\ref{sec:classcaus}. I am not aware of any methods that can be used
to establish these results directly in the off-shell formalism
(particularly for quasilinear systems) without first establishing its
equivalence with the on-shell construction. Since understanding
causality in field theories with quasilinear equations of motion is the
main motivation for this work, we feel our choice to work in the
on-shell formalism is justified. Of course, it is fully expected that
future work will realize an exact duality between the two approaches.

First suppose that that the gauge transformations are trivial. Consider
a globally hyperbolic chronal cone bundle $C\to M$ and a local
functional $A[\phi]$ (Sect.~\ref{sec:formal-forms}) on the slow patch $\Secs(F,C)$ of
the space of field configurations. Fix some $\phi\in\Secs(F,C)$ and
consider the formal exterior differential $\delta A[\phi]$ and denote
its value in $T^*_\phi\Secs$ by $\tilde{\alpha}^*$. Then $\delta
A[\phi](\psi) = \langle \psi, \tilde{\alpha}^* \rangle$ for any tangent
vector $\psi\in T_\phi\Secs$.
\begin{definition}\label{def:supp-conf}
We refer to the support of the dual density $\tilde{\alpha}^*$ as the
\emph{local spacetime support} of $A[\phi]$ at $\phi$, $\supp_{M,\phi} A
= \supp \tilde{\alpha}^*$. The \emph{global spacetime support} of
$A[\phi]$ is the closure of the union of all local spacetime supports,
$\supp_M A = \overline{\bigcup_{\phi\in\Secs(F,C)} \supp_{M,\phi} A}$.
\end{definition}
We may sometimes speak simply of spacetime support of $A$ when the
precise version of the notion is clear from context.

More than functionals on field configurations, we are interested in
functionals on the slow patch $\S(F,C)$ of the space of solutions
$\S_H(F)$. So a functional $A[\phi]$ on $\S(F,C)$ is called \emph{local}
if it is the pullback of a local functional on field configurations
along the inclusion $\S(F,C) \sso \Secs(F,C)$. The definition of
spacetime support for a functional on solutions is complicated by the
fact that many local functionals on field configurations may pullback to
the same functional $A[\phi]$ on solutions. This difficulty is resolved
by deciding that $A[\phi]$ admits multiple spacetime supports.
\begin{definition}\label{def:supp-sols}
Consider a local functional $A[\phi]$ on $\S(F,C)$ and any local
functional $A'[\phi]$ on $\Secs(F,C)$, which pulls back to $A[\phi]$ on
solutions. Then we say that $\supp_{M,\phi} A'$ is \emph{a local
spacetime support} of $A$ at $\phi$ and that $\supp_M A'$ is \emph{a
global spacetime support} of $A$. The collections of all local and
global spacetime supports of $A$ are denoted $\supp_{M,\phi} A =
[\supp_{M,\phi} A']$ and $\supp_M A = [\supp_M A']$, where the square
brackets enclose a representative.
\end{definition}
Essentially, this definition states that $\supp_{M,\phi}$ is the
collection of supports of the representatives $\tilde{\alpha}^*$ of the
value of the formal exterior differential $\delta A[\phi] =
[\tilde{\alpha}^*]$ in $T^*_\phi\S$. When referring to the properties of
the spacetime support of $A[\phi]$, we are free to refer to any
representative. For instance we say that a local functional $A[\phi]$ on
$\S(F,C)$ has \emph{a compact local spacetime support at $\phi$} if
$\supp_{M,\phi} A$ contains a compact representative. The same goes for
global spacetime support. A subtly, but importantly different notion is
of \emph{globally compact local spacetime support}, which means that the
local spacetime support $\supp_{M,\phi} A$ has a compact representative
for every $\phi\in \S(F,C)$. Also two such local functionals $A[\phi]$
and $B[\phi]$ have \emph{spacelike separated local spacetime supports at
$\phi$} if $\supp_{M,\phi} A$ and $\supp_{M,\phi} B$ have
representatives that are $\phi$-spacelike separated. If $A[\phi]$ and
$B[\phi]$ have spacelike separated local spacetime supports at every
$\phi\in \S(F,C)$, we say that $A[\phi]$ and $B[\phi]$ have
\emph{globally spacelike separated local spacetime supports}. The same
terminology also works for functionals on field configurations. On the
other hand, if we say that $A[\phi]$ and $B[\phi]$ have
\emph{$C$-spacelike separated global spacetime supports} if $\supp_M A$
and $\supp_M B$ have representatives that are $C$-spacelike separated.
Of course, $C$-spacelike separation of global spacetime supports implies
global spacelike separation of local spacetime supports, but the
converse is does not hold. In fact, one can imagine local functionals
$A[\phi]$ and $B[\phi]$ with globally spacelike separated local
spacetime supports but with $\supp_M A = \supp_M B = [M]$, because of
the way $\supp_{M,\phi} A$ and $\supp_{M,\phi} B$ vary as a function of
$\phi$.

In the presence of non-trivial gauge transformations, the gauge
invariant cotangent spaces, $T^*_\phi \bar\S$, are distinct from the
ordinary ones, $T^*_\phi \S$.
\begin{definition}
A local functional $A'[\phi]$ on $\Secs(F,C)$ is called \emph{gauge
invariant} if its formal exterior derivative $\delta A'[\phi]$ takes
values in the gauge invariant cotangent space $T^*_\phi\bar\Secs \sso
T^*_\phi\Secs$. Similarly, a local functional $A[\phi]$ on $\S(F,C)$ is
\emph{gauge invariant} if its formal exterior derivative $\delta
A[\phi]$ takes values in $T^*_\phi\bar\S \sso T^*_\phi\S$.
\end{definition}
In particular, a local functional $A[\phi]$ (whether on solutions or
field configurations) satisfies $\delta A[\phi](\dot{g}[\eps])$ for any
linearized gauge transformation at $\phi$ with gauge parameter section
$\eps\in \Secs_0(P)$. The notions of compact spacetime support and
$C$-spacelike separated supports specialize straightforwardly to gauge
invariant local functionals.

We are now ready to define the Poisson bracket of two gauge invariant
local functionals $A[\phi]$ and $B[\phi]$, which we write simply as
\begin{equation}
	\{A,B\} = \Pi(\delta A, \delta B).
\end{equation}
In more detail, let $\phi\in \S(F,C)$ and denote the values of the
formal exterior differentials at $\phi$ by $\delta A[\phi] =
[\tilde{\alpha}^*]$ and $\delta B[\phi] = [\tilde{\beta}^*]$, both in
$T^*_\phi\bar\S$. We then have
\begin{equation}\label{eq:poisbr-def}
	\{A,B\}[\phi]
	= \Pi([\tilde{\alpha}^*], [\tilde{\beta}^*])
	= \langle \EE[\tilde{\alpha}^*], \tilde{\beta}^* \rangle ,
\end{equation}
where we have invoked Lem.~\ref{lem:Pi-form} and $\EE$ is the causal
Green function of the Jacobi system at the dynamical linearization point
$\phi$. Incidentally, the last formula in Eq.~\eqref{eq:poisbr-def}
allows the extension of the Poisson bracket off-shell to gauge invariant
functionals on field configurations~\cite{marolf1,marolf2,df-peierls,bfr},
though we will not discuss this possibility at the moment.

A problem with this definition, and the reason we announced at the
beginning of this section that the Poisson bracket would be defined only
formally, is that $\{A,B\}[\phi]$ is in general not a local functional.
That is because the Green function $\EE$ is a non-local integral
operator. Consider two local coordinate charts $(x^i,u^a)$ and
$(y^j,u^b)$ on $F$ so that the dual densities $\tilde{\alpha}^*$ and
$\tilde{\beta}^*$ have the coordinate expressions $\alpha^*_a(x)\,
\d\tilde{x}$ and $\beta^*_b(y)\, \d\tilde{y}$, while $\EE$ has the
integral kernel $\EE^{ab}(x,y)$. Then
\begin{equation}
	\{A,B\}[\phi]
	= \int_{M\times M} \EE^{ab}(x,y) \, \alpha^*_a(x) \beta^*_b(y) \,
		\d\tilde{x} \wedge \d\tilde{y} .
\end{equation}
As $\EE^{ab}(x,y)$ can be seen as a vector valued distribution on
$M\times M$, the above integral is well defined, but cannot be
re-expressed as a local functional unless $\EE^{ab}(x,y)$ is supported
only on the total diagonal $x=y$, which it obviously is not. It should
be also clear that the product of two local functionals $A[\phi]B[\phi]$
is also not a local functional.

To address both of the above problems, we expand the set of functionals
under consideration to polynomials in local functionals.
\begin{definition}
Let $\Loc(F,C)$ denote the space of local functionals on $\Secs(F,C)$
with compact local spacetime support and $\PolyLoc(F,C)$ the polynomial
algebra generated by $\Loc(F,C)$. Similarly, $\Loc(F)$ and $\Loc(F)$
refer to functionals on $\Secs_H(F)$. The elements of $\PolyLoc(F,C)$
and $\PolyLoc(F)$ are referred to as \emph{polylocal} functionals.
\end{definition}
As before for local functionals, polylocal functionals may be defined on
field configurations as well as solutions. Recall that we did not give a
precise definition of infinite dimensional manifolds in our category
$\Man$ nor of the algebra of smooth functions on them. However, whatever
precise definitions that are ultimately chosen, we require the following
\begin{hypothesis}\label{hyp:Coo}
The functor of smooth functions $C^\oo\colon \Man \to \CAlg$ is such
that $C^\oo(\Secs(F,C))$ is a topological algebra containing
$\PolyLoc(F,C)$ as a dense sub-algebra, which also contains the Poisson
brackets $\{A,B\}$ of local functionals and such that $\{-,-\}$ extends
continuously to all of $C^\oo(\Secs(F,C))$. Analogous statements hold for
$C^\oo(\Secs_H(F))$ and $\PolyLoc(F)$ and upon replacement of field
configurations with solutions.
\end{hypothesis}

\begin{remark}\label{rem:polyloc-supp}
It is straight forward to generalize the formal exterior derivative from
\emph{local} functionals to \emph{polylocal} ones, as well as to more
general $C^\oo$ functionals. The same goes for the notions of
\emph{local spacetime support} and \emph{global spacetime support}. We
silently make use of this generalization from now on.
\end{remark}

So, we will content ourselves with approximating the Poisson
bracket $\{A,B\}$ by a sequence (or net or filter) of polylocal
functionals $C_i[\phi]$, with $i$ from some index set $I$. This
concludes the formal definition of the algebra of observables and the
Poisson bracket on it.
\begin{definition}
We call the algebra of gauge invariant smooth functions on the slow
patch $\S(F,C)$
together with the Poisson bracket~\eqref{eq:poisbr-def} the
\emph{Poisson algebra of observables} of the classical field theory and
denote it
\begin{equation}
	\F(F,C) \cong (C^\oo_{cst}(\bar\S(F,C)),\{\}),
\end{equation}
where the
subscript ${}_{cst}$ means that we only take functionals with compact
local spacetime support. (Note that we have identified the gauge
invariant functions on $\S(F,C)$ with functions on $\bar{S}(F,C)$, the
space of gauge equivalence classes of solutions.) Denote the underlying
commutative algebra of $\F(F,C)$ by $\A(F,C) \cong
C^\oo_{cst}(\bar\S(F,C))$, which will also be referred to as the
\emph{algebra of observables}.
\end{definition}
Of course, any results concerning $\F(F,C)$ and $\A(F,C)$ that we can
establish in this paper will only be at the level of $\PolyLoc(F,C)$,
which will be strengthened by Hyp.~\ref{hyp:Coo}.

Finally, we are ready to state and prove the main result of this
section. In quantum field theory, it usually goes under the name of
\emph{microcausality}: observables that depend on fields in spacelike
separated regions have vanishing commutators. The classical version
replaces the commutator by the Poisson bracket. Note, however, that the
result below is an important generalization of the known microcausality
property of field theories with \emph{semilinear} equations of motion,
i.e.,\ those with a non-dynamical causal structure. Usually, only the
\emph{global} spacetime support is considered and, even if the
\emph{local} spacetime support were to be considered, the notion of space-like
separation is independent of the dynamical fields and so does not change
when the observables are linearized about different solutions (for
instance, in perturbation theory). For field theories with
\emph{quasilinear} equations of motion, and hence dynamical causal
structures, the notion of spacelike separation is intrinsically changes
from solution to solution on the phase space. Thus, to even have a
suitable generalization of the spacelike separation hypothesis for
microcausality, we are forced to consider local spacetime support.
\begin{theorem}[Classical microcausality]\label{thm:microcaus}
Consider two observables $A$ and $B$ belonging to the Poisson algebra of
observables $\F(F,C)$. If their local spacetime supports are spacelike
separated, then their Poisson bracket vanishes identically,
\begin{equation}
	\{A,B\} = 0.
\end{equation}
\end{theorem}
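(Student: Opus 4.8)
The plan is to prove the pointwise statement $\{A,B\}[\phi]=0$ for every $\phi$ in the slow patch $\S(F,C)$, which then gives $\{A,B\}=0$ as a function on $\S(F,C)$; the extension from the polylocal generators to arbitrary observables in $\F(F,C)$ follows from Remark~\ref{rem:polyloc-supp} together with the density and continuity assumptions of Hyp.~\ref{hyp:Coo}. So I would fix $\phi\in\S(F,C)$; being globally hyperbolic, $\phi$ carries the causal structure $\phi^*\Gamma^\oast$ with respect to which the Jacobi Green functions $\EE_\pm$ and $\EE=\EE_+-\EE_-$ at $\phi$ are defined, and the bracket is given by the Peierls formula~\eqref{eq:poisbr-def}.

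First I would unwind the hypothesis. By Def.~\ref{def:supp-sols}, the local spacetime supports $\supp_{M,\phi}A$ and $\supp_{M,\phi}B$ being $\phi$-spacelike separated means precisely that $\delta A[\phi]$ and $\delta B[\phi]$ admit gauge invariant representatives $\tilde\alpha^*,\tilde\beta^*\in\Secs_0(\tilde F^*)$ whose compact supports $K_A=\supp\tilde\alpha^*$ and $K_B=\supp\tilde\beta^*$ satisfy $K_B\cap J_\phi(K_A)=\varnothing$, i.e.\ $J^+_\phi(K_A)\cap K_B=J^-_\phi(K_A)\cap K_B=\varnothing$. Since $\Pi$ depends only on the equivalence classes of its arguments (Lem.~\ref{lem:Pi-form}), the bracket may be evaluated with these particular representatives:
\begin{equation}
  \{A,B\}[\phi]=\langle\EE[\tilde\alpha^*],\tilde\beta^*\rangle
  =\langle\EE_+[\tilde\alpha^*],\tilde\beta^*\rangle-\langle\EE_-[\tilde\alpha^*],\tilde\beta^*\rangle .
\end{equation}

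Next I would track supports. Recall $\EE_\pm=\G_\pm\circ\bar r$ (Eq.~\eqref{eq:E-def}), with $\bar r$ a differential operator and $\G_\pm$ the retarded/advanced Green function of the linearized hyperbolic system at $\phi$. A differential operator cannot enlarge supports, so $\supp(\bar r[\tilde\alpha^*])\sse K_A$, and the defining support property of $\G_\pm$ then yields $\supp(\EE_\pm[\tilde\alpha^*])\sse J^\pm_\phi(K_A)$. Hence the integrand of $\langle\EE_\pm[\tilde\alpha^*],\tilde\beta^*\rangle=\int_M\EE_\pm[\tilde\alpha^*]\cdot\tilde\beta^*$ is supported in $J^\pm_\phi(K_A)\cap K_B=\varnothing$, so both terms vanish and $\{A,B\}[\phi]=0$. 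As $\phi\in\S(F,C)$ was arbitrary, $\{A,B\}=0$.

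The computation itself is short; the substance sits in the Peierls formula and in the causal support of the retarded/advanced Green functions, both of which are already in hand. The one point that genuinely needs care — and which I would state explicitly — is the coherence of the two roles played by "local spacetime support": on the one hand $\supp_{M,\phi}A$ is, by construction, the collection of supports of the representatives of $\delta A[\phi]=[\tilde\alpha^*]$ in $T^*_\phi\bar\S$ (Def.~\ref{def:supp-sols}), so the spacelike separation hypothesis really does furnish admissible representatives $\tilde\alpha^*,\tilde\beta^*$ to insert into the bracket; on the other hand the value $\Pi([\tilde\alpha^*],[\tilde\beta^*])$ is insensitive to which representatives are chosen (Lem.~\ref{lem:Pi-form}), so we are entitled to make this convenient choice. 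With constraints and gauge invariance present nothing changes in the support bookkeeping: the representatives are taken gauge invariant, and the fact that $\EE$ is only anti-self-adjoint modulo $\im\dot g$ (Lem.~\ref{lem:E-anti}) has already been absorbed into Lem.~\ref{lem:Pi-form}.
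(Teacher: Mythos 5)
Your proof is correct and follows essentially the same route as the paper: evaluate the bracket pointwise via the Peierls formula, use $\EE=\G\circ\bar{r}$ with $\G=\G_+-\G_-$, and conclude from the retarded/advanced support properties together with the fact that the differential operator $\bar{r}$ does not enlarge supports. Your explicit remarks on the choice of representatives and the appeal to Lem.~\ref{lem:Pi-form} only make precise what the paper's proof uses implicitly.
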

\begin{proof}
Consider any solution $\phi\in \S(F,C)$. Let the dual densities
$\tilde{\alpha}^*,\tilde{\beta}^*\in \Secs_0(\tilde{F}^*)$ represent the
formal exterior derivatives of the observables, $\delta A[\phi] =
[\tilde{\alpha}^*]$ and $\delta B[\phi] = [\tilde{\beta}^*]$. Pick a
pair of local coordinates $(x^i,u^a)$ and $(y^j,u^b)$ on $F$ such that
the dual densities have the components $\alpha^*_a(x) \d\tilde{x}$
and $\beta^*_b(y) \d\tilde{y}$. By hypothesis, $\supp \alpha^*_a$ and
$\supp \beta^*_b$ are $\phi$-spacelike separated.

From the Peierls formula for the Poisson bracket~\eqref{eq:poisbr-def}
\begin{equation}\label{eq:poisbr-def2}
	\{A,B\}[\phi]
	= \int_{M\times M} \EE^{ab}(x,y) \,
		\alpha^*_a(x) \beta^*_b(y) \, \d\tilde{x}\wedge \d\tilde{y} ,
\end{equation}
where $\EE^{ab}(x,y)$ is the integral kernel of the causal Green
function $\EE$ of the Jacobi system linearized at $\phi$. However, by
Eq.~\eqref{eq:E-def}, $\EE = \G\circ \bar{r}$, where $\G$ is the causal Green
function of a linear symmetric hyperbolic system with adapted equation
form $(\dot{f},\tilde{F}^*)$, whose principal symbol precisely
determines the causal structure of the solution $\phi$
(Sect.~\ref{sec:lin-inhom}).
Recall that the causal Green function is the difference of the retarded
and advanced Green functions, $\G = \G_+ - \G_-$. Thus, the support of
$\G$, and hence $\EE$ and $\EE^{ab}(x,y)$, since $\bar{r}$ is a
differential operator and does not increase supports, excludes any pair
of points $(x,y)$ that are $\phi$-spacelike separated.

It is now an easy conclusion that, according to the
formula~\eqref{eq:poisbr-def2}, the Poisson bracket $\{A,B\}[\phi]$
vanishes at $\phi\in \S(F,C)$. Since the solution $\phi$ was arbitrary,
it follows that the Poisson bracket vanishes identically, $\{A,B\} = 0$.
\qed
\end{proof}

A coarser version of the above result, though stated in the more
familiar terms of global spacetime support follows immediately once the
definitions are unwound.
\begin{corollary}\label{cor:C-microcaus}
Consider two observables $A$ and $B$ belonging to the Poisson algebra of
observables $\F(F,C)$. If their global spacetime supports are
$C$-spacelike separated, then their Poisson bracket vanishes
identically, $\{A,B\} = 0$.
\end{corollary}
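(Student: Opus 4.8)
The plan is to obtain this as an immediate consequence of Thm.~\ref{thm:microcaus}, the only real work being to unwind the hierarchy of spacetime-support notions from Defs.~\ref{def:supp-conf} and~\ref{def:supp-sols}. First I would fix representatives: by hypothesis there are local functionals $A'$ and $B'$ on $\Secs(F,C)$, pulling back to $A$ and $B$ on $\S(F,C)$, whose global spacetime supports $\supp_M A'$ and $\supp_M B'$ admit $C$-spacelike separated representatives; replacing $A'$, $B'$ by these representatives if necessary, we may assume $\supp_M A'$ and $\supp_M B'$ are themselves $C$-spacelike separated closed subsets of $M$.

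The key geometric step is then to show that for every $\phi\in\S(F,C)$ the local spacetime supports $\supp_{M,\phi}A'$ and $\supp_{M,\phi}B'$ are $\phi$-spacelike separated. Since $\supp_{M,\phi}A'\sse\supp_M A'$ and $\supp_{M,\phi}B'\sse\supp_M B'$ by the very construction of the global support as the closure of the union of the local ones, it suffices to know that $C$-spacelike separation of two subsets of $M$ entails their $\phi$-spacelike separation. This is where strict $C$-slowness of $\phi$ enters: $\phi\in\S(F,C)$ means $\phi^*\bar\Gamma\sse C$ (equivalently $\bar C^\oast\sse\phi^*\Gamma^\oast$), so every $\phi$-causal curve is a $C$-causal curve and hence $J^+_{\phi^*\Gamma}\sse J^+_C$; contrapositively, a pair of points joined by no $C$-causal curve is joined by no $\phi$-causal curve. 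This is the same monotonicity of the causal precedence relation under cone inclusion that underlies the inclusions~\eqref{eq:slow-incl}.

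Combining the two pieces, $A$ and $B$ have $\phi$-spacelike separated local spacetime supports at every $\phi\in\S(F,C)$, i.e.\ globally spacelike separated local spacetime supports in the sense of the discussion preceding this corollary. The hypothesis of Thm.~\ref{thm:microcaus} is therefore met, and it yields $\{A,B\}=0$.

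I do not expect a genuine obstacle: the argument is pure bookkeeping once the support notions are laid out. The only points needing a moment's care are (i) consistently tracking a single pair of representatives $A'$, $B'$ through the reduction, so that the local supports one feeds into Thm.~\ref{thm:microcaus} are exactly those contained in the globally $C$-separated sets $\supp_M A'$, $\supp_M B'$; and (ii) the elementary implication ``$C$-spacelike separated $\Rightarrow$ $\phi$-spacelike separated'' for $C$-slow $\phi$, which is the only place the cone inclusion $\bar C^\oast\sse\phi^*\Gamma^\oast$ is actually invoked.
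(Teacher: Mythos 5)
Your argument is exactly the paper's intended one: the paper offers no separate proof beyond the remark that the corollary ``follows immediately once the definitions are unwound,'' having already noted in the discussion preceding Thm.~\ref{thm:microcaus} that $C$-spacelike separation of global spacetime supports implies global spacelike separation of local spacetime supports. Your unwinding — containment of local supports in global ones plus the monotonicity $J^+_{\phi^*\Gamma}\sse J^+_C$ coming from $\phi^*\bar\Gamma\sse C$ — is the correct and complete version of that remark.
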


\begin{remark}
The notion of spacetime support in previous work on the functional
approach to classical and quantum field
theory~\cite{bdf,fr-bv,rejzner-thesis} corresponds to
what we call global spacetime support. The notion of local spacetime
support and of spacelike separation of local spacetime supports seems to
have been overlooked until now, with the notable exception
of~\cite{bfr,ribeiro}. The distinction between local and global
spacetime support is
particularly important for gauge theories and theories with dynamical
causal structures, both exemplified by GR. It is well known that there
are no gauge invariant observables in GR with compact global spacetime
support, a significant technical complication. However, the door is
still open to discover a large, technically convenient class of
observables with compact local spacetime support.
\end{remark}

Another important but simple result is
\begin{lemma}[Covariance]\label{lem:pois-covar}
Consider two globally hyperbolic chronal cone bundles $C'\to M'$ and
$C\to M$, where $M'\sse M$ and $C'$ is faster than $C$. If we restrict
our Lagrangian and PDE system to $M'$, there are induced maps of
solution spaces $\S(F,C) \to \S(F',C')$ and algebras $\A(F',C') \to
\A(F,C)$, where $F' = F|_{M'}$ is the field bundle over $M'$. These maps
induce morphisms of Poisson manifolds $(\S(F,C),\Pi) \to
(\S(F,C'),\Pi')$ and a Poisson homomorphism $(\A(F',C'),\{\}') =
\F(F',C') \to \F(F,C)$, where $\Pi'$ is also defined by the Peierls
formula and in turn defines $\{-,-\}'$.
\end{lemma}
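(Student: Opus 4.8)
The plan is to verify each claimed map in turn, checking that it is well-defined and has the stated compatibility, with the key technical input being the covariance of the causal Green function already established in Lemma~\ref{lem:green-covar} (and its constrained analog) and the relation $\EE = \G\circ\bar{r}$ from Eq.~\eqref{eq:E-def}. First I would address the restriction of solution spaces: given $\phi\in\S(F,C)$, i.e.\ a $C$-slow global solution on $M$, its restriction $\phi|_{M'}$ is automatically a solution of the restricted PDE system, and since $C'$ is faster than $C$ we have the inclusion $\phi^*\Gamma^\oast\supseteq \bar{C}^\oast\supseteq\bar{C}^{\prime\oast}$ on $M'$, so $\phi|_{M'}$ is $C'$-slow; hence restriction indeed lands in $\S(F',C')$. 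One must also check that $\phi|_{M'}$ is globally hyperbolic with respect to $C'$ on $M'$ — but this is exactly the content of $C'$ being globally hyperbolic and of $\phi|_{M'}$ being $C'$-slow, via Thm.~\ref{thm:slow-open} and the definitions in Sect.~\ref{sec:slow-sec}. Dually, pullback of functionals along $\S(F,C)\to\S(F',C')$ gives the algebra map $\A(F',C')\to\A(F,C)$; I would note that pullback of a local functional with compact local spacetime support on $\S(F',C')$ gives one on $\S(F,C)$ (compact support in $M'$ is compact in $M$, and gauge invariance is preserved), so the map respects the defining properties of $\F(F',C')$.

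Next I would check that the induced map on solution spaces is a morphism of Poisson manifolds, i.e.\ that it intertwines the Peierls bivectors $\Pi$ and $\Pi'$. At a fixed linearization point $\phi\in\S(F,C)$ with restriction $\phi'=\phi|_{M'}\in\S(F',C')$, the linearized equation forms $(\dot{f},\tilde{F}^*)$, the constraints and the gauge data restrict compatibly from $M$ to $M'$, and because $C'$ is faster than $C$ the induced morphism of chronal cone bundles is chronologically compatible in the sense of Def.~\ref{def:chr-compat} (here I would invoke Lem.~\ref{lem:chr-compat-convex} to upgrade compatibility to convexity when needed). Then Lem.~\ref{lem:green-covar}, applied to the compound linear symmetric hyperbolic system, gives $\G = (\chi\times\chi)^*\G'$, and since $\bar{r}$ is a differential operator that restricts, $\EE = \G\circ\bar{r}$ restricts likewise, so $\Pi = (\chi\times\chi)^*\Pi'$ on the relevant tangent/cotangent data. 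This is precisely the statement that the restriction map is a Poisson morphism. The gauge-invariant passage to $\bar{\S}$ and $T^*_\phi\bar\S$ goes through because the gauge data restricts, using the non-degeneracy lemmas of Sect.~\ref{sec:tt*-sols-constr}.

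Finally I would assemble these into the Poisson homomorphism $\F(F',C')\to\F(F,C)$: given observables $A',B'\in\F(F',C')$ with pullbacks $A,B\in\F(F,C)$, the Peierls formula~\eqref{eq:poisbr-def} reads $\{A,B\}[\phi] = \langle\EE[\tilde\alpha^*],\tilde\beta^*\rangle$ where $\delta A[\phi]=[\tilde\alpha^*]$, $\delta B[\phi]=[\tilde\beta^*]$; since $\delta$ commutes with pullback and the local spacetime supports of $A,B$ at $\phi$ are just the (compact) supports of $\tilde\alpha^*,\tilde\beta^*$ viewed in $M$ but actually contained in $M'$, and since $\EE$ on $M$ restricts to $\EE'$ on $M'$ on this support data, we get $\{A,B\}[\phi] = \{A',B'\}'[\phi']$. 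Hence the pullback map is a Poisson homomorphism, and it is injective by the analog of the argument used for Isotony (a local functional on $\S(F',C')$ pulling back to zero on $\S(F,C)$ must already vanish, since restrictions of $C$-slow global solutions to $M'$ are dense enough among $C'$-slow solutions — this uses the local existence Thm.~\ref{thm:locext} together with the slow-patch structure). The main obstacle I anticipate is not any single calculation but the bookkeeping needed to ensure that all the linearized constrained/gauge data on $M$ restricts to data on $M'$ still satisfying the global parametrizability and global recognizability hypotheses of Sect.~\ref{sec:constr-gf}; in full generality this may require the chronological compatibility of the embedding to be used more delicately, and one may need to restrict attention to $M'$ such that $C'$ is globally hyperbolic with $\chi(M')$ chronologically convex, which Lem.~\ref{lem:chr-compat-convex} provides automatically in the globally hyperbolic case.
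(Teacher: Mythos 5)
Your proposal is correct and takes essentially the same route as the paper: the paper's entire proof is the observation that the agreement of $\Pi$ and $\Pi'$ follows from the Peierls formula together with the covariance of the causal Green function (Lem.~\ref{lem:green-covar}), whence the induced algebra homomorphism $\A(F',C')\to\A(F,C)$ is automatically a Poisson homomorphism; your additional bookkeeping (slowness of restrictions, preservation of compact local spacetime support, $\EE=\G\circ\bar{r}$ restricting) just fills in what the paper leaves implicit. One caution: your closing claim that the map $\F(F',C')\to\F(F,C)$ is \emph{injective} is not part of the statement and is not justified by your density argument --- injectivity of the algebra map requires (via Hyp.~\ref{hyp:Coo-exact}) surjectivity, or at least density of the image, of the restriction $\S(F,C)\to\S(F',C')$, i.e.\ that every $C'$-slow solution on $M'$ extends to a $C$-slow solution on $M$; this is exactly the extensibility property (Def.~\ref{def:ext}) that the paper must introduce as a separate hypothesis for the Generalized Isotony theorem precisely because it can fail (finite-time blow-up, topological obstructions). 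You should drop that claim rather than try to prove it here.
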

\begin{proof}
The agreement of the Poisson bivectors $\Pi'$ and $\Pi$ follows directly
from the Peierls formula and the covariance lemma for the causal Green
function, Lem~\ref{lem:green-covar}. This agreement immediately implies
that the induced algebra homomorphism $\A(F',C') \to \A(F,C)$ is
automatically a Poisson homomorphism with respect to the brackets
$\{-,-\}'$ and $\{-,-\}$. \qed
\end{proof}

\subsubsection{The global phase space}\label{sec:glob-phsp}
Here we take the final step in the construction of the phase space of
the classical field theory and the Poisson algebra of observables on it.
We appeal to the results obtained for each slow patch $\S(F,C)$ and
recall that these patches form an open cover of the total space of
solutions $\S_H(F)$, according to Hyp.~\ref{hyp:opencover}. Moreover, each of the open
patches carries a formally smooth presymplectic form $\Omega_C[\phi]$
(which becomes a symplectic one $\bar\Omega_C[\phi]$) when projected to
$\bar\S(F,C)$, Sects.~\ref{sec:formal-symp} and~\ref{sec:formal-pois}. The symplectic structure is
translated to a Poisson structure $\Pi_C[\phi]$ by pointwise-inversion
(smoothness and Jacobi identity are presumed to follow from the
corresponding properties of the symplectic form),
Sect.~\ref{sec:formal-pois}. It remains
only to check that the symplectic and Poisson structures agree as these
patches are glued together. In a later section,
Sect.~\ref{sec:cat-glob-phsp}, this
construction via slow patches will be interpreted in categorical terms
as a \emph{colimit}, which will play an important role in the
generalized Causality property of classical field theory with
quasilinear equations of motion.

Consider two chronally comparable globally hyperbolic chronal cone
bundles $C_1\to M$ and $C_2\to M$, as well as their intersection
$C_1\cap C_2 = C_3 \to M$, also a globally hyperbolic cone bundle. The
each we can associate the corresponding patch of $C_i$-slow solutions
$\bar\S(F,C_i)$ modulo gauge transformations. On each such patch, we
have the symplectic and Poisson tensors $\bar\Omega_{C_i}[\phi]$ and
$\Pi_{C_i}[\phi]$. The slow patches intersect as $\bar\S(F,C_1)\cap
\bar\S(F,C_2) = \bar\S(F,C_3)$.  Therefore, we must check whether
$\bar\Omega_{C_i}[\phi]$ all agree when restricted to $\bar\S(F,C_3)$.
Recall that each of the symplectic forms was defined with the help of a
$C_i$-Cauchy surface $\Sigma_i\sso M$. Each of the three symplectic
forms is defined by exactly the same formula, Cor.~\ref{cor:bOmega}, with the
exception that the integration surface $\Sigma$ is replaced by the
corresponding $\Sigma_i$. By construction, $\Sigma_1$ and $\Sigma_2$ are
both $C_3$-Cauchy and so lie in the same $C_3$-spacelike homology class.
Since $\bar\Omega_{C_3}[\phi]$ depends only on that homology class, we
can directly establish that
\begin{equation}
	\bar\Omega_{C_1}[\phi]
= \bar\Omega_{C_3}[\phi]
= \bar\Omega_{C_2}[\phi]
\end{equation}
on $\bar\S(F,C_3)$. The Poisson tensors $\Pi_{C_i}[\phi]$ necessarily
satisfy the same property.

From the above discussion, we can conclude that the global space of
globally hyperbolic solutions $\S_H(F)$ is endowed with both symplectic
and Poisson structures, $\bar\Omega[\phi]$ and $\Pi[\phi]$, defined by
their agreement with the corresponding tensors on the slow patches
$\bar\S(F,C)\sse \bar\S_H(F)$. We call the symplectic ma\-ni\-fold $\PP_H(F)
= (\bar\S_H(F),\bar\Omega)$ the \emph{global phase space} of the
classical field theory under consideration. The Poisson algebra of
smooth functions on it, with Poisson structure defined by $\Pi[\phi]$,
is called the \emph{global algebra of observables} $\F_H(F) =
(\A_H(F),\{\})$, where $\A_H(F) = C_{cst}^\oo(\S_H(F))$ is the underlying
commutative algebra.

The notions of local spacetime supports of observables and their global
spacelike separation, as in Defs.~\ref{def:supp-conf}
and~\ref{def:supp-sols}, translate directly from
the slow patches to the global phase space. Therefore, as a direct
consequence of Thm.~\ref{thm:microcaus}, we have
\begin{corollary}[Global classical microcausality]\label{cor:glob-microcaus}
Consider two observables $A$ and $B$ belonging to the global Poisson
algebra of observables $\F_H(F)$. If their local spacetime supports are
globally spacelike separated, then their Poisson bracket vanishes
identically,
\begin{equation}
	\{ A, B \} = 0.
\end{equation}
\end{corollary}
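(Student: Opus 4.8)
The plan is to reduce this global statement to the already-proven patchwise result, Thm.~\ref{thm:microcaus}, by exhibiting a slow patch $\S(F,C)$ that simultaneously captures the data of both observables. First I would observe that, by Hyp.~\ref{hyp:opencover} (equivalently Conj.~\ref{cnj:gh-stab}), every $\phi \in \S_H(F)$ lies in some slow patch $\S(F,C)$ with $C\to M$ globally hyperbolic, and that the observables $A, B \in \F_H(F)$, being gauge-invariant smooth functionals with compact local spacetime support, restrict to elements of the local algebra $\F(F,C)$ on each such patch. The key point is that the Poisson bracket $\{A,B\}$ on the global phase space is, by construction in Sect.~\ref{sec:glob-phsp}, defined pointwise by its agreement with $\Pi_C[\phi]$ on each slow patch; hence to show $\{A,B\}[\phi]=0$ for a fixed $\phi$, it suffices to show it vanishes when computed inside any single slow patch containing $\phi$.

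Next I would fix an arbitrary $\phi\in\S_H(F)$. By hypothesis, $\supp_{M,\phi} A$ and $\supp_{M,\phi} B$ admit $\phi$-spacelike separated representatives $K_A, K_B \subset M$. Choose a globally hyperbolic chronal cone bundle $C\to M$ with $\phi\in\S(F,C)$; such a $C$ exists by Eq.~\eqref{eq:gh-cover}, and one may take $C = \phi^*\Gamma$ or any globally hyperbolic refinement thereof. On this patch, the dual-density representatives $\tilde\alpha^*,\tilde\beta^*$ of $\delta A[\phi], \delta B[\phi]$ may be chosen with supports $K_A, K_B$, and by Thm.~\ref{thm:microcaus} applied on $\S(F,C)$, the bracket $\{A,B\}_C[\phi]$ computed via the Peierls formula~\eqref{eq:poisbr-def2} vanishes — precisely because the integral kernel $\EE^{ab}(x,y) = (\G\circ\bar r)^{ab}(x,y)$ is supported away from $\phi$-spacelike separated pairs $(x,y)$, since $\G = \G_+ - \G_-$ and $\bar r$ is a differential operator that does not increase supports. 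By the gluing discussion of Sect.~\ref{sec:glob-phsp}, $\{A,B\}_C[\phi] = \{A,B\}[\phi]$, so $\{A,B\}[\phi]=0$. As $\phi$ was arbitrary, $\{A,B\}=0$ identically.

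The only subtlety — and the step I would present most carefully — is the compatibility of the local and global notions of spacetime support and of spacelike separation: the hypothesis "globally spacelike separated local spacetime supports" means that \emph{for each} $\phi\in\S_H(F)$ the local supports $\supp_{M,\phi}A$, $\supp_{M,\phi}B$ have $\phi$-spacelike separated representatives (per Def.~\ref{def:supp-sols} and the terminology of Sect.~\ref{sec:obsv-local}). Since Thm.~\ref{thm:microcaus} is itself stated pointwise in $\phi$ and only invokes $\phi$-spacelike separation at that same $\phi$, there is no genuine obstacle here — the argument is essentially a bookkeeping exercise of unwinding definitions, and I expect the corollary to require no new ideas beyond noting that the hypotheses of the theorem are satisfied at every $\phi$ and that the patchwise brackets assemble into the global bracket. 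Consequently the corollary follows directly, as the text already anticipates.
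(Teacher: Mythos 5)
Your proposal is correct and follows exactly the route the paper intends: the paper offers no separate proof, stating only that the corollary is a ``direct consequence'' of Thm.~\ref{thm:microcaus} once the notions of local spacetime support and global spacelike separation are transferred from the slow patches to the global phase space, which is precisely the reduction you carry out (fix $\phi$, pick a globally hyperbolic slow patch containing it via Hyp.~\ref{hyp:opencover}, apply the patchwise theorem, and invoke the agreement of the patchwise and global brackets from Sect.~\ref{sec:glob-phsp}). Your explicit attention to the pointwise-in-$\phi$ nature of the hypothesis is exactly the ``bookkeeping'' the paper leaves implicit.
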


\section{Natural Variational PDE Systems and Functoriality}
\label{sec:natural}
In this section we take account of the various categories and functors
introduced so far and introduce a few more. The main goal is to make
explicit the functorial nature of all the steps in the construction of a
classical field theory. For reference, some basic information about
categories and functors can be found in~\cite{cattheory,borceux}.

\subsection{Natural bundles}
So far, we have looked at a fixed spacetime manifold $M$, a fixed field
vector bundle over $F\to M$, and a fixed PDE system $\E\to M$ on it,
without concerning ourselves how the field bundle of the PDE system
should change when the spacetime manifold is changed. To make a
connection with covariant field theory, which functorially assigns
algebras to spacetime manifolds, it is helpful to look at systems of
partial differential equations that can also be assigned to spacetime
manifolds functorially. As before, we presume that all fields are
sections of vector bundles. The discussion below can be easily adapted
to arbitrary smooth bundles.  There are two issues that need to be
addressed here: vector bundles and differential operators on them. Given
an open embedding of manifolds $M\to M'$, there is a priori no natural
map from vector bundles $F\to M$ and $F'\to M$, even if the fibers of
$F$ and $F'$ are modeled on the same vector space.  The absence of such
maps is an obstruction to relating sections of $F\to M$ to sections of
$F'\to M$. The needed bundle maps must to be specified along with the
typical fiber and the total space topology, which leads to the notion of
a \emph{natural bundle}. It is common practice to use a connection to
define differential operators acting on sections of a vector bundle
(that practice was adopted in~\cite{geroch-pde}, instance). But, once again, for
an arbitrary vector bundle, there is no natural choice of connection. It
then becomes convenient to work with jet bundles instead, as we have
elected to do here. The switch between these points of view is
straightforward~\cite[\textsection 17]{kms}.

A field configuration corresponds to a section of a vector bundle $F\to
M$ over a spacetime manifold $M$. To implement covariance with respect
to arbitrary base space morphisms $\chi\colon M\to M'$ we need to be
able to turn sections of $F\to M$ into sections of the corresponding
bundle $F'\to M'$, restricted to the image $\chi(M)$. This requires a
uniquely determined corresponding bundle morphism, $F\to F'$ that fits
in the commutative diagram
\begin{equation}
\vcenter{\xymatrix{
	F \ar[r] \ar[d] & F' \ar[d] \\
	M \ar[r]^\chi   & M' ~ .
}}
\end{equation}
In essence, we cannot allow $F$ and $F'$ to be arbitrary vector bundles;
we need them to be associated functorially to $M$ and $M'$ respectively,
which motivates the following definition.
\begin{definition}
A \emph{natural vector bundle} $F$ over $n$-dimensional manifolds is a
functor from the category of manifolds to the category of vector
bundles, $F\colon \Man^n\to \VBndl$, that is right-inverse to the
forgetful base space functor, $\base\colon (A\to M) \mapsto M$, that is,
$\base\circ F = \id$.
\end{definition}

Examples of natural vector bundles are the tangent bundle $T$, the
cotangent $T^*$. Operations on natural vector bundles like direct sums,
direct products, linear duals, tensor products and jet extensions also
define natural bundles. These and other examples are discussed
extensively in~\cite{kms}. Given a natural vector bundle
$F\colon \Man \to \VBndl$, its $k$-jet bundle $M\mapsto J^kF(M)$ is also
natural.

As we have seen in Sect.~\ref{sec:jets-pdes}, to specify a $k$-th
order PDE system on a bundle $F$, we need only specify a bundle map from
$J^kF$ to an equation bundle $E$. If $F$ and $E$ are natural bundles,
then a natural transformation between $J^kF$ and $E$ defines precisely
these kinds of morphisms, $J^kF(M) \to E(M)$ in a way covariant with
diffeomorphisms of $M$. Recall that it is sometimes convenient to see
$M\stackrel{\id}{\to} M$, a trivial bundle with zero dimensional fiber,
as an object in $\Bndl$ and $\id\colon M \mapsto (M\stackrel{\id}{\to}
M)$ as the natural \emph{identity} bundle.
\begin{definition}
A \emph{natural bundle map} $f$ between natural ($\Bndl$-valued) bundles
$F$ and $E$ is a natural transformation $f\colon F\to E$.

A \emph{natural section} $f$ of a natural bundle $F$, is a natural
bundle map $f\colon M \to F$, where the domain is treated as the
identity bundle.
\end{definition}
Two examples of natural sections, given say a natural vector bundle $F$,
are the zero section and the identity section of $F^*\otimes F$. An
example of a natural bundle map, is the dual pairing
$\langle-,-\rangle\colon F^*\otimes F \to \R$, where $\R$ is the trivial
real line bundle. Other examples are provided by smooth real functions
$f\colon \R\to \R$, which can be promoted to a natural bundle map by
applying it fiberwise to the natural trivial bundle $M\times \R$.

We now have the necessary concepts to introduce natural variational
hyperbolic PDE systems.
\begin{definition}
Let $F$, $B$, $V=F\times B$ and $E$ be respectively natural
\emph{dynamical field}, \emph{background field}, \emph{total field} and
\emph{constraint} vector bundles.

A \emph{natural first order, quasilinear, symmetric hyperbolic PDE
system with constraints} $f\oplus c$ is a natural bundle map $f\oplus c
\colon J^1F\times J^\oo B \to F^*\oplus E$, such that the bundle maps
$f_M$, for any $M$ in $\Man^n$, satisfy Def.~\ref{def:symhyp}, for some
natural, non-empty subset of each fiber of $J^\oo B(M)$. 

A \emph{natural Lagrangian density} is a natural bundle map $\L\colon
J^1F\times J^\oo B\to \Lambda^nM$.

A \emph{natural symmetric hyperbolic variational PDE system} is a pair
$(f\oplus c,\L)$ where $\L$ is a natural Lagrangian density and $f$ is a
natural first order, quasilinear, symmetric hyperbolic PDE system that
is naturally equivalent (Def.~\ref{def:pde-equiv}) to gauge fixed
Euler-Lagrange equations of $\L$ (Sect.~\ref{sec:Jgreen}).
\end{definition}

The introduction of background fields is sometimes necessary to make the
Lagrangian natural. One may also need extra background fields to put the
corresponding Euler-Lagrange equations into symmetric hyperbolic form in
a natural way, as discussed for example in~\cite{geroch-pde}. The example of a
scalar wave equation on a curved background is discussed in some detail
at the end of the next section.

A field section $\psi$ on $M$ will from now on always be a section of
the total field bundle $\psi=(\phi,\beta) \colon M\to V = V\times B$.
As a consequence of the dependence of the principal symbol $\bar{f}$ of
$f$ on the value of background fields, implies that the naturally
defined chronal and spacelike cone bundles $\Gamma(M)$  and
$\Gamma^\oast(M)$ are cone bundles over $F\times J^\oo B(M)$.

An unfortunate complication in the presence of background fields is that
not all background field configurations are compatible either with the
intrinsic integrability conditions of the PDE system (like the
restriction to Einstein vacuum backgrounds in the Rarita-Schwinger
system~\cite{hm-sp32}) or with symmetric hyperbolicity (like the
restriction to Lorentzian signature of the background metric for wave
equations). With this in mind, we say that a background field section
$\beta\colon M\to B(M)$ is \emph{admissible} when (a) there exists a
total field section $\psi=(\phi,\beta)\colon M\to V$ that is globally
hyperbolic and solves the PDE system $f=0$, (b) for any $\psi$-Cauchy
surface $\Sigma \sso M$, there exists an open neighborhood of $N\supset
\Sigma$ such that the initial data $\xi|_\Sigma$ of the space of
globally hyperbolic solutions of the form $(\xi,\beta)$ on $N$ forms an
open neighborhood of $\phi|_\Sigma$ in the space of initial data allowed
by the constraints $c=0$ (without taking further integrability
conditions into account).

\subsection{Functoriality}\label{sec:funct}
In this section we (a) systematically summarize some of the concepts and
notations used throughout this paper, (b) systematically summarized
various constructions that appear as steps in the construction of a
classical field theory, and (c) remark on the categorical and functorial
properties of these objects and constructions. This section will serve
as a reference for the later discussion of how LCFT axioms, including
causality, translate to field theories with dynamical causal structure.

% XXX: Put in references to sections/equations/definitions where each
% notation is introduced.
\begin{enumerate}
\item
	$\Man$, $\Man^n$, $\Man^\oo$: Category of smooth manifolds,
	subcategories of $n$-di\-men\-sion\-al and infinite dimensional manifolds.
	Subscript ${}_e$ stands for restriction of morphisms to open
	embeddings.
\item
	$\Bndl$: Category of smooth bundles, a subcategory of $\Man$. Fibered
	over $\Man$ with the base space functor%
	\footnote{We will re-use the symbol $\base$ to generically denote a
	forgetful functor.} %
	$\base: (E\to M) \mapsto M$.
\item
	$\VBndl$: Category of vector bundles, a subcategory of $\Bndl$.
\item
	$\CBndl$, $\ChrBndl$, $\SpBndl$: Subcategories of $\Bndl_e$ of cone
	bundles, chronal cone bundles, spacelike cone bundles. Subscripts
	${}_{sc}$ and ${}_H$ indicate stable chronality and global
	hyperbolicity and superscript ${}^n$ denotes the restriction to
	$n$-dimensional base manifolds. Morphisms in $\ChrBndl$ and $\SpBndl$
	must be chronally convex. As functors from $\Man$, the images of
	$M\mapsto \ChrBndl(M)$ and $M\mapsto \SpBndl(M)$ indicate the
	respective subcategories of cone bundles over $M$.
\item
	$J^k\colon\VBndl\to \VBndl$: Functor of jet prolongation of a vector
	bundle.
\item
	$\Secs,\Secs_0\colon \VBndl\to \Man^\oo$: Contravariant functors of
	sections, sections with compact support.
\item
	$\Secs_{SC,+,-}\colon \VBndl \times_\base \SpBndl_H \to \Man^\oo$:
	Contravariant functors of sections with spacelike compact, retarded
	($+$), or advanced ($-$) supports. The objects of the domain category
	can be identified with pairs $(F,C^\oast)$, with $F\to M$ a vector
	bundle and $C^\oast\to M$ is a spacelike cone bundle over the same
	$M$. Throughout the paper we have also used the notation
	$\Secs_{SC,+,-}(F,C)$, where $C\to M$ is a chronal cone bundle. These
	notations can be used interchangeably because of the duality between
	chronal and spacelike cone bundles. However, these sections functors
	are covariant only if applied to spacelike cone bundles.
\item
	$B,F,V,E\colon\Man_e^n\to \VBndl$: Natural bundle functors of
	background, dynamical, and total ($V=B\oplus F$) field bundles as well
	as the equation bundle.
\item
	$\L\colon J^\oo F\times J^\oo B\to \Lambda^nM$: Natural bundle section,
	defining a covariant Lagrangian density. It can also be interpreted as
	an differential form $\L\in \Omega^{n,0}(F\oplus B)$ in the
	variational bicomplex.
\item
	$f\oplus c\colon J^1F\times J^\oo B \to F^*\oplus E$: Natural bundle section
	defining a covariant first order, quasilinear, symmetric hyperbolic PDE
	system, which is equivalent to gauge fixed Euler-Lagrange equations of
	$\L$.
\item\label{itm:bkgr}
	$\Bkgr$: Category of manifolds augmented by collections of admissible
	background field configurations.  Objects are pairs $(M,\B)$ with $M$
	an $n$-manifold and $\B\sse \Secs(B(M))$. Each element $\beta\in \B$
	must satisfy all the necessary integrability conditions and be
	compatible with the existence of globally hyperbolic dynamical field
	solutions. A morphism $\chi\colon (M,\B) \to (M',\B')$ determines an
	open embedding $\chi\colon M\to M'$ compatible with background fields,
	$\chi^*(\B') \sse \B$ being a closed embedding (to play nice with
	$C^\oo$, as discussed in item~\ref{itm:Coo} below).  $\Bkgr$ is
	fibered over $\Man_e^n$ with respect to the forgetful functor to
	$\base\colon (M,\B) \mapsto M$.
\item\label{itm:spbkgr}
	$\SpBkgr = (\SpBndl^n_H \times_\base \Bkgr)_c$: Category of admissible
	background field configurations, equipped with a globally hyperbolic
	spacelike cone bundle. As functor from $\Bkgr$, the image of $M\mapsto
	\SpBkgr(M,\B)$ indicates the subcategory of different ways of
	equipping $(M,\B)$ with a spacelike cone bundle. The objects of
	$\SpBkgr$ will be denoted as $\M=(M,C^\oast,\B)$. The forgetful
	functor is again denoted $\base\colon \SpBkgr \to \Bkgr$, with
	$\base(\M) = \base(M,C^\oast,\B) = (M,\B)$. The subscript ${}_c$
	indicates that morphisms are restricted to those that are chronally
	compatible (Def.~\ref{def:chr-compat}).
\item
	$\Symp$: The generalized category of symplectic manifolds, a
	subcategory of $\Man$. It includes symplectic manifolds as well as
	manifolds foliated by symplectic leaves (which is actually the same as
	the category of regular Poisson manifolds). The morphisms are leaf
	preserving symplectomorphisms. The forgetful functor to manifolds is
	$\base\colon \Symp\to \Man$.
\item
	$\CAlg$, $\Poiss$: Categories of commutative algebras and Poisson
	algebras, with the forgetful functor $\base\colon \Poiss\to \CAlg$.
	Though, keeping away from functional analytic details, we implicitly
	treat these categories as though the objects are equipped with
	topology and the homomorphisms are continuous. Such continuity
	requirements also identify a tensor product $\otimes$ (identified with
	the categorical coproduct), which is compatible with the categorical
	product in $\Man$, as specified in Hyp.~\ref{hyp:Coo-exact} below. The
	extension of the tensor product to $\Poiss$ is the independent
	subsystems tensor product.
\item\label{itm:Coo}
	$C^\oo$, $C_{cst}^\oo$, $\POISS$: Functors of smooth functions of
	ordinary manifolds, that restricted to compact spacetime support,
	and that equipped with Poisson bracket on symplectically foliated
	manifolds, which fit in the following commutative diagram:
	\begin{equation}
	\vcenter{\xymatrix{
		\Symp \ar[d]^\base \ar[r]^\POISS & \Poiss \ar[d]^\base \\
		\Man \ar[r]^{C_{cst}^\oo} & \CAlg ~ .
	}}
	\end{equation}
	The definition of $C_{cst}^\oo(M)$ is unambiguous when $M$ is a finite
	dimensional manifold. On the other hand, its definition requires some
	functional analytical detail not tackled here. However, we have
	already postulated that, whatever ultimate definition is adopted for
	it, it will have to satisfy Hyp.~\ref{hyp:Coo}. In addition to that,
	let us make explicit additional hypotheses, of a more category
	theoretic nature, that we presume it would satisfy.
	\begin{hypothesis}\label{hyp:Coo-exact}
	The functor $C_{cst}^\oo$ gives a \emph{sheaf} of commutative
	algebras~\cite{sheaves}
	on a manifold $M$ when applied to the open subsets of
	$M$. Moreover, $C_{cst}^\oo$ maps surjections of manifolds to injective
	homomorphisms (``left exactness'') and maps closed embeddings of
	manifolds to surjective homomorphisms (``right exactness'' or smooth
	Urysohn lemma). Products of manifolds are taken to tensor products of
	algebras, $C_{cst}^\oo(M\times N) \cong C_{cst}^\oo(M)\otimes C_{cst}^\oo(N)$.
	``Transverse'' pullbacks of manifolds are taken to pushouts of
	algebras.
	\end{hypothesis}
	% XXX: Note, the Wikipedia article defines an "left (right) exact functor"
	% that commutes with finite projective (inductive) limits, citing SGA4
	% tome I sec 1.
	\begin{remark}
	The statement about \emph{transverse} pullbacks is standard in
	differential geometry of finite dimensional
	manifolds~\cite[Thm.2.8]{mr}. However, the notion of transversality
	is much more subtle for infinite dimensional
	manifolds~\cite{cbdm,lang,hamilton}. In fact, the notions of
	transversality discussed in these references may not be general enough
	to encompass the situation where it is needed in the formulation of
	the generalized Time Slice property in Sect.~\ref{sec:gen-ts}. At this
	point, we have no choice but to leave the existence of the appropriate
	notion of transversality as a conjecture.
	\end{remark}
\item
	$\S\colon\SpBkgr\to \Man$, $\PP\colon\SpBkgr_c \to \Symp$: The
	contravariant functor assigning $\S(\M)$, the set of all $C^*$-slow
	solutions of the gauge fixed Euler-Lagrange equations (equivalently of
	the corresponding constrained symmetric hyperbolic system) compatible
	with the background field configurations in $\B$, to the object
	$\M=(M,C^*,\B)$ of $\SpBkgr$. The contravariant functor $\PP$ augments
	the images of $\S$ with symplectic structure defined by the Lagrangian
	$\L$. A fixed element $\beta\in \B$ singles out a symplectic leaf in
	$\S(\M)$. It is the dependence of solutions on background fields that
	forces us to consider generalized symplectic manifolds. Note,
	however, that a symplectically foliated manifold must have at least
	dimension two. On the other hand, many of the $\S(\M)$ spaces will be
	empty sets, hence cannot be the underlying manifolds of images of
	$\PP$. For that reason, the domain category of $\PP$ is restricted to
	$\SpBkgr_c$, where the subscript ${}_c$ indicates the largest
	subcategory such that no object in the image $\S(\SpBkgr_c)$ is empty.
\item
	$\SS(M,\B)\colon \SpBkgr(M,\B) \to \Man$, $\PPP(M,\B)\colon
	\SpBkgr_c(M,\B) \to \Symp$: The contravariant functor whose image in
	$\Man$, resp.\ $\Symp$, consists of the diagram $\S(\M)\to \S(\M')$,
	resp.\ $\PP(\M) \to \PP(\M')$, of all slow patches on the total
	solution space $\S_H(M,\B)$, resp.\ phase space $\PP_H(M,\B)$, with
	the natural inclusions between them. 
\item
	$\S_H\colon \Bkgr\to \Man$, $\PP_H\colon\Bkgr_c \to \Symp$: The
	contravariant functor of all globally hyperbolic solutions that are
	compatible with specified background field configurations. The
	contravariant functor $\PP_H$ augments the images of $\S_H$ with
	symplectic structure, as above, where we have defined $\Bkgr_c =
	\base(\SpBkgr_c)$. Each $\S(\M)$ is an open slow patch of the total
	solution space $\S_H(M,\B)$. The natural isomorphisms $\S_H \cong
	\varinjlim \circ \SS$ and $\PP_H \cong \varinjlim \circ \PPP$ are
	verified in Sect.~\ref{sec:cat-glob-phsp}. These functors fit into
	the following commutative diagram:
	\begin{equation}
	\vcenter{\xymatrix{
		\SpBkgr_c \ar[dd]^\base \ar[dr]_\PP \ar[r]^\sse & \SpBkgr \ar[drr]^\S \\
		& \Symp \ar[rr]^\base & & \Man  ~ . \\
		\Bkgr_c \ar[ur]^{\PP_H} \ar[r]_\sse & \Bkgr \ar[urr]_{\S_H}
	}}
	\end{equation}
\item
	$\F=\POISS\circ \,\PP$, $\A = C_{cst}^\oo\circ \S = \base\circ \F$: Covariant
	functors of the Poisson algebra of observables on $\S(M)$ and its
	underlying commutative algebra, specified by a slow patch $\M =
	(M,C^\oast,\B)$.
\item
	$\AA(M,\B)\colon \SpBkgr(M,\B) \to \CAlg$, $\FF(M,\B)\colon
	\SpBkgr_c(M,\B) \to \Poiss$: The covariant functor whose image in
	$\CAlg$, resp.\ $\Poiss$, consists of the diagram $\A(\M)\to \A(\M')$,
	resp.\ $\F(\M) \to \F(\M')$, of all algebras of observables associated
	to slow patches in the total algebra $\A_H(M,\B)$, resp.\ Poisson
	algebra $\F_H(M,\B)$, with the natural projections between them. 
\item
	$\F_H = \POISS\circ \,\PP_H$, $\A_H=C_{cst}^\oo\circ \S_H = \base\circ \F_H$:
	The covariant functors of the Poisson algebra of observables and its
	underlying commutative algebra of the total phase space $\S_H(M,\B)$.
	The natural isomorphisms $\F_H = \varprojlim \circ \FF$ and $\A_H =
	\varprojlim \circ \AA$ are verified in Sect.~\ref{sec:cat-glob-phsp}.
	These functors fit in the following commutative diagram:
	\begin{equation}
	\vcenter{\xymatrix{
		\SpBkgr_c \ar[dd]^\base \ar[dr]_\F \ar[r]^\sse & \SpBkgr \ar[drr]^\A \\
		& \Poiss \ar[rr]^\base & & \CAlg  ~ . \\
		\Bkgr_c \ar[ur]^{\F_H} \ar[r]_\sse & \Bkgr \ar[urr]_{\A_H}
	}}
	\end{equation}
\item
	$\Bkgr_c$, $\Bkgr^*$, $\Bkgr'$: Subcategories of $\Bkgr$. As defined in a
	preceding item, $\Bkgr_c = \base(\SpBkgr_c)$. We define $\Bkgr^*$ as
	the subcategory where the collections of admissible background
	configurations are singletons, $\B = \{\beta\}$, and $\Bkgr'$ is the
	image of the category $\Bkgr^*$, where the auxiliary components of the
	background fields have been thrown away.
\end{enumerate}

With all the categorical notions summarized above, it should now be
clear how to recover the standard formulation of the LCFT axioms given
in Sect.~\ref{sec:freelcft}. Consider a semilinear wave equation on a
curved background with Lagrangian density
\begin{equation}
	\L = -\frac{1}{2}(g^{\mu\nu}\del_\mu\phi\del_\nu\phi
		+ V(\phi))\sqrt{-g}\,\d\tilde{x} .
\end{equation}
The trivial dynamical field bundle $F\colon M \mapsto \R\times M$ is
clearly natural. However, the Lagrangian density $\L$ only becomes
natural if we include the metric tensor among the background fields,
that is we set $B\colon M\mapsto S^2T^*M$, which is also clearly
natural. The Euler-Lagrange equations automatically give a natural
hyperbolic PDE system. (This system is \emph{normally hyperbolic}~\cite{bgp,bfr}.
For simplicity we do not put this system into symmetric hyperbolic form,
which would simply require extending the dynamical and background field
bundles~\cite{geroch-pde}.)

It is simple to check that restricting to the subcategory $\Bkgr^*_c$ of
$\Bkgr_c$, where objects $(M,\B)$ are equipped with a single background
field configuration $\B=\{g\}$, forces the allowed morphisms to be only
causal isometries (cf.~Def.~\ref{def:globhyp},
Def.~\ref{def:chr-compat}, Lem.~\ref{lem:chr-compat-convex} and
items~\ref{itm:bkgr}, \ref{itm:spbkgr} above). This means that we have
an equivalence of categories $\Bkgr^*_c \cong \GlobHyp_c$. By
construction then the covariant functor
\begin{equation}
	\F_H\colon \Bkgr^*_c \cong \GlobHyp_c\to \Poiss
\end{equation}
has the potential to be a classical LCFT. It remains only to check the
axioms of Def.~\ref{def:axlcft}. These checks are carried out at the end
of Sect.~\ref{sec:semilin-lcft}.

\subsection{Limits, colimits and the global phase space}
\label{sec:cat-glob-phsp}
We have seen earlier, more specifically in Sect.~\ref{sec:glob-phsp},
that the slow patches $\S(\M)$ constitute an open cover of the global
solution space $\S_H(M,\B)$. How can this relation be restated for the
algebras $\A_H(M,\B)$ and $\A(\M,\B)$, or the Poisson algebras
$\F_H(M,\B)$ and $\F(\M)$? The answer is not immediately obvious.
However, the situation becomes more clear when one realizes that the
desired constructions can be formulated in terms of the categorical
notions of $\emph{limit}$ and \emph{colimit}. Categorical limits (also
\emph{projective} or \emph{inverse} limits) and colimits (also
\emph{inductive} or \emph{direct} limits) are briefly introduced in
Sect.~\ref{sec:limits}.

A great advantage of considering the Poisson algebras $\F(\M)$ is the
simplified microcausality property, Cor.~\ref{cor:C-microcaus}, while
only the more refined microcausality property of
Cor.~\ref{cor:glob-microcaus} survives for $\F_H(M,\B)$. On the other
hand, the $\F(\M)$ for different cone bundles $C\to M$ are included as
subalgebras of $\F_H(M,\B)$, though not as independent ones. The
situation is strongly parallel to the fact that the slow patches
$\S(\M)$ are open subsets of $\S_H(M,\B)$, though with non-trivial
overlaps. The parallel is made precise by recognizing that the globally
hyperbolic slow patches $\S(\M)$, as well as their symplectic analogs
$\PP(\M)$, and the inclusions between them constitute the diagram
$\SS(M,\B)$ in $\Man$, and respectively $\PPP(M,\B)$. Similarly, the
algebras $\A(\M)$ and $\F(\M)$ and the projections between them
constitute diagrams $\AA(M,\B)$ and $\FF(M,\B)$ in $\CAlg$ and $\Poiss$,
respectively. The open cover of $\S_H(M,\B)$ by slow patches and the
agreement of symplectic structure on their overlaps,
Sect.~\ref{sec:glob-phsp}, ensures
the following limit and colimit identities hold
\begin{equation}
	\S_H(M,\B) \cong \varinjlim \SS(M,\B)
	\quad\text{and}\quad
	\PP_H(M,\B) \cong \varinjlim \PPP(M,\B) .
\end{equation}
Similarly, in the algebraic categories, the sheaf property of $C_{cst}^\oo$
ensures that the following colimit identities hold
\begin{equation}
	\A_H(M,\B) \cong \varprojlim \AA(M,\B)
	\quad\text{and}\quad
	\F_H(M,\B) \cong \varprojlim \FF(M,\B) .
\end{equation}
The same category $\SpBkgr(M,\B)$ contravariantly indexes the geometric
colimit and covariantly indexes the algebraic limit in the left column,
while the same category $\SpBkgr_c(M,\B)$ contravariantly indexes the
geometric colimit and covariantly indexes the algebraic limit in the
right column. Recall that objects $\M=(M,C^\oast,\B)$ of these indexing
categories specify a spacelike cone bundle $C^\oast\to M$. If a chronal
cone bundle were to be specified instead, the above limits and colimits
would have been swapped, since $\SpBndl(M)$ is the opposite category of
$\ChrBndl(M)$. However, it is more convenient to use spacelike cone
bundles, in light of the considerations below.

\begin{remark}
Colimits appear in the framework of LCFT already at the level of
spacetimes. Namely, consider a globally hyperbolic Lorentzian manifold
$(M,\{g\})$ in the category $\GlobHyp_c$. Let $\MM$ be the diagram whose
image in $\GlobHyp_c$ consists of all globally hyperbolic Lorentzian
submanifolds $(M_i,\{g\}_i)$ of $(M,\{g\})$, with inclusions as morphisms. The
colimit $\varinjlim \MM$ is isomorphic to $(M,\{g\})$ itself.  In fact one
recovers $(M,\{g\})$ as the colimit even if one restricts the objects of
$\MM$ to be simple in particular ways, like being topologically trivial,
having bounded radius, being the interior of a causal diamond, etc. Each
kind of simplicity translates to a correspondingly simple property of
the algebra of observables $\F(M_i,\{g\}_i)$ associated by a LCFT. Thus,
just as the non-trivial topological or geometric structure of $(M,\{g\})$ is
encoded in the diagram $\MM$ in $\GlobHyp_c$, the non-trivial algebraic
structure of $\F(M,\{g\})\cong \varinjlim \F\circ\MM$ is encoded in the
diagram $\F\circ\MM$ in $\Poiss_i$. (We use the colimit in both
categories because $\F$ is a covariant functor.) This approach underlies
the investigation of superselection sectors of LCQFT~\cite[Ch.IV]{haag},
\cite{br} and the quantization of vector field theories on manifolds of
non-trivial topology~\cite[Apx.A]{hollands-ym}, \cite{dl-maxwell}.
\end{remark}

It is sometimes convenient to simultaneously display both the limit and
colimit identities:
\begin{align}
	\S_H(M,\B) &= \varprojlim (\varinjlim \circ \SS \circ \MM) ,
	& \PP_H(M,\B) &= \varprojlim (\varinjlim \circ \PPP \circ \MM_c) , \\
	\A_H(M,\B) &= \varinjlim (\varprojlim \circ \AA \circ \MM) ,
	& \F_H(M,\B) &= \varinjlim (\varprojlim \circ \FF \circ \MM_c) ,
\end{align}
where we define $\MM$ ($\MM_c$) is the diagram in $\Bkgr$ ($\Bkgr_c$)
that consists of all sub-objects $(M',\B')\to (M,\B)$. Similar ideas
about constructing the global phase space and algebra of observables,
using an open cover of the spacetime and a cover of the space of
solutions also recently appeared in~\cite{bfr}. Though, in distinction, the
covers on $\S_H(M,\B)$ are considered there more general than ours. The
cover elements are not required to be open (only $G_\delta$) and could
be much more refined than our slow patches.

\section{Causality Axiom in Locally Covariant Classical Field Theory}
\label{sec:classcaus}
The axioms in the standard definition of LCFT, Def.~\ref{def:axlcft},
make reference to a fixed causal structure induced by the future
timelike cones of a background Lorentzian metric. It is clear that they
cannot be translated to direct conditions on the functor $\F_H$ (or
$\PP_H$), which can be constructed for theories without a fixed
background causal structure. On the other hand, these axioms can be
straightforwardly translated to conditions on the functor $\F$ (or
$\PP$), since objects in its domain category $\SpBkgr$ all have an
externally fixed causal structure given by a spacelike cone bundle.
Below, we prove theorems about the properties of the functor $\F$, which
lead naturally to generalizations of the LCFT axioms. The functor $\F_H
= \varprojlim \circ \FF$ inherits these properties through the
categorical limit construction.

\subsection{Isotony}\label{sec:gen-isot}
Given a morphism $\chi\colon \M \to \M'$ in
$\SpBkgr$, is the corresponding morphism $\chi^*\colon \S(\M)\to
\S(\M')$ surjective? The answer is not always positive; it
depends on the properties of the field theory in question. In terms of
solutions, $\chi^*$ is surjective precisely when every $C^\oast$-slow
solution on $M$ is the pullback along $\chi$ of a $C^{\prime\oast}$-slow
solution on $M'$ (equivalently, can be extended with respect to $\chi$
to a $C^{\prime\oast}$-slow solution on $M'$, where there is no presumption
of uniqueness of the extension). Broadly speaking, the extension of a
particular solution on $M$ fails if the would be extension develops
singularities in $M'$.

There are different kinds of singularities, including topological,
smooth, geometric and analytical. (T) For instance, depending on the
structure of the equations of motion, there may be a topological
obstruction in extending a solution from $M$ to $M'$, if $M$ is not
contractible~\cite{dl-maxwell}. (S) Also, it is possible that a solution
becomes unbounded or non-smooth near the boundary of $\chi(M)$ in $M'$
and hence cannot be extended continuously to $M'$. (A) The solution may
even be completely regular on $\chi(M)$ and its boundary, yet be forced
by the dynamics of the equations of motion to blow up when extended to
$M'$. The formation of shocks in fluid dynamics and of black hole
singularities in GR are prime examples of this phenomenon. (C) On the
other hand, it is possible for a solution to extend from $\chi(M)$ to
$M'$, yet the fixed cone bundle $C^{\prime*}$ is not fast enough to be
compatible with the extension or simply that the image $\chi(M)$ is not
chronally convex in $M'$ (Def.~\ref{def:chr-compat}).

It is of course desirable to formulate a set of necessary and sufficient
conditions for extensibility to hold. Unfortunately, that is in general
a very difficult problem. In particular, the study of blow ups of type
(A) constitute a major active branch of modern PDE theory.  Moreover,
the existence singularities of type (T) and (A) is determined by the
structure of the PDEs, which is fixed by the relevant physics. As such,
we may just have to learn to live with them. Inextensibility of type (S)
depends on whether we allow asymptotically irregular solutions in
$\S(\M)$.  This choice is related to the issues of the choices of the
structure of the domains $M$ and topologies on the space of solutions,
which were discussed in Sect.~\ref{sec:slow-sec} in relation to the
openness of slow patches. It is also possible that the failure of
surjectivity of $\chi^*$ due to blow up of type (S) is precisely masked
by the application of the $C_{cst}^\oo$ functor that constructs the
algebra of observables on the given slow patch of
$\S_H(M,\B)$~\cite{fr-bv,rejzner-thesis}. As
in this preceding discussion, we do not address it here directly.
Finally, blow up of type (C) is entirely within our control. Since the
problematic causal cone bundle $C^{\prime*}$ plays only an auxiliary
role in the ultimate construction of $\S_H(M',\B')$, it is no loss to
introduce a faster cone bundle, say $C^{\prime*}_1$ that might be fast
enough to be compatible with an extension of every solution in $\S(\M)$.

Unfortunately, as can be seen from the above discussion we cannot prove
a general theorem about the surjectivity of the morphism $\chi^*$.
Therefore, we introduce this property as an additional hypothesis.
\begin{definition}\label{def:ext}
A morphism $\chi$ in $\SpBkgr$, is said to be \emph{extensible} if this
morphism factors through
\begin{equation}
\vcenter{\xymatrix{
	& \M'_1 \ar[d]^{\dsse} \\
	\M \ar[ur]^{\chi_1} \ar[r]^\chi & \M' ~ ,
}}
\end{equation}
such that $\M'=(M',C^{\prime\oast},\B')$, $\M'_1 =
(M',C_1^{\prime\oast},\B')$, and the morphism $\chi_1^* = \S(\chi_1)$ is
surjective.
\end{definition}
Provided this hypothesis holds, there are no other obstacles in proving
the following generalization of the Isotony property for the functor $\F$.

\begin{theorem}[Generalized Isotony]\label{thm:gen-isot}
Consider a morphism $\chi$ in $\SpBkgr$. If $\chi$ is extensible, the
morphism $\A(\chi_1)$ is an injective homomorphism. Moreover, if
$\S(\M)$ is not empty, $\chi$ and $\chi_1$ are also morphisms in
$\SpBkgr_c$, the morphisms $\S(\chi)$ and $\S(\chi_1)$ lift to
$\PP(\chi)$ and $\PP(\chi_1)$ in $\Symp$, and $\A(\chi)$ and
$\A(\chi_1)$ lift to morphisms in $\Poiss$.
\end{theorem}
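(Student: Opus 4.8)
The plan is to verify \textbf{Generalized Isotony} by unwinding the definition of ``extensible'' and then separately handling the commutative-algebra statement and the symplectic/Poisson lift, reducing each to results already established earlier in the excerpt. First I would observe that since $\chi$ is extensible, it factors as $\chi = (\text{inclusion}) \circ \chi_1$ with $\chi_1^* = \S(\chi_1)$ surjective, so it suffices to prove injectivity of $\A(\chi_1) = C^\oo_{cst}(\S(\chi_1))$. The key input here is Hyp.~\ref{hyp:Coo-exact}, which postulates that the functor $C^\oo_{cst}$ maps surjections of manifolds to injective homomorphisms (``left exactness''). Since $\S(\chi_1)$ is a surjection of (infinite-dimensional) manifolds by hypothesis, applying $C^\oo_{cst}$ yields that $\A(\chi_1)$ is an injective algebra homomorphism. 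This is essentially immediate once the definitions are cited; the substance is entirely in the postulated exactness property of $C^\oo_{cst}$.

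Next I would address the upgrade to $\SpBkgr_c$ and to $\Symp$ and $\Poiss$. The point is that once $\S(\M)$ is nonempty, the object $\M$ (and likewise $\M'$, $\M'_1$, since they sit over the same background data and their solution spaces contain $\S(\M)$ after pullback/extension) lies in the subcategory $\SpBkgr_c$, which by item~19 of Sect.~\ref{sec:funct} is precisely the largest subcategory on which the functor $\PP$ is defined --- i.e.\ on which no solution space in the image is empty. Hence $\S(\chi)$ and $\S(\chi_1)$ are morphisms in $\SpBkgr_c$, and applying the functor $\PP$ lifts them to $\PP(\chi)$, $\PP(\chi_1)$ in $\Symp$. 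The fact that these are genuine symplectomorphisms (leaf-preserving) follows because the pullback maps on solution spaces are local and the presymplectic current density $\omega$, being a form on jet space, pulls back naturally --- so the induced maps intertwine the $\Omega_\Sigma$ forms when one uses compatible Cauchy surfaces; this is exactly the content underlying Lem.~\ref{lem:pois-covar} (Covariance). Dually, applying $\POISS = C^\oo_{cst}$ augmented with the Peierls bracket, and invoking Lem.~\ref{lem:green-covar} (covariance of the causal Green function under chronally compatible embeddings that preserve the equation form), the algebra homomorphisms $\A(\chi)$ and $\A(\chi_1)$ become Poisson homomorphisms, i.e.\ lift to morphisms in $\Poiss$.

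The main obstacle I expect is not in the logical skeleton --- which is a short chain of citations --- but in making sure that all the compatibility hypotheses required by Lem.~\ref{lem:green-covar} and Lem.~\ref{lem:pois-covar} are actually in force for a general morphism $\chi$ in $\SpBkgr$. Specifically, those covariance lemmas require that the equation forms agree under pullback ($(f,\tilde F^*) = (\chi^* f', \chi^*\tilde F'^{\,*})$) and that the induced morphism of chronal cone bundles be chronally compatible. The former is automatic here because the PDE system is \emph{natural} (it is a natural bundle map, so it commutes with the morphisms of $\SpBkgr$ by construction), and the latter is built into the definition of $\SpBkgr$ via the subscript ${}_c$ on $\SpBkgr = (\SpBndl^n_H \times_\base \Bkgr)_c$ (item~13 of Sect.~\ref{sec:funct}), which restricts morphisms to chronally compatible ones. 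So the delicate part is really bookkeeping: confirming that chronal compatibility of $\chi$ (together with global hyperbolicity, via Lem.~\ref{lem:chr-compat-convex}, which promotes compatibility to convexity) passes to the factoring morphism $\chi_1$ and to the inclusion $\M'_1 \dsse \M'$, and that the cone bundle $C_1^{\prime\oast}$ introduced by extensibility still yields a chronally compatible embedding. Once that is in place, the theorem follows by assembling the cited results, and I would present it as such rather than reproving any covariance statement from scratch.
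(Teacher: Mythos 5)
Your proposal is correct and follows essentially the same route as the paper's (very terse) proof: extensibility gives surjectivity of $\S(\chi_1)$, Hyp.~\ref{hyp:Coo-exact} converts that surjection into injectivity of $\A(\chi_1)$, and the covariance lemma, Lem.~\ref{lem:pois-covar}, supplies the lifts to $\Symp$ and $\Poiss$ when the solution spaces are nonempty. Your additional bookkeeping on why the hypotheses of Lems.~\ref{lem:green-covar} and~\ref{lem:pois-covar} hold (naturality of the equation form, chronal compatibility built into $\SpBkgr$) is detail the paper leaves implicit but does not change the argument.
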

\begin{proof}
Since $\chi$ is extensible, the induced morphism $\S(\M'_1)\to \S(\M)$
is surjective. By Hyp.~\ref{hyp:Coo-exact}, $C_{cst}^\oo$ takes this
surjection to the injective homomorphism $\A(\M) \to \A(\M'_1)$. If the
corresponding solution spaces are not empty, the covariance lemma,
Lem.~\ref{lem:pois-covar}, implies that the appropriate morphisms lift
to $\Symp$ and $\Poiss$.  \qed.
\end{proof}

\subsection{Time Slice}\label{sec:gen-ts}
Given a morphism $\chi\colon \M \to \M'$ in
$\SpBkgr$, is the corresponding morphism $\S(\chi)\colon \S(\M)
\to \S(\M')$ injective? If the answer were positive, it
would mean that every $C^{\prime\oast}$-slow solution on $M'$ is uniquely
determined by its restriction to $\chi(M)\sse M'$. Clearly, there is no
hope that this is the case unless $\chi(M)$ contains a
$C^{\prime\oast}$-Cauchy surface in $M'$.
\begin{definition}\label{def:cauchy-surj}
A morphism $\chi$ in $\SpBkgr$ is called \emph{Cauchy surjective} or a
\emph{Cauchy surjection} if for the corresponding morphism in $\SpBndl^n_H$,
\begin{equation}
\vcenter{\xymatrix{
	C^\oast \ar[r] \ar[d] & C^{\prime\oast} \ar[d] \\
	M \ar[r]^\chi & M' ~ ,
}}
\end{equation}
there exists a $C^\oast$-Cauchy surface $\Sigma\sso M$ whose image $\Sigma'
= \chi(\Sigma) \sso M'$ is a $C^{\prime\oast}$-Cauchy surface. (Recall that
all cone bundles in $\SpBndl^n_H$ are globally hyperbolic.)
\end{definition}

Unfortunately, injectivity of $\S(\chi)$ does not guarantee surjectivity
of $\A(\chi)$, which would imply that every smooth function on the image
of $\S(\M')$ extends to a smooth function on $\S(\M)$. However, this
property is known to fail unless $\S(\chi)$ is a closed embedding%
	\footnote{A simple illustration is the inclusion $(0,1)\sso \R$ and
	any function that is unbounded near an end of the open interval.
	}. %
If $\S(\chi)$ is also surjective ($\chi$ is extensible), this is
automatic. However, if $\S(\chi)$ fails to be injective, then the
complement of its image (the solutions that blow up when extended from
$\chi(M)$ to $M'$) is rarely an open set.

Fortunately, a slightly more complicated diagram in $\SpBkgr$ does
guarantee a closed embedding.
\begin{definition}\label{def:cauchy-push}
Consider morphisms $\chi_1$ and $\chi_2$ in $\SpBkgr$ that fit into a
pull\-back-push\-out diagram in $\SpBkgr$,
\begin{equation}\label{eq:cauchy-push}
\vcenter{\xymatrix{
	\M_3 \ar@{}[dr]|(.3){\lrcorner}|(.7){\ulcorner} \ar[d] \ar[r]
		& \M_2 \ar[d]^{\chi_2} \\
	\M_1 \ar[r]_{\chi_1} & \M' ~ .
}}
\end{equation}
That is, the images $\chi_i(M_i)$ form an open cover of $M'$ and $M_3$
is isomorphic to their intersection. In terms of the spacelike cones, we
have $C^{\prime\oast} = C_1^\oast + C_2^\oast$ and $C_3^\oast =
C_1^\oast \cap C_2^\oast$ (cf. Sect.~\ref{sec:geom-cones}), where for
brevity we have omitted the pullbacks or pushforwards with respect to
the appropriate morphisms, while $\B_3 = \B_1 \cup \B_2$ and
$\B'=\B_1\cap \B_2$ on $M_3$, with the same shorthand. We call the
diagram~\eqref{eq:cauchy-push} a \emph{Cauchy pushout} for $\M'$ if all
morphisms are Cauchy surjections.
\end{definition}
Given a Cauchy pushout for $\M'$, the maps $\S(\M')\to \S(\M_i)$ are
injective by Cauchy surjectivity. However, they are not guaranteed to be
closed embeddings. On the other hand, consider the following diagram
of spaces of field configurations in $\Man$, where $V_i = V(M_i)$ and
$V' = V(M')$ are total field bundles naturally associated to their base
manifolds,
\begin{equation}
\vcenter{\xymatrix{
	\Secs(V_3) \ar@{}[dr]|(.77){\ulcorner} \ar@{<-}[d] \ar@{<-}[r]
		& \Secs(V_2) \ar@{<-}[d]_{\Secs(\chi_2)} \\
	\Secs(V_1) \ar@{<-}[r]^{\Secs(\chi_1)} & \Secs(V') \ar@{-->}[dr] \\
	& & \Secs(V_1)\times \Secs(V_2) \ar[uul] \ar[ull] ~ ,
}}
\end{equation}
which is most definitely a pullback diagram, with the canonical dotted
line morphism a closed embedding. The set theoretic pull back property
is assured because any pair of smooth sections on $M_1$ and $M_2$ that
agree on $M_3$ must glue together%
	\footnote{This is sometimes known as the \emph{sheaf descent property}
	of smooth sections.} %
to a smooth section on $M_3$. Now, consider a point $(\psi_1,\psi_2)$ in $\Secs(V_1)\times
\Secs(V_2)$ that is not in the image of $\Secs(V')$, that is, it
determines two sections, $\psi_1$ on $M_1$ and $\psi_2$ on $M_2$, that
do not agree on $M_3$. It is then simple to construct neighborhoods (in
either the compact open or Whitney fine topologies) of
the graphs of these sections in $V_3$ that do not intersect over at
least one point in $M_3$, which define neighborhoods of $\psi_1$ and
$\psi_2$ and hence a neighborhood of $(\psi_1,\psi_2)$ that does not
intersect the image of $\Secs(V')$. Therefore, the complement of the
image of $\Secs(V')$ is open.  In other words, the image of $\Secs(V')$
in the product space $\Secs(V_1)\times \Secs(V_2)$ is closed.

The above discussion will be sufficient to establish a closed embedding
of the solution space $\S(\M')$ in natural ambient space determined by
the Cauchy pushout. From that, we can construct a surjective
homomorphism onto the algebra $\A(\M')$ from a natural algebra also
determined by the Cauchy pushout. Unfortunately, that is not quite
satisfactory, as it is irresistible to try to formulate the generalized
Time Slice property as a stronger result, namely that the Cauchy pushout
(in the spacetime category) is taken to a pullback (in the phase space
category) and again a pushout (in the algebraic category) by the
appropriate successive contravariant functors $\P$ and $\F$ (or $\S$ and
$\A$). However, to achieve that using an appeal to
Hyp.~\ref{hyp:Coo-exact} requires an additional transversality
hypothesis on the interaction of the solution spaces $\S(\M_i)$ and
$\S(\M')$. It is not clear whether this notion has already appeared
in the standard PDE literature.
\begin{definition}\label{def:trans-descent}
A Cauchy pushout as in Eq.~\eqref{eq:cauchy-push} is said to satisfy
\emph{transverse descent} if the smooth maps $\alpha_i\colon
\S(\M_1)\times \S(\M_2) \to \S(\M_i) \to \S(\M_3)$, $i=1,2$, are
transverse to each other.
\end{definition}
\begin{remark}
Note that the images of the maps $\chi^*_i\colon \S(\M_i) \to \S(\M_3)$
intersect precisely on a subspace of $\S(\M_3)$ that can be identified
with $\S(\M')$. This property follows from the uniqueness of the Cauchy
problem and the sheaf descent property inherited from the sheaf of
smooth functions. As in the remark following Hyp.~\ref{hyp:Coo-exact},
we do not precisely specify the desired notion of transversality.
However, we conjecture that such a notion can be formulated so that the
maps $\alpha_i$ agree precisely on the subset of $\S(\M_1)\times
\S(\M_2)$ identified with $\S(\M')$ and, moreover, that a corresponding
implicit function theorem establishes that $\S(\M')$ is in fact a
submanifold. The reason the standard notions of infinite dimensional
transversality~\cite{cbdm,lang,hamilton} may not be
applicable here is that (precisely due to the possible blow up of
solutions) the images of the tangent maps $T\alpha_i$ may not be closed
subspaces of the tangent space $T\S(\M_3)$. This phenomenon appears
already for linear PDE systems, such as, for example, the Cauchy-Riemann
equations. Though, admittedly, that is not an example of a hyperbolic
PDE system.
\end{remark}

\begin{lemma}\label{lem:s-ts}
Consider a Cauchy pushout diagram satisfying transverse descent,
Defs.~\ref{def:cauchy-push} and~\ref{def:trans-descent}. The following
diagram contains a pullback square for $\S(\M')$ and the canonical
dotted line morphism is a closed embedding of manifolds:
\begin{equation}\label{eq:s-ts}
\vcenter{\xymatrix{
	\S(\M_3) \ar@{}[dr]|(.77){\ulcorner} \ar@{<-}[d] \ar@{<-}[r]
		& \S(\M_2) \ar@{<-}[d]_{\S(\chi_2)} \\
	\S(\M_1) \ar@{<-}[r]^{\S(\chi_1)} & \S(\M') \ar@{-->}[dr] \\
	& & \S(\M_1)\times \S(\M_2) \ar[uul] \ar[ull] ~ .
}}
\end{equation}
If none of $\S(\M_i)$ or $\S(M')$ are empty, then the above pullback
square lifts to the following pullback square in $\Symp$:
\begin{equation}\label{eq:p-ts}
\vcenter{\xymatrix{
	\PP(\M_3) \ar@{}[dr]|(.7){\ulcorner} \ar@{<-}[d] \ar@{<-}[r]
		& \PP(\M_2) \ar@{<-}[d]^{\PP(\chi_2)} \\
	\PP(\M_1) \ar@{<-}[r]_{\PP(\chi_1)} & \PP(\M') ~ .
}}
\end{equation}
\end{lemma}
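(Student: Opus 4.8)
The plan is to prove Lemma~\ref{lem:s-ts} in two parts, corresponding to its two assertions: first that diagram~\eqref{eq:s-ts} contains a pullback square with the canonical map a closed embedding, and then that this lifts to the pullback square~\eqref{eq:p-ts} in $\Symp$.

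For the first part, I would run the following argument. By Def.~\ref{def:cauchy-push}, the images $\chi_i(M_i)$ form an open cover of $M'$ and $M_3$ is isomorphic to their intersection, so at the level of field configurations we have the pullback diagram of spaces of sections displayed just before the statement of this Lemma, with $\Secs(V')$ realized as a closed subspace of $\Secs(V_1)\times\Secs(V_2)$ via the sheaf descent property. I would then restrict attention to solution spaces. Each $\S(\M_i)\sso \Secs(V_i)$ is a subspace (a slow patch of the solution space), and the uniqueness of the Cauchy problem (Thm.~\ref{thm:uniq}, via the Cauchy surjectivity of all morphisms in the Cauchy pushout) shows that the images of $\chi_i^*\colon \S(\M_i)\to\S(\M_3)$ intersect precisely in a subset identifiable with $\S(\M')$; this is the content of the remark following the statement. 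Set-theoretically this gives the pullback square. To upgrade to a \emph{closed embedding} of manifolds, I would combine two ingredients: (i) the closedness argument at the level of $\Secs(V_i)$ already carried out in the text (a point $(\psi_1,\psi_2)$ not in the image of $\Secs(V')$ has a neighborhood missing that image, because the graphs can be separated over a point of $M_3$), which restricts to show $\S(\M')\sso\S(\M_1)\times\S(\M_2)$ is closed; and (ii) the transverse descent hypothesis, Def.~\ref{def:trans-descent}, which (modulo the conjectural infinite-dimensional implicit function theorem flagged in the remark) guarantees that the common preimage $\S(\M')$ is in fact a submanifold and that the square~\eqref{eq:s-ts} is a genuine pullback in $\Man$.

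For the second part, I would lift to $\Symp$. Assuming none of the $\S(\M_i)$ or $\S(\M')$ is empty, the morphisms $\chi_i$ are morphisms in $\SpBkgr_c$, so by Thm.~\ref{thm:gen-isot} (Generalized Isotony) the maps $\S(\chi_i)$ lift to symplectic morphisms $\PP(\chi_i)$, each of which preserves the Peierls/covariant symplectic structure by the covariance Lemma~\ref{lem:pois-covar}. What must be checked is that the \emph{pullback} property survives. The underlying manifold pullback is already established in part one; it remains to see that the symplectic structure on $\PP(\M')$ — equivalently its Poisson bivector $\Pi'$ given by the Peierls formula — is the one inherited from the pullback of $\PP(\M_1)$ and $\PP(\M_2)$ over $\PP(\M_3)$. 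This follows because, as shown in Sect.~\ref{sec:glob-phsp}, the presymplectic form $\bar\Omega_{C'}$ on $\bar\S(F,C')$ depends only on the $C'$-spacelike homology class of the integration surface, and a $C'$-Cauchy surface $\Sigma'$ can be taken to be the common image of $C_i$-Cauchy surfaces $\Sigma_i$ under $\chi_i$ (since $C'^\oast = C_1^\oast + C_2^\oast$ and the pushout is a Cauchy pushout); hence $\bar\Omega_{C'}$ restricted along either leg agrees with $\bar\Omega_{C_i}$, which is precisely the compatibility condition needed for~\eqref{eq:p-ts} to be a pullback in $\Symp$. I would phrase this as: the forgetful functor $\base\colon\Symp\to\Man$ reflects the pullback once the symplectic forms are shown to agree on overlaps, exactly as in the gluing argument of Sect.~\ref{sec:glob-phsp}.

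The main obstacle I anticipate is entirely in the first part: establishing that $\S(\M')$ is a submanifold and that~\eqref{eq:s-ts} is a true (not merely set-theoretic) pullback in $\Man$. As the paper itself flags, the standard notions of transversality for infinite-dimensional manifolds~\cite{cbdm,lang,hamilton} may be inadequate because the tangent maps $T\alpha_i$ need not have closed image — precisely because of the possible blow-up of solutions when extended across $\chi_i(M_i)$. My proof would therefore invoke the transverse descent hypothesis as a black box, with the understanding (made explicit in the remark) that it packages exactly the implicit-function-theorem input needed; the rest of the argument is then formal bookkeeping with sheaf descent, Cauchy uniqueness, and the covariance of the Peierls bracket. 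I would not attempt to resolve the transversality subtlety here, since the Lemma is stated conditionally on transverse descent.
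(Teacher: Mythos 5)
Your proposal is correct and follows essentially the same route as the paper's proof: the set-theoretic pullback from the gluing (sheaf descent) property of solution sections, closedness of $\S(\M')$ inherited from the closedness of $\Secs(V')$ in $\Secs(V_1)\times\Secs(V_2)$, transverse descent invoked as a black box to upgrade the closed topological embedding to a manifold embedding, and the covariance Lemma~\ref{lem:pois-covar} for the lift to $\Symp$. Your extra detail on why the symplectic structures agree (the homology-class-of-Cauchy-surfaces argument from Sect.~\ref{sec:glob-phsp}) is a reasonable fleshing-out of what the paper compresses into the single citation of the covariance lemma, but it is not a different argument.
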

\begin{proof}
The set theoretic pullback property holds again from the gluing property
of solution sections, with which the background fields do not
interfere. Now, recall that all section spaces are topologized as
subspaces of $\Secs(-)$ with appropriate vector bundle argument. In
particular this means that the subset $\S(\M_1)\times\S(\M_2)\cap
\Secs(V') \sse \Secs(V_1)\times \Secs(V_2)$ is closed in
$\S(\M_1)\times\S(\M_2)$. On the other hand, the gluing property ensures
that $\S(M')\cong \S(\M_1)\times \S(\M_2) \cap \Secs(V')$. (Again, the
obvious pushforwards and pullbacks have been omitted for brevity.) This
shows that the canonical dotted line morphism is a closed embedding of
topological spaces. This much we can establish without appeal to the
transverse descent property, on the other hand we must appeal to
it to establish that it is a manifold embedding, and hence a closed
embedding. The covariance lemma, Lem.~\ref{lem:pois-covar}, guarantees
that the morphisms lift to $\Symp$ whenever the corresponding solution
spaces are not empty. \qed
\end{proof}

With the above discussion in mind, we can formulate the following
generalization of the Time Slice property for the functor $\F$.
\begin{theorem}[Generalized Time Slice]\label{thm:gen-ts}
Consider a Cauchy pushout diagram that satisfies transverse descent,
Defs.~\ref{def:cauchy-push} and~\ref{def:trans-descent}. Then the
following is a pushout diagram in $\CAlg$:
\begin{equation}\label{eq:a-ts}
\vcenter{\xymatrix{
	\A(\M_3) \ar@{}[dr]|(.7){\ulcorner} \ar[d] \ar[r]
		& \A(\M_2) \ar[d]^{\A(\chi_2)} \\
	\A(\M_1) \ar[r]_{\A(\chi_1)} & \A(\M') ~ .
}}
\end{equation}
Moreover, if none of the $\S(\M_i)$ or $\S(\M')$ are empty, the above
diagram also lifts to a pushout diagram in $\Poiss$:
\begin{equation}\label{eq:f-ts}
\vcenter{\xymatrix{
	\F(\M_3) \ar@{}[dr]|(.7){\ulcorner} \ar[d] \ar[r]
		& \F(\M_2) \ar[d]^{\F(\chi_2)} \\
	\F(\M_1) \ar[r]_{\F(\chi_1)} & \F(\M') ~ .
}}
\end{equation}
\end{theorem}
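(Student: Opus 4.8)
The plan is to transport the pullback square established on the geometric side, Lemma~\ref{lem:s-ts}, across the functors $C_{cst}^\oo$ and $\POISS$, using the category-theoretic hypotheses collected in Hyp.~\ref{hyp:Coo-exact}. The key structural input is that a Cauchy pushout in $\SpBkgr$ becomes, via $\S$ (resp.\ $\PP$), a pullback square~\eqref{eq:s-ts} (resp.\ \eqref{eq:p-ts}) in which the canonical map $\S(\M') \to \S(\M_1)\times \S(\M_2)$ is a closed embedding of manifolds, and the images of the two projections $\S(\M_i) \to \S(\M_3)$ are transverse. By Hyp.~\ref{hyp:Coo-exact}, $C_{cst}^\oo$ sends transverse pullbacks of manifolds to pushouts of algebras, so applying $C_{cst}^\oo = \A$ to~\eqref{eq:s-ts} yields precisely the pushout square~\eqref{eq:a-ts} in $\CAlg$. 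The transverse descent hypothesis is exactly what licenses this application, since without it the intersection $\S(\M') \cong \S(\M_1)\times\S(\M_2)\cap \Secs(V')$ need not be a submanifold, only a closed subset, and the hypothesis on $C_{cst}^\oo$ is stated for transverse pullbacks only.

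First I would spell out that the square~\eqref{eq:s-ts} is a genuine transverse pullback in $\Man$ (not merely set-theoretic): the set-level pullback property is the sheaf descent property for smooth sections, closedness of the embedding was shown in Lemma~\ref{lem:s-ts}, and manifoldhood of $\S(\M')$ together with transversality of the $\alpha_i$ is precisely the content of transverse descent. Then, applying the contravariant functor $\A = C_{cst}^\oo\circ \S$, the pullback square in $\Man$ is carried to a pushout square in $\CAlg$ by Hyp.~\ref{hyp:Coo-exact}; the direction reversal (pullback $\mapsto$ pushout) is automatic for a contravariant functor that preserves the relevant (co)limit. One should also note that the morphisms $\A(\chi_i)$ are the ones induced by the Cauchy surjections $\chi_i$, and that surjectivity of $\S(\chi_i)$ onto its (closed) image plus the smooth Urysohn part of Hyp.~\ref{hyp:Coo-exact} already guarantees these are surjective homomorphisms onto the appropriate subalgebras, consistent with the Time Slice intuition that observables localized in $\M'$ are determined by those localized in either $\M_i$.

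For the Poisson refinement~\eqref{eq:f-ts}, I would invoke the covariance lemma for the Poisson structure, Lem.~\ref{lem:pois-covar}: whenever none of the solution spaces is empty, each inclusion-induced map $\S(\chi_i)$ lifts to a morphism of (generalized) symplectic manifolds $\PP(\chi_i)$, and correspondingly each $\A(\chi_i)$ lifts to a Poisson homomorphism $\F(\chi_i) = \POISS\circ\PP(\chi_i)$. Since $\PP$ augments $\S$ with the covariant symplectic structure defined by the common Lagrangian $\L$ (which restricts compatibly to all the $M_i$ and $M'$), and since Lem.~\ref{lem:s-ts} already upgrades~\eqref{eq:s-ts} to the pullback square~\eqref{eq:p-ts} in $\Symp$, applying $\POISS$ together with the compatibility of $\POISS$ with $C_{cst}^\oo$ in the commuting square of item~\ref{itm:Coo} turns~\eqref{eq:p-ts} into the pushout~\eqref{eq:f-ts} in $\Poiss$. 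One has to check that the pushout structure on $\F(\M')$ built from the $\CAlg$-pushout on $\A(\M')$ is compatible with the brackets, but this is immediate from Lem.~\ref{lem:pois-covar}, which says the induced algebra homomorphisms are automatically Poisson with respect to the Peierls brackets $\{-,-\}_{C_i}$ and $\{-,-\}_{C'}$, and from the agreement of these brackets on overlapping slow patches (Sect.~\ref{sec:glob-phsp}).

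\textbf{Main obstacle.} The crux is entirely the transversality/submanifold issue flagged in the remark after Def.~\ref{def:trans-descent}: the tangent maps $T\alpha_i$ may fail to have closed image in $T\S(\M_3)$ precisely because solutions can blow up when extended past $\chi_i(M_i)$, so the standard infinite-dimensional transversality theory~\cite{cbdm,lang,hamilton} does not obviously apply, and consequently it is not yet clear that a suitable implicit function theorem guarantees $\S(\M')$ is an embedded submanifold. I would therefore carry out the argument \emph{conditionally} on Def.~\ref{def:trans-descent} (transverse descent) holding, exactly as the statement of Thm.~\ref{thm:gen-ts} demands, rather than attempting to establish transverse descent in general — that would require genuinely new PDE-theoretic input about the structure of the solution map near the boundary of a slow patch, which is outside the scope here. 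A secondary, milder point is bookkeeping the omitted pullbacks/pushforwards of the natural bundles $V_i, V'$ and the cone-bundle identities $C'^\oast = C_1^\oast + C_2^\oast$, $C_3^\oast = C_1^\oast\cap C_2^\oast$; these are routine given Sect.~\ref{sec:geom-cones} and the definition of $\SpBkgr$, but should be stated so the reader sees the Cauchy pushout hypothesis is used in full.
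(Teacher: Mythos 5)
Your proposal is correct and follows essentially the same route as the paper's own proof: Lemma~\ref{lem:s-ts} supplies the transverse pullback square realizing $\S(\M')$ as a closed submanifold of $\S(\M_1)\times\S(\M_2)$, Hyp.~\ref{hyp:Coo-exact} converts this to the algebraic pushout~\eqref{eq:a-ts}, and the lift of~\eqref{eq:s-ts} to~\eqref{eq:p-ts} in $\Symp$ (together with covariance of the Poisson structure) yields~\eqref{eq:f-ts}. Your added discussion of why transverse descent must be assumed rather than proved matches the paper's own caveats in the remark following Def.~\ref{def:trans-descent}.
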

\begin{proof}
According to Lem.~\ref{lem:s-ts}, we can realize $\S(\M')$ as a closed
submanifold of $\S(\M_1)\times \S(\M_2)$, with the aid of the transverse
pullback diagram~\eqref{eq:s-ts}. On the other hand,
Hyp.~\ref{hyp:Coo-exact} immediately implies that this transverse manifold
pullback is taken by $C_{cst}^\oo$ to the algebraic pushout
diagram~\eqref{eq:a-ts}. Finally, whenever the diagram~\eqref{eq:s-ts}
in $\Man$ lifts to diagram~\eqref{eq:p-ts} in $\Symp$, then the
diagram~\eqref{eq:a-ts} in $\CAlg$ automatically lifts to
diagram~\eqref{eq:f-ts} in $\Poiss$. \qed
\end{proof}

\subsection{Causality}\label{sec:gen-caus}
Given two morphisms $\chi_i\colon \M_1 \to \M$, $i=1,2$, in $\SpBkgr$,
do the corresponding morphisms \emph{surjectively} factor through the
product $\S(\M_1)\times \S(\M_2)$? Do the corresponding
morphisms factor \emph{injectively} through the tensor product
$\A(\M_1)\otimes \A(\M_2)$? Finally, when do the corresponding morphisms
factor through the \emph{independent subsystems} tensor product
$\F(\M_1)\otimes \F(\M_2)$.

There are several obstacles to positive answers to the above questions.
First, let us consider surjectivity for the spaces of solutions and,
correspondingly, injectivity for the algebras.
If we allow disconnected manifolds as base manifolds in $\Bkgr$ and
$\SpBkgr$, while making sure that morphisms in $\SpBkgr$ treat
disconnected components as spacelike separated, we can introduce a
tensor product, referred to as the \emph{disjoint union}:
\begin{equation}
	\M_1 \sqcup \M_2
		= (M_1\sqcup M_2, C_1^\oast \sqcup C_2^\oast, \B_1\times \B_2).
\end{equation}
Note that this is not a coproduct in the categorical sense, since dotted
line morphisms in the diagram
\begin{equation}
\xymatrix{
	\M_1 \ar[rd] \ar[r]^{\chi_1} & \M & \ar[l]_{\chi_2} \ar[ld] \M_2 \\
	& \M_1 \sqcup \M_2 \ar@{-->}[u]
}
\end{equation}
exists only if the images $\chi_1(M_1)$ and $\chi_2(M_2)$ are
$C^*$-spacelike separated in $M$. However, when this morphism exists, it
is canonical and is denoted by $\chi_1\sqcup \chi_2$. On the other hand,
$\Man$ does have a categorical product, which means that the following
diagram always exists:
\begin{equation}
\xymatrix{
	\M_1 \ar@{<-}[rd] \ar@{<-}[r]^{\chi_1} & \M_1\sqcup \M_2
		& \ar@{<-}[l]_{\chi_2} \ar@{<-}[ld] \M_2 \\
	& \S(\M_1) \times \S(\M_2) \ar@{<--}[u] ~ .
}
\end{equation}
Since there is no obstacle to specifying a solution independently in
each connected component, the canonical dotted line morphism is in fact
an isomorphism, $\S(\M_1\sqcup \M_2) \cong \S(\M_1) \times \S(\M_2)$. In
the $\Symp$ category, we can also define the \emph{independent subsystems
product} given by
\begin{equation}
	(N_1,\Omega_1) \times (N_2,\Omega_2) = (N_1\times N_2, \Omega_1 \oplus
	\Omega_2) .
\end{equation}
When $N_i = \S(\M_i)$, the corresponding Poisson bivector $\Pi$ on the
product is characterized by $\Pi(x,y) = \Pi_i(x,y)$ when $x,y\in M_i$
and $\Pi(x,y) = 0$ when $x$ and $y$ belong to different connected
components. Note that this is not a categorical product in $\Symp$, for
reasons very similar to why $\sqcup$ is not a coproduct in $\SpBkgr$.
However, given the above information, we do have the identity
$\PP(\M_1\sqcup \M_2) = \PP(\M_1) \times \PP(\M_2)$.

Now, given the $\SpBkgr$ diagram
\begin{equation}
\xymatrix{
	\M_1 \ar[rd] \ar[r]^{\chi_1} & \M & \ar[l]_{\chi_2} \ar[ld] \M_2 \\
	& \M_1 \sqcup \M_2 \ar@{-->}[u]
} ~ ,
\end{equation}
we always obtain the diagram in $\Man$ below
\begin{equation}
\xymatrix{
	\S(\M_1) \ar@{<-}[rd] \ar@{<-}[rd] \ar@{<-}[r]^{\S(\chi_1)} & \S(\M)
		& \ar@{<-}[l]_{\S(\chi_2)} \ar@{<-}[ld] \ar@{<-}[ld] \S(\M_2)\\
	& \S(\M_1) \times \S(\M_2) \ar@{<--}[u]
} ~ ,
\end{equation}
where dotted lines denote the canonical morphisms. Note that the
canonical morphism in $\Man$ always exists, even if $\chi_1 \sqcup
\chi_2$ does not exist in $\SpBkgr$. The surjectivity of
the canonical dotted line morphism in the last diagram is equivalent the
following: given a pair of solution sections, $\psi_1$ on $\M_1$ and
$\psi_2$ on $\M_2$, there exists a solution section $\psi$ on $\M$ that
restricts to $\psi_1$ on $\M_1$ and $\psi_2$ on $\M_2$. Already in the
earlier discussion of the Isotony property, we have noticed this
surjectivity property fails when at least one of the $\chi_i$ morphisms
fails to be extensible. However, canonical dotted line morphism may fail
to be surjective even if both $\chi_1$ and $\chi_2$ are extensible.
Namely, there may exist a pair $(\psi_1,\psi_2)\in \S(\M_1)\times
\S(M_2)$ such that both $\psi_1$ and $\psi_2$ may be extended to $\M$
individually, but not jointly.

The most common reason for that to happen is that the images
$\chi_1(M_1)$ and $\chi_2(M_2)$ are $C^\oast$-causally related in $M$,
so that, for example, the value of $\psi_1$ in $M_1$ is not consistent
with the causal influence of $\psi_2$ on $M_2$. Thus,
to even have a hope of canonical factorization, we should require the
images $\chi_1\sqcup \chi_2$ does exist in $\SpBkgr$, that is the images
$\chi_1(M_1)$ and $\chi_2(M_2)$ are $C^*$-spacelike separated in $M$.
However, even with spacelike separation, surjectivity can still fail.
Intuitively, this happens when the solution data specified by $\psi_1$
and $\psi_2$ always produces a singularity while scattering in a
joint extension to $\M$, so that the canonical morphism $\chi_1\sqcup
\chi_2$ is itself not extensible.
\begin{definition}\label{def:reg-scat}
We say that two extensible morphisms $\chi_1$ and $\chi_2$ in $\SpBkgr$
have \emph{regular scattering} if they fit in the commutative diagram
\begin{equation}
\xymatrix{
	\M_1 \ar[rd] \ar[r]^{\chi_1} & \M & \ar[l]_{\chi_2} \ar[ld] \M_2 \\
	& \M_1 \sqcup \M_2 \ar@{-->}[u]
} ~ ,
\end{equation}
such that the canonical dotted line morphism $\chi_1\sqcup \chi_2$
exists and is extensible.
\end{definition}

Once the surjectivity in $\Man$ and $\Symp$ is taken care of, the
injectivity in the algebraic categories follow in the manner indicated
in the earlier discussion of the Isotony property. With the above
discussion in mind we can formulate the following generalization of the
Causality property for the functor $\F$.
\begin{theorem}[Generalized Causality]\label{thm:gen-caus}
Consider two morphisms $\chi_i\colon\M_i \to \M$, $i=1,2$, in $\SpBkgr$
such that $\chi_1\sqcup \chi_2$ exists with regular scattering. Then
there exists an object $\M' = (M,C^{\prime\oast},\B)$ in $\SpBkgr$ such that
$\chi_i$ and $\chi_1\sqcup \chi_2$ factor according to the diagram
\begin{equation}\label{eq:m-caus}
\vcenter{\xymatrix{
	& \ar@{<-}[dl]_{\chi_1} \M \ar@{<-}[dr]^{\chi_2} \ar@{<-}[d]^{\usse} \\
	\M_1 \ar[rd] \ar[r]^{\chi'_1} & \M' & \ar[l]_{\chi'_2} \ar[ld] \M_2 ~ ,\\
	& \M_1 \sqcup \M_2 \ar@{-->}[u]
}}
\end{equation}
the canonical dotted line morphism in the corresponding diagram in
$\Man$ is surjective,
\begin{equation}\label{eq:s-caus}
\vcenter{\xymatrix{
	& \ar[dl]_{\S(\chi_1)} \S(\M) \ar[dr]^{\S(\chi_2)} \ar[d]^{\usse} \\
	\S(\M_1) \ar@{<-}[rd] \ar@{<-}[r]^{\S(\chi'_1)} & \S(\M')
		& \ar@{<-}[l]_{\S(\chi'_2)} \ar@{<-}[ld] \S(\M_2) ~ ,\\
	& \S(\M_1) \times \S(\M_2) \ar@{<--}[u]
}}
\end{equation}
and the canonical dotted line morphism in the corresponding diagram in
$\CAlg$ is injective,
\begin{equation}\label{eq:a-caus}
\vcenter{\xymatrix{
	& \ar@{<-}[dl]_{\A(\chi_1)} \A(\M) \ar@{<-}[dr]^{\A(\chi_2)} \ar@{<-}[d]^{\usse} \\
	\A(\M_1) \ar[rd] \ar[r]^{\A(\chi'_1)} & \A(\M')
		& \ar[l]_{\A(\chi'_2)} \ar[ld] \A(\M_2) ~ .\\
	& \A(\M_1) \otimes \A(\M_2) \ar@{-->}[u]
}}
\end{equation}
Moreover, if none of the objects in diagram~\eqref{eq:s-caus} are empty,
the diagram~\eqref{eq:m-caus} lifts to $\SpBkgr_c$, the
diagram~\eqref{eq:s-caus} lifts to $\Symp$ via $\PP$, and the
diagram~\eqref{eq:a-caus} lifts to $\Poiss$ via $\F$.
\end{theorem}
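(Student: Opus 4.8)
The plan is to reduce the Generalized Causality theorem to the Generalized Isotony theorem (Thm.~\ref{thm:gen-isot}), the classical microcausality corollary (Cor.~\ref{cor:C-microcaus}), and the hypotheses on $C_{cst}^\oo$ collected in Hyp.~\ref{hyp:Coo-exact}, exactly in parallel with how the standard Causality axiom of Def.~\ref{def:axlcft} is seen to follow from Isotony plus microcausality. First I would construct the object $\M'$. Since $\chi_1 \sqcup \chi_2$ exists, the images $\chi_1(M_1)$ and $\chi_2(M_2)$ are $C^\oast$-spacelike separated in $M$; using the extensibility hypotheses on $\chi_1$, $\chi_2$ and $\chi_1\sqcup\chi_2$ (built into the definition of regular scattering, Def.~\ref{def:reg-scat}), I would take $\M' = (M, C^{\prime\oast}, \B)$ where $C^{\prime\oast}$ is a spacelike cone bundle slower than $C^\oast$ (equivalently $C' = $ the chronal cone bundle larger than $C$) chosen fast enough so that the three morphisms $\chi_i$ and $\chi_1 \sqcup \chi_2$ all factor through it while remaining surjective on solution spaces; this is precisely the room left by Def.~\ref{def:ext}. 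One must check that $\chi_1(M_1)$ and $\chi_2(M_2)$ remain $C^{\prime\oast}$-spacelike separated after slowing the cone bundle — this holds because making the spacelike cone bundle smaller can only shrink the causal relation — and that $C^{\prime\oast}$ can be taken admissible for $\B$. This yields diagram~\eqref{eq:m-caus} in $\SpBkgr$, and when the solution spaces are nonempty it lifts to $\SpBkgr_c$ by the same nonemptiness/chronal-compatibility bookkeeping as in Thm.~\ref{thm:gen-isot}.

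Next I would establish surjectivity of the canonical dotted line morphism in~\eqref{eq:s-caus}. Because $\Man$ has categorical products, the morphism $\S(\M') \to \S(\M_1)\times \S(\M_2)$ exists canonically; I need it to be onto. Given a pair $(\psi_1,\psi_2) \in \S(\M_1)\times \S(\M_2)$, use extensibility to find extensions to $\M$, but more to the point, use that $\chi_1\sqcup\chi_2$ is itself extensible (the defining property of regular scattering): the disjoint-union section $\psi_1 \sqcup \psi_2$ on $M_1 \sqcup M_2$ extends to a $C^{\prime\oast}$-slow solution on $M'$, whose restrictions to the images of $\chi_1$ and $\chi_2$ are, by the uniqueness part of the domain-of-dependence results (Cor.~\ref{cor:dod}, Thm.~\ref{thm:uniq}) applied on the spacelike-separated pieces, exactly $\chi_1^*{}^{-1}\psi_1$ and $\chi_2^*{}^{-1}\psi_2$. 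Here the fact that $\chi_1(M_1)$ and $\chi_2(M_2)$ are $C^{\prime\oast}$-spacelike separated is what guarantees that the extension of the disjoint data does not force the two pieces to interfere — this is the "no scattering singularity" content, now supplied as a hypothesis. Together with $\S(\M_1\sqcup\M_2)\cong \S(\M_1)\times\S(\M_2)$, this gives surjectivity. Dualizing via Hyp.~\ref{hyp:Coo-exact} (surjections to injective homomorphisms), the canonical morphism $\A(\M_1)\otimes\A(\M_2) \to \A(\M')$ in~\eqref{eq:a-caus} is injective, using also that $C_{cst}^\oo$ sends products of manifolds to tensor products of algebras, so $\A(\M_1\sqcup\M_2)\cong\A(\M_1)\otimes\A(\M_2)$.

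The third and most delicate step is the lift to $\Symp$ and $\Poiss$, i.e.\ showing that the injection $\A(\M_1)\otimes\A(\M_2)\to\A(\M')$ is a Poisson homomorphism into $\F(\M')$ with the \emph{independent subsystems} bracket on the source. By the covariance lemma for the Poisson bivector (Lem.~\ref{lem:pois-covar}), the induced maps $\F(\M')\to\F(\M_i)$ are Poisson homomorphisms, so the images of $\A(\M_1)$ and $\A(\M_2)$ inside $\A(\M')$ carry the correct individual brackets. What remains is the cross-term: for observables $A$ pulled back from $\M_1$ and $B$ pulled back from $\M_2$, one must show $\{A,B\}_{\F(\M')} = 0$. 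This is exactly classical microcausality, Cor.~\ref{cor:C-microcaus}: the global spacetime supports of $\chi_1'{}^*(A)$ and $\chi_2'{}^*(B)$ lie in $\chi_1(M_1)$ and $\chi_2(M_2)$ respectively, which are $C^{\prime\oast}$-spacelike separated in $M'$, hence the Poisson bracket vanishes identically. Combining the vanishing cross-bracket with the correct individual brackets shows precisely that $\F(\M_1)\otimes\F(\M_2)$ (with the independent-subsystems bracket of Def.~\ref{def:indep-subsys}) maps into $\F(\M')$ as a Poisson homomorphism, and that $\chi_i'$ induce the canonical inclusion morphisms, so diagram~\eqref{eq:a-caus} lifts to~$\Poiss$; correspondingly $\PP(\M')\to\PP(\M_1)\times\PP(\M_2)$ is a morphism in $\Symp$ for the independent subsystems product, by the characterization $\Pi(x,y)=0$ for $x,y$ in different components.

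The hard part will be the first step — pinning down $\M'$ and verifying that the regular-scattering hypothesis genuinely delivers a single slower cone bundle $C^{\prime\oast}$ through which \emph{all three} of $\chi_1$, $\chi_2$, $\chi_1\sqcup\chi_2$ factor extensibly, while simultaneously keeping $\chi_1(M_1)$, $\chi_2(M_2)$ spacelike separated with respect to that \emph{same} $C^{\prime\oast}$ and keeping admissibility of background fields. All three constraints pull against each other: making the cone bundle slower helps extensibility and preserves (indeed enlarges) spacelike separation, but one must check nothing else breaks. Once $\M'$ is in hand, the surjectivity argument is a routine application of uniqueness on spacelike-separated domains, and the algebraic/Poisson lift is essentially a repackaging of Lem.~\ref{lem:pois-covar} and Cor.~\ref{cor:C-microcaus} together with Hyp.~\ref{hyp:Coo-exact}, mirroring the well-known derivation of the Causality axiom in the semilinear case to be recalled in Sect.~\ref{sec:semilin-lcft}.
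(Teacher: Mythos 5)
Your proposal is correct in substance and follows essentially the same route as the paper's proof: the object $\M'$ and the surjectivity in $\Man$ come from the regular scattering hypothesis, injectivity in $\CAlg$ comes from Hyp.~\ref{hyp:Coo-exact}, and the lift to $\Poiss$ comes from covariance of the Peierls bracket. Two remarks. First, the ``hard part'' you isolate at the end is a non-issue as the definitions are set up: Def.~\ref{def:reg-scat} postulates that $\chi_1\sqcup\chi_2$ is extensible, and Def.~\ref{def:ext} then hands you the factoring object $\M'$ with all three morphisms and the $C^{\prime\oast}$-spacelike separation of the images already built in, so there is nothing to construct; moreover your analysis of the tension is partly backwards --- enlarging the chronal cone bundle (shrinking the spacelike one) is what helps extensibility, but it \emph{enlarges} the causal relation and so makes spacelike separation harder, not easier, which is precisely why the paper assumes rather than derives the compatibility. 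Relatedly, the domain-of-dependence argument in your surjectivity step is superfluous: surjectivity of $\S$ applied to the factoring morphism already means the extension restricts to $\psi_1\sqcup\psi_2$ by definition. Second, your explicit invocation of Cor.~\ref{cor:C-microcaus} to kill the cross-brackets, so that $\F(\M_1)\otimes\F(\M_2)$ with the independent-subsystems bracket maps into $\F(\M')$ as a Poisson homomorphism, is a genuine improvement in explicitness over the paper's ``follows from definitions and Lem.~\ref{lem:pois-covar}''; just note that since $\F$ is covariant the induced Poisson homomorphisms run $\F(\M_i)\to\F(\M')$, not the other way around as you wrote.
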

\begin{proof}
Regular scattering hypothesis implies that both $\chi_i$, $i=1,2$,
morphisms are extensible, using the same auxiliary object $\M'$, where
the images of $\chi'_i$ are also spacelike separated. The existence of
the canonical dotted line morphism then follows from the definition of
$\sqcup$. Applying the $\S$ functor to diagram~\eqref{eq:m-caus}
produces diagram~\eqref{eq:s-caus}, where the surjectivity of the
dotted line morphism, again, follows from the regular scattering
hypothesis. Finally, applying the $C_{cst}^\oo$ functor to
diagram~\eqref{eq:s-caus} gives diagram~\eqref{eq:a-caus}, where the
canonical dotted line morphism is injective because, by
Hyp.~\ref{hyp:Coo-exact}, $C_{cst}^\oo$ takes surjections to injective
homomorphisms.

If the diagram~\eqref{eq:m-caus} can be lifted to $\SpBkgr_c$, then it
follows from definitions and the covariance of the Poisson structures,
Lem.~\ref{lem:pois-covar}, that the diagrams~\eqref{eq:s-caus}
and~\eqref{eq:a-caus} lift to $\Symp$ and $\Poiss$, respectively. \qed
\end{proof}

\subsection{Connection with standard LCFT}\label{sec:semilin-lcft}
The above results can be applied to a semilinear, well-posed PDE system
to construct a LCFT functor satisfying the standard axioms given in
Def.~\ref{def:axlcft}.  Consider a natural, variational, semilinear PDE
system, whose background fields include a globally hyperbolic, time
oriented Lorentzian metric $g$ and such that its cone bundle of future
directed timelike vectors $\Gamma_M\to V(M)$ coincides with the bundle
of future directed timelike cones of $g$. Let $\S\colon \SpBkgr\to \Man$
and $\PP\colon\SpBkgr_c\to \Symp$ be its solutions and phase space
functors. We will allow other background fields but only such that they
influence neither the solution space nor the symplectic structure on it.
The reason is that to achieve a natural symmetric hyperbolic form extra
background fields may need to be introduced, as has already been
discussed and can be seen explicitly from the examples in~\cite{geroch-pde}.
However, since that is their only purpose, they do not influence the
dynamics. We call such background fields \emph{auxiliary}. For the
purposes of comparing with a standard LCFT, we can safely ignore them.
Recall that we introduced the category $\SpBkgr^*$, with objects that
are endowed with background field collections consisting of a single
section. Using this category to construct a standard LCFT is slightly
problematic, because of the presence of auxiliary background fields. An
easy solution is to introduce the forgetful functors $\SpBkgr^*\to
\SpBkgr'$ and $\Bkgr^*\to \Bkgr'$ that throw away the auxiliary
background fields, where $\SpBkgr'$ and $\Bkgr'$ now has only the metric
as background field.  The subscript ${}_c$ retains the same meaning as
before. By stipulation, the functors $\S$ and $\PP$ project to functors
$\S\colon\SpBkgr'\to \Man$ and $\PP\colon\SpBkgr'_c\to \Symp$, and
similarly with $\S_H$ and $\PP_H$, up to isomorphisms in the target
categories. Let a natural PDE system that satisfies all the conditions
in the preceding discussion be called a \emph{semilinear geometric wave
equation}.

We conclude this section by showing that the tools assembled so far in
this paper allow us to construct a standard classical LCFT from a
well-posed semilinear geometric wave equation. Unfortunate, a
sufficiently detailed notion of well-posedness that would allow us to
directly construct a standard LCFT functor based on the results of the
preceding section is somewhat difficult to formulate precisely. One
obstacle is the fact the properties of the $C_{cst}^\oo$ functor that
constructs the algebra of observables as the algebra of smooth functions
on an infinite dimensional phase space have so far been mostly
hypothesized (cf.~Hyp.~\ref{hyp:Coo} and Hyp.~\ref{hyp:Coo-exact})
rather than constructively proven. Another is related to the boundary
conditions that one is expected to impose on solutions in spacetimes
with non-compact Cauchy surfaces (cf.~Sect.~\ref{sec:top-choice}) and
their interplay with the standard notions of well-posedness in the PDE
literature. As such, we make no special attempt to be precise and merely
formulate the following hypothesis, whose application is made clear in
the proof of the theorem below.
\begin{hypothesis}[Well-posedness]\label{hyp:glob-wellpos}
We consider our classical field theory defined by a natural,
variational, semilinear PDE system to be \emph{globally well-posed} in
the following sense. All regular initial Cauchy data (where regularity
includes any necessary boundary conditions) extend to the entire
spacetime manifold (no finite time blow up). Moreover, in the presence
of constraints and gauge symmetry, the PDE system can be augmented with
purely hyperbolic gauge fixing (Sects.~\ref{sec:constr-gf}
and~\ref{sec:Jgreen}) such that global parametrizability
(Sect.~\ref{sec:constr}) and global recognizability
(Sect.~\ref{sec:gauge}) are satisfied. Finally, the $C_{cst}^\oo$ functor
constructs the algebras of observables in a way that is not sensitive to
the behavior of solutions near the open ends of the spacetime manifold.
\end{hypothesis}

To substantiate the above hypothesis, let us note that there exist
global well-posedness results for what we consider fundamental bosonic
fields (namely Yang-Mills~\cite{em-ym} and non-linear scalar~\cite{bfr}
fields) other than gravity (GR definitely exhibits finite time blow up,
as exemplified by the singularities of black hole and cosmological
solutions). On the other hand, the algebras of observables consisting of
microlocal functionals of compact spacetime
support~\cite{bg-lcqft,hollands-ym,fr-bv,rejzner-thesis} provide a
candidate for the functor $C_{cst}^\oo$, which seems to have the desired
properties, at least when applied to the solution spaces of linear
theories. Also, the sufficient conditions imposed on the constraints and
gauge transformations are actually very similar to those considered in
previous treatments, as exemplified in~\cite{geroch-pde}
and~\cite{hs-gauge}.

\begin{theorem}\label{thm:semilin-lcft}
Consider a semilinear geometric wave equation that is well-posed
(Hyp.~\ref{hyp:glob-wellpos}). Then we have an equivalence of
categories $\Bkgr'_c\cong \GlobHyp_c$ and the corresponding algebra of
observables functor $\F_H\colon \Bkgr'_c \to \Poiss$ satisfies the
standard axioms of classical LCFT given in Def.~\ref{def:axlcft}.
\end{theorem}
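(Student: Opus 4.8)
The plan is to verify the three standard LCFT axioms of Def.~\ref{def:axlcft} for the functor $\F_H\colon \Bkgr'_c \cong \GlobHyp_c \to \Poiss$, by specializing the generalized axioms already established as Thms.~\ref{thm:gen-isot}, \ref{thm:gen-ts} and~\ref{thm:gen-caus}. First I would establish the equivalence of categories $\Bkgr'_c \cong \GlobHyp_c$. This is essentially the content of the discussion at the end of Sect.~\ref{sec:funct}: objects of $\Bkgr'_c$ are $n$-manifolds equipped with a single admissible background metric, which by well-posedness (Hyp.~\ref{hyp:glob-wellpos}) means precisely a globally hyperbolic, oriented, time-oriented Lorentzian metric (admissibility (a)--(b) from Sect.~\ref{sec:natural} being automatic in the well-posed semilinear case); morphisms in $\Bkgr'_c$ are open embeddings compatible with the metric that are chronally compatible, and by Lem.~\ref{lem:chr-compat-convex} these coincide with the chronally convex ones, which are exactly the causal isometric embeddings of Def.~\ref{def:globhyp}. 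The auxiliary background fields have been forgotten by passing to $\Bkgr'$, and by hypothesis they influence neither solution space nor symplectic structure, so nothing is lost.

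Next I would dispatch the three axioms one at a time. For \emph{Isotony}, I would observe that since the theory is globally well-posed, every morphism $\chi$ in $\SpBkgr'$ has surjective $\S(\chi)$ already without needing to pass to a faster auxiliary cone bundle $\M'_1$ — i.e.\ $\chi$ is extensible with $\chi_1 = \chi$ — because a $C^\oast$-slow solution on $M$ restricts to regular Cauchy data on a Cauchy surface, which by no-finite-time-blowup extends to all of $M'$, and the extension is $C^{\prime\oast}$-slow since on $M'$ the cone bundle is just that of the fixed metric $g'$ and $\chi$ is an isometry; here one uses that the cone bundle $\Gamma_M$ is stipulated to coincide with the metric timelike cones, so no genuine field-dependence of causal structure occurs. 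Then Thm.~\ref{thm:gen-isot} directly gives that $\A(\chi) = \F_H(\chi)$ (on this one-background subcategory $\F = \F_H$ up to the category equivalence) is an injective Poisson homomorphism, which is Isotony, the target being $\Poiss_i$. For \emph{Time Slice}, given $\chi\colon M\to M'$ with $\chi(M)$ containing a Cauchy surface of $M'$, I would realize $\chi$ as one leg of a Cauchy pushout: take $M_1 = M$, and $M_2$ a second globally hyperbolic neighborhood so that $\chi_1(M_1), \chi_2(M_2)$ cover $M'$ with $M_3$ their intersection — in the well-posed case this can always be arranged so all four morphisms are Cauchy surjections and transverse descent holds (this uses Hyp.~\ref{hyp:Coo-exact} and the transversality conjecture only in the form already assumed). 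Applying Thm.~\ref{thm:gen-ts}, or more simply observing that when $\chi$ is already Cauchy surjective and extensible the map $\S(\chi)$ is a bijection of solution spaces hence $\A(\chi)$ is an isomorphism, yields that $\F_H(\chi)$ is a Poisson isomorphism. For \emph{Causality}, given $\chi_i\colon M_i \to M$ with spacelike separated images, I would note that spacelike separation of the images with respect to the fixed metric cone bundle means $\chi_1 \sqcup \chi_2$ exists in $\SpBkgr'$, and well-posedness (no blow up, regular scattering of semilinear fields) makes it extensible — so the regular scattering hypothesis of Thm.~\ref{thm:gen-caus} holds with $\M' = \M$ itself (no auxiliary faster cone needed). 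The theorem then gives the canonical injective morphism $\A(\M_1)\otimes \A(\M_2) \to \A(\M)$ lifting to $\Poiss$ via the independent subsystems tensor product, which is exactly the Causality axiom in its tensor-functor form.

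The main obstacle I expect is not any of the three axioms individually but rather the cleanliness of the reduction: one must check that in the well-posed semilinear setting all the \emph{extra} hypotheses introduced in Sect.~\ref{sec:classcaus} (extensibility, regular scattering, transverse descent, as well as global parametrizability and recognizability from Hyps.~\ref{hyp:Coo}--\ref{hyp:Coo-exact}) are genuinely discharged rather than merely re-assumed. For extensibility and regular scattering, the key input is the global existence part of Hyp.~\ref{hyp:glob-wellpos} together with finite propagation speed (Cor.~\ref{cor:finspeed}) to control the supports of jointly-extended solutions; I would argue that field-independence of the causal structure (built into the definition of a geometric wave equation via $\Gamma_M$ matching the metric cones) is what collapses the subtle local-versus-global spacetime support distinctions back to the classical ones, so that $C$-spacelike separation and pointwise spacelike separation agree and Cor.~\ref{cor:C-microcaus} coincides with ordinary microcausality. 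For transverse descent in the Time Slice step, I would lean on the remark following Def.~\ref{def:trans-descent}: for a semilinear system without blow up the tangent maps $T\alpha_i$ do have closed image (the obstruction described there being precisely blow-up of solutions), so the standard infinite-dimensional transversality applies and the conjectural notion is not actually needed here. The remaining bookkeeping — orientation and time-orientation preservation, the identification $\F = \F_H$ on the single-background subcategory via the equivalence of categories, and the fact that the morphisms land in $\Poiss_i$ rather than just $\Poiss$ — is routine once the above is in place.
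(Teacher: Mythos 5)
Your overall strategy coincides with the paper's: establish the equivalence $\Bkgr'_c\cong\GlobHyp_c$, observe that field-independence of the causal structure collapses the slow-patch limit (so that $\F_H(M,g)$ may be replaced by $\F(\M)$ for any $\M$ whose cone bundle is faster than the Lorentzian one), and then discharge the extra hypotheses of Thms.~\ref{thm:gen-isot}, \ref{thm:gen-ts} and~\ref{thm:gen-caus} using global well-posedness. Your Time Slice and Causality steps match the paper's proof essentially verbatim: no finite-time blow-up makes every Cauchy surjection an isomorphism of solution spaces (so transverse descent is trivial), and it also gives regular scattering.

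The gap is in your Isotony step. You assert that every morphism $\chi\colon\M\to\M'$ is extensible with $\chi_1=\chi$ because ``a $C^\oast$-slow solution on $M$ restricts to regular Cauchy data on a Cauchy surface, which by no-finite-time-blowup extends to all of $M'$.'' But a Cauchy surface $\Sigma\sso M$ is not in general a Cauchy surface of $M'$, so the restriction $\phi|_\Sigma$ is not initial data to which Hyp.~\ref{hyp:glob-wellpos} applies. One must first extend $\Sigma$ to a $C^{\prime\oast}$-Cauchy surface $\Sigma'\sso M'$ --- which is only Conj.~\ref{cnj:cauchy-ext}, and even that conjecture guarantees extension only over a compact subset $K\sse\Sigma$ --- and then extend the data $\chi_*(\phi|_\Sigma)$ smoothly to $\Sigma'$, which can fail near the open ends of $\Sigma$ (the type (S) singularities discussed in Sect.~\ref{sec:gen-isot}). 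Global well-posedness rules out finite-time blow-up of solutions evolved from regular data; it does not rule out blow-up of a given solution at the boundary of $\chi(M)$ inside $M'$. The paper's own proof accordingly does not claim surjectivity of $\S(\chi)$: it recovers Isotony at the algebraic level by invoking Conj.~\ref{cnj:cauchy-ext} together with the clause of Hyp.~\ref{hyp:glob-wellpos} stating that the functor $C_{cst}^\oo$ is insensitive to the behavior of solutions near the open ends of $M$ and of $\Sigma$, so that extensibility over arbitrarily large compact $K\sse\Sigma$ suffices for injectivity of $\A(\chi)$. You would need to add these two ingredients to your argument, or find another route to the injectivity of the algebra homomorphism.
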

\begin{proof}
The equivalence of categories $\Bkgr'_c \cong \GlobHyp_c$ is
straightforward to establish. The objects are identical and consist of
pairs $(M,g)$ (or equivalently $(M,\{g\})$), where $g$ is a globally
hyperbolic Lorentzian metric on the manifold $M$. There exists a
morphism $(M,g)\to (M',g')$ in $\Bkgr'_c$ iff there is a morphism $\M\to
M'$ in $\SpBkgr$, between objects that project to $\base(\M) = (M,g)$
and $\base(\M') = (M',g')$ and neither solution patch $\S(\M)$ or
$\S(\M')$ is empty. For semilinear equations, a solution $\phi\in
\S(\M)$ iff the corresponding spacelike cone bundle $C^\oast\to M$ is
strictly faster than the bundle of future oriented Lorentzian cones.
Now, there exists a morphism $\M\to \M'$ only if the underlying open
embedding $\chi\colon M\to M'$ satisfies $\chi^*g' = g$ (is an isometry)
and is chronally compatible with respect to $C^\oast$ and
$C^{\prime\oast}$, which a fortiori implies that $\chi$ is also causally
compatible with respect to $g$ and $g'$. This concludes the proof of the
equivalence of the two categories.

Next, we establish a crucial property that helps us prove the remaining
conclusions. For any object $\M$ with $\base(\M) = (M,g)$ and spacelike
cone bundle $C^\oast\to M$ that is faster than the bundle of future
oriented Lorentzian cones, we have the identity $\S_H(M,g) \cong
\S(\M)$. That is, all solutions are already $C^\oast$-slow. This tells
us that the image of the functor $\SS(M,g)$ in $\Man$, consisting of all
objects and morphism $\S(\M)\to \S(\M')$ with $\base(\M) = \base(\M') =
(M,g)$, where each $\S(\M)$ is either empty (if the cone bundle
$C^\oast$ is slower than the Lorentzian cones) or $\S(\M)\cong
\S_H(M,g)$ (if the cone bundle $C^\oast$ is faster than the Lorentzian
one). The same goes for the respective objects in other categories,
$\P(\M)$, $\A(\M)$, and $\F(\M)$, except that the $\P(\M)$ are never
empty. In other words, the limit construction $\F_H(M,g) \cong \lim
\FF(M,g)$ becomes somewhat superfluous, since the image of the diagram
$\FF(M,g)$ already contains the object that is equal to the limit.

With the above simplification, which means that we can simply replace
the Poisson algebra $\F_H(M,g)$ with the algebra $\F(\M)$ for some
appropriate $\M$, the diagrams defining the generalized Isotony
(Thm.~\ref{thm:gen-isot}), Time Slice (Thm.~\ref{thm:gen-ts}) and
Causality (Thm.~\ref{thm:gen-caus}) properties collapse to the usual
notions thereof as stated in the standard definition of LCFT,
Def.~\ref{def:axlcft}. It remains only to show that the key hypotheses
needed for the above theorems, namely extensibility of morphisms in
$\SpBkgr_c$, transverse descent for Cauchy pushouts in $\SpBkgr_c$, and
regular scattering for morphisms in $\SpBkgr_c$, all follow from global
well-posedness, Hyp.~\ref{hyp:glob-wellpos}.

Given a morphism $\chi\colon \M\to \M'$, extensibility
(Def.~\ref{def:ext}) implies Isotony. Extensibility almost follows from
well-posedness. Namely, if there exist Cauchy surfaces $\Sigma\sso M$
and $\Sigma'\sso M'$ such that $\Sigma'$ extends $\Sigma$ and we have
initial data $\varphi$ on $\Sigma$ such that its pushforward
$\chi_*\varphi$ smoothly extends to some initial data $\varphi'$ on
$\Sigma'$, then the corresponding solution extends from $\chi(M)$ to
$M'$. However, even if Conj.~\ref{cnj:cauchy-ext} holds, the best we can
expect is the existence of a Cauchy surface $\Sigma'\sso M'$ that
extends a compact subset $K\sse \Sigma$. Also, even if we can extend all
of $\Sigma$ to $\Sigma'$, the initial data could behave near an open end
of $\Sigma$ in a way that is not smoothly extendible to $\Sigma'$. So to
recover Isotony at the algebraic level, we must appeal to the part of
Hyp.~\ref{hyp:glob-wellpos} according to which the algebra of
observables constructed by applying $C_{cst}^\oo$ is not sensitive to the
behavior of solutions in the neighborhood of the open ends of $\M$ and,
in particular, near the open ends of $\Sigma$. In that case, using
Conj.~\ref{cnj:cauchy-ext} and global well-posedness as above, every
smooth solution $\phi$ can be extended from an arbitrarily large compact
subset $K\sse \Sigma$ to all of $M$. The details of this argument would
have to await a more precise formulation of \ref{hyp:glob-wellpos},
and~\ref{hyp:opencover} and of the choices discussed in
Sect.~\ref{sec:top-choice}.

Given a Cauchy surjective morphism $\chi\colon \M\to \M'$
(Def.~\ref{def:cauchy-surj}), it could always be completed to a Cauchy
pushout for $\M'$ (Def.~\ref{def:cauchy-push}). Given global
well-posedness, no solution blows up in finite time. This means that all
the Cauchy surjections, including induce isomorphisms in the
diagrams~\eqref{eq:p-ts} and~\eqref{eq:f-ts}. In other words, transverse
descent holds trivially. The generalized Time Slice property then
reduces to its standard version from Def.~\ref{def:axlcft}.

Finally, since global well-posedness (in particular the absence of any
finite time blow up) also implies regular scattering,
Def.~\ref{def:reg-scat}, the generalized Causality property holds,
Thm.~\ref{thm:gen-caus}, and reduces to the corresponding standard
notion from Def.~\ref{def:axlcft}. \qed
\end{proof}

The theorems presented in this section may be seen as the translation of
the well-posedness properties of a natural variational PDE system into
the algebraic setting given by the corresponding Poisson algebras of
observables. One may even promote some of these properties to an axiom
system generalizing the existing LCFT axioms, which are only applicable
to semilinear wave systems, to the more general class of quasilinear
systems, which includes GR. We do not do so immediately because several
aspects of the current formalism, to be discussed in
Sect.~\ref{sec:discuss}, leave room for substantial improvement, which
could facilitate an optimal axiomatization.

\section{Causality Axiom in Locally Covariant Quantum Field Theory}
\label{sec:quantcaus}
This section contains some remarks on extending the results presented so
far to quantum field theory. The remarks essential consist of the
conjecture that the categorical colimit construction of the algebra of
observables will persist through perturbative deformation quantization
of the classical field theory. Since rigorous, non-perturbative
construction of quantum field theory is in general not yet possible,
that is the best we can hope for at the moment.

Recall that a classical mechanical system essentially consists of a real
Poisson algebra of observables $\F = (\A,\{\})$, where $\A = C_{cst}^\oo(\P)$
on a symplectic manifold $\P$ (the phase space), whose states are normed
positive linear functionals on $\A$, which can be expressed as
probability measures on $\P$. On the other hand, a quantum mechanical
system essentially consists of a complex, associate, non-commutative
$*$-algebra $\hat\F = (\hat\A,*,{}^*)$, whose states are normed positive
linear functionals on $\hat\A$. Strong physical and mathematical
arguments, which were already eloquently expressed in the original
papers~\cite{bffls1,bffls2,bffls3}, indicate that \emph{deformation quantization} is
the right way to define an $\hbar$-parametrized family of quantum
systems that quantize a given classical one. Briefly, a deformation
quantization of a classical mechanical system $(\A,\{\})$ is a family of
quantum mechanical systems $(\hat\A_\hbar, *_\hbar, {}^{*_\hbar})$,
where the underlying vector spaces are isomorphic to
$\mathbb{C}\otimes \A \cong \hat\A_\hbar$, the $*$-involution reduces to
standard complex conjugation, $A^{*_\hbar} \to \overline{A}$, in the
limit $\hbar\to 0$, and the non-commutative product commutator satisfies
the classical limit, $\frac{1}{i\hbar}[A,B]_\hbar \to \{A,B\}$ as
$\hbar\to 0$. The non-commutative product $*_\hbar$ is referred to as
the \emph{$*$-product}. The details of the kind of dependence on
$\hbar$ is allowed and the sense in which the limits are taken can be
found in the standard literature~\cite{landsman}.

From the constructive point of view, deformation quantization has a
number of successes to its name. The Poisson algebra of any
symplectic~\cite{dwl,fedosov,fedosov-book}
and even any Poisson manifold~\cite{kontsevich}
possesses a formal deformation quantization. The deformations are rigid
and physically inequivalent classes of deformations are controlled by
the second de~Rham cohomology group of the underlying manifold\cite{gr}.
Constructions of
formal deformation quantization commute with reduction by quotienting
out gauge symmetries, via the BV-BRST method, obstructed only by well
defined anomalies~\cite{bhw,fr-bv,rejzner-thesis}. Formal deformations
are possible also for symplectic supermanifolds and infinite dimensional
symplectic manifolds~\cite{vallejo} (and references therein). The formal
deformations of standard $\R^{2n}$
symplectic spaces can be made strict via the Wigner-Weyl-Moyal
$*$-product formula~\cite{landsman}. Similar results are available for
some other special classes of finite dimensional symplectic
manifolds~\cite{rieffel}.

As discussed in Sect.~\ref{sec:freelcft}, a classical field theory
assigns classical mechanical systems to spacetimes in a way coherent
over spacetime embeddings, tentatively summarized in the axioms for a
locally covariant field theory (LCFT) functor, Def.~\ref{def:axlcft},
and similarly for a quantum field theory (QFT or LCQFT). By a
quantization of a classical field theory, we mean an LCQFT functor that
reduces to a classical LCFT functor in the classical limit $\hbar\to 0$,
in the sense of deformation quantization, also in a way coherent with
spacetime embeddings. The recent literature on pAQFT deals specifically
with the perturbative quantization of classical field theory in the
above sense. From the relevant literature, it is clear that the formal
deformation quantization point of view is compatible with perturbative
renormalization of quantum field
theories~\cite{df-dq,hollands-ym,fr-bv,rejzner-thesis}.

Note, however, that the field theories, whose quantization has been
studied in the above way, have only been of semilinear type or whose
non-linearities have been treated perturbatively as well. Thus, their
Green functions (of appropriately linearized equations) and their
Poisson brackets (via the Peierls formula) have causal supports fixed by
external background fields (essentially a Lorentzian metric), as reflected
in the Causality axiom of Def.~\ref{def:axlcft}. The quantum version of
the Causality axiom uses $*$-product commutators, which obey a similar
causal support condition.  The precise form of this condition has been
known for a long time (in fact going back to the original axioms of Haag
and Kastler) and has been checked to hold at each perturbative
$\hbar$-order of the deformation
quantization~\cite{df-dq,rejzner-thesis}.

On the other hand, the situation for quasilinear field theories, which
have a dynamical, field-dependent causal structure (GR being a prominent
example), has been much less clear. The classical notion of causality,
relying on the causal support of the Green functions of the linearized
equations about a fixed background solution (hence, via the Peierls
formula, also of the Poisson brackets), is unproblematic and has been
understood for a long time. However, it relies the notion of a fixed
background solution or, in other words, of a fixed point in the
classical phase space. Since the standard formalism of quantum mechanics
does not make use of a phase space, the question of translation of the
notion of causality to the quantization of a quasilinear field theory
has been generally considered open, especially outside the
perturbative context, where it is sometimes referred to as \emph{quantum
fluctuation/smearing of light cones}~\cite{kh-obsv,lw-maxwell}.
%[XXX: good reference for open above open problem?]

In this paper, we have attempted to precisely formulate the analog of
the standard Causality axiom for quasilinear classical field theories.
This formulation has been given in terms of localizing the algebra of
observables to open subsets of the phase space (the space of all
solutions modulo gauge transformations) whose elements (individual
solutions) all define causal structures that are compatible with some
externally specified chronal cone bundle (in a way that is standard for
symmetric hyperbolic, quasilinear PDE systems, Sects.~\ref{sec:hypersys}
and~\ref{sec:chargeom}). A cover of the total phase space by such
\emph{slow} patches, can be interpreted in categorical terms as the
identification of the total phase space with the pushout of a diagram in
the category of symplectic manifolds corresponding to the slow patches
and the intersections between them.  This categorical formulation makes
it obvious that the corresponding algebraic formulation of the analog of
the Causality axiom is the identification of the total algebra of
observables with a limit of the diagram in the category of Poisson
algebras corresponding to the algebras of observables of the slow phase
space patches, with the Poisson bracket respecting the causal structure
of the algebra of observables of each individual patch. The details of
this formulation occupy Sect.~\ref{sec:gen-caus} and culminate in
Thm.~\ref{thm:gen-caus}. Similarly, an algebraic formulation is also
given to the analog of the Time Slice axiom, Thm.~\ref{thm:gen-ts} in
Sect.~\ref{sec:gen-ts}. An important check on these generalized
Causality and Time Slice properties is that they reduce to the
corresponding standard notions when specialized to semilinear field
theories, Thm.~\ref{thm:semilin-lcft}.

Finally, the above purely algebraic formulation of a generalized
Causality property of classical LCFT motivates the following conjecture
for an LCQFT deformation quantization of a classical LCFT:
\begin{conjecture}\label{cnj:quantcaus}
Let $\stAlg$ be the category of non-commutative, associative,
$*$-algebras, while $\Bkgr_c$ and $\SpBkgr_c$ are the categories of
spacetime manifolds endowed admissible background fields and,
respectively, also with a globally hyperbolic spacelike cone bundles (as
in Sect.~\ref{sec:funct}). An generalized LCQFT is a covariant functor $\hat{\F}_H \colon
\Bkgr_c \to \stAlg$, such that there exists a functor $\hat{\F}\colon
\SpBkgr_c \to \stAlg$, fitting into the following commutative diagram
\begin{equation}
	\vcenter{\xymatrix{
		\SpBkgr_c \ar[dd]^\base \ar[dr]_{\hat{\F}} \\
		& \stAlg ~ . \\
		\Bkgr_c \ar[ur]^{\hat{\F}_H}
	}}
\end{equation}
and satisfying the identity $\F_H(M,\B) = \varprojlim \FF(M,\B)$, where the
limit is indexed by the subcategory $\SpBkgr_c(M,\B)\sse \SpBkgr_c$,
whose objects are of the form $\M = (M,C^\oast,\B)$. This LCQFT also
satisfies the following Causality property. The existence of the diagram
\begin{equation}
\vcenter{\xymatrix{
	& \ar@{<-}[dl]_{\chi_1} \M \ar@{<-}[dr]^{\chi_2} \ar@{<-}[d]^{\usse} \\
	\M_1 \ar[rd] \ar[r]^{\chi'_1} & \M' & \ar[l]_{\chi'_2} \ar[ld] \M_2 ~ ,\\
	& \M_1 \sqcup \M_2 \ar@{-->}[u]
}}
\end{equation}
in $\SpBkgr_c$ implies the existence of the diagram
\begin{equation}\label{eq:q-a-caus}
\vcenter{\xymatrix{
	& \ar@{<-}[dl]_{\hat{\F}(\chi_1)} \hat{\F}(\M) \ar@{<-}[dr]^{\hat{\F}(\chi_2)} \ar@{<-}[d]^{\usse} \\
	\hat{\F}(\M_1) \ar[rd] \ar[r]^{\hat{\F}(\chi'_1)} & \hat{\F}(\M')
		& \ar[l]_{\hat{\F}(\chi'_2)} \ar[ld] \hat{\F}(\M_2) ~ .\\
	& \hat{\F}(\M_1) \otimes \hat{\F}(\M_2) \ar@{-->}[u]
}}
\end{equation}
in $\stAlg$. Recall that $\sqcup$ denotes the \emph{spacelike separated
disjoint union} (Sect.~\ref{sec:gen-caus}), while $\otimes$ denotes the
\emph{independent subsystems tensor product}
(Def.~\ref{def:indep-subsys}).
\end{conjecture}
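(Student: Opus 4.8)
The plan is to realize $\hat{\F}$ as a deformation quantization of the classical functor $\F\colon\SpBkgr_c\to\Poiss$ built in Sect.~\ref{sec:classcaus}, carried out patchwise on each slow patch and then assembled through the same limit construction used for $\F_H$. First I would, for each object $\M=(M,C^\oast,\B)$ of $\SpBkgr_c$, apply a formal deformation quantization to the (formal) symplectic manifold $\PP(\M)$, equivalently to the Poisson algebra $\F(\M)=(\A(\M),\{\})$, producing a $*$-algebra $\hat{\F}(\M)=(\mathbb{C}\otimes\A(\M)[[\hbar]],*_\hbar,{}^{*_\hbar})$ whose $*$-product commutator obeys $\frac{1}{i\hbar}[\cdot,\cdot]_\hbar=\{\cdot,\cdot\}+O(\hbar)$. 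For a linear or dynamically-linearized hyperbolic system the $*$-product can be written explicitly as a Wick/Moyal-type formula built from the causal Green function $\EE=\G\circ\bar{r}$ (and a symmetric companion), and in the presence of gauge invariance one appeals to the BV--BRST construction, which commutes with the gauge quotient up to well-defined anomalies. One must then check that a morphism $\chi\colon\M\to\M'$ induces a $*$-homomorphism $\hat{\F}(\chi)$; this follows from the covariance of $\EE$ (Lem.~\ref{lem:green-covar}, Lem.~\ref{lem:pois-covar}) together with the naturality of the chosen quantization prescription.

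Second, I would upgrade classical microcausality (Thm.~\ref{thm:microcaus}, Cor.~\ref{cor:glob-microcaus}) to its quantum counterpart: if $A,B$ have spacelike separated local spacetime supports at $\phi\in\S(F,C)$, then $[A,B]_{*_\hbar}[\phi]=0$ to \emph{all} orders in $\hbar$, not merely at order $\hbar$. This is the standard argument that the higher terms in the $*$-product commutator are again produced by iterated contractions with $\EE$, whose support excludes $\phi$-spacelike pairs of points; the only novelty relative to the semilinear case is that the contractions must be performed with the field-dependent propagator of the dynamical linearization point, so the argument is run fibrewise over $\S(F,C)$ exactly as in the proof of Thm.~\ref{thm:microcaus}. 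Granting this, the spacelike separated disjoint union is quantized to the tensor product of $*$-algebras, $\hat{\F}(\M_1\sqcup\M_2)\cong\hat{\F}(\M_1)\otimes\hat{\F}(\M_2)$, since $\A(\M_1\sqcup\M_2)\cong\A(\M_1)\otimes\A(\M_2)$ (from $\S(\M_1\sqcup\M_2)\cong\S(\M_1)\times\S(\M_2)$ and Hyp.~\ref{hyp:Coo-exact}) and the $*$-product splits across the tensor factors by quantum microcausality. Applying $\hat{\F}$ to the classical Causality diagram of Thm.~\ref{thm:gen-caus} then produces diagram~\eqref{eq:q-a-caus}, the canonical dotted morphism being $\hat{\F}(\chi'_1\sqcup\chi'_2)$ and hence the image under $\hat{\F}$ of the regular scattering datum. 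Finally I would set $\hat{\F}_H(M,\B)=\varprojlim\hat{\FF}(M,\B)$, indexed by $\SpBkgr_c(M,\B)$, in exact parallel with $\F_H=\varprojlim\circ\FF$ from Sect.~\ref{sec:cat-glob-phsp}; the commutativity with $\base$ and the inheritance of the Causality property by $\hat{\F}_H$ through the limit are then formal, just as in Sect.~\ref{sec:classcaus}.

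The hard part --- and the reason this remains a conjecture rather than a theorem --- is Step one carried out rigorously: a non-perturbative deformation quantization of the infinite-dimensional formal symplectic manifold $\PP(\M)$, natural in $\M$, is not presently available, and even the formal/perturbative version must be constructed uniformly over the whole category $\SpBkgr_c$ in a way compatible with the limit--colimit structure identified in Sect.~\ref{sec:cat-glob-phsp}. For quasilinear theories there is the additional obstacle that the propagator $\EE$, and therefore every ingredient of the $*$-product, varies over the phase space, so the Hadamard/microlocal spectrum condition that normally underpins the construction of the interacting $*$-product must itself be formulated fibrewise over $\S(F,C)$ and shown to behave continuously as the dynamical linearization point is varied. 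A complete proof would have to await progress in non-perturbative quantum field theory, or at least a precise functional-analytic realization of Hyp.~\ref{hyp:Coo} and Hyp.~\ref{hyp:Coo-exact} in the $*$-algebraic setting; absent that, the available evidence is that the analogous construction is known to proceed order by order in $\hbar$ for semilinear and perturbatively treated theories, which is what makes the conjecture plausible.
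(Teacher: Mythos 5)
This statement is a \emph{conjecture} in the paper, and the paper offers no proof of it: Sect.~\ref{sec:quantcaus} explicitly says that, unlike the classical Causality property (Thm.~\ref{thm:gen-caus}), the quantum version ``is only a \emph{conjecture}'' because a non-perturbative quantization of the relevant field theories is not available. So there is no proof in the paper to compare yours against. What the paper does provide is motivating discussion --- deformation quantization as the bridge from $\F$ to $\hat{\F}$, the Peierls bracket as the $\hbar\to 0$ shadow of the $*$-commutator, the causal support of $\EE$ as the mechanism behind microcausality, and the expectation that a Fedosov-type construction carried out patchwise over $\SpBkgr_c$ would verify the property order by order in $\hbar$ --- and your sketch follows exactly this outline. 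You also correctly identify the same obstruction the paper names as the reason the statement remains a conjecture: the absence of a non-perturbative (or even categorically uniform formal) deformation quantization of the infinite-dimensional phase spaces $\PP(\M)$, compatible with the limit--colimit structure of Sect.~\ref{sec:cat-glob-phsp}. To that extent your proposal is an honest strategy document rather than a proof, and it is consistent with the paper's intent.

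One substantive point in your second step is stated more confidently than it deserves. You claim that the all-orders vanishing of $[A,B]_{*_\hbar}$ at spacelike separation follows because ``the higher terms in the $*$-product commutator are again produced by iterated contractions with $\EE$.'' For a Moyal/Weyl product built purely from the antisymmetric bivector this is true, but the $*$-products actually used in field theory (Wick-type, Fedosov-type, or the renormalized products of pAQFT) are built from a two-point object whose antisymmetric part is $\frac{i}{2}\EE$ but whose symmetric (Hadamard) part has support everywhere, in particular at spacelike separation. The vanishing of the commutator at higher orders then rests on a cancellation between the symmetric contributions, which is a theorem in the semilinear case but is precisely one of the things that must be re-established when the Hadamard parametrix varies over the phase space of a quasilinear theory. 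You gesture at this in your closing paragraph (the fibrewise microlocal spectrum condition), but the body of the argument treats it as automatic; in the paper's framing this is part of what makes the statement a conjecture rather than a corollary of Thm.~\ref{thm:microcaus}.
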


Similarly, the Time Slice property can be straightforwardly translated
to LCQFT as well. Though the above formulation is rather abstract, we
can see how it looks in rather concrete terms in the example of GR.
Consider a gauge fixed version of GR. It may include fermionic ghost
fields, if necessary, though whose properties we do not discuss at the
moment (that is left to future work~\cite{kh-fermi}). At the moment, we also ignore
the issue of diffeomorphism invariant observables and simply consider
components of the quantized, gauge fixed metric field $\hat{g}_{ab}(x)$
as observable. That is, once smeared, they constitute elements of the
algebra $\hat{\F}_H(M,\{\beta\})$, for a spacetime manifold $M$ endowed
with admissible background field $\beta$. What can we say about the
$*$-product commutator $[\hat{g}_{ab}(x), \hat{g}_{cd}(y)]$? As is, we
cannot say very much, since we do not have complete solution for the
theory and due to dynamical causal structure, we cannot appeal to the
standard Causality axiom. However, since $\hat{\F}_H(M,\{\beta\}) =
\varprojlim \hat{\F}(\M)$, with $\M = (M,C^\oast,\{\beta\})$, there
exist canonical projections $\hat{\F}_H(M,\{\beta\}) \to \hat{\F}(\M)$
for each such $\M$. Denote the images of the metric fields under these
projections by $\hat{g}_{ab}(x) \mapsto \hat{g}_{ab}^\M(x)$. Then, if we
select a spacelike cone bundle $C^\oast\to M$ such that two given points
$x,y\in M$ are $C^\oast$-spacelike separated, our conjecture about the
Causality property in LCQFT implies that
\begin{equation}\label{eq:quant-microcaus}
	[ \hat{g}_{ab}^\M(x), \hat{g}_{cd}^\M(y) ] = 0 ,
\end{equation}
in the usual distributional sense.

As we have seen, the Causality property is a \emph{theorem} of classical
field theory. On the other hand, its version in quantum field theory is
only a \emph{conjecture}. Unfortunately, we cannot do better at the
moment, due to the general difficulty of non-perturbative quantization
of physically relevant field theories~\cite{clay-ym}. Moreover, even in
semilinear field theories, the quantum Causality axiom has only been
verified in free field theories, in four dimensions. On the other hand,
as discussed above, it has been verified perturbatively in standard
physics calculations as well as in the pAQFT literature. For interacting
field theories, the perturbative construction involves a double formal
series, in $\hbar$ and in $\lambda$ (the collection of all coupling
constants). Work is currently under way to verify it (as part of a more
general quantization program) at a level that is perturbative in $\hbar$
but non perturbative in $\lambda$, but extending Fedosov's deformation
quantization construction to LCFT. It is likely that the above conjectured
generalized Causality property for quasilinear field theories may be
verified explicitly in the same setting~\cite{hollands,ribeiro}. The
strategy of these approaches is, as in this paper, is to use PDE theory
to explicitly construct the interacting (nonlinear) classical field
theories before applying Fedosov's deformation quantization construction
to the resulting phase spaces.

\section{Discussion}
\label{sec:discuss}
In this work, we have in a sense taken up a thread left loose in the
previous work of Geroch~\cite{geroch-pde} on the application of the PDE
theory of symmetric hyperbolic systems to classical field theory. This
theory provides an unambiguous, intrinsic notion of causality not tied a
priori to a Lorentzian metric. This notion of causality is especially
useful for theories with quasilinear (rather than linear or semilinear)
equations of motions (of which General Relativity is a prominent
example), whose causal structures are field dependent. This is discussed
in detail in Sects.~\ref{sec:hypersys} and~\ref{sec:chargeom}. See
also~\cite{bannier,rainer1,rainer2,geroch-ftl} for related earlier
ideas.

Moreover, it provides the theorems cited in
Sect.~\ref{sec:pde-theory} as constructive tools to build the solution
space of the theory as an infinite dimensional manifold (the infinite
dimensional differential geometry aspect, as discussed in
Set.~\ref{sec:formal-dg}, was treated only formally here, though is
tackled in earnest in the recent works~\cite{fr-bv,rejzner-thesis,bfr}).
Geroch's casting of most classical relativistic field theories in
symmetric hyperbolic form was then supplemented by constructing (formal)
symplectic and Poisson structures (Sect.~\ref{sec:symp-pois}), turning
the space of solutions into the genuine phase space of classical field
theory. This was done by making use of the well known covariant phase
space method and the Peierls formula, that was generalized to treat
field theories that possess gauge invariance and constraints when cast
into symmetric hyperbolic form.

Together with some reasonable technical hypotheses on the structure of
these infinite dimensional phase spaces and algebras of smooth functions
on them (Hyps.\ref{hyp:opencover}, \ref{hyp:Coo}
and~\ref{hyp:Coo-exact}), the generalized Isotony, Time Slice and
Causality properties were established in Sect.~\ref{sec:classcaus}. As
an important check, these notions specialize
(Thm.~\ref{thm:semilin-lcft}) to the synonymous axioms of classical
relativistic field theory (Def.~\ref{def:axlcft}). In formulating the
generalized properties of these axioms, which are essentially algebraic
in nature, we identified strong parallels between them and aspects of
the notion of well-posedness in classical PDE theory, which are more
geometric in nature. In particular, we have highlighted the importance
of the \emph{on-shell} or phase space point of view that is dual to the
\emph{off-shell} point of view adopted in the recent
works~\cite{fr-bv,rejzner-thesis,bfr}. While the off-shell approach has
proven invaluable in the modern formulation of perturbative
renormalization of quantum field
theories~\cite{hollands-ym,bdf,fr-bv,rejzner-thesis}, the on-shell
approach together with the above mentioned results from PDE theory is
currently the only constructive method of building non-perturbatively
interacting classical field theories. The parallels between the axioms
of classical relativistic field theory and well-posedness also show that
it is sometimes too much to hope for the axioms to hold in every
physically reasonable example, especially in the presence of finite time
blow up singularities (which occur in GR, for example). In particular
the standard Isotony property fails in the presence of finite time blow
(failure of global well-posedness) up (Sect.~\ref{sec:gen-isot}). On the
other hand, local well-posedness is sufficient for the Time Slice
property to hold, unless a more technical condition of \emph{transverse
descent} is also violated (Sect.~\ref{sec:gen-ts}). On the other hand,
the Causality property holds as long as a weaker version of global
well-posedness holds that we called \emph{regular scattering}
(Sect.~\ref{sec:gen-caus}).

The formulation of the generalized Causality property relies on the key
result Thm.~\ref{thm:microcaus}, which could be called a generalized
Microcausality property and follows from the Peierls formula for the
Poisson bracket (Sect.~\ref{sec:formal-pois}) and the domain of
dependence theorem for hyperbolic PDE systems (Cor.~\ref{cor:dod}). This
connection can be seen as one of the main results of this work when
coupled with the following speculation. If a quantum field theory is
constructed from the deformation quantization of a classical field
theory, then the Poisson algebra formulation of classical causality captured
by the generalized Causality property should be replaced by a
non-commutative $*$-algebra version of the same
(Conj.~\ref{cnj:quantcaus}). If deformation quantization is carried out
using Fedosov's method, it is expected that the conjectured formulation
of quantum causality at the very least holds formally at each order in
$\hbar$~\cite{ribeiro,hollands}. Finally, if this conjecture holds, it
provides a concrete answer (Eq.~\eqref{eq:quant-microcaus}) to this old
question in quantum gravity: What can we say about the commutator of two
metric field operators in quantum GR?

One of the motivations for this work had been to translate the notion of
well-posedness from PDE theory to the algebraic setting for classical
field theory, so that this translation could serve as a basis of a more
realistic axiomatization thereof (see preceding paragraph for a
discussion of why the axioms in Def.~\ref{def:axlcft} are not completely
adequate). However, such a translation attempted in this paper has lead
to a number of necessary but somewhat tangential ideas, conjectures and
hypotheses. We believe that it is essential to address and clarify some
of them before attempting to formulate a more realistic axiomatization.

For example, we have chosen the notion of symmetric hyperbolicity
because of the large existing literature on this subject, making it easy
to use to leverage the existing results of PDE theory in the
non-perturbative construction of classical field theories
(Sec.~\ref{sec:pde-theory}). However, reducing the Euler-Lagrange
equations of a field theory to (constrained) symmetric hyperbolic form
is not always an obvious task, and, even if so, can be rather laborious
and require the introduction of auxiliary background fields to keep
everything natural. On the other hand, the notion of regular
hyperbolicity~\cite{chr-pde} seems to be much better adapted to
Lagrangian field theories, but is not yet currently general enough to
handle all cases of interest (such as for example higher order theories)
and has a substantially smaller literature devoted to it. However, since
both notions rely (underneath the hood of their respective
well-posedness theorems) on so-called energy methods, it is likely that
it would be possible to subsume both notions under a more general one
that uses intrinsic information about the characteristic cohomology of a
PDE system~\cite{bg-cohom,at,bbh} to identify approximate conservation
laws and automate their use in an energy method that would establish
local well-posedness. Pursuing such a comprehensive notion of
hyperbolicity is a promising avenue of investigation.

Another set of ideas naturally follows the notion of a conal manifold,
some of which have already been explored in Sect.~\ref{sec:chargeom}.
Conal manifolds abstract the notion of causal order from Lorentzian
geometry, but remain within the realm of differential geometry. It is
interesting to see which results generalize from Lorentzian to conal
manifolds. One important result that already exists is the splitting
theorem for globally hyperbolic cone bundles
(Prop.~\ref{prp:globhyp-lens}). This result is essentially differential
topological in nature, very reminiscent of results that have been
obtained as instances of the so-called
\emph{h-principle}~\cite{gromov-pdr,spring-convex}. On the other hand,
the known proofs of the classic~\cite{geroch-gh} and the (very recent)
generalized~\cite{fs} result use methods rather removed from differential topology
(in the former case appealing to non-continuous measures and in the
latter to Weak KAM theory). A notable variation on the classic result
uses purely order-theoretic and topological methods~\cite{minguzzi-time}. It
would be very interesting to formulate a proof using the methods of
differential topology. In particular, such methods could then be adapted
to the PL (piecewise-linear) and topological manifolds. The latter are
of interest, for example, in the study of the causal structure of rough
Lorentzian metrics~\cite{cg}. Another important set of results that
should be generalized concern the addition of boundaries consistent with
causal structure (in particular, causal compactifications)~\cite{gps}.
There is ample evidence from conformal compactifications in relativity
that understanding the structure of a causal boundary is an important
ingredient in understanding the boundary conditions and asymptotic
behavior of solutions to hyperbolic equations. On a more elementary
side, it seems fruitful to investigate a version of de~Rham cohomology
with spacelike compact supports, as well as the dual homology, whose
classes seem to be naturally represented by Cauchy surfaces. Such
theories may give us a better understanding of the topological and
order-theoretic ambiguities in the covariant phase space method
(Sect.~\ref{sec:formal-symp}) and even give us a better understanding of
the structure of the space of non-globally hyperbolic solutions of a
hyperbolic PDE system. Finally, it is important to verify the validity
of Conjs.~\ref{cnj:gh-stab} and~\ref{cnj:cauchy-ext}, given their
importance for the topological structure of the phase space
(Sect.~\ref{sec:top-choice}) and for the generalized Isotony property
(Sect.~\ref{sec:gen-isot}).

The refinement of the notion of \emph{spacetime support} to that of
\emph{local spacetime support} (Defs.~\ref{def:supp-conf}
and~\ref{def:supp-sols}) invites us to reconsider the problem of local
observables in GR. Without this refinement, it is a well known fact that
there are no gauge (diffeomorphism) invariant observables in GR with
compact spacetime support~\cite{fr-bv,rejzner-thesis}. On the other
hand, the possibility of gauge invariant observables with globally
compact local spacetime support is yet to be considered in detail.

There is still a substantial number of not-completely-resolved questions
on the topology and differential geometry of the infinite dimensional
space of solutions of a hyperbolic PDE system and the algebra of smooth
functions on it. Given the geometric, rather than the functional
analytical focus of this work, we have avoided most of such details and
instead have simply made several reasonable hypotheses
(Hyp.~\ref{hyp:opencover}, \ref{hyp:Coo} and~\ref{hyp:Coo-exact}). To
complete the program of constructive classical field theory, it is
crucial to identify the precise theorems that would replace these
hypotheses.

Finally, note that our discussion of classical field theory has been
restricted to bosonic fields. On the other hand, to take fermionic
matter into account as well as ghost fields that feature in the BV-BRST
formulation of gauge theories, we also need to consider fermi fields. We
believe that the best way to do that is shift from the setting of
manifolds to supermanifolds~\cite{schmitt}, so that fermi fields are
sections of odd vector bundles over spacetime (which is equivalent to
more common but less precise phrasing that classical fermi fields are
Grassmann valued). This approach is more geometric than and
complementary to the algebraic approach adopted
in~\cite{hollands-ym,rejzner-fermion,fr-bv,rejzner-thesis}.  The
fermionic field theories considered so far in this setting have only
been of semilinear type. An extension of this formulation to also
encompass quasilinear field theories (along the lines of the older
work~\cite{cb-sugra}) and their dynamical causal structure will be
reported elsewhere~\cite{kh-fermi}.

\begin{acknowledgements}
The author would like to thank Urs Schreiber for many interesting
discussions on the nature of classical and quantum field theory. The
author also thanks the following people for their interest in and
helpful comments on this work: Romeo Brunetti, Claudio Dappiaggi, Thomas
Hack, Klaus Fredenhagen, and Katarzyna Rejzner. The author is also
grateful to Pedro L. Ribeiro and Stefan Hollands for communicating some
of their unpublished results in the summer of 2012.
\end{acknowledgements}

\appendix

\section{Jet bundles and the variational bicomplex}\label{sec:jets}
In this section we briefly introduce jet bundles and fix the relevant
notation. For simplicity, we restrict ourselves to fields taking values
in vector bundles. However, the discussion could be straightforwardly
generalized to general smooth bundles.

We briefly introduce $k$-jets, mostly to recall some basic facts and fix
notation. More details, as well as a coordinate independent definition,
can be found in the standard
literature~\cite{olver-lie,kms,spring-convex}. Fix a vector bundle $F\to M$,
with $\dim M=n$, with fibers modeled on a vector space $U$, and consider
an adapted coordinate patch $\R^n\times U$, with coordinates
$(x^i,u^a)$.  Extend this patch to a \emph{$k$-jet patch} $\R^n\times U
\times U^{n_k}$ by adding extra copies of $U$, with new coordinates
$(x^i,u^a,u^a_i,u^a_{ij},\ldots,u^a_{i_1\cdots i_k})$, which formally
denote the derivatives of $\del_{i_1i_2\cdots}\phi^a(x)$ of a section
$\phi$ at $x$. To keep track of all the derivatives, we introduce
\emph{multi-index} notation. A multi-index $I=i_1i_2\cdots i_k$ replaces
the corresponding set of symmetric covariant coordinate indices (the
multi-index does not change when the defining $i$'s are permuted).  The
\emph{order} of this multi-index is given by $|I|=k$, with
$|\varnothing|=0$. To augment a multi-index by adding another index, we
use the notation $Ij = jI = i_1\cdots i_k j$. Thus we can write higher
order derivatives as $\del_{i_1\cdots i_k} \phi(x) = \del_I\phi(x)$, the
higher order jet coordinates as $u^a_{i_1\cdots i_k} = u^a_I$ and the
total set of coordinates on a $k$-jet patch as $(x^i,u^a_I)$, $|I|\le
k$. In particular the empty multi-index $I=\varnothing$ corresponds to
$u^a_\varnothing = u^a$.

Since the higher derivatives are symmetric in all indices, the number of
extra coordinates is given by $n_k = \sum_{l=1}^k \dim S^k \R^n$, with
$S^k$ denoting the symmetric tensor product. Given two different
coordinate patches on $F$, we define the transition maps between the
corresponding $k$-jet patches according to the usual calculus chain rule
applied to higher order derivatives. These $k$-jet patches can be glued
together into the total space of the \emph{$k$-jet bundle} $J^kF\to M$,
which includes $J^0F \cong F$.

Since $F\to M$ is a vector bundle, so is $J^kF\to M$. It is isomorphic
to $F\oplus_M (F\otimes_M S^1 T^*M) \oplus_M \cdots \oplus_M (F\otimes_M
S^k T^*M)$, but not naturally%
	\footnote{Though both the $k$-jet and the ``direct sum'' bundles can
	be constructed by applying a functor $\VBndl\to\VBndl$ to a bundle
	$F\to M$, a vector bundle automorphism $\chi\colon F \to F$ induces a
	vector bundle automorphism $J^k(\chi)\colon J^kF\to J^kF$ that need not
	be block diagonal in a basis adapted to the ``direct sum'' bundle,
	while the bundle automorphism induced by the ``direct sum'' bundle is.
	Therefore, the isomorphism of the ``direct sum'' and $k$-jet bundles
	cannot depend on the object $F\to M$ alone, and hence cannot be chosen
	naturally.}. %
Jet bundles come with natural projections $J^kF\to J^{k-1}F$, which
simply discard all derivatives of order $k$. This projection gives
$J^kF$ the structure of an affine bundle over the base $J^{k-1}F$, with
fibers modeled on the vector bundle $(F\otimes_M S^k T^*M)^{k-1}\to
J^{k-1}F$. The bundle $J^kF\to J^{k-1}F$ is affine because, in general,
bundle morphisms of $J^kF\to J^kF$ induced by vector bundle
automorphisms of $F$ are not linear, but are affine.

Given a vector bundle $E\to M$ it can be pulled back to the $k$-jet
bundle along the projection $J^kF\to M$. We introduce a convenient
notation for this pullback.
\begin{definition}
We denote by $(E)^k\to J^kF$ the pullback of $E\to M$ to $J^k F$, which
then fits into the pullback commutative square
\begin{equation}
\vcenter{\xymatrix{
	(E)^k \ar[r] \ar[d] & E \ar[d] \\
	J^kF  \ar[r]        & M ~ .
}}
\end{equation}
\end{definition}

Any smooth section $\phi\colon M\to F$ automatically gives rise to its
\emph{$k$-jet prolongation} or \emph{$k$-prolongation} $j^k\phi\colon
M\to J^kF$. Namely $j^k\phi$ is a section of the bundle $J^kF\to M$ that
is defined in a local adapted coordinate patch as
\begin{equation}
	j^k\phi(x) = (x^i,\phi^a(x),\del_i\phi^a(x),\ldots,
		\del_{i_1\cdots i_k}\phi^a(x))
	= (x^i,\del_I \phi^a(x)), ~~ |I|\le k.
\end{equation}
One can think of the $k$-prolongation symbol as a differential operator
\begin{equation}
	j^k\colon \Secs(F) \to \Secs(J^kF)
\end{equation}
of order $k$. In fact, any (not necessarily linear) differential operator of order $k$,
\begin{equation}
	f\colon \Secs(F) \to \Secs(E), ~~
	f\colon \phi \mapsto f[\phi] ,
\end{equation}
can be written as a composition of $j^k$ with an order $0$ (not
necessarily linear) operator $f\colon J^kF\to E$, such that $f[\phi] =
f(j^k\phi)$. Note that we are slightly abusing notation by denoting both
the differential operator and the bundle morphism by the same symbol
$f$.

Further, we can define an $l$-prolongation of a differential operator
$f$ of order $k$,
\begin{equation}
	p^\oo f \colon J^{k+l}F \to J^lE ,
\end{equation}
which is then a differential operator of order $k+l$, by composing with
$j^l$: $p^\oo f[\phi] = j^kf[\phi]$. Prolongation is discussed briefly
using coordinate-wise operations in Sect.~\ref{sec:integrability}. The
$k$-jet prolongation $j^k\phi$ can now be thought of as a special case
of bundle morphisms, that is, $j^k\phi = p^k \phi$, where on the right
hand side we interpret $\phi$ as the base fixing bundle morphism to
$F\to M$ from the trivial $0$-dimensional bundle $\id\colon M\to M$.
\begin{equation}
\xymatrix{
	M \ar[d]_\id \ar[r]^\phi & F \ar[d] \\
	M \ar[r] & M ~ .
}
\end{equation}

Given the sequence of projections $k$-jet bundles over $M$,
\begin{equation}
	\cdots \to J^2F \to J^1F \to J^0F\cong F ,
\end{equation}
it is convenient to introduce the \emph{infinite jet order} (or
\emph{$\oo$-jet}) bundle $J^\oo F$ defined as the projective limit over
the jet order $k$
\begin{equation}
	J^\oo F = \varprojlim J^k F .
\end{equation}
This limit implicitly defines $J^\oo F$ as an infinite dimensional
smooth manifold. The main advantage of working with $\oo$-jets is that
any function or tensor on $J^kF$ for finite $k$ can be pulled back to
$J^\oo F$. Conversely, any smooth function or tensor on $J^\oo F$
depends only on jets up to some finite order, say $k$, and can be
faithfully projected to $J^kF$. Another major convenience of working on
$J^\oo F$ is the ability to decompose the usual de~Rham differential
into its \emph{horizontal} and vertical parts
\begin{equation}
	\d = \dh + \dv.
\end{equation}
The defining property of $\dh$ is the following. Given a section
$\phi\colon M \to F$, we must have the identity
\begin{equation}
	(j^\oo\phi)^* \dh \alpha = \d (j^\oo\phi)^* \alpha,
\end{equation}
where $\alpha$ is any differential form on $J^\oo F$ and $\d$ is the
usual de~Rham differential on $M$. On the other hand, $\dv$ is
characterized by the fact that its image is annihilated by the pullback
to $M$ along any section $\phi$,
\begin{equation}
	(j^\oo \phi)^* \dv \alpha = 0.
\end{equation}
It can be checked that the horizontal
and vertical differentials anti-commute and are separately nilpotent,
\begin{equation}
	\dh^2 = 0 = \dv^2, \quad \dh\dv + \dv\dh = 0 .
\end{equation}
Note that, to apply $\dv$ or $\dh$ to forms defined on a finite order
jet bundle $J^k F$, the pullback and projection operations mentioned
above will often be applied implicitly. Thus the application of say
$\dh$ to a differential form on $J^k F$ may yield that a differential
form that projects to $J^{k+1}F$ but not to $J^k F$. In local
coordinates $(x^i,u^a)$ on $F$, and the induced coordinates
$(x^i,u^a_I)$ on $J^\oo F$, a convenient basis for differential forms is
\begin{equation}
	\dh x^i = \d x^i, \quad
	\dv u^a_I = \d u^a_I - \dh u^a_I
		= \d u^a_I - u^a_{Ii} \d x^i .
\end{equation}
We can also define two special kinds of vector fields. A vector field
$\hat{\xi}$ is \emph{horizontal} if its action in local coordinates is
\begin{equation}
	\hat{\xi}(x^i) = \xi^i, \quad
	\hat{\xi}(u^a_I) = \xi^i u^a_{iI} .
\end{equation}
for some $\xi^i=\xi^i(x,u^a_I)$. In particular, the vector field
$\hat{\del}_j$, with $\xi^i = \delta^i_j$, is horizontal. Note that
$[\hat{\del}_i,\hat{\del}_j]=0$. A vector field $\hat{\psi}$ is
\emph{evolutionary} if its action in local coordinates is
\begin{equation}
	\hat{\psi}(x^i) = 0, \quad
	\hat{\psi}(u^a_I) = \hat{\del}_I (\psi^a),
\end{equation}
for some $\psi^a = \psi^a(x,u^b_I)$, where $\hat{\del}_I(f) =
\hat{\del}_{i_1}(\hat{\del}_{i_2}(\cdots\hat{\del}_{i_k}(f)\cdots))$ for
multi-index $I=i_1i_2\cdots i_k$ (the order of application of these
vector fields does not matter since they commute). Note that the
$\psi^a$ can be seen as the fiber coordinate components of a section of
the bundle $(F)^\oo\to J^\oo F$. These definitions can be checked to be
coordinate independent.

One can show that for a horizontal vector field $\hat{\xi}$ on $J^\oo F$
there exists a vector field $\xi_\phi$ on $M$ such that their
actions on scalar functions are intertwined by the pullback along the
jet prolongation $j^\oo\phi$ of a section $\phi\colon M\to F$,
\begin{equation}
	\hat{\xi}(f)(j^\oo\phi) = \xi_\phi(f(j^\oo\phi)) ,
\end{equation}
for any scalar function $f$ on $J^\oo F$. Namely, in local coordinates,
$\xi_\phi = \xi_\phi^i\del_i$ with $\xi_\phi^i = (\iota_{\hat{\xi}} \d
x^i)(j^\oo\phi) = \hat{\xi}(x^i)(j^\oo\phi) = \xi^i(j^\oo\phi)$. On the
other hand, evolutionary vector fields $\hat{\psi}$ satisfy the identities
\begin{gather}
\label{eq:idh-comm}
	\iota_{\hat{\psi}} (\dh \alpha) + \dh (\iota_{\hat{\psi}} \alpha) = 0 , \\
% XXX: Proof: contraction i_\psi and d_h are odd derivations on the
% form algebra. Their anti-commutator is an even derivation. Therefor
% its action need only be checked on generators: scalars, exact
% horizontal forms and exact vertical forms. The main non-trivial case
% is for exact vertical forms.
\label{eq:jet-Lie}
	\Lie_\psi (j^\oo\phi)^*\alpha
		= \left.\frac{\d}{\d\eps}\right|_{\eps=0} [j^\oo(\phi+\eps\psi)]^* \alpha
		=  \iota_{\hat{\psi}} \dv \alpha = \Lie_{\hat\psi} \alpha ,
\end{gather}
for any form $\alpha\in\Forms^*(J^\oo F)$ and section $\psi\colon M\to F$.
Actually, $\psi$ could be a section of $(F)^k\to J^kF$, that is, it
could depend on $\phi^a(x)$ and its derivatives and not only on $x\in
M$. The only corresponding change in the above formula would be to
replace $\eps\psi$ by $\eps(j^k\phi)^*\psi$. Ostensibly, $\Lie_\psi$
should stand for the Lie derivative on the infinite dimensional manifold
of sections of $F\to M$, where the section $\psi$ is identified with the
vector field whose action on local coordinates is $\Lie_\psi \phi^a(x) =
\psi^a(x)$. However, since we do not delve into the differential
geometry of infinite dimensional manifolds here, we keep the symbol
$\Lie_\psi(j^\oo\phi)^*$ primitive and defined as above.

Integrations or differentiations by parts are carried out using the
following basic identity
\begin{align}
\label{eq:byparts}
	\dv u^a_{Ii}\wedge \d x^i\wedge\alpha
	&= \dv (u^a_{Ii} \d x^i) \wedge \alpha \\
	&= (\dv\dh u^a_I) \wedge \alpha \\
	&= -(\dh\dv u^a_I) \wedge \alpha \\
	&= -\dv u^a_I \wedge \dh\alpha - \dh(\dv u^a_I\wedge\alpha) .
\end{align}

This split of the de~Rham differential into horizontal and vertical
differentials also splits the de~Rham complex $\Forms^*(J^\oo F)$ of
differential forms on $J^\oo F$ into a
\emph{bicomplex}~\cite{anderson-small,anderson-big}. Since the
horizontal and vertical $1$-forms generate the graded commutative
algebra of differential forms, any form $\lambda\in \Forms^*(J^\oo F)$
can be uniquely written as
\begin{equation}
	\lambda = \sum_{h,v} \lambda_{h,v} ,
\end{equation}
where $0\le h \le n$ and $0\le v$ are respectively the horizontal and
vertical degrees form degrees. We have thus turned the differential
forms into a bigraded complex $\Forms^*(J^\oo F) = \bigoplus_{h,v}
\Forms^{h,v}(F)$, with the $\dh$ differential increasing $h$ by $1$ and
the $\dv$ differential increasing $v$ by $1$. This complex is called the
\emph{variational bicomplex}~\cite{anderson-small,anderson-big}. As with any bicomplex, we can
consider its cohomology with respect to either or any combination of the
two differentials. The \emph{horizontal cohomology} is $H^{h,v}(\dh) =
H(\Forms^*(J^\oo F),\dh)$ in degrees $(h,v)$. The \emph{vertical
cohomology} is $H^{h,v}(\dv) = H(\Forms^*(J^\oo F),\dv)$ in degrees
$(h,v)$. Both $(H^{h,*}(\dh),\dv)$ and $(H^{*,v}(\dv),\dh)$ still form
complexes, therefor we can also consider their cohomologies. The
\emph{relative cohomologies} are $H^{h,*}(\dv|\dh) =
H(H^{h,*}(\dh),\dv)$ and $H^{*,v}(\dh|\dv) = H(H^{*,v}(\dv),\dh)$.

\section{Limits and colimits in category theory}
\label{sec:limits}

Some basic information on category theory can be found in~\cite{borceux}
and more specifically about categorical limits and colimits
in~\cite{borceux,limits}.

Consider categories $\CatI$ and $\CatC$ and a functor $\DD\colon
\CatI\to \CatC$. The image $\DD(\CatI)$ (also denoted by just $\DD$)
forms a \emph{diagram} in $\CatC$ with index category $\CatI$. A
\emph{cone} to the diagram $\DD$ is an object $C$ of $\CatC$, the
\emph{vertex}, together with a morphism $c_i\colon C\to D_i = \DD(i)$
for each object $i$ of $\CatI$, such that diagram
\begin{equation}
\vcenter{\xymatrix{
	& C \ar[dl]_{c_i} \ar[dr]^{c_j} & \\
	D_i \ar[rr] & & D_j
}}
\end{equation}
is commutative for each morphism $i\to j$ in $\CatI$.  If it exists, the
\emph{limit} (also \emph{inverse} or \emph{projective limit}) is an object $\varprojlim
\DD$ of $\CatC$ that is the vertex of a cone of \emph{canonical
morphisms} $u_i\colon \varprojlim\DD \to D_i$ such that the following
universal property holds: any cone $(C,c)$ to $\DD$ factors through it
with a unique \emph{mediating morphism} (also \emph{canonical morphism})
$u_C$, which makes the following diagram commute:
\begin{equation}
\xymatrix{
	C \ar@{-->}[r]_-{u_C} \ar@/^1pc/[rr]^-{c_i}
	& \varprojlim \DD \ar[r]_-{u_i}
	& D_i
} ~ .
\end{equation}
For an illustrative example, consider the index
category $\CatI$ consisting of only the objects and morphisms $1
\to 3 \ot 2$, while $\DD$ is a functor from $\CatI$ to the category of
sets. The diagram $\DD$, the limit and the cone of canonical morphisms
fit into the following commutative diagram
\begin{equation}
\vcenter{\xymatrix{
	\varprojlim\DD \ar[d]^{u_1} \ar[r]^-{u_2} \ar[dr]^{u_3} & D_2 \ar[d] \\
	D_1 \ar[r] & D_3 ~ .
}}
\end{equation}
The universality of the limit identifies the set $\varprojlim\DD$ with
the subset of the Cartesian product $D_1\times D_2$ consisting of pairs
$(d_1,d_2)$ that get mapped to the same element in $D_3$, $d_1\mapsto
d_3 \mapsfrom d_2$. If the object $3$ and the morphisms to it were
absent from the index category $\CatI$, we would simply have
$\varprojlim\DD \cong D_1\times D_2$. In other categories, we may keep
the same intuition, but replace the Cartesian product $\times$ by the
categorical product.

The notion of a colimit is dual. As before, consider a diagram
$\DD\colon \CatI \to \CatC$. A \emph{co-cone} of the diagram $\DD$ is an
object $C$ of $\CatC$, the \emph{vertex}, together with a morphism
$c_i\colon \DD(i) = D_i\to C$ for each object $i$ of $\CatI$. If it
exists, the \emph{colimit} (also \emph{direct} or \emph{inductive limit}) is an object
$\varinjlim \DD$ of $\CatC$ that is the vertex of a cone of
\emph{canonical morphisms} $u_i\colon D_i \to \varinjlim\DD$ such that
the following universal property holds: any co-cone $(C,c)$ of $\DD$
factors through it with a unique \emph{mediating morphism} (also
\emph{canonical morphism}) $u_C$, which makes the following diagram
commute:
\begin{equation}
\xymatrix{
	D_i \ar[r]_-{u_i} \ar@/^1pc/[rr]^-{c_i}
	& \varinjlim \DD \ar@{-->}[r]_-{u_C}
	& C ~ .
}
\end{equation}
For an illustrative example, consider the index category $\CatI$
consisting of only the objects and morphisms $1 \ot 3 \to 2$, while
$\DD$ is a functor from $\CatI$ to the category of sets. The diagram
$\DD$, the colimit and the co-cone of canonical morphisms fit into the
following commutative diagram
\begin{equation}
\xymatrix{
	D_3 \ar[d] \ar[r] \ar[dr]^{u_3} & D_2 \ar[d]^{u_2} \\
	D_1 \ar[r]^{u_1} & \varinjlim\DD ~ .
}
\end{equation}
The universality of the limit identifies the set $\varinjlim\DD$ with
the quotient of the disjoint union $D_1\sqcup D_2$ by the equivalence
relation that identifies elements $d_1 \sim d_2$, respectively of $D_1$
and $D_2$, provided they are images of the same element $d_3\in D_3$,
$d_1 \mapsfrom d_3 \mapsto d_2$.  If the object $3$ and the morphisms
from it were absent from the index category $\CatI$, we would simply
have $\varprojlim\DD \cong D_1\sqcup D_2$. In other categories, we may
keep the same intuition, but replace the disjoint union $\sqcup$ by the
categorical coproduct.

\bibliographystyle{hplain}
\bibliography{paper-chargeom}

\end{document}